\tikzset{snake it/.style={decorate, decoration=snake}}
\tikzset{
    position label/.style={
       below = 3pt,
       text height = 1.5ex,
       text depth = 1ex
    },
    position label2/.style={
       above = 3pt,
       text height = 1.5ex,
       text depth = 1ex
    },
   brace/.style={
     decoration={brace, mirror},
     decorate
   }
}
\newcommand{\clonelabel}[2]{\@bsphack
  \expandafter\ifx\csname r@#2\endcsname\relax
  \else\protected@write\@auxout{}{\string\newlabel{#1}%
    {\csname r@#2\endcsname}}%
  \fi
  \expandafter\ifx\csname r@#2@cref\endcsname\relax
  \else\protected@write\@auxout{}{\string\newlabel{#1@cref}%
    {\csname r@#2@cref\endcsname}}%
  \fi
  \@esphack}
\DeclareMathOperator{\Exp}{\mathbb{E}}
\newcommand{\E}[2][{}]{\ensuremath{{\textstyle \Exp_{#1}}\bigl[#2\bigr]}}
\newcommand{\bg}[1]{\medskip\noindent{\it #1}}
\newenvironment{proofbox}[1][Proof]{\begin{proof}[#1]}{\end{proof}}
\newenvironment{proofof}[1]{\begin{proof}[Proof of {#1}]}{\end{proof}}
\newcommand{\Ex}{\Exp} 
\newcommand{\R}{\ensuremath{\mathbb R}} 
\newcommand{\Z}{\ensuremath{\mathbb Z}}
\newcommand{\A}{\ensuremath{\mathcal{A}}}
\newcommand{\B}{\ensuremath{\mathcal{B}}}
\newcommand{\Ec}{\ensuremath{\mathcal{E}}} 
\newcommand{\I}{\ensuremath{\mathcal I}} 
\newcommand{\J}{\ensuremath{\mathcal J}} 
\newcommand{\K}{\ensuremath{\mathcal K}}
\newcommand{\T}{\ensuremath{\mathcal T}}
\newcommand{\Nc}{\ensuremath{\mathcal N}} 
\newcommand{\Pc}{\ensuremath{\mathcal P}}
\newcommand{\Rc}{\ensuremath{\mathcal R}} 
\newcommand{\V}{\ensuremath{\mathcal V}} 
\newcommand{\opt}{\ensuremath{\mathsf{opt}}\xspace}
\newcommand{\OPT}{\ensuremath{\mathit{OPT}}}
\newcommand{\argmax}{\operatorname{argmax}}
\newcommand{\sm}{\ensuremath{\setminus}} 
\newcommand{\es}{\ensuremath{\emptyset}}
\newcommand{\ceil}[1]{\ensuremath{\left\lceil#1\right\rceil}}
\newcommand{\floor}[1]{\ensuremath{\left\lfloor#1\right\rfloor}}
\newcommand{\poly}{\operatorname{poly}}
\newcommand{\e}{\ensuremath{\epsilon}} 
\newcommand{\gm}{\ensuremath{\gamma}} 
\newcommand{\sse}{\subseteq}
\newcommand{\vrp}{\ensuremath{\mathsf{VRP}}\xspace}
\newcommand{\csko}{\ensuremath{\mathsf{CorrKO}}\xspace}
\newcommand{\cskocancel}{\ensuremath{\csko\text{-}\mathsf{Cancel}}\xspace}
\newcommand{\tcsko}{\ensuremath{\mathsf{2CorrKO}}\xspace}
\newcommand{\tkdo}{\ensuremath{\mathsf{2KDO}}\xspace}
\newcommand{\cso}{\ensuremath{\mathsf{CorrO}}\xspace}
\newcommand{\knapo}{\ensuremath{\mathsf{KnapOrient}}\xspace}
\newcommand{\into}{\ensuremath{\mathrm{in}}}
\newcommand{\out}{\ensuremath{\mathrm{out}}}
\newcommand{\reg}{\ensuremath{\mathsf{reg}}}
\newcommand{\rt}{\ensuremath{\rho}} 
\newcommand{\di}{\ensuremath{d}}
\newcommand{\pathh}{\ensuremath{\mathsf{path}}\xspace}
\newcommand{\parent}{\ensuremath{\mathsf{par}}\xspace}
\newcommand{\rightt}{\ensuremath{\mathsf{rt}}\xspace}
\newcommand{\leftt}{\ensuremath{\mathsf{left}}\xspace}
\newcommand{\level}{\ensuremath{\mathsf{lev}}\xspace}
\newcommand{\rewd}{\ensuremath{\mathsf{rewd}}\xspace}
\newcommand{\Rew}{\ensuremath{\mathcal{R}}} 
\newcommand{\Rewd}{\ensuremath{\mathsf{Rewd}}\xspace}
\newcommand{\nsize}[1][v]{\ensuremath{S_{{#1}}}}
\newcommand{\nrewd}[1][v]{\ensuremath{R_{{#1}}}}
\newcommand{\Sv}{\nsize} 
\newcommand{\Rv}{\nrewd} 
\newcommand{\Sw}{\nsize[w]} 
\newcommand{\Su}{\nsize[u]} 
\newcommand{\Ru}{\nrewd[u]} 
\newcommand{\sonev}{\nsize^{(1)}} 
\newcommand{\stwov}{\nsize^{(2)}} 
\newcommand{\sthreev}{\nsize^{(3)}} 
\newcommand{\sonew}{\nsize[w]^{(1)}} 
\newcommand{\stwow}{\nsize[w]^{(2)}} 
\newcommand{\sthreew}{\nsize[w]^{(3)}} 
\newcommand{\soneu}{\nsize[u]^{(1)}} 
\newcommand{\stwou}{\nsize[u]^{(2)}} 
\newcommand{\ronev}{\nrewd^{(1)}} 
\newcommand{\rtwov}{\nrewd^{(2)}} 
\newcommand{\rthreev}{\nrewd^{(3)}} 
\newcommand{\piv}{\ensuremath{\pi_v}}
\newcommand{\Piv}{\ensuremath{\Pi_v}}
\newcommand{\vl}{\ensuremath{v_{\mathsf{last}}}} 
\newcommand{\Xvj}{\ensuremath{X_v^j}}
\newcommand{\muvj}{\ensuremath{\mu_v^j}}
\newcommand{\muwj}{\ensuremath{\mu_w^j}}
\newcommand{\iv}{\ensuremath{i_v}}
\newcommand{\ld}{\ensuremath{\lambda}} 
\newcommand{\kp}{\ensuremath{\kappa}}
\newcommand{\al}{\ensuremath{\alpha}} 
\newcommand{\tht}{\ensuremath{\theta}}
\newcommand{\sg}{\ensuremath{\sigma}}
\newcommand{\tz}{\ensuremath{\widetilde z}} 
\newcommand{\ts}{\ensuremath{\widetilde s}} 
\newcommand{\tgm}{\ensuremath{\widetilde\gm}}
\newcommand{\hpi}{\ensuremath{\widehat\pi}}
\newcommand{\bx}{\ensuremath{\overline x}} 
\newcommand{\by}{\ensuremath{\overline y}}
\newcommand{\bz}{\ensuremath{\overline z}}
\newcommand{\bs}{\ensuremath{\overline s}}
\newcommand{\ptp}{\ensuremath{\mathsf{P2P}}\xspace}
\newcommand{\dead}{\ensuremath{D}}
\newcommand{\lb}{\ensuremath{\mathsf{lb}}}
\newcommand{\LB}{\ensuremath{\mathsf{LB}}}
\newcommand{\UB}{\ensuremath{\mathsf{UB}}}
\newcommand{\bon}{\ensuremath{\mathbbm{1}}}
\newcommand{\diam}{\ensuremath{\Delta}}
\newcommand{\spath}{\ensuremath{Q^*}}
\newcommand{\rpath}{\ensuremath{P}}
\newcommand{\nd}{\ensuremath{\varphi}}
\newcommand{\ndset}{\ensuremath{\mathsf{Por}}\xspace}
\newcommand{\fullset}{\ensuremath{\mathsf{Por}}\xspace}
\newcommand{\nxt}{\ensuremath{\mathsf{next}}\xspace}
\newcommand{\Bad}{\B}
\newcommand{\cBad}{\ensuremath{\Bad^c}}
\newcommand{\Vis}{\V}
\newcommand{\Ret}{\Rc}
\newcommand{\dreg}{\ensuremath{d^{\reg}}}
\newcommand{\dist}{\ensuremath{D}}
\newcommand{\Alg}{\A}
\newcommand{\knap}{\ensuremath{\mathsf{knap}}\xspace}
\newcommand{\optknap}{\OPT^{\knap}}
\newcommand{\wt}{\ensuremath{\mathsf{wt}}}
\newcommand{\dbrack}[1]{\ensuremath{\llbracket{#1}\rrbracket}}
\newcommand{\orientkd}{\ensuremath{\mathsf{OrientKD}}\xspace}
\newcommand{\knapokd}{\ensuremath{\mathsf{Knap}\orientkd}\xspace}
\newcommand{\kdo}{\ensuremath{\mathsf{KnapDO}}\xspace}
\newcommand{\knapd}{\ensuremath{\mathsf{KD}}\xspace}
\newcommand{\knbudg}{\ensuremath{T}}
\newcommand{\knwt}{\ensuremath{b}}
\newcommand{\excep}[1][R_v]{\ensuremath{{#1}^{>W/2}}}
\newcommand{\trunc}[1][R_v]{\ensuremath{{#1}^{\leq W/2}}}
\newcommand{\visit}{\ensuremath{\mathsf{Travel}}\xspace}
\newcommand{\csolp}{\ensuremath{\text{(CO-P)}}\xspace}
\newcommand{\csodual}{\ensuremath{\text{(CO-D)}}\xspace}
\newcommand{\optcsolp}{\ensuremath{\OPT_{\mathsf{CO}\text{-}\mathsf{P}}}\xspace}
\newcommand{\optcso}{\ensuremath{\OPT^{\cso}}\xspace}
\newcommand{\dep}{H}
\newcommand{\tim}{T}
\title{Approximation Algorithms for Correlated Knapsack Orienteering}
\author{David Alem\'an Espinosa}{Dept. of Combinatorics and Optimization, Univ. Waterloo,
  Waterloo, ON N2L 3G1.}{dalemane@uwaterloo.ca}{}{Supported in part by NSERC grant 327620-09.} 
\author{Chaitanya Swamy}{Dept. of Combinatorics and Optimization, Univ. Waterloo,
  Waterloo, ON N2L 3G1.}{cswamy@uwaterloo.ca}{https://orcid.org/0000-0003-1108-7941}%
  {Supported in part by NSERC grant 327620-09.} 
\authorrunning{D. Al\'eman Espinosa and C. Swamy}
\keywords{Approximation algorithms, Stochastic orienteering, Adaptivity gap, Vehicle
  routing problems, LP rounding algorithms} 
\begin{document}

\maketitle




\begin{abstract}
We consider the {\em correlated knapsack orienteering} (\csko) problem: 
we are given a travel budget $B$, processing-time budget $W$, finite metric space
$(V,\di)$ with root $\rt\in V$, where each vertex is associated with a job with possibly
correlated random size and random reward that become known only when the job completes. 
Random variables are independent across different vertices.
The goal is to compute a $\rt$-rooted path 
of length at most $B$, in a possibly adaptive fashion, 
that maximizes the reward collected from jobs that are processed by time $W$.
To our knowledge, \csko has not been considered before, though prior work has
considered the uncorrelated problem, {\em stochastic knapsack orienteering}, and 
{\em correlated orienteering}, which features only one budget constraint on the {\em sum}
of travel-time and processing-times. 

Gupta et al.~\cite{GuptaKNR12} observed that the {\em uncorrelated} version of this
problem has a constant-factor adaptivity gap.  
We show that, perhaps surprisingly and in stark contrast to the uncorrelated problem, 
the {\em adaptivity gap of \csko is not a constant, and is at least
$\Omega\bigl(\max\{\sqrt{\log{B}},\sqrt{\log\log{W}}\}\bigr)$}. 
Complementing this result, we devise {\em non-adaptive} algorithms 
that obtain: (a) $O(\log\log W)$-approximation in quasi-polytime; and 
(b) $O(\log W)$-approximation in polytime.
This also establishes that the adaptivity gap for \csko is at most $O(\log\log W)$.
We obtain similar guarantees for \csko with cancellations, wherein a job can be cancelled
before its completion time, foregoing its reward. 
We show that an $\al$-approximation for \csko implies an $O(\al)$-approximation for \csko
with cancellations.

We also consider the special case of \csko, wherein job sizes are weighted Bernoulli
distributions, and more generally where the distributions are supported on at most two
points (\tcsko). 
Although weighted Bernoulli distributions suffice to yield an $\Omega(\sqrt{\log\log B})$
adaptivity-gap lower bound for (uncorrelated) {\em stochastic orienteering}, we show that
they are easy instances for \csko. We develop non-adaptive algorithms that achieve
$O(1)$-approximation in polytime for weighted Bernoulli distributions, and 
in $(n+\log B)^{O(\log W)}$-time for the more general case of \tcsko.
(This also implies that our adaptivity-gap lower-bound example, which uses distributions of
support-size $3$, is tight in terms of the support-size of the distributions.)

Finally, we leverage our techniques to provide a quasi-polynomial time $O(\log\log B)$
approximation algorithm for correlated orienteering 
improving upon the approximation guarantee in \cite{BansalN14}.  
\end{abstract}


\section{Introduction} \label{intro}
The {\em orienteering} problem, first introduced by \cite{GoldenLV87}, is a 
fundamental and widely-studied vehicle-routing problem (\vrp{}).
The input to the problem consists of a length/travel bound $B$, finite metric space
$(V,\di)$ representing travel times, root vertex $\rt\in V$, and non-negative rewards
associated with the vertices. 
The goal is to compute a path rooted at $\rt$ of length at most $B$ 
that collects maximum reward.
Orienteering 
often arises as a subroutine in devising algorithms for other more complex \vrp{}s, both in
approximation algorithms
~\cite{BC+94,FakcharoenpholHR07,ChakrabartyS16,NagarajanR12,FriggstadS14,FriggstadS17,BansalBCM04},
as also in computational methods, where it
arises as the pricing problem when using a branch-cut-and-price method on a
set-covering/configuration LP.  

Gupta et al.~\cite{GuptaKNR12} introduced the following {\em stochastic} version of
orienteering 
to model settings where one must spend some uncertain amount of time 
processing a visited node in order to collect its reward.  
Formally, each vertex corresponds to a job with a random, possibly correlated, processing
time and reward, 
drawn from a given probability distribution.  
Random variables corresponding to different vertices are independent. 
The reward and processing time of a job become known only when the job is fully processed.
The goal is to devise an algorithm, also called {\em policy}, that visits a sequence of
vertices (starting at $\rt$) in a possibly {\em adaptive} fashion that maximizes the
expected total reward collected, subject to the constraint that the total time expended in
traveling and processing jobs is at most $B$. 
Jobs cannot be
preempted, and 
only jobs completed by the time-horizon $B$ yield reward. 
This is the {\em correlated orienteering} (\cso) problem. We refer to the special case
where rewards and sizes are independent, simply as {\em stochastic orienteering}; 
due to independence, this is equivalent to the setting with deterministic rewards
since one can replace the random rewards with their expectations.

A related problem, and the focus of this paper, is {\em correlated knapsack orienteering}
(\csko), wherein there are two separate budgets: $B$ for the (deterministic) travel
time, and $W$ for the total time spent in processing jobs. Again, we refer to the
uncorrelated problem as {\em stochastic knapsack orienteering}. 
Correlated knapsack
orienteering can be motivated from a similar perspective as \cso. Indeed, 
it is quite natural to decouple the ``apples and oranges'' entities of travel time and 
processing time 
when these may represent disparate resources; e.g., travel time may represent
latency of access of jobs in a distributed network, and processing time may present CPU time. 

In general, 
a policy may be {\em adaptive}
and choose the next vertex to visit based on the (size, reward) realizations of the
vertices previously visited;
unless otherwise stated, the approximation ratio is always measured relative to 
the maximum expected reward $\OPT$ 
that can be achieved by an adaptive policy.
On the other hand, a {\em non-adaptive} policy fixes beforehand the sequence of vertices
to visit, and only the stopping-point (when the time-horizon $B$ is exceeded) depends on 
the (size, reward) realizations. While adaptive policies may collect much greater reward,
non-adaptive policies are usually easier to implement, specify, and analyze, by virtue of
the fact that they admit a much-more compact description compared to the
decision tree associated with an adaptive policy, whose description may require space
that is exponential in the input size. 
Consequently, much work in stochastic optimization has focused on
developing good non-adaptive policies and obtaining bounds on the 
{\em adaptivity gap}, 
which is the supremum, over all problem instances, of
$\OPT$/(maximum reward achieved by a non-adaptive policy); see
e.g.,~\cite{DeanGV08,GuhaM09,EneNS18,GuptaKMR11,GuptaKNR12,ChawlaGTTR20,JiangLLS20}.  

To our knowledge, {\em there is no prior work on \csko}, but there has been work on two 
special cases, stochastic knapsack orienteering (i.e., the uncorrelated version of \csko), 
{\em correlated knapsack} (where all nodes are co-located), and the related problem, \cso.
Our current knowledge for these problems can be 
summarized as follows.
(1) Stochastic knapsack orienteering~\cite{GuptaKNR12} and correlated
knapsack~\cite{GuptaKMR11} both admit non-adaptive $O(1)$-approximation algorithms,
implying $O(1)$ adaptivity gap for these problems.
(2) For stochastic orienteering (i.e., the uncorrelated version of \cso), the adaptivity
gap is $\Omega(\sqrt{\log\log B})$~\cite{BansalN14} and $O(\log\log B)$~\cite{GuptaKNR12};
also, there is a non-adaptive algorithm that achieves an $O(1)$-approximation with respect
to {\em the non-adaptive optimum}~\cite{GuptaKNR12}, 
and hence obtains an $O(\log\log B)$-approximation. 
(3) For \cso, the adaptivity gap is also $O(\log\log B)$~\cite{BansalN14}, but this is
established non-constructively; the current best algorithmic results obtain
$O(\al\log B)$-approximation in polytime~\cite{GuptaKNR12} and
$O\bigl(\al\cdot\frac{\log^2\log B}{\log\log\log B}\bigr)$ in
quasi-polytime~\cite{BansalN14}, where $\al$ is the approximation ratio for 
{\em deadline-TSP}.  

As is evident from this summary, \cso admits quite different guarantees compared to
stochastic knapsack orienteering and correlated knapsack, and this state of affairs 
prompts the natural question: where does \csko stand in terms of
difficulty relative to these problems? Is it more difficult than the uncorrelated
problem? How does it compare relative to \cso?


\subsubsection*{Our contributions}
We initiate a study of correlated knapsack orienteering, and obtain results that, in
particular, shed light on these questions. 
Our chief contributions are as follows. 

\begin{enumerate} 
\item 
{\bf Adaptivity gap and approximation algorithms.}
Somewhat surprisingly, and in stark contrast with (uncorrelated) stochastic knapsack
orienteering and correlated knapsack, we prove that 
{\em the adaptivity gap for \csko is not a constant}, showing 
that the correlated problem is strictly harder than the uncorrelated problem.

\begin{theorem}[see Section~\ref{adapt-lbound}] \label{adapgap}
The adaptivity gap for \csko is
$\Omega\bigl(\max\{\sqrt{\log B},\sqrt{\log\log{W}}\}\bigr)$, where $B$ is the travel
budget and $W$ is the processing-time budget. 
\end{theorem}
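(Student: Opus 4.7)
To establish the bound, I would exhibit two parameterized instance families, each witnessing one term of the maximum; the theorem then follows by taking whichever term is larger under the given relationship between $B$ and $W$. Both families share a common template: a sequence of ``phase vertices'' $v_1,\ldots,v_k$, each carrying a correlated size-reward distribution supported on three points---a ``good'' outcome (small size, unit reward), a ``null'' outcome (small size, zero reward), and a ``fatal'' outcome whose realized size effectively exhausts the remaining processing budget $W$, precluding any further reward. Probabilities are calibrated so that an adaptive policy probing the vertices in order and stopping on the first fatal outcome collects $\Omega(\sqrt{k})$ expected reward, whereas any non-adaptive ordering is held to $O(1)$. Support of size~$3$ is essential here: weighted Bernoulli distributions already yield $O(1)$ gap, as the paper later establishes.

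Given this template, the two bounds arise as follows.
\begin{enumerate}[(i)]
\item For $\Omega(\sqrt{\log\log W})$, the phase vertices are essentially co-located (so the travel budget $B$ is irrelevant), and the ``good'' sizes grow \emph{doubly-exponentially} in $i$, forcing $W=2^{2^{\Theta(k)}}$ and hence $k=\Theta(\log\log W)$. This mirrors the \cite{BansalN14}-style construction for stochastic orienteering, adapted so that the processing budget alone drives the phase count.
\item For $\Omega(\sqrt{\log B})$, the vertices are laid out along a path with geometrically-spaced distances $\di(v_{i-1},v_i)=\Theta(2^i)$, so reaching $v_k$ requires $B=\Theta(2^k)$ and hence $k=\Theta(\log B)$. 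Sizes now lie in a bounded range, so $W$ remains small and only the travel budget drives the bound.
\end{enumerate}

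For the adaptive lower bound, in both families the number of ``good'' realizations encountered before the first fatal outcome is a sum of independent indicators whose expectation is linear in $k$; a Chernoff or second-moment argument then yields expected reward $\Omega(\sqrt{k})$. For the non-adaptive upper bound, one must show that for \emph{any} fixed visit-sequence the expected reward is $O(1)$. The key observation is that, unable to react to a fatal outcome, a non-adaptive policy must either invest too much of its processing budget in low-reward vertices, or reach a high-reward vertex with negligible probability of still having enough budget left to process it.

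The main obstacle is the non-adaptive upper bound, and in particular ruling out cleverly interleaved schedules that hedge across different phases. I anticipate handling this via a potential / charging argument: assign a potential proportional to the currently-remaining budget matched against the distribution at the next visited vertex, argue that it is a supermartingale along the visit-sequence, and conclude that the total expected reward collected is bounded by the initial potential, which is $O(1)$ by construction.
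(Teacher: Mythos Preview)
Your template has a fundamental gap that kills both branches of the plan.

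For branch~(i), you propose essentially co-located phase vertices so that the travel budget is irrelevant and only $W$ drives the phase count. But \csko with co-located vertices is exactly the \emph{correlated knapsack} problem, and that problem has a \emph{constant} adaptivity gap (this is \cite{GuptaKMR11}, also recalled in the paper's introduction). So no super-constant gap can be extracted from any construction in which the metric plays no role; the $\sqrt{\log\log W}$ term must come from somewhere else.

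For branch~(ii), vertices on a line with geometrically growing spacings force every policy---adaptive or not---to visit the vertices in the same fixed order, since backtracking blows the travel budget. With the visit order pinned down, the only adaptive freedom left is to stop early, but your ``fatal'' outcome already halts reward collection for \emph{any} policy. There is no residual advantage for the adaptive policy to exploit on a path metric, so you will not separate adaptive from non-adaptive this way either. (Relatedly, your reward accounting is off: if ``good'' outcomes each pay~$1$ and occur a linear-in-$k$ number of times before the first fatal, the adaptive reward would be $\Theta(k)$, not $\Theta(\sqrt k)$; but the real issue is that the non-adaptive policy gets the same thing.)

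What the paper actually does---and what \cite{BansalN14} also does for stochastic orienteering---is use a \emph{branching} metric: a complete binary tree of depth $H$ with geometrically shrinking edge lengths, with $B=\Theta(2^H)$ and $W=2^{2^{\Theta(H)}}$, so a single instance simultaneously witnesses both terms since $H=\Theta(\log B)=\Theta(\log\log W)$. The three-point distribution at each node encodes a \emph{navigational} signal: the ``medium'' outcome tells the adaptive policy to go to the right child, the ``null'' outcome to go left, and the sizes are rigged so that going left after a medium outcome makes it impossible to ever fit a rewarding (large-size) realization again. Crucially, at most one vertex on any root--leaf path can yield positive reward, and the adaptive policy collects $\Omega(1)$ while any fixed root--leaf path collects $O(1/\sqrt H)$. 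The adaptivity gap thus comes from the \emph{exponentially many} root--leaf paths available and the policy's ability to steer among them based on realized sizes---something a linear or trivial metric simply cannot provide. Your supermartingale sketch for the non-adaptive upper bound is not the obstacle; the construction itself needs the tree.
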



Complementing this lower bound, we develop various {\em non-adaptive}
approximation algorithms for \csko. 
Our main algorithmic result is a 
{\em quasi-polytime $O(\log\log W)$-approximation algorithm} for \csko,
which thus shows that the adaptivity gap is $O(\log\log W)$. 

\begin{theorem} \label{csko-guarantees}
There are non-adaptive algorithms for \csko with the following guarantees:
\begin{enumerate}[label=(\alph*), ref={\thetheorem\,(\alph*)}, topsep=0.2ex, noitemsep, leftmargin=*]
\item $O(\log\log W)$-approximation in time $(n+\log B)^{O(\log W\log\log W)}$
  (Section~\ref{csko-quasipoly}); \label{theorem-approxcsko} \label{approx-cskothm}
\item $O(\log W)$-approximation in polynomial time (Section~\ref{csko-poly}).
\label{poly-approxcsko}
\end{enumerate}
\end{theorem}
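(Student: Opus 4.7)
The plan is to build both approximation algorithms around a common LP relaxation based on a truncated/exceptional decomposition of each vertex's reward distribution. For each $v$, write $\nrewd = \trunc + \excep$ where $\trunc = \E{\rewd_v\cdot\bon(\size_v \le W/2)}$ and $\excep = \E{\rewd_v\cdot\bon(W/2 < \size_v \le W)}$ (realizations with $\size_v > W$ never yield reward). The LP has visit-probability variables $y_v$ constrained to an $O(1)$-relaxation of the orienteering polytope with travel budget $B$, plus a knapsack-style processing constraint $\sum_v y_v\cdot\E{\min(\size_v, W/2)} \le W/2$, and objective $\sum_v y_v(\trunc+\excep)$. Using Markov's inequality on the adaptive optimum's trajectory, together with the observation that at most one size-$>W/2$ realization can complete within $W$, one shows that the LP optimum is within a constant factor of $\OPT$.

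To round, I would extract a $\rt$-rooted path $P$ of travel length at most $B$ from the fractional solution via an $O(1)$-approximate orienteering oracle on the support of $y$. Executing $P$ non-adaptively and applying Markov's inequality to $\sum_{v\in P}\min(\size_v, W/2)$ shows that, with probability at least $\tfrac12$, no job along $P$ is aborted on account of its size realizing in $[0, W/2]$, so a constant fraction of the truncated LP mass is collected. For the exceptional part, I would bucket vertices geometrically by $\excep$ and, for each bucket, design a separate non-adaptive policy that targets a single exceptional job from the bucket via a (deadline-)orienteering subroutine with travel budget $B$ and processing deadline $W$. Uniformly randomizing over the truncated-path policy and the $O(\log W)$ bucket policies produces part (b); the $\log W$ factor is the unavoidable cost of selecting a single exceptional bucket in polynomial time.

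For part (a), the goal is to bring the exceptional-component loss down to $O(\log\log W)$ via quasi-polynomial enumeration. The plan is to guess a small ``skeleton'' of candidate high-exceptional-reward vertices and approximate visit-times along the tour, aggregating the exceptional rewards into $O(\log\log W)$ coarser brackets, each of which can be handled at $O(1)$ loss by a refined LP rounding conditioned on the skeleton. Counting enumerations of skeletons with $O(\log W)$ anchor positions drawn from $n + O(\log B)$ candidates yields the claimed $(n+\log B)^{O(\log W\log\log W)}$ running time. The main obstacle is combining the skeleton-exceptional component with the LP-rounded truncated component into one non-adaptive policy whose expected reward is at least $\OPT/O(\log\log W)$: the rounding has to jointly succeed at collecting the truncated rewards along $P$ and at completing the skeleton's exceptional jobs within the residual processing budget, and the correlation between these two events---to be controlled via a careful split of $W$ between truncated and exceptional use, and by exploiting cross-vertex independence---is the delicate technical core. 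Tightness is constrained by the $\Omega(\sqrt{\log\log W})$ adaptivity-gap lower bound (Theorem~\ref{adapgap}).
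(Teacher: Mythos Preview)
Your truncated/exceptional split is a reasonable first step, and indeed the paper observes (Lemma~\ref{csko-difficult}) that the truncated case $\trunc[\I]$ admits an $O(1)$-approximation essentially along the lines you describe. But this means the exceptional case is \emph{the} hard case, and your plan for it does not work. Bucketing vertices by the magnitude of $\excep$ and then ``targeting a single exceptional job from the bucket'' overlooks the real obstacle: to collect reward from a vertex $v$ whose size realizes to some $s\in(W/2,W]$, you must start it by time $W-s$, and that start time is determined by the (random) sizes of the vertices you visited before $v$. The adaptivity-gap lower bound (Section~\ref{adapt-lbound}) is built precisely around this: the adaptive policy gets to choose which vertex to visit next \emph{after} seeing the realized sizes so far, whereas any non-adaptive ordering can be forced to waste the budget (see also Appendix~\ref{append-corrknap}). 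A per-bucket deadline-orienteering instance with travel budget $B$ and a single processing deadline $W$ cannot encode this dependence; in particular, the ``reward'' you collect from the targeted vertex is not a fixed number but a function of the processing time accumulated on the way there.

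The missing idea, used in both parts, is to bucket by \emph{processing start time} rather than by reward magnitude. Define $\pi_v(t)=\E{R_v\cdot\bon_{S_v\le W-t}}$, the expected reward from $v$ if it begins at time $t$, and $\mu^j_v=\E{\min\{S_v,2^j\}}$. For part~(b) (Section~\ref{csko-poly}), one shows that for some $j\in\dbrack{L}$ the adaptive optimum restricted to vertices started in $[2^j-1,2^{j+1}-1)$ yields a feasible fractional solution to the \knapo LP~\eqref{kolp} with rewards $\pi_v(2^j-1)$, weights $\mu^j_v$, and knapsack budget $2^{j+1}$, of value $\Omega(\OPT/L)$; rounding this and subsampling gives the polytime bound. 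For part~(a) (Section~\ref{csko-quasipoly}), one extracts from the adaptive policy a single path $\spath$ together with portals $\nd_0\preceq\cdots\preceq\nd_k$, $k\le L$, satisfying a \emph{prefix-size} property $\mu^j(\spath_{\rt,\nd_j}-\nd_j)\le(K+1)2^j$ with $K=O(\log\log W)$; each segment is further chopped into $O(K)$ pieces of $\mu^j$-weight at most $2^j$ (this is the ``skeleton'' you want, but the pieces are indexed by time level, not by exceptional-reward magnitude). Guessing all $O(KL)$ portal nodes gives the stated running time, and the fine chopping is exactly what lets the randomized rounding of the configuration LP~\eqref{cskolp} satisfy the prefix constraints with only $O(1)$ violation via Chernoff---this is the step that beats the $O\bigl(\tfrac{\log\log W}{\log\log\log W}\bigr)$ loss of prior work. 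Your sketch for part~(a) never identifies the prefix-size invariant or the $\mu^j$ truncations, and the ``combine skeleton-exceptional with truncated'' step you flag as delicate is not actually how the two pieces interact: the whole algorithm operates on the $\pi_v(2^j-1)$ proxies, which already fold truncated and exceptional contributions together.
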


By leveraging the approach leading to Theorem~\ref{approx-cskothm}, 
we also obtain 
the following guarantee for correlated orienteering, which improves upon the approximation
guarantee in~\cite{BansalN14} (that also runs in quasi-polytime) by an 
$O\bigl(\frac{\log\log B}{\log\log\log B}\bigr)$-factor.  

\begin{theorem}\label{approxcso} \label{csoapprox-thm} \label{cso-approxthm}
Given an $\al$-approximation algorithm for deadline TSP with running time $\tim$, 
we can obtain a non-adaptive $O(\al\log\log{B})$-approximation algorithm for \cso with
running time $(n+\log{B})^{O(\log{B}\log\log{B})}\cdot\tim$. 
Using the algorithm for deadline TSP in~\cite{FriggstadS21}, we obtain an
$O(\log\log B)$-approximation in quasi-polytime.
\end{theorem}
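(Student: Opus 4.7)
The plan is to adapt the quasi-polynomial time $O(\log\log W)$-approximation for \csko from Theorem~\ref{approx-cskothm} to the single-budget setting of \cso. Intuitively, \cso behaves like a \csko instance with processing budget $W = B$ (since processing can never exceed the overall budget), subject to the additional constraint that travel and processing share the same budget $B$. This coupling is handled by replacing the orienteering subroutine implicit in the \csko algorithm with deadline TSP: each vertex $v$ along a candidate path $Q$ is assigned a deadline equal to $B$ minus the travel to $v$ along $Q$ minus the expected processing time of its predecessors on $Q$. This substitution is the source of the $\al$-factor in the final guarantee.

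First, I would set up a configuration-style LP relaxation paralleling the one behind Theorem~\ref{approx-cskothm}. Variables $x_Q \ge 0$ index rooted paths $Q = (\rt, v_1, \ldots, v_k)$ with a prescribed vertex ordering; each $Q$ carries a deadline profile whose $i$th entry is $B$ minus the length of the $\rt$-to-$v_i$ prefix of $Q$ minus $\sum_{j < i} \Ex[\size(v_j)]$, and a value $\val(Q)$ equal to the expected reward when $v_1, \ldots, v_k$ are processed in order and the policy stops the first time the combined budget $B$ is exhausted. The LP maximizes $\sum_Q \val(Q)\, x_Q$ subject to $\sum_Q x_Q \le 1$ together with the per-vertex covering constraints inherited from the bucketing used to prove Theorem~\ref{approx-cskothm}.

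Second, I would prove that this LP has value $\Omega(\OPT / \log\log B)$. The argument mirrors the \csko adaptivity-gap proof with $W$ replaced by $B$: bucket jobs by their expected size into $O(\log\log B)$ classes, apply pigeonhole to isolate a bucket contributing $\Omega(\OPT / \log\log B)$ of the expected reward of an optimal adaptive policy, and invoke a martingale-type concentration to show that a non-adaptive path extracted from this bucket respects its deadline profile, and hence the combined budget $B$, with constant probability. Third, I would separate/solve the LP approximately and round. The LP's dual separation problem reduces to deadline TSP, so one plugs in the $\al$-approximation as a black box, incurring the $\al$-factor and an $\tim$-factor in the running time. Enumeration over bucket profiles and over commitments to ``large'' jobs, exactly as in the \csko algorithm, contributes the $(n + \log B)^{O(\log B \log\log B)}$ factor. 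Rounding samples $Q$ with probability proportional to $x_Q$ and executes $Q$ non-adaptively in its fixed order, stopping when the budget $B$ is exhausted.

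The main obstacle is Step~2: lifting the \csko adaptivity-gap argument to the coupled-budget setting. In \csko the decoupled budgets allow travel and processing to be analyzed almost independently, whereas in \cso each unit of processing steals from the travel allowance. This is precisely why deadline TSP, rather than ordinary orienteering, must appear throughout the pricing and the structural argument, and is the source of the $\al$-factor in the final approximation guarantee.
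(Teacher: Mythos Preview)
Your proposal has a genuine structural gap: it misidentifies both where the $O(\log\log B)$ factor comes from and what is being enumerated, so the pieces you describe do not assemble into a proof.

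In the paper's approach (Section~\ref{cso-approx}), the starting point is the structural lemma of \cite{BansalN14} (Lemma~\ref{cso-simpstructhm}): from the optimal adaptive policy one extracts a rooted path $P$ and \emph{at most $L=O(\log B)$ portal nodes} $\nd_0\preceq\cdots\preceq\nd_k$ along $P$ such that (a) the $\pi_v(d(P_{\rt,v})+2^j-1)$ rewards of the segments sum to $\Omega(\optcso)$, and (b) the truncated-mean prefixes satisfy $\mu^j(P_{\rt,\nd_j}-\nd_j)\le (K+1)2^j$ with $K=O(\log\log B)$. The $O(\log\log B)$ factor is \emph{not} a pigeonhole over $O(\log\log B)$ size-buckets; there are $O(\log B)$ buckets (indices $j$), and the $K$ enters only as the slack in the prefix-size bound, which later becomes the inverse sampling probability in step~\ref{poloutput}. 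Your ``bucket jobs by expected size into $O(\log\log B)$ classes and apply pigeonhole'' does not reproduce this structure and would not yield an LP with value $\Omega(\optcso/\log\log B)$.

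The second missing ingredient is the refinement that actually gives the improvement over \cite{BansalN14}: each $P_{\nd_{j-1},\nd_j}$ is further split into at most $2(K+1)$ sub-segments of $\mu^j$-weight at most $2^j$ (Theorem~\ref{cso-strucdthm}), and it is the endpoints of \emph{these} sub-segments, together with midpoint vertices $m_a$ and regret exponents $\gm_a$, that are enumerated---this is the source of the $(n+\log B)^{O(\log B\log\log B)}$ running time, not ``bucket profiles'' or ``commitments to large jobs.'' The configuration LP then has one variable per portal node $a$ and per $a$-$\nxt(a)$ path $\tau$ satisfying a knapsack bound on $\mu^j$, with value $\pi^{a,j}(\tau)$ defined via a \emph{monotone-reward} function of travel time; the separation problem is \ptp knapsack monotone-reward TSP, which reduces to deadline TSP via Corollary~\ref{knapvrp-approx}. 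Randomized rounding over these sub-segments yields independent $2^j$-bounded summands, so Chernoff plus a union bound over $O(\log B)$ indices gives only $O(1)$ violation of the prefix constraint---this is precisely what your proposal lacks, and without it one recovers only the weaker $O(\log\log B\cdot\frac{\log\log B}{\log\log\log B})$ guarantee of \cite{BansalN14}. Your single-variable-per-full-path LP with ad hoc ``deadline profiles'' does not give this decomposition, and the deadlines you propose (subtracting expected sizes of predecessors) do not correspond to the actual constraint structure.
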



\item 
{\bf \boldmath \csko with 2-point distributions.}
Our adaptivity-gap lower bound uses distributions of support-size $3$, whereas the
$\Omega(\sqrt{\log\log B})$ adaptivity-gap lower-bound example for stochastic orienteering
in~\cite{BansalN14} considers {\em weighted Bernoulli size distributions}. 

In Section~\ref{csko-refine}, we investigate \csko from a fine-grained-complexity
perspective to understand this discrepancy.
In contrast with stochastic orienteering, 
we show that when all distributions are supported on at most $2$ points---we call this
\tcsko---%
{\em the adaptivity gap becomes $O(1)$} (Theorem~\ref{tcsko-nogap}), and we can 
obtain a {\em non-adaptive $O(1)$-approximation in time $(n+\log B)^{O(\log W)}$}
(Corollary~\ref{tcsko-equiv}). Moreover, for 
weighted Bernoulli size distributions, we can obtain a 
{\em polytime non-adaptive $O(1)$-approximation} (Theorem~\ref{bercsko-thm}). 
Our key insight here lies in identifying a novel {\em deterministic} problem, that we call 
{\em orienteering with knapsack deadlines} (\orientkd), which we show is equivalent 
to \tcsko, up to constant factors. 
In \orientkd, in addition to orienteering, each vertex $v$ has a weight and
knapsack deadline, and an orienteering-solution $P$ is feasible, if for every $v\in P$,
the total weight of all nodes on $P$ up to (and including) $v$ is at most its
knapsack deadline.
For instance, in a setting where jobs distributed over a network have to be processed on a
single machine, travel times could represent the latency involved in 
accessing a job, and the knapsack deadlines would capture completion-time deadlines on the
machine.  
We obtain 
the above approximation guarantee for \orientkd (Section~\ref{okd-alg}), 
and hence obtain the same guarantee (up to constant factors) for \tcsko. 

Our results for \tcsko thus show that 
our adaptivity-gap
example is tight in terms of the support-size of underlying distributions: 
{\em any} such lower-bound example {\em must}
involve some distributions of support-size at least $3$. 

\item 
{\bf \boldmath \csko with cancellations.} 
In this version of the problem (see Section~\ref{csko-cancel}), 
we may {\em cancel} (i.e., discard) the current vertex $v$ at any time-step
prior to its completion, foregoing its reward, and we are not allowed to return to $v$.
We obtain the
same approximation guarantees for this problem as for \csko: i.e., 
quasi-polytime $O(\log\log W)$-approximation, and polytime $O(\log W)$-approximation. 
En route, we 
obtain an $O(1)$-approximation for the special case where
we obtain non-zero rewards only when jobs instantiate to size at most $W/2$
(Theorem~\ref{cskoc-small-thm}).
\end{enumerate}

\noindent
Our results paint a nuanced picture of the complexity of \csko vis-a-vis 
\cso and stochastic knapsack orienteering. While \csko is harder than stochastic
knapsack orienteering, our algorithmic results suggest that \csko is easier than
correlated orienteering. We obtain similar approximation factors for both problems
in quasi-polytime, but in polytime, we obtain approximation factor of $O(\log W)$ for
\csko, 
while the current-best polytime factor for \cso is $O(\log n\log B)$; moreover, with
weighted Bernoulli distributions, 
\csko is provably easier than \cso.

\subsubsection*{Technical overview}
We highlight the key 
technical ideas underlying our results. Let $\OPT$ be the optimal
reward for \csko. Let $\Sv$ denote the random size of vertex $v$. For an integer $j\geq 0$, let
$X^j_v:=\min\{\Sv,2^j\}$ and $\mu^j_v=\E{X^j_v}$. 
The significance of these quantities is
that if $\mu^j(P_{\rt,v}-v)\leq c\cdot 2^j$, where $P$ is a rooted path, $v\in P$,
and $P_{\rt,v}$ is the $\rt\leadsto v$ portion of $P$, then 
a random subpath $P''$ of $P$ where we retain each $u\in P$ independently with probability 
$\frac{1}{2c}$ 
satisfies 
$\Pr[\text{$v\in P''$ and starts being processed by time $2^j$}]\geq\frac{1}{4c}$; this indicates
that $\pi_v(2^j)$, which is the expected reward of $v$ if its processing starts by time
$2^j$, can serve as a good proxy for the expected reward obtained from $v$.

\vspace*{-1ex}
\subparagraph*{Algorithms for \boldmath \csko and \cso.}
Our quasi-polytime $O(\log\log W)$-approximation for \csko builds upon a structural result
for \cso shown by~\cite{BansalN14}. Adapting their result (see
Lemma~\ref{cso-simpstructhm}) to the setting of \csko immediately
yields 
that one can extract a suitable rooted path
$\spath$ from the decision tree representing an optimal adaptive policy, 
and suitable nodes $\nd_{-1}=\rt,\nd_0,\nd_1,\ldots,\nd_k$ on $\spath$, where
$k\leq\log W$, such that (roughly speaking): 
(a) the prefix property $\mu^j(\spath_{\rt,\nd_j}-\nd_j)\leq O(K)\cdot 2^j$ holds for every
$j=0,\ldots,k$, and (b) $\sum_{j=0}^k\sum_{v\in\spath_{\nd_{j-1},\nd_j}}\pi_v(2^j)=\Omega(\OPT)$, where
$K=O(\log\log W)$. 
So if we could find this path $\spath$,
then using the sampling idea  
described above, one can easily obtain an $O(K)$-approximation. For \cso, Bansal and
Nagarajan~\cite{BansalN14} ``guess'' the {\em portal nodes} $\nd_0,\ldots,\nd_k$ and write
a configuration LP to find suitable paths between every pair of consecutive portal
nodes. 
They use randomized rounding to round a fractional solution, which 
incurs a $\frac{\log k}{\log\log k}$-factor violation of the prefix property due to
Chernoff bounds, since 
for each $j$, $\mu^j(\spath_{\rt,\nd_j}-\nd_j)$ can be written as a sum of 
$O(K)\cdot 2^j$-bounded independent random variables.
When one combines this with the node-sampling step, 
one therefore incurs an $O\bigl(K\cdot\frac{\log k}{\log\log k}\bigr)$-factor loss 
relative to the value of the LP solution.  

For \csko (and \cso), we proceed similarly, but we guess many more portal vertices. 
We split each $\spath_{\nd_{j-1},\nd_j}$ into $O(K)$ segments having
$\mu^j$-weight at most $2^j$, and guess the end-points of all such segments (see
Theorems~\ref{structhm}, \ref{strucdthm}).
We then again set up a configuration LP and use randomized rounding; however, we
can now ensure that {the prefix property holds with $O(1)$ violation}, since we can
decompose $\mu^j(\spath_{\rt,\nd_j}-\nd_j)$ into a {\em sum of $2^j$-bounded random variables}
corresponding to the $\mu^j$-weight of each random segment. Thus, an application of
Chernoff bounds and the union bound only incurs an $O(1)$-factor violation of the prefix
property, since $K=\Omega(\log k)$; therefore, we lose only an $O(K)$-factor compared to
the value of the LP solution. This idea also underlies our improved approximation for
\cso. The only essential difference from \csko 
comes from how well we can solve
the corresponding configuration LP; for \csko, we can obtain an $O(1)$-approximation to
the LP-optimum using an $O(1)$-approximation algorithm for knapsack orienteering (see below), 
but for \cso, we obtain an $O(\al)$-approximate LP solution given an
$\al$-approximation for deadline TSP. 

The $O(\log W)$-approximation for \csko proceeds by
relating the problem to {\em knapsack orienteering} (\knapo), which is orienteering with
an additional budget constraint on the total node-weight of the path. 
For each index $j=0,1,\ldots,\log W$, we use the portion of the optimal
adaptive-policy tree corresponding to nodes processed at some point in $[2^j,2^{j+1})$,
to extract a good {\em fractional solution to an 
LP-relaxation \eqref{kolp} for \knapo}, where we exploit the LP-relaxation for
orienteering in~\cite{FriggstadS17}. This translation is quite easy because one can
naturally interpret the LP variables as corresponding to certain probabilities obtained
from an adaptive policy. 

We remark that one can combine the LP-relaxations for orienteering~\cite{FriggstadS17} and
the {\em correlated knapsack} problem~\cite{GuptaKMR11}, which is the special
case where all nodes are co-located, to obtain an LP for \csko. However, the
chief impediment in rounding an LP solution is that the rounding algorithms for
orienteering and correlated knapsack may give rise to incompatible orderings.
Rounding the orienteering-portion of the LP solution yields a node sequence, and we
need to stick with a subsequence of this to satisfy the travel-budget
constraint. However, forcing one to consider items in a prescribed order for correlated
knapsack can drastically reduce the reward obtained, because jobs that yield reward only
when they instantiate to
large sizes (i.e., $>W/2$) may need to be processed in a different incompatible order; see 
Appendix~\ref{append-corrknap}. 
This tension is real, as evidenced by our adaptivity-gap lower bound, and seems challenging
to deal with. 

\vspace*{-1ex}
\subparagraph*{\boldmath \tcsko.}
For \tcsko, the chief insight is that the problematic case where we obtain reward only
from large-size instantiations becomes quite structured in two ways. 
(1) There is no adaptivity gap (Theorem~\ref{tcsko-nogap}), since only the path in the
adaptive-policy tree corresponding to small-size (i.e., $\leq W/2$) 
instantiations can yield non-zero reward. (2) Given (1), one can infer that 
the reward obtained from a vertex $v$ is a function of the total small size of all
vertices visited up to $v$, and the total probability mass of vertices visited up to $v$
(see Claim~\ref{expreward}).
This allows one to define an instance of {\em orienteering with knapsack deadlines}
(\orientkd) to capture the stochastic problem.


\vspace*{-1ex}
\subparagraph*{\boldmath \csko with cancellations.}
The algorithm for \csko with cancellations (\cskocancel) considers two cases. For the
large-size instantiations, it is not hard to argue that cancellations do not help (as with
correlated knapsack~\cite{GuptaKMR11}). For the small-size instantiations, we formulate an
LP by combining the LPs for orienteering~\cite{FriggstadS17} and correlated knapsack with
cancellations~\cite{GuptaKMR11}. 
We show that from an LP solution, one can define a suitable \knapo-instance and extract a
good LP solution to this \knapo-instance. The \knapo-instance is defined in such a way
that feasible solutions to this instance can be mapped to fractional solutions to the
correlated-knapsack LP. So we can first round the solution to obtain an integral
\knapo-solution $Q$, and then utilize the LP-rounding algorithm in~\cite{GuptaKMR11} for
correlated knapsack with cancellations to process vertices, with cancellations, 
{\em in the order they appear on $Q$}. We note here that it is crucial that the algorithm
in~\cite{GuptaKMR11} for small-size instantiations has the flexibility that one can
specify a prescribed order for considering vertices (unlike the case of \csko with
large-size instantiations).

\subsubsection*{Related work}
As mentioned earlier, {\em orienteering} is a fundamental problem in combinatorial
optimization that finds various applications. 
Blum et al.~\cite{BlumCKLMM07} devised the first constant-factor approximation algorithm
for orienteering, and the current best approximation factor is $(2+\e)$ for any
$\e>0$~\cite{ChekuriKP12}. 
Friggstad and Swamy~\cite{FriggstadS17} gave the first LP-based $O(1)$-approximation
algorithm. 
Their LP plays an important role for obtaining some of our results.
%
{\em Deadline TSP}, also known as {\em deadline orienteering}, is a generalization of
orienteering, where nodes now have deadlines, and a path $P$ is feasible if, for every
$v\in P$, its travel time along $P$ is at most its deadline; the goal is again to compute
a maximum-reward feasible path. 
Both orienteering and
deadline TSP can be considered in the rooted, or {\em point-to-point} (\ptp) setting,
where both the start and end nodes of the path are specified.
%
Deadline TSP admits a polytime $O(\log n)$-approximation~\cite{BansalBCM04} and an
$O(1)$-approximation in time $n^{O(\log(\text{maximum deadline}))}$~\cite{FriggstadS21}.
Friggstad and Swamy~\cite{FriggstadS21} 
also consider the more general {\em monotone-reward TSP}, where there are no deadlines but
the 
the reward of a node $v$ having travel time $t$ is given by $\rewd_v(t)$, where $\rewd(.)$
is a non-increasing function. They showed that this problem is essentially equivalent to
deadline TSP.

The literature on stochastic optimization problems is rich, and we discuss below only the
work that is most relevant to our work.
\begin{itemize}
\item {\bf Stochastic knapsack problems.} 
Stochastic orienteering and \csko generalize respectively {\em stochastic knapsack}, which
was studied in the seminal work of~\cite{DeanGV08}, and 
{\em correlated knapsack}~\cite{GuptaKMR11,Ma14}, which correspond to the special case
where all nodes are co-located (i.e., the travel budget is irrelevant). 
The state-of-the-art for stochastic knapsack is a $(2+\e)$-approximation~\cite{Bhalgat11}.
Gupta et al.~\cite{GuptaKMR11} obtained the first constant-factor approximation for
correlated knapsack, and the constant was improved to $(2+\e)$ by Ma~\cite{Ma14}.

\item{\bf\boldmath Stochastic \vrp{}s.} 
We have already mentioned the works of Gupta et al.~\cite{GuptaKNR12} and~\cite{BansalN14}
that consider (uncorrelated) stochastic orienteering and correlated orienteering.
A minimization version of stochastic orienteering, called {\em stochastic $k$-TSP} was considered
by~\cite{EneNS18,JiangLLS20}, where instead of a travel budget, we want to collect a
reward of at least $k$, and seek to minimize the expected travel time. 
Ene et al.~\cite{EneNS18} gave an adaptive $O(\log{k})$-approximation algorithm for this
problem, and Jiang et al.~\cite{JiangLLS20} obtained a non-adaptive $O(1)$-approximation.
The special case where all nodes are co-located is called {\em stochastic knapsack cover}
for which~\cite{DeshpandeHK16} obtained a $(2+\e)$-approximation.

\item {\bf Multi-armed bandits with metric switching costs.}
A related problem to \csko is the {\em multi-armed bandit} problem with metric switching
costs, considered by Guha and Munagala~\cite{GuhaM09}, which can be viewed as a setting
where each vertex corresponds to a Markov chain (i.e., arm) with known transition
probabilities and rewards. Guha and Munagala consider this setting under a crucial 
{\em  martingale assumption}, which does not hold for \cso or \csko, with separate budgets
for the travel-cost and the number of arm-pulls, as in \csko. In their setting, one can
also abandon a vertex and possibly return to this vertex at a later time.
They devise an $O(1)$-approximation algorithm for this problem
that is a hybrid between adaptive and non-adaptive policies: it
non-adaptively specifies the sequence of arms to visit, 
but adaptively decides when an arm should be abandoned.  
%
They use an elegant Lagrangian-relaxation idea to reduce the problem to orienteering; this
Lagrangian-relaxation idea was also later used in~\cite{GuptaKNR12}.
\end{itemize}

\section{Preliminaries and notation} \label{prelim}
For an integer $n\geq 0$, we use $[n]$ to denote $\{1,\ldots,n\}$, where $[0]:=\es$, and
$\dbrack{n}$ to denote $\{0\}\cup[n]$. 
For any universe $U$, set $S\sse U$ and element $e\in U$, we sometimes use $S-e$ and $S+e$ to denote
$S\sm\{e\}$ and $S\cup\{e\}$ respectively.

The problems we consider involve a metric space $(V,\di)$ and root $\rt\in V$. 
The metric $\di:V\times V \mapsto \Z_{\geq 0}$ is symmetric and 
captures travel times between vertices; by scaling we may assume that these are integers. 
Let $n=|V|$ and $\diam$ be the diameter of the metric space.
For a set $S$ of edges of the underlying complete graph $(V,E)$, 
we use $\di(S)$ to denote $\sum_{e\in S}\di(e)$. Similarly, for any 
$f\in\R^V$ and $U\sse V$, $f(U)$ denotes $\sum_{v\in U}f_v$. 
We say that a path $P$ in $G$ is rooted if it begins at $\rt$. 
We always think of the nodes on a rooted path $P$ as being ordered in increasing order of
their distance from $\rt$ along the path. 
For any $u,w\in P$, we say $u\prec_P w$ to denote that $u$ comes before $w$ on $P$, 
and $u\preceq_P w$ means that $u=w$ or $u\prec_P w$;
we omit the subscript $P$ 
when $P$ is clear from the context. 
%
%
We will interchangeably think of a path as an edge-set, or a sequence of nodes; the
meaning will be clear from the context.
For any path $P$ and nodes $a,b\in P$, we use $P_{a,b}$ to denote the $a$-$b$ portion of
$P$. For a path $P$ starting at node $r$, and a node $v\in P$, we define the travel time
of $v$ as $d(P_{r,v})$. 

\vspace*{-1ex}
\subparagraph*{Deterministic max-reward vehicle routing.}
The following three vehicle routing problems (\vrp{}s) play a prominent role in the study
of stochastic orienteering. All three problems fall in the genre of max-reward \vrp{}s,
wherein we have nonnegative node rewards $\{\pi_v\}_{v\in V}$, and we need to select some
vertices and find a suitable path visiting these vertices, so as to maximize the reward
obtained. The differences in the problems lie in which paths are allowed, and the
definition of the reward collected by a path.
All these problems below can be considered in the {\em rooted} setting, where we have a root
$\rt$ and the feasible paths form a subset of rooted paths, or in the {\em point-to-point}
(\ptp) setting, where both a start-node $a$ and end-node $b$ are specified, and the
feasible paths are a subset of $a$-$b$ paths.
\begin{itemize}[nosep, leftmargin=*]
\item {\bf Orienteering.} We have a length budget $B$, and feasible paths (in both the
rooted and \ptp versions) are those with length at most $B$; we collect the reward of all 
nodes on a feasible path.

\item {\bf Deadline TSP}, also called {\bf deadline orienteering.} Here nodes have
deadlines $\{\dead_v\}_{v\in V}$. 
A path $P$ with the appropriate end-points is feasible if the travel time of each
node in $P$ is at most its deadline.
So in the rooted case, a rooted path $P$ is feasible if $d(P_{\rt,v})\leq\dead_v$ for all
$v\in P$; 
in the \ptp-case, an $a$-$b$ path $P$ is feasible if $d(P_{a,v})\leq\dead_v$ for all
$v\in P$.
We collect the reward of all nodes on a feasible path.
(Equivalently, one can say that the feasible paths are {\em all} paths with the prescribed
end-points, and we collect the reward from all nodes on the path that are
visited {\em by their deadlines}.)

Observe that orienteering is the special case where the deadline of each node
is the length bound $B$.
Also, the rooted and \ptp versions of deadline TSP are equivalent~\cite{FriggstadS17}. 

\item {\bf Monotone-reward TSP.} This is a generalization of deadline TSP, where each
node $v$ has a non-increasing reward-function $\pi_v:\Z_+\mapsto\R_+$, where $\pi_v(t)$
gives the reward obtained from $v$ if $v$ is visited at time $t$.
Every path $P$ with the appropriate end-points is feasible, 
and the reward of $P$ is given by
$\sum_{v\in P}\pi_v(\text{travel time of $v$})=\sum_{v\in P}\pi_v\bigl(d(P_{r,v})\bigr)$,
where $r$ is the start node of $P$. 

Friggstad and Swamy~\cite{FriggstadS17} showed that monotone-reward TSP can be reduced to 
deadline TSP losing a $(1+\e)$-factor, for an $\e>0$.
Monotone-reward TSP will play a key role in the algorithm for correlated orienteering. 
\end{itemize}

\vspace*{-1ex}
\subparagraph*{Stochastic orienteering problems.}
In the {\em correlated knapsack orienteering} (\csko) problem, each vertex $v\in V$ is
associated with an stochastic job with a random processing time or size 
$\Sv\in\Z_{\geq 0}$ and a possibly {\em correlated} random reward 
$\Rv\in\mathbb{R}_{\geq 0}$. 
We use the terms processing time and size interchangeably.
These random variables are independent across different
vertices, and their distributions are specified in the input. We are given a length or
travel-time budget
$B$, and a processing-time budget $W$. A solution, or policy, for \csko
visits a sequence of (distinct) vertices starting from the root $\rt$, in a possibly
adaptive fashion, without exceeding the travel-time and processing-time budgets.
More precisely, when a vertex $v$ is visited, it's corresponding job is processed
non-preemptively, and we get to know the processing time and reward of the job only upon
its completion; the completion time of job $v$ is the total processing time of all jobs up
to and including $v$.
So if the adaptive policy visits vertices $v_0:=\rt,\ldots,v_\ell=u$ in that order,
then it must be that the total travel-time $\sum_{i=1}^\ell\di(v_{i-1},v_i)$ to get to $u$
is at most $B$, and the total processing time of (the jobs associated with)
$v_1,\ldots,v_{\ell-1}$ is at most $W$. We collect the rewards of 
$v_1,\ldots,v_{\ell-1}$, and we collect $u$'s reward if its completion time 
is at most $W$.
The goal is to maximize the expected total reward collected. 
For notational convenience, we also assign a deterministic value of $0$ to the reward and
processing time of $\rt$. 

In the {\em correlated orienteering} (\cso) problem, the setup is almost the
same as in \csko, except that there is only one budget $B$, which is the budget
for the {\em sum} of the travel times and processing times. (That is, we have one
notion of time, which advances due to both travel and the processing of jobs.)
So if an adaptive policy for \cso visits vertices $v_0:=\rt,\ldots,v_\ell=u$ in that
order, 
then we must have $\sum_{i=1}^\ell\di(v_{i-1},v_i)+\sum_{i=1}^{\ell-1}\nsize[v_i]\leq B$;
that is, the completion time of each $v_i$ for $i=1,\ldots,\ell-1$, 
as also the time when we reach $v_\ell$, {\em taking into account both travel time and
processing time}, should be at most $B$. 
We collect rewards from $v_1\ldots,v_{\ell-1}$, and we collect $u$'s reward if 
$\sum_{i=1}^\ell\di(v_{i-1},v_i)+\sum_{i=1}^{\ell}\nsize[v_i]\leq B$.

Any adaptive policy for \csko or \cso can be represented by a decision tree $\T$ rooted at
$\rho$, 
whose nodes are labeled by vertices of $V$, and the branches of a node labeled $v\in V$
correspond to the different size and reward instantiations of $v$, with each branch
specifying the next node to visit under the corresponding instantiation.

A {\em nonadaptive policy} (for \csko or \cso) fixes a priori the sequence of vertices to
potentially visit, {\em without looking at the size and reward instantiations}. The 
{\em adaptivity gap} for an instance is the ratio (optimal expected reward
collected by an adaptive policy)/(optimal reward collected by a nonadaptive policy), and
the adaptivity gap for a problem is the supremum over all instances of the adaptivity gap
for the instance.


\vspace*{-1ex}
\subparagraph*{Deterministic knapsack-constrained vehicle routing.}
Algorithms for stochastic orienteering problems 
frequently utilize knapsack-constrained variants of deterministic \vrp{}s,
wherein we seek a feasible solution to the \vrp
satisfying an additional knapsack constraint on the total vertex-weight of the path.
More precisely, suppose we have an underlying ``base'' max-reward \vrp, specified
by a collection $\I$ of feasible paths along with nonnegative vertex-rewards
$\{\pi_v\}_{v\in V}$, where the goal is to find a maximum-reward path in $\I$. In the  
{\em knapsack-constrained version of this \vrp}, we
also have a knapsack constraint specified by nonnegative knapsack weights
$\{\wt_v\}_{v\in V}$ and 
knapsack budget $W$, which restricts the set of feasible solutions to
$\I^\knap:=\{\tau\in\I: \sum_{v\in\tau}\wt_v\leq W\}$;
the goal is to find a maximum-reward path in $\I^\knap$, i.e., a maximum-reward
path in $\I$ satisfying the knapsack constraint.
When the base \vrp is: (i) orienteering, the knapsack-constrained
version is {\em knapsack orienteering} (\knapo); (ii) deadline-TSP, 
the knapsack-constrained version is 
{\em knapsack deadline orienteering} (\kdo).  
These two problems were considered by~\cite{GuptaKNR12,BansalN14} in the
context of 
stochastic orienteering.
We say that the base-\vrp is a rooted-\vrp, if all paths in $\I$ start at the same vertex,
and it is a \ptp-\vrp, if all paths in $\I$ have the same start and end nodes.

We give a general reduction (Theorem~\ref{knapvrp}) showing if the base-\vrp is a
rooted-\vrp or \ptp-\vrp, and satisfies a certain subpath-closure property, 
then an $\al$-approximation for the \vrp can be used as a black-box to obtain an
$(\al+2)$-approximation for the knapsack-constrained \vrp.
Let $\tau$ be a path with ends $a,b\in V$, which we will view as a sequence of nodes.
By a {\em \ptp-subpath} of $\tau$, we mean any
$a$-$b$ path whose node-sequence is a subsequence of $\tau$; 
by a {\em rooted-subpath} of $\tau$, we mean any path starting at $a$ whose node-sequence
is a subsequence of $\tau$.
(Note that any subsequence of $\tau$ yields a path, since we are working with a complete
graph.) 
The {\em subpath-closure} property requires that for every path $\tau\in\I$: 
(a) for rooted-\vrp, every rooted-subpath $\tau'$ of $\tau$ is also in $\I$,  
(b) for \ptp-\vrp, every \ptp-subpath $\tau'$ of $\tau$ is also in $\I$. 
Most max-reward \vrp{}s---e.g., orienteering, deadline TSP%
---satisfy 
the subpath-closure property. 
(Also, note that if a \vrp satisfies the subpath-closure property, then so does the
knapsack-constrained \vrp.)

The above reduction is based on a Lagrangian-relaxation idea that was also used
by~\cite{GuptaKNR12}, specifically to obtain approximation algorithms for \knapo and \kdo. 
However, their approach results in a constant-factor blowup in the approximation ratio
(factor $2$ for \knapo, and factor $4$ for \kdo%
\footnote{\cite{GuptaKNR12} do not explicitly state a result for \kdo,
and instead embed this result within their algorithm for correlated
orienteering. We can infer this factor by tracing through their algorithm and analysis.}  
when going from the \vrp to the
knapsack-constrained \vrp; our general reduction yields a better factor, in a somewhat 
simpler fashion. 

\begin{theorem} \label{knapvrp}
Consider a max-reward rooted-\vrp or \ptp-\vrp, specified by a set $\I$ of
feasible solutions satisfying the subpath-closure property.
For any $\e>0$, an $\al$-approximation algorithm $\Alg$ (where $\al\geq 1$)
for the \vrp can be used to obtain an $(\al+2)(1+\e)$-approximation for the
knapsack-constrained \vrp 
by making $O\bigl(\frac{\log n}{\e}\bigr)$ calls to $\Alg$. 
\end{theorem}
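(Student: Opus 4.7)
The plan is to dualize the knapsack constraint via a Lagrangian multiplier, use subpath-closure to reduce each Lagrangian subproblem to the base \vrp with nonnegative rewards, and combine an $\al$-approximate Lagrangian optimizer with an extraction step---again enabled by subpath-closure---to produce a feasible knapsack-constrained solution. After preprocessing to discard any vertex with $\wt_v>W$ (such vertices cannot lie on any feasible solution), for each $\lambda\geq 0$ define modified rewards $\tilde\pi_v(\lambda):=\pi_v-\lambda\wt_v$ and let $V_\lambda=\{v:\tilde\pi_v(\lambda)\geq 0\}$. Subpath-closure ensures that dropping any vertex with $\tilde\pi_v(\lambda)<0$ from $\tau\in\I$ yields a path still in $\I$ with at least as large $\tilde\pi$-value, so the Lagrangian $L(\lambda):=\max_{\tau\in\I}\tilde\pi(\tau;\lambda)$ is attained on $V_\lambda$. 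Weak duality via the optimal knapsack-feasible path $\tau^*$ gives $L(\lambda)\geq\optknap-\lambda W$. Hence invoking $\Alg$ on the base-\vrp restricted to $V_\lambda$ with rewards $\tilde\pi_v(\lambda)\geq 0$ returns $\tau_\lambda\in\I$ with $\tilde\pi(\tau_\lambda;\lambda)\geq L(\lambda)/\al\geq(\optknap-\lambda W)/\al$.

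I then do a parametric search over a geometric progression of $O(\log n/\e)$ values of $\lambda$, spanning from $0$ up to a threshold at which $\tau_\lambda$ is empty. Either every tested $\tau_\lambda$ has $\wt(\tau_\lambda)\leq W$ and the best of these suffices, or there exist consecutive tested values $\lambda_i<\lambda_{i+1}\leq(1+\e)\lambda_i$ with $\wt(\tau_{\lambda_i})>W\geq\wt(\tau_{\lambda_{i+1}})$. For every computed $\tau_\lambda$, subpath-closure lets us harvest feasible candidates: if $\wt(\tau_\lambda)\leq W$, $\tau_\lambda$ itself is feasible; otherwise, ordering the vertices of $\tau_\lambda$ along the path and letting $j$ be the smallest index at which cumulative weight exceeds $W$, take (a)~the maximal feasible prefix (for rooted-\vrp), or the maximal feasible $a$-to-$b$ initial subsequence (for \ptp-\vrp), together with (b)~the singleton-extension rooted path $\rt\to u_j$ (resp.\ $a,u_j,b$). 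The algorithm outputs the maximum-reward candidate over all $\lambda$ tested.

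For the analysis, in the easy regime where $\lambda_{i+1}$ is small, $\tau_{\lambda_{i+1}}$ itself is feasible with $\pi(\tau_{\lambda_{i+1}})\geq(\optknap-\lambda_{i+1}W)/\al$, close to $\optknap/\al$. In the hard regime where $\lambda_{i+1}$ is large, $\pi(\tau_{\lambda_i})\geq\optknap/\al+\lambda_i W(1-1/\al)$ is substantial; the prefix/singleton pair captures at least half of $\pi(u_1)+\cdots+\pi(u_j)$, and subpath-closure on the tail of $\tau_{\lambda_i}$ yields additional feasible candidates whose reward, via a greedy-knapsack bound exploiting the pointwise ratio guarantee $\pi_v\geq\lambda_i\wt_v$ for every $v\in\tau_{\lambda_i}$, is at least about $\lambda_i W/2$. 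Balancing these lower bounds against $\pi(\tau_{\lambda_i})$ gives a feasible output of reward $\geq\optknap/((\al+2)(1+\e))$, with the $(1+\e)$ factor arising from the geometric gap between consecutive tested $\lambda_i$'s. The main obstacle is precisely this extraction-and-balancing step in the hard regime: obtaining the sharper $(\al+2)$ factor rather than the $2\al$ of the \cite{GuptaKNR12}-style analysis requires combining the two-candidate prefix-plus-singleton decomposition with the pointwise ratio bound on tail vertices, so as to avoid charging the tail's contribution twice and to make do purely with subpath-closure rather than further oracle calls.
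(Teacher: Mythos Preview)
Your plan is the right Lagrangian idea, and it is essentially the same approach the paper takes, but your parametrization and extraction are more convoluted than necessary and one step is not justified as written.

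The paper does not search over $\lambda$. Instead it guesses $\optknap$ up to a $(1+\e)$-factor: since $\max_v \pi_v \le \optknap \le \sum_v \pi_v$, this takes $O(\log n/\e)$ guesses, which is where that call count comes from. For each guess $\LB$, it sets a \emph{single} Lagrange multiplier $\lambda = \tfrac{2}{\al+2}\cdot\tfrac{\LB}{W}$, runs $\Alg$ once on the nonnegative rewards $\pi'_v=\max\{0,\pi_v-\lambda\wt_v\}$, and shortcuts the output to keep only vertices with $\pi'_v>0$. Now either some surviving vertex alone has $\pi_v\ge\LB/(\al+2)$, or one takes a maximal prefix with $\pi$-mass at most $\tfrac{2\LB}{\al+2}$; the pointwise ratio $\wt_v\le \pi_v/\lambda$ then forces $\wt\le W$, and maximality plus the singleton bound gives $\pi\ge\LB/(\al+2)$. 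No ``easy/hard regime'' split, no tail candidates, no balancing.

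In your version, the claim of $O(\log n/\e)$ calls is not established: you enumerate $\lambda$ geometrically ``from $0$ up to a threshold at which $\tau_\lambda$ is empty,'' but that threshold is governed by $\max_v \pi_v/\wt_v$, whose bit-length need not be $O(\log n)$. Also, $\wt(\tau_\lambda)$ need not be monotone in $\lambda$ (since $\Alg$ is only an approximation), so the ``consecutive crossing'' pair $\lambda_i,\lambda_{i+1}$ may not exist in the clean form you assume. Both issues disappear if you reparametrize by guessing $\optknap$ instead of $\lambda$; then your prefix-plus-singleton extraction coincides with the paper's argument and the $(\al+2)(1+\e)$ bound follows directly, without the additional tail analysis you outline.
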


\begin{proof} 
Let $\{\pi_v\geq 0\}_{v\in v}$ be the node rewards.
Let the knapsack constraint be specified by nonnegative weights $\{\wt_v\}_{v\in V}$ and
knapsack budget $W$. Let $\I^\knap\sse\I$ be the set of feasible solutions to the
knapsack-constrained \vrp.
By a subpath of a path $\tau\in\I$, we mean a rooted-subpath for a rooted-\vrp,
and a \ptp-subpath for a \ptp-\vrp.
We call the common start node of paths in $\I$, a {\em terminal} node; for a
\ptp-\vrp, the common end node of paths in $\I$ is also designated a terminal node.

We may assume that: 
%
(i) $\wt_w=0$ for every terminal node $w$,
since we can always set the knapsack weights of 
terminal node(s) to $0$ and work with the residual knapsack budget of 
$W-$(total weight of terminal nodes)  
without affecting the set $\I^\knap$ of feasible solutions to the knapsack-constrained
problem; and 
(ii) $\wt_v\leq W$ for every node $v$, as otherwise we can simply discard node $v$. 

Let $\tau^*$ be an optimal solution for the knapsack-constrained \vrp, and
$\optknap=\pi(\tau^*)$ be the optimal value.
Due to (ii), we have $\max_{v\in V}\pi_v\leq\optknap\leq\sum_{v\in V}\pi_v$, so by
considering all powers of $(1+\e)$, we can assume that we have an estimate $\LB$ such that 
$\LB\leq\optknap\leq(1+\e)\LB$. 
More precisely, there are at most $O\bigl(\frac{\log n}{\e})$ powers of $(1+\e)$ to
consider; we run the procedure below for each of these values, and return the
best solution obtained.

Set $\ld=\frac{2}{\al+2}$. Essentially, we run $\Alg$ on the base-\vrp instance with
node-rewards $\{\pi_v-\frac{\ld\cdot\LB}{W}\cdot\wt_v\}_{v\in V}$ and return a suitable
subpath of the path returned by $\Alg$.
Set $\pi'_v=\max\bigl\{0,\pi_v-\frac{\ld\cdot\LB}{W}\cdot\wt_v\bigr\}$ for all
$v\in V$, 
and run $\Alg$ on the base-\vrp instance with $\{\pi'_v\}$ node rewards to obtain a path
$\tau$. 
Let $\tau'$ be the subpath of $\tau$ whose non-terminal nodes 
are all the non-terminal nodes $v\in\tau$ for which $\pi'_v>0$. By the subpath-closure
property, $\tau'\in\I$. 
\begin{enumerate}[label=\arabic*., topsep=0.2ex, noitemsep, leftmargin=*]
\item If there is some $v\in\tau$ for which $\pi_v\geq\frac{\LB}{\al+2}$, then we return
the path whose only potential non-trivial node is $v$. More precisely, if $v$ is a
terminal, then we return the trivial path consisting of only the terminals;  
otherwise, we return the path having $v$ as the only non-terminal node (so $a,v$ for a
rooted-\vrp, and $a,v,b$ for a \ptp-\vrp).
By the subpath-closure property, the path returned lies in $\I$, and
by (ii), it satisfies the knapsack constraint.
So the path returned is feasible for the knapsack-constrained \vrp, and
clearly it achieves reward at least $\frac{\optknap}{(\al+2)(1+\e)}$.

\item Otherwise, we find a subpath $\tau''$ of $\tau'$ whose 
nodes form a maximal subsequence of $\tau'$ satisfying
$\pi(\tau'')\leq\ld\cdot\LB$. We return $\tau''$. By the subpath-closure property,
$\tau''\in\I$. Also, since $\wt_v\leq\pi_v\cdot\frac{W}{\ld\cdot\LB}$ for every
$v\in\tau''$ (note that $\wt_v=0$ if $v$ is a terminal), we have 
$\wt(\tau'')\leq\pi(\tau'')\cdot\frac{W}{\ld\cdot\LB}\leq W$.
\end{enumerate}

We lower bound the reward of $\tau''$ in step 2 above.
First, observe that the optimal $\pi'$-reward for the base-\vrp instance is at least
$\frac{\al}{\al+2}\cdot\optknap$, since 
$\pi'(\tau^*)\geq\pi(\tau^*)-\frac{2\cdot\LB}{(\al+2)W}\cdot\wt(\tau^*)\geq\optknap-\frac{2\cdot\optknap}{\al+2}$,
as $\wt(\tau^*)\leq W$ and $\LB\leq\optknap$.
Therefore, since $\Alg$ is an $\al$-approximation algorithm for the base-\vrp, we obtain
that $\pi'(\tau')=\pi'(\tau)\geq\frac{\optknap}{\al+2}$.
If $\tau''\neq\tau'$, then since $\tau''$ is maximal, we have 
$\pi(\tau'')\geq\ld\cdot\LB-\frac{\LB}{\al+2}$, because $\pi_v<\frac{\LB}{\al+2}$ for
every $v\in\tau$. It follows that
%
$\pi(\tau'')\geq\min\bigl\{\pi(\tau'),\tfrac{\LB}{\al+2}\bigr\}
\geq\min\bigl\{\pi'(\tau'),\tfrac{\LB}{\al+2}\bigr\}\geq\frac{\optknap}{(\al+2)(1+\e)}$.
%
\end{proof}

\begin{corollary} \label{knapvrp-approx}
There are algorithms with the following guarantees.
\begin{enumerate}[label=(\alph*), nosep, leftmargin=*]
\item $(4+\e)$-approximation, for any $\e>0$, for rooted- and \ptp- knapsack orienteering;
\item $O(\log n)$-approximation for the rooted and \ptp versions of knapsack deadline orienteering, and 
knapsack monotone-reward TSP;
\item $O(1)$-approximation in $O\bigl(n^{\log n\diam}\bigr)$ time, for the rooted and \ptp
versions of knapsack deadline orienteering, and knapsack monotone-reward TSP.
\end{enumerate}
\end{corollary}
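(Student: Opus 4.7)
The plan is to invoke Theorem~\ref{knapvrp} once for each part, plugging in a known approximation algorithm for the corresponding base-\vrp. So the main task is to (i) verify the subpath-closure property for each of orienteering, deadline TSP, and monotone-reward TSP (in both the rooted and \ptp versions), and then (ii) cite an appropriate approximation algorithm from the literature.

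First I would verify subpath-closure. For orienteering, if $\tau$ is a feasible path and $\tau'$ is a (rooted- or \ptp-)subpath of $\tau$, then by the triangle inequality $\di(\tau')\leq\di(\tau)\leq B$, so $\tau'$ is feasible. For deadline TSP, any rooted- or \ptp-subpath $\tau'$ of $\tau$ has, for every $v\in\tau'$, the property $\di(\tau'_{a,v})\leq\di(\tau_{a,v})\leq\dead_v$ by the triangle inequality (where $a$ is the common start node), so deadlines remain satisfied. For monotone-reward TSP, every path is feasible, so subpath-closure is immediate. Hence all three problems, in both the rooted and \ptp setups, satisfy the hypotheses of Theorem~\ref{knapvrp}.

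Next I would apply Theorem~\ref{knapvrp} to each case.
For part (a), taking the $(2+\e')$-approximation for rooted and \ptp orienteering of Chekuri--Korula--P\'al~\cite{ChekuriKP12} as $\Alg$ yields, for any $\e>0$, an $\bigl((2+\e')+2\bigr)(1+\e'')$-approximation for knapsack orienteering; choosing $\e',\e''$ sufficiently small gives $(4+\e)$.
For part (b), I would use the $O(\log n)$-approximation for deadline TSP of Bansal et al.~\cite{BansalBCM04} to obtain an $O(\log n)$-approximation for knapsack deadline orienteering via Theorem~\ref{knapvrp}; for knapsack monotone-reward TSP, I would first apply the reduction of~\cite{FriggstadS17} from monotone-reward TSP to deadline TSP at a $(1+\e)$-factor loss, and then again invoke Theorem~\ref{knapvrp}.
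For part (c), the same two reductions apply, but using instead the quasi-polytime $O(1)$-approximation for deadline TSP of Friggstad--Swamy~\cite{FriggstadS21}, whose running time is $n^{O(\log D_{\max})}$ for $D_{\max}$ the maximum deadline; since $D_{\max}$ can be truncated to at most $n\diam$, this yields the claimed running time with an $O(1)$ approximation factor, and Theorem~\ref{knapvrp} adds only a constant in the approximation ratio and $O(\log n/\e)$ repetitions in the running time.

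There is no real obstacle here: the only thing one must be careful about is that when Theorem~\ref{knapvrp} is applied to a monotone-reward base-\vrp, the ``rewards'' $\pi'_v=\max\{0,\pi_v-\frac{\ld\cdot\LB}{W}\cdot\wt_v\}$ prescribed in its proof are just constant shifts per vertex of the reward function used in the base-\vrp call, and do not interfere with the monotonicity in the time argument (which is what the deadline-TSP reduction of~\cite{FriggstadS17} manipulates). Everything else is a direct plug-in.
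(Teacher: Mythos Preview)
Your approach is essentially the same as the paper's: apply Theorem~\ref{knapvrp} with the best-known base-\vrp algorithm in each case, and for monotone-reward TSP chain through the reduction to deadline TSP. The paper likewise invokes~\cite{ChekuriKP12} for part (a) and~\cite{BansalBCM04,FriggstadS21} for parts (b), (c).

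One point the paper is more explicit about, and that you gloss over: Theorem~\ref{knapvrp} as stated is for \emph{constant} node rewards $\{\pi_v\}$, so it does not literally apply to monotone-reward TSP, where $\pi_v$ is a function of travel time. The paper says so directly (``we cannot apply Theorem~\ref{knapvrp} directly because the reward of a node depends on its travel time'') and then re-runs the proof with reward functions $\pi'_v(t)=\max\{0,\pi_v(t)-\tfrac{\ld\cdot\LB}{W}\wt_v\}$, checking that the key subpath property---that nodes on a subpath earn at least as much as on the original path---still holds because travel times only decrease. Your last paragraph identifies exactly this adaptation, but phrasing it as ``invoke Theorem~\ref{knapvrp}'' with a parenthetical caveat undersells that you are really reproving the theorem in the time-dependent-reward setting. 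The content is right; just be precise that it is an adaptation of the proof, not a black-box application of the statement.
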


\begin{proof} 
Part (a) follows because there is a $(2+\e)$-approximation~\cite{ChekuriKP12} for
\ptp-orienteering. 

For knapsack deadline orienteering, the guarantees in parts (b) and (c) follow because 
the rooted and \ptp versions of deadline orienteering are equivalent~\cite{FriggstadS21}
and they admit: an $O(\log n)$-approximation algorithm, and 
an $O(1)$-approximation in $O\bigl(n^{\log n\diam}\bigr)$ time~\cite{FriggstadS21}.

For knapsack monotone-reward TSP, we cannot apply Theorem~\ref{knapvrp} directly because
the reward of a node depends on its travel time. However, the key idea underlying
Theorem~\ref{knapvrp} is that if we take a subpath $\tau'$ of a feasible path $\tau$, then
the nodes on $\tau'$ yield at least as much reward for $\tau'$ as for the original
path. This property clearly holds for monotone-reward TSP, and can be used to show, with
cosmetic changes to the proof of Theorem~\ref{knapvrp}, that knapsack monotone-reward TSP
can be reduced to monotone-reward TSP incurring the same approximation-factor loss as in
Theorem~\ref{knapvrp}.  
Friggstad and Swamy~\cite{FriggstadS21} showed that monotone-reward TSP can be reduced to
deadline orienteering losing a $(1+\e)$-factor. Coupling these two reductions, and
utilizing the aforementioned guarantees for deadline orienteering, yields the guarantees
in parts (b) and (c) for knapsack monotone-reward TSP.

We now briefly describe how to adapt Theorem~\ref{knapvrp} to monotone-reward TSP. Recall
that the reward of a node $v$ is now specified by a non-increasing function of its travel
time, $\pi_v:\Z_+\mapsto\R_+$. We may again assume that
$\wt_v\leq W$ for every node $v$, and is $0$ if $v$ is a terminal.
Let $a$ denote the common start node of all feasible paths.
Let $\tau^*$ be an optimal solution for knapsack monotone-reward TSP, and
$\optknap$ 
be the optimal value. We now have 
$\max_{v\in V}\pi_v\bigl(d(a,v)\bigr)\leq\optknap\leq\sum_{v\in V}\pi_v\bigl(d(a,v)\bigr)$, so 
again we may assume that we have an estimate $\LB$ such that 
$\LB\leq\optknap\leq(1+\e)\LB$. 
Set $\ld=\frac{2}{\al+2}$. 
For every $v\in V$, consider the reward function given by
$\pi'_v(t)=\max\bigl\{0,\pi_v(t)-\frac{\ld\cdot\LB}{W}\cdot\wt_v\bigr\}$ for all $t\in\Z_+$.
We run $\Alg$ on the monotone-reward TSP instance with $\{\pi'_v\}$ node-reward functions
to obtain a path $\tau$. For $v\in\tau$, define
$\rewd_v:=\pi_v\bigl(d(\tau_{a,v})\bigr)$, which is the $\pi_v$-reward that $v$ obtains
under $\tau$. 
Let $\tau'$ be the subpath of $\tau$ whose non-terminal nodes 
are all the non-terminal nodes $v\in\tau$ for which
$\rewd_v>\frac{\ld\cdot\LB}{W}\cdot\wt_v$. 
%
\begin{enumerate}[label=\arabic*., topsep=0.2ex, noitemsep, leftmargin=*]
\item If there is some $v\in\tau$ for which $\rewd_v\geq\frac{\LB}{\al+2}$, then we return
the path whose only potential non-trivial node is $v$. 
This is a feasible path satisfying the knapsack constraint, and clearly, it
achieves reward at least 
$\pi_v\bigl(d(a,v)\bigr)\geq\rewd_v\geq\frac{\optknap}{(\al+2)(1+\e)}$.

\item Otherwise, we find a subpath $\tau''$ of $\tau'$ whose 
nodes form a maximal subsequence of $\tau'$ satisfying
$\rewd(\tau'')\leq\ld\cdot\LB$. We return $\tau''$. 
Since $\wt_v\leq\rewd_v\cdot\frac{W}{\ld\cdot\LB}$ for every
$v\in\tau''$ (note that $\wt_v=0$ if $v$ is a terminal), we have 
$\wt(\tau'')\leq W$.

The reward obtained by each $v\in\tau''$ is at least $\rewd_v$, so the total reward
obtained in this case is at least
$\rewd(\tau'')\geq\min\bigl\{\rewd(\tau'),\ld\cdot\LB-\frac{\LB}{\al+2}\bigr\}$. 
Observe that $\rewd(\tau')$ is at least the total reward obtained from
$\tau$ for the monotone-reward TSP instance with $\{\pi'_v\}$ node-reward functions. 
The optimal value for this monotone-reward TSP instance is at least
$\optknap-\ld\cdot\LB$, since the reward obtained from $\tau^*$ is at least this
value. Therefore, since $\Alg$ is an $\al$-approximation algorithm for monotone-reward
TSP, we have $\rewd(\tau')\geq\frac{\optknap-\ld\cdot\LB}{\al}\geq\frac{\optknap}{\al+2}$.
\qedhere
\end{enumerate}
\end{proof}

\vspace*{-2ex}
\subparagraph*{LP-relative guarantee for \boldmath \knapo.}
For rooted \knapo, 
we can utilize Theorem~\ref{knapvrp} to obtain an LP-relative approximation
guarantee. This will be useful in devising algorithms for \csko. 
Consider the following LP-relaxation for rooted \knapo along the lines of an 
LP-relaxation for rooted orienteering in~\cite{FriggstadS17}. 
Let $\rt$ be the root node for the \knapo instance. We bidirect the edges of the complete
graph on $V$ to obtain the arc-set $A$.

\begin{alignat}{3}
\max && \quad \sum_{u,v \in V}z^v_u&\cdot \pi_u \tag{KO-LP}
\label{kolp} \\
\text{s.t.}  && \quad
x^v\bigl(\delta^{\into}(u)\bigr) & \geq x^v\bigl(\delta^{\out}(u)\bigr) \qquad 
&& \forall u\in V-\rt,\,v\in V \tag{O1} \label{pref-visit} \\
&& x^v\bigl(\delta^{\into}(S)\bigr) & \geq z^v_u \qquad && 
\forall v\in V,\, S\subseteq V-\rt,\, u\in S \tag{O2} \label{subtour} \\
&& z^v_u & = 0 && \forall u,v\in V: d_{\rt,u}>d_{\rt,v} \tag{O3} \label{dbnd} \\
&& \sum_{a\in A}d_a\cdot x_a^v & \leq Bz_{v}^v, \quad 
x^{v}\bigl(\delta^{\out}(\rt)\bigr) = z^v_v\qquad && \forall v\in V \tag{O4}
\label{dbudget} \\
&& \sum_{v\in V}z_v^{v} & = 1 \tag{O5} \label{unit} \\
&& \quad\sum_{u,v\in V}z_u^v\cdot\wt_u &\leq W && \tag{KN} \label{knbudget} \\
&& x,z &\geq 0. && \notag
\end{alignat}
\clonelabel{lp:ckosmallsizes}{kolp}\clonelabel{knapo-lp}{kolp}%
\clonelabel{outflow}{dbudget}%
\clonelabel{kn}{knbudget}%
\newcommand{\optkolp}{\ensuremath{\OPT_{\text{\textsf{\ref{kolp}}}}}}%
The $x^v_a$ and $z^v_u$ variables encode the arcs included, and vertices visited,
respectively by the \knapo-path, provided that $v$ is the node visited that is
furthest from $\rt$, i.e., $v$ maximizes $d(\rt,u)$ among all nodes $u$ on the path:
constraints \eqref{dbnd} enforce this semantics;
in an integer solution, these variables will be $0$ if $v$ is not the furthest visited
node from $\rt$. Constraints \eqref{pref-visit} and 
\eqref{subtour} encode that the $\rt\leadsto u$-connectivity is $z^v_u$, and together with
\eqref{outflow} encode that $\{x^v_a\}$ is a $\rt$-preflow of value $z^v_v$ satisfying
the length budget.
Constraint \eqref{unit} enforces that overall $x$ is a $\rt$-preflow of value $1$.
Constraints \eqref{pref-visit}--\eqref{unit} are from the LP for rooted orienteering
in~\cite{FriggstadS17};  
\eqref{knbudget} is the new constraint encoding the knapsack budget.

%

\begin{theorem} \label{knapo-round} \label{knapo-roundthm}
We can obtain a \knapo-solution that obtains reward at least $\optkolp/5$.
\end{theorem}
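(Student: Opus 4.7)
The plan is to mimic the Lagrangian-relaxation argument of Theorem~\ref{knapvrp} at the LP level, using the LP-rounding algorithm for (rooted) orienteering from Friggstad--Swamy~\cite{FriggstadS17}, which, given any feasible solution to the orienteering LP (namely \eqref{kolp} without the knapsack constraint \eqref{knbudget}), produces an integral rooted orienteering path whose reward is at least a $1/3$-fraction of the LP objective. Working directly with the LP optimum $\optkolp$ (rather than guessing a lower bound $\LB$) is what lets us avoid the $(1+\e)$ loss of Theorem~\ref{knapvrp} and obtain the clean factor of $5$.

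Concretely, I would first solve \eqref{kolp} to obtain an optimal solution $(x^*,z^*)$ of value $\optkolp$, assuming as usual that $\wt_v\leq W$ for all $v\in V$ and $\wt_\rt=0$. Set $\ld=2/5$, and define modified rewards $\pi'_v:=\max\{0,\,\pi_v-(\ld\cdot\optkolp/W)\cdot\wt_v\}$. Since $(x^*,z^*)$ is feasible for the orienteering LP regardless of the reward vector, running the Friggstad--Swamy rounding on it with rewards $\{\pi'_v\}$ produces a rooted orienteering path $\tau$ (of length at most $B$) whose $\pi'$-reward is at least $\tfrac{1}{3}\sum_{u,v}z^{v}_u\pi'_u$. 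Using \eqref{knbudget}, this LP value is at least
\[
\sum_{u,v}z^v_u\pi_u-\tfrac{\ld\cdot\optkolp}{W}\sum_{u,v}z^v_u\wt_u\;\geq\;\optkolp-\ld\cdot\optkolp\;=\;\tfrac{3}{5}\optkolp,
\]
so $\tau$ has $\pi'$-reward at least $\optkolp/5$.

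Finally, I would invoke the same case analysis as in Theorem~\ref{knapvrp} to extract a knapsack-feasible subpath. If some $v\in\tau$ has $\pi_v\geq\optkolp/5$, return the trivial path $\rt,v$, which has length at most $B$ (since $v\in\tau$) and weight $\wt_v\leq W$. Otherwise, let $\tau'$ be the rooted subpath of $\tau$ retaining only nodes with $\pi'_v>0$ (so $\pi'(\tau')=\pi'(\tau)\geq\optkolp/5$ and $\tau'\in\I$ by subpath-closure of orienteering), and let $\tau''$ be a maximal rooted subpath of $\tau'$ with $\pi(\tau'')\leq\ld\cdot\optkolp$. Every non-root $v\in\tau''$ satisfies $\wt_v<\pi_v\cdot W/(\ld\cdot\optkolp)$ by positivity of $\pi'_v$, whence $\wt(\tau'')<W$, so $\tau''$ is feasible for \knapo; and maximality, together with $\pi_v<\optkolp/5$ for all $v\in\tau$, forces $\pi(\tau'')\geq\min\{\pi'(\tau'),\,\ld\cdot\optkolp-\optkolp/5\}\geq\optkolp/5$.

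The main point to verify is that the Friggstad--Swamy rounding can be applied as a black-box LP-rounding procedure with any reward vector: this is straightforward because $(x^*,z^*)$ is feasible for the orienteering LP (the reward enters only through the objective), so substituting $\pi'$ for $\pi$ in the objective does not affect feasibility and the $1/3$-integrality-gap bound still applies to the perturbed objective. With this observation in hand, the three-step argument above gives a \knapo-solution of reward at least $\optkolp/5$.
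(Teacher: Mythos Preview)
Your proposal is correct and is essentially the same approach as the paper's proof: both adapt the Lagrangian-relaxation reduction of Theorem~\ref{knapvrp} by setting $\LB=\optkolp$ and $\al=3$, and invoke the LP-relative $3$-approximation of Friggstad--Swamy~\cite{FriggstadS17} on the orienteering LP with perturbed rewards $\pi'_v$, then extract a knapsack-feasible subpath via the same case analysis. The paper's write-up is terser (it only highlights the new inequality $\pi'(\tau')\geq\optkolp/5$ and defers the rest to Theorem~\ref{knapvrp}), but the content is identical.
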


\begin{proof}
Let (R-O) be the LP with constraints \eqref{pref-visit}--\eqref{unit}, 
which is the LP for rooted orienteering formulated in~\cite{FriggstadS17}. 
Let $\Alg$ be the (R-O)-relative $3$-approximation algorithm for rooted orienteering
devised by~\cite{FriggstadS17}; 
that is, $\Alg$ returns a
rooted-orienteering solution that obtains reward at least $\OPT_{\text{\textsf{R-O}}}/3$.
We adapt the reduction in Theorem~\ref{knapvrp} to utilize $\Alg$.

We set $\LB=\optkolp$ in the reduction, and of course take $\al=3$. For the analysis, we
only need to argue that $\pi(\tau')\geq\optkolp/5$, where recall that $\tau$ is the path
returned by $\Alg$, and $\tau'$ is the rooted subpath of $\tau$ whose non-root nodes
are the non-root nodes $u\in\tau$ with $\pi_u>\frac{2\LB}{5W}\cdot\wt_u$. This follows because
if $(\bx,\bz)$ is an optimal solution to \eqref{kolp}, then due to the LP-relative
guarantee of $\Alg$, we have
\begin{equation*}
\pi'(\tau')=\pi'(\tau)\geq\frac{\sum_{u,v}\bz^v_u\pi'_u}{3}
\geq\frac{\optkolp-\frac{2\LB}{5W}\cdot\sum_{u,v}\bz^v_u\cdot\wt_u}{3}
\geq\frac{\optkolp}{5}
\end{equation*}
where the last inequality follows from \eqref{knbudget}.
\end{proof}

\section{An adaptivity-gap lower bound for \boldmath \csko} \label{adapt-lbound}
We now show that the adaptivity gap for \csko is
$\Omega\bigl(\max\{\sqrt{\log B},\sqrt{\log\log W}\}\bigr)$, thereby proving Theorem~\ref{adapgap}. 
We consider the following instance of {correlated knapsack orienteering} that has a 
similar spirit as the adaptivity-gap example in~\cite{BansalN14} for (uncorrelated)
stochastic orienteering.  
The metric is a tree-metric induced by a complete binary tree $T$ on a vertex set $V$,
with root $r\in V$ and $\dep\geq 4$ levels, where the distances decrease geometrically as we
move away from $r$. To conform to our notation, we include a separate dummy node $\rt$
that serves as the root for \csko, with distance $0$ to $r$; but when we say root below,
we always mean the root $r$ of the tree $T$.
The knapsack budget is $W:=2^{2^{\dep+1}}$ and the length/travel budget is $B:=2^{\dep-1}-1$. 
For a node $v\in V$, we use: $\level(v)$ to denote the level of $v$, $\pathh(v)$ to denote
the unique $r\leadsto v$-path in $T$, and $\parent(v)$ to denote the parent of $v$ if 
$v\neq r$. The root $r$ is at level $\dep$ and each leaf node is at level $1$; for a non-leaf
node $v$ at level $\ell$, the distance between $v$ and its children is $2^{\ell-2}$.
For a rooted path $P$ in $T$ we say that a node $v\in P$ is a
right-branching (resp. left-branching) node if the node succeeding $v$ on $P$ is its
right-child (resp. left-child). We denote by $\rightt(P)$ and $\leftt(P)$, the right-branching nodes
and left-branching nodes of $P$, respectively. For notational convenience, we assume that
the end-node of $P$ other than $r$ 
is a left-branching node; 
if $P=r$, then we say that $r\in \leftt(P)$. 
The (correlated) (size, reward) distribution 
of node $v$ is supported on three points: 
\begin{equation*}
\begin{split}
\bigl(\sthreev,\rthreev) &= (0,0), 
\qquad \bigl(\stwov,\rtwov\bigr) =  
\Bigl(2^{2^{\level(v)}}\cdot\prod_{w\in \rightt(\pathh(v))} 2^{2^{\level(w)}},\;0 \Bigr) \\  
\bigl(\sonev,\ronev\bigr) & =
\biggl(W-\sum_{w\in\rightt(\pathh(v))}\stwow,\;
\Bigl(1-\tfrac{1}{\sqrt{\dep}}\Bigr)^{|\rightt(\pathh(v))|}\;\biggr)
\end{split}
\end{equation*}
and we have
$\Pr[\Sv=\sthreev]=1-\frac{1}{\sqrt{\dep}}-\frac{1}{\dep}$, 
$\Pr[\Sv=\stwov]=\frac{1}{\sqrt{\dep}},\;\Pr[\Sv=\sonev]=\frac{1}{\dep}$; see
Fig.~\ref{figure:adapgap}.    
Observe that $\stwov\leq W/2$,
and $\sonev>W/2$ for every node $v$.

\begin{figure}[ht!]
\begin{tikzpicture}[
		round/.style={circle,fill=black!10, thick, minimum size=3mm},scale=0.8,
		squarednode/.style={rectangle, draw=black!60, fill=white!5,  thick, rounded corners, minimum size=7mm},
		]

\small
\node[round, fill=black!20, draw=black!50] (rho) at (0, 1) {$\rho$};
\node[round,fill=black!20,  draw=black!50] (root) at (0, 0) {$r$};
\node[round] (v1) at (-2, - 1.5) {};
\node[round, fill=black!20, draw=black!50] (v2) at (2,  -1.5) {$v_1$};
\node[round] (v3) at (-3,  -3) {};
\node[round] (v4) at (-1,  -3) {};
\node[round, fill=black!20, draw=black!50] (v5) at (1,  -3) {$v_2$};
\node[round] (v6) at (3,  -3) {};
\node[round] (v7) at (-3.5,  -4.5) {};
\node[round] (v8) at (-2.5,  -4.5) {};
\node[round] (v9) at (-1.5,  -4.5) {};
\node[round] (v10) at (-0.5,  -4.5) {};
\node[round] (v11) at (0.5,  -4.5) {};
\node[round, fill=black!20, draw=black!50] (v12) at (1.5,  -4.5) {$v_3$};
\node[round] (v13) at (2.5,  -4.5) {};
\node[round] (v14) at (3.5,  -4.5) {};
\node at (-3.5, -5) {$\vdots$};
\node at (-2.5, -5) {$\vdots$};
\node at (-1.5, -5) {$\vdots$};
\node at (-0.5, -5) {$\vdots$};
\node at (0.5, -5) {$\vdots$};
\node at (1.5, -5) {$\vdots$};
\node at (2.5, -5) {$\vdots$};
\node at (3.5, -5) {$\vdots$};
\node (x15) at (-3.65,  -5.45) {};
\node (x16) at (-3.35,  -5.45) {};
\node (x17) at (-2.65,  -5.45) {};
\node (x18) at (-2.35,  -5.45) {};
\node (x19) at (-1.65,  -5.45) {};
\node (x20) at (-1.35,  -5.45) {};
\node (x21) at (-0.65,  -5.45) {};
\node (x22) at (-0.35,  -5.45) {};
\node (x24) at (0.65,  -5.45) {};
\node (x23) at (0.35,  -5.45) {};
\node (x26) at (1.65,  -5.45) {};
\node (x25) at (1.35,  -5.45) {};
\node (x28) at (2.65,  -5.45) {};
\node (x27) at (2.35,  -5.45) {};
\node (x30) at (3.65,  -5.45) {};
\node (x29) at (3.35,  -5.45) {};


\draw[black, very thick] (rho) -- (root) -- (v2) -- (v5) -- (v12);

\draw[black!13] (root) -- (v1) -- (v3) -- (v7);
\draw[black!13] (v2) -- (v6);
\draw[black!13] (v6) -- (v14);
\draw[black!13] (v1) -- (v4); 
\draw[black!13] (v3) -- (v8);
\draw[black!13] (v4) -- (v9);
\draw[black!13] (v4) -- (v10);
\draw[black!13] (v5) -- (v11);
\draw[black!13] (v6) -- (v13);


\node[left=of root, left=1pt] {$\left(\mathcolor{blue}{2^{2^{\dep} }}, \mathcolor{red}{0}\right)$};
\node[left=of v2, left=1pt] {$\left(\mathcolor{blue}{2^{(2^\dep + 2^{\dep-1})}}, \mathcolor{red}{0}\right)$};
\node[left=of v5,left=0pt] {$\left(\mathcolor{blue}{2^{(2^\dep + 2^{\dep-2})}}, \mathcolor{red}{0}\right)$};
\node[left=of v12,left=0pt] {$\left(\mathcolor{blue}{2^{(2^\dep + 2^{\dep-2}+2^{\dep-3})}}, \mathcolor{red}{0}\right)$};

\node[right=of root,right=1pt] {$\left(\mathcolor{blue}{W}, \textcolor{red}{1}\right)$};
\node[right=of v2,right=1pt] {$\left(\mathcolor{blue}{W - 2^{2^\dep}}, \mathcolor{red}{1 - \frac{1}{\sqrt{\dep}}}\right)$};
\node[right=of v5,right=0pt]  {$\left(\mathcolor{blue}{W - 2^{2^\dep}}, \mathcolor{red}{1 - \frac{1}{\sqrt{\dep}}}\right)$};

\node[right=of v12,right=0pt] {$\left(\mathcolor{blue}{W - 2^{2^\dep}-2^{(2^\dep+2^{\dep-2})}}, \mathcolor{red}{\left(1 - \frac{1}{\sqrt{\dep}}\right)^2}\right)$};

\node at (-4.7, 1) {\textbf{Level}};
\node at (-5, 0) {$\dep$}; 
\node at (-4.7, -1.5) {$\dep - 1$};
\node at (-4.7, -3) {$\dep - 2$};
\node at (-4.7, -4.5) {$\dep - 3$};
\node at (-4.7, -5) {$\vdots$};
\end{tikzpicture}
\caption{The (\textcolor{blue}{$\stwov$}, \textcolor{red}{$\rtwov$}),
  (\textcolor{blue}{$\sonev$}, \textcolor{red}{$\ronev$}) pairs are shown respectively on the left
  and right of each highlighted vertex in the tree.}
\label{figure:adapgap}
\end{figure}

Importantly, note that any policy for this instance can obtain positive reward from at
most one item. This is because for any $v\in V$, $\sonev>W/2$.  
Therefore we can assume that any policy terminates upon observing a size $\sonev$ for
any visited vertex $v$. 
The binary tree is built so that a certain adaptive policy (see the proof of
Theorem~\ref{adapt-lbthm}) 
can always reach a leaf-node if no positive reward has been collected in previous levels.
The construction of the tree prevents any path from going upward from a
node to its parent, as this will cause the length budget to run out.
But more importantly, 
the instance is set up to preclude a policy from going to a left child of a node $v$ if
its instantiated size is $\stwov$ in the sense that if this happens 
then one cannot collect positive reward from this point on (Lemma~\ref{cheatlem}).

The rest of this section is devoted to proving the following two results, from which
the adaptivity-gap lower bound immediately follows, 
since $\dep=\Omega(\log B)$ and $\dep=\Omega(\log\log W)$ for
the above \csko instance.

\begin{theorem} \label{adapt-lbthm}
There is an adaptive policy for the above \csko instance that obtains $\Omega(1)$ expected reward.
\end{theorem}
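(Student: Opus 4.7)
My approach will be to exhibit an explicit adaptive policy $\Pi$ and then directly verify that its expected reward is bounded below by a constant. I will define $\Pi$ as follows: starting from $r$, whenever $\Pi$ visits a vertex $v$, it observes $v$'s size and then proceeds to the left child of $v$ if $\Sv = 0$, proceeds to the right child of $v$ if $\Sv = \stwov$, and stops (collecting $\ronev$) if $\Sv = \sonev$; at a leaf it simply terminates. The guiding intuition is that right-branches precisely record the history of medium-size instantiations along $\pathh(v)$, and the value $\sonev$ has been chosen so that the remaining knapsack budget after those instantiations is exactly $\sonev$, ensuring that a ``large'' observation always fits.

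First I would verify feasibility deterministically. The travel budget is fine since any root-to-leaf path has length $\sum_{i=0}^{\dep-2}2^{i} = 2^{\dep-1}-1 = B$. For the knapsack budget, I would argue that when $\Pi$ reaches $v$, the right-branching proper ancestors of $v$ on $\pathh(v)$ are exactly the ancestors $w$ with $\Sw = \stwow$ (the left-branching ones have $\Sw = 0$), so the processing time consumed before $v$ equals $\sum_{w \in \rightt(\pathh(v))}\stwow = W - \sonev$. Thus, observing $\sonev$ at $v$ uses the remaining knapsack budget exactly.

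Next I would compute the expected reward. Let $p = 1/\sqrt{\dep}$ and $q = 1/\dep$. Since $\Pi$ collects reward from at most one vertex per execution (only a $\sonev$ observation yields nonzero reward, and it terminates the policy), for each $v$ I would compute the probability that $\Pi$ reaches $v$ and observes $\sonev$ there: writing $\ell = |\pathh(v)|$ and $k = |\rightt(\pathh(v))|$, this probability equals $p^{k}(1-p-q)^{\ell-1-k}q$, and the conditional reward is $(1-p)^{k}$. There are $\binom{\ell-1}{k}$ vertices at level $\dep-\ell+1$ whose $r$-path has exactly $k$ right-branching proper ancestors, so grouping by level and applying the binomial theorem yields
\begin{equation*}
\Exp[\text{reward of }\Pi] \;=\; q\sum_{\ell=1}^{\dep}(1-q-p^{2})^{\ell-1}.
\end{equation*}
Since $p^{2} = q = 1/\dep$, this geometric sum evaluates to $\tfrac{1}{2}\bigl(1-(1-2/\dep)^{\dep}\bigr) \geq \tfrac{1}{2}(1-e^{-2}) = \Omega(1)$ for $\dep \geq 4$, completing the proof. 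The main conceptual hurdle is discovering the right policy: it must simultaneously be deterministically feasible despite the complicated-looking dependence of $\sonev$ on the path, and give expected reward that collapses to a clean geometric sum via the binomial identity. Once the policy is chosen, everything else is routine bookkeeping.
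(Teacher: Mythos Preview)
Your proposal is correct and uses the same adaptive policy as the paper: go left on size $0$, go right on size $\stwov$, stop on size $\sonev$. Your feasibility argument (the invariant that the consumed processing time upon reaching $v$ equals $W-\sonev$) is exactly the paper's.

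The only genuine difference is in the reward analysis. The paper writes the reward as $\Pr[\nsize[\vl]=\nsize[\vl]^{(1)}]\cdot\E{\nrewd[\vl]^{(1)}}$, bounds the first factor by $1-(1-1/\dep)^\dep\ge 1-e^{-1}$, and then applies Jensen's inequality to the convex function $x\mapsto(1-1/\sqrt{\dep})^x$ together with $\E{|\rightt(P^*)|}\le\sqrt{\dep}$ to get $\E{\nrewd[\vl]^{(1)}}\ge 1/4$, yielding $\tfrac{1-e^{-1}}{4}\approx 0.158$. You instead compute the expectation exactly by summing over all vertices, grouping by $(\ell,k)$, and collapsing via the binomial identity to the closed form $\tfrac{1}{2}\bigl(1-(1-2/\dep)^\dep\bigr)\ge\tfrac{1}{2}(1-e^{-2})\approx 0.432$. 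Your route is more elementary (no convexity), gives a sharper constant, and exploits the coincidence $p^2=q$ that makes the geometric sum clean; the paper's Jensen argument is slightly more robust to perturbations of the probabilities but loses a factor in the bound.
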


\begin{theorem} \label{nonadapt-ubthm}
Any nonadaptive policy for the above \csko instance obtains expected reward at most
$\frac{2}{\sqrt{\dep}}$.
\end{theorem}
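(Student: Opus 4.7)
The plan is as follows. Fix any non-adaptive policy and let $v_1,v_2,\ldots,v_k$ be its visit sequence after $\rt$. Let $E_i$ be the event that $v_i$ yields positive reward. Two preliminary observations simplify matters: (i)~since $S_v^{(1)}>W/2$ for every $v$, at most one visited vertex can yield positive reward in any realization, so the events $\{E_i\}_i$ are mutually exclusive; (ii)~by Lemma~\ref{cheatlem}, if any previously visited left-branching ancestor $v_j$ of $v_i$ (i.e., $v_j\in\leftt(\pathh(v_i))$) has $S_{v_j}=S_{v_j}^{(2)}$, then $v_i$ cannot yield reward; and having $S_{v_j}=S_{v_j}^{(1)}$ for any $j<i$ already consumes more than $W/2$ of the knapsack budget and likewise precludes $E_i$. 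Thus $E_i$ forces every $v_j\in L_i:=\leftt(\pathh(v_i))\cap\{v_1,\ldots,v_{i-1}\}$ to satisfy $S_{v_j}=0$.

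The main structural step is to argue that any feasible walk must be a monotone descent along a single root-to-leaf tree-path in $T$. To see this, I would show that visiting two vertices $v,v'$ lying in distinct subtrees of some common ancestor $u\in T$ is impossible within the travel budget. Concretely, under any visit-order, the walk cost is at least $d(\rt,v')+2d(u,v)$ (WLOG with $d(u,v)\leq d(u,v')$); substituting $d(\rt,v')=2^{\dep-1}-2^{\level(v')-1}$ and $d(u,v)=2^{\level(u)-1}-2^{\level(v)-1}$, and using the consequence of $d(u,v)\le d(u,v')$ that $\level(v)\ge\level(v')$ together with $\level(u)\ge\level(v)+1$, yields a quantity strictly exceeding $B=2^{\dep-1}-1$. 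A similar direct computation rules out visiting vertices of a common tree-path in non-descending order. Hence the visited set lies on a single tree-path and is visited in order of strictly decreasing level; this is the most substantive step and the main obstacle in the proof.

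Given the monotonicity, by independence of sizes across distinct vertices,
\[
\Pr[E_i]\;\leq\;\Pr\bigl[S_{v_i}=S_{v_i}^{(1)}\bigr]\cdot\Pr\bigl[S_{v_j}=0\text{ for all }v_j\in L_i\bigr]\;\leq\;\tfrac{1}{\dep}\bigl(1-\tfrac{1}{\sqrt{\dep}}\bigr)^{|L_i|},
\]
and since the reward $R_{v_i}^{(1)}=(1-1/\sqrt{\dep})^{|\rightt(\pathh(v_i))|}$, summing over $i$ yields
\[
\Exp[\text{reward}]\;=\;\sum_i R_{v_i}^{(1)}\Pr[E_i]\;\leq\;\tfrac{1}{\dep}\sum_{i=1}^k\bigl(1-\tfrac{1}{\sqrt{\dep}}\bigr)^{|L_i|+|\rightt(\pathh(v_i))|}.
\]
Because the walk is monotone on a single tree-path, every $v_j$ with $j<i$ is an ancestor of $v_i$, and is therefore either left-branching (contributing to $|L_i|$) or right-branching (lying in $\rightt(\pathh(v_i))$). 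Hence $|L_i|+|\rightt(\pathh(v_i))|\geq i-1$, and a geometric-series bound gives $\Exp[\text{reward}]\leq\frac{\sqrt{\dep}}{\dep}=\frac{1}{\sqrt{\dep}}\leq\frac{2}{\sqrt{\dep}}$, as required.
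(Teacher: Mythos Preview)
Your proof is correct and in fact obtains the slightly sharper bound $1/\sqrt{\dep}$. The overall setup---reducing to a monotone descent along one root-to-leaf path, invoking Lemma~\ref{cheatlem}, and using that at most one vertex can realize $S^{(1)}$---matches the paper, but the accounting differs. The paper splits $\sigma$ into its right-branching nodes $\rightt(\sigma)$ and left-branching nodes $\leftt(\sigma)$ and bounds each contribution separately by $1/\sqrt{\dep}$: for $w\in\rightt(\sigma)$ it uses only the reward decay $(1-1/\sqrt{\dep})^{|\rightt(\pathh(w))|}$ together with distinctness of these exponents; for $v\in\leftt(\sigma)$ it uses only the probability decay from requiring all previously-visited left-branching ancestors to have size $0$, again with distinct exponents. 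You instead combine both decay sources into the single exponent $|L_i|+|\rightt(\pathh(v_i))|$ and lower-bound it by the visit index $i-1$, which is cleaner and avoids the split (and the factor~$2$). Both approaches rest on the same structural facts; yours just packages them more economically.
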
  

\subsubsection*{Proof of Theorem~\ref{adapt-lbthm}: lower bound on \boldmath $\OPT$.}
Consider the following adaptive policy $\A$: the policy moves to node $r$ from $\rt$, and
then proceeds as follows. Let $v$ be the current node visited, which is $r$ initially.
If $v$ is a leaf, then the policy ends after the instantiation of $v$. Otherwise, the next
node visited by $\A$ is: the left child of $v$, if $\nsize=\sthreev$, and the right child
of $v$ if $\nsize=\stwov$; if $\nsize=\sonev$, then $\A$ stops and does not visit
any other nodes.   

Let $P^*$ denote the (random) path traversed by $\mathcal{A}$, which we may view as a
rooted path in $T$.
Let $\vl$ be the last vertex visited by $\A$, i.e., $\vl$ is the end-node of $P^*$ other
than $r$ and $P^*=\pathh(\vl)$. 
It is easy to see that 
$P^*$ does not exceed the knapsack and travel budgets, and
we lower bound the reward collected by $P^*$ in Lemma~\ref{lemma-adaptivereward}. 

\begin{claim}
We have $d(P^*)\leq B$ and $\nsize[](P^*):=\sum_{v\in P^*}\Sv\leq W$ with probability $1$.
\end{claim}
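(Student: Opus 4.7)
My plan is to verify both inequalities pointwise (i.e., for every realization of the instantiations), using a short structural argument for the travel bound and a three-case split for the knapsack bound.

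For the travel bound, I would note that $P^*$ is a rooted path in $T$ that descends from $r$ (the zero-length $\rt$--$r$ edge is inconsequential). Since the edge between a level-$\ell$ node and its child has length $2^{\ell-2}$, and $P^*$ uses at most one edge per level $\ell\in\{2,\ldots,\dep\}$,
\begin{equation*}
d(P^*)\;\le\;\sum_{\ell=2}^{\dep}2^{\ell-2}\;=\;2^{\dep-1}-1\;=\;B.
\end{equation*}

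For the knapsack bound I first decompose the sum. By the stated convention $\vl\in\leftt(P^*)$, so $\rightt(P^*)=\rightt(\pathh(\vl))$ and $\vl\notin\rightt(P^*)$. For every $v\in P^*\setminus\vl$ the policy $\A$ did not stop at $v$, hence $\Sv\in\{\sthreev,\stwov\}=\{0,\stwov\}$; moreover, by the rule defining $\A$, $\Sv=\stwov$ exactly when $v$ is right-branching in $P^*$, and $\Sv=0$ otherwise. Therefore
\begin{equation*}
\sum_{v\in P^*}\Sv\;=\;\sum_{w\in\rightt(\pathh(\vl))}\stwow\;+\;\Sl.
\end{equation*}
Next I split on the value of $\Sl$. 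If $\Sl=\sonel$, then substituting $\sonel=W-\sum_{w\in\rightt(\pathh(\vl))}\stwow$ collapses the right-hand side to exactly $W$, giving the bound with equality.

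Otherwise $\Sl\in\{0,\stwol\}$, and in the latter sub-case $\vl$ must be a leaf (else $\A$ would continue to a child). Writing $\beta_w:=2^{2^{\level(w)}}$, the identity $\stwow=\beta_w\prod_{u\in\rightt(\pathh(w))}\beta_u$ shows that, listed in path order, the values $\stwow$ for $w\in\rightt(\pathh(\vl))$ grow by a factor $\beta_w\ge 2^{2^2}=16$ at each step, so their sum is at most $\tfrac{16}{15}\prod_{w\in\rightt(\pathh(\vl))}\beta_w$. Since the levels in $\rightt(\pathh(\vl))$ are distinct elements of $\{2,\ldots,\dep\}$, $\prod_w\beta_w\le 2^{\sum_{\ell=2}^{\dep}2^\ell}=2^{2^{\dep+1}-4}=W/16$. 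The possible extra term $\stwol=4\prod_{w\in\rightt(\pathh(\vl))}\beta_w\le W/4$ (present only when $\vl$ is a leaf) then yields a total $\le 3W/8<W$. The one step needing any care is this geometric estimate; the degenerate subcases $\rightt(\pathh(\vl))=\es$ and $\vl=r$ are handled by the same computation.
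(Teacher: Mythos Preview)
Your argument is correct. The travel bound and the decomposition
\[
\sum_{v\in P^*}\Sv\;=\;\sum_{w\in\rightt(\pathh(\vl))}\stwow\;+\;\Sl
\]
are exactly what the paper uses. The difference is in how you finish. In your Case~1 you already have the key identity $\sum_{w\in\rightt(\pathh(\vl))}\stwow=W-\sonel$ (straight from the definition of $\sonel$). The paper simply observes that $\Sl\le\sonel$ \emph{always}, since the paper has already noted $\sonev>W/2\ge\stwov\ge 0=\sthreev$ for every $v$; this single inequality handles all realizations at once and gives $\sum_{v\in P^*}\Sv\le(W-\sonel)+\sonel=W$ without a case split. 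Your Case~2 instead bounds $\sum_w\stwow$ and $\stwol$ directly via a geometric-growth estimate on the $\beta_w=2^{2^{\level(w)}}$ factors. That computation is sound (right-branching nodes are at level $\ge 2$, so each ratio is $\ge 16$; distinct levels in $\{2,\ldots,\dep\}$ cap the product at $W/16$), but it is extra work that the identity from Case~1 already subsumes. In short: same decomposition, but you can collapse your three-case split into the paper's one-line finish by noting $\Sl\le\sonel$.
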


\begin{proof}
The length of any rooted path in $T$ is at most
$\sum_{\ell=2}^\dep 2^{\ell-2}= 2^{\dep-1}-1=B$. 
By construction,
for every node $w$ on $P^*$ before $\vl$, we have $\Sw\neq\sonew$, and
$\Sw=\stwow$ precisely if $w$ is a right-branching node of $P^*$. 
Therefore 
$\sum_{w\prec_{P^*}\vl}\Sw=\sum_{w\in \rightt(P^*)}\stwow=W-\nsize[\vl]^{(1)}$, and
since $\nsize[\vl]\leq\nsize[\vl]^{(1)}$, we obtain that $\nsize[](P^*)\leq W$. 
\end{proof}

\begin{lemma}\label{lemma-adaptivereward} 
The expected reward collected by $P^*$ 
is at least $\frac{1-e^{-1}}{4}$.
\end{lemma}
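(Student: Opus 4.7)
My plan is to compute $\E{R}$ via linearity of expectation. A vertex $v$ contributes positive reward precisely when the policy reaches $v$ (event $\{v \in P^*\}$) and $\Sv$ realizes as $\sonev$; in that case $v$ must be the terminating vertex $\vl$ and the reward is $\ronev = (1 - 1/\sqrt{\dep})^{\alpha(v)}$, where $\alpha(v) := |\rightt(\pathh(v))|$. Crucially, $\{v \in P^*\}$ depends only on instantiations at the ancestors of $v$ on $\pathh(v)$, hence is independent of $\Sv$, so
\[
\E{R} = \sum_{v \in V} \Pr[v \in P^*] \cdot \Pr[\Sv=\sonev] \cdot \ronev = \sum_{v \in V} \Pr[v \in P^*] \cdot (1/\dep) \cdot \ronev.
\]

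To evaluate each term, I would factor $\Pr[v \in P^*]$ over ancestors: each right-branching ancestor forces $\Sw = \stwow$ (probability $1/\sqrt{\dep}$) and each left-branching ancestor forces $\Sw = \sthreew$ (probability $q := 1 - 1/\sqrt{\dep} - 1/\dep$). Setting $p := (1/\sqrt{\dep})(1 - 1/\sqrt{\dep})$ to absorb the right-branching reward factor into the corresponding probability, the contribution of $v$ collapses to $(1/\dep)\,p^{\alpha(v)} q^{k - \alpha(v)}$, where $k := \dep - \level(v)$ is the depth of $v$ in $T$. The key algebraic observation is that $p + q = 1 - 2/\dep$, so grouping the $\binom{k}{\alpha}$ depth-$k$ vertices with $\alpha$ right-branches and applying the binomial theorem reduces the per-level sum to $(1/\dep)(1 - 2/\dep)^k$.

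Summing this geometric series over $k \in \{0, 1, \ldots, \dep - 1\}$ yields $\E{R} = \frac{1 - (1 - 2/\dep)^\dep}{2}$. Since $(1 - 2/\dep)^\dep \leq e^{-2} \leq e^{-1}$ for $\dep \geq 2$, this is at least $(1 - e^{-1})/2 \geq (1 - e^{-1})/4$, as required. There is no significant obstacle here: the construction was tailored so that the per-level probabilities sum uniformly via the binomial theorem, and independence of the $\Sv$'s combined with the fact that reward is collected from at most one vertex per realization (already noted in the setup) lets the expectation factor cleanly into a product. The only care required is correctly identifying the left/right-branching weights so that the identity $p + q = 1 - 2/\dep$ emerges.
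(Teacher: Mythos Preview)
Your proof is correct and takes a genuinely different route from the paper. The paper argues conditionally: it writes $\E{\Rew}=\Pr[\nsize[\vl]=\sonel]\cdot\E{\ronel}$, bounds the first factor by $1-e^{-1}$ via $\Pr[\nsize[\vl]\neq\sonel]=(1-1/\dep)^\dep$, and bounds $\E{\ronel}=\E{(1-1/\sqrt{\dep})^{|\rightt(P^*)|}}$ using Jensen's inequality (convexity of $x\mapsto(1-1/\sqrt{\dep})^x$) together with $\E{|\rightt(P^*)|}\leq\sqrt{\dep}$, yielding $\E{\ronel}\geq 1/4$. You instead compute $\E{\Rew}$ exactly: the per-vertex contribution $(1/\dep)\,p^{\alpha}q^{k-\alpha}$ with $p=(1/\sqrt{\dep})(1-1/\sqrt{\dep})$ and $q=1-1/\sqrt{\dep}-1/\dep$ collapses via the binomial theorem and $p+q=1-2/\dep$ to the closed form $\E{\Rew}=\tfrac{1}{2}\bigl(1-(1-2/\dep)^\dep\bigr)$. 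Your approach is more elementary (no convexity needed) and gives a sharper bound, namely $(1-e^{-2})/2$ rather than $(1-e^{-1})/4$; the paper's argument, on the other hand, is more modular in that it separates the two loss sources (failure to ever see an $\nsize^{(1)}$ versus accumulated right-branches) without needing the algebraic coincidence $p+q=1-2/\dep$.
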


\begin{proof}
Let $\Rew=\nrewd[](P^*):=\sum_{v\in P^*}\nrewd$ denote the reward obtained by $P^*$.
Note that $\vl$ is the only vertex from which $P^*$ 
can collect positive reward.
So $\E{\Rew}=\Pr\bigl[\nsize[\vl]=\nsize[\vl]^{(1)}\bigr]\cdot\E{\nrewd[\vl]^{(1)}}$.
Observe that if $\nsize[\vl]\neq\nsize[\vl]^{(1)}$, then $\vl$ is a leaf node, and hence
the event $\{\nsize[\vl]\neq\nsize[\vl]^{(1)}\}$ occurs precisely when $\mathcal{A}$ visits $\dep$
vertices, one on each level of $\T$, and none of them instantiate to size $\nsize^{(1)}$. 
Since vertex sizes are independent across different vertices, we have
$\Pr\bigl[\nsize[\vl]\neq\nsize[\vl]^{(1)}\bigr]=\bigl(1-\frac{1}{\dep}\bigr)^\dep\leq e^{-1}$,
and so $\E{\Rew}\geq (1-e^{-1})\E{\nrewd[\vl]^{(1)}}$.

We have $\nrewd[\vl]^{(1)}=\bigl(1-\frac{1}{\sqrt{\dep}}\bigr)^{|\rightt(P^*)|}$, and since
$\bigl(1-\frac{1}{\sqrt{\dep}}\bigr)^x$ is a convex function, we obtain that 
$\E{\nrewd[\vl]^{(1)}}\geq\bigl(1-\frac{1}{\sqrt{\dep}}\bigr)^{\E{|\rightt(P^*)|}}$. 
Observe that $v\in P^*$ gets included in $\rightt(P^*)$ precisely when $\nsize$
instantiates to $\stwov$, which happens with probability $\frac{1}{\sqrt{\dep}}$. 
So we can upper bound $\E{|\rightt(P^*)|}$ by $\dep\cdot\frac{1}{\sqrt{\dep}}=\sqrt{\dep}$.
It follows that 
$\E{\nrewd[\vl]^{(1)}}\geq\bigl(1-\frac{1}{\sqrt{\dep}}\bigr)^{\sqrt{\dep}}\geq\frac{1}{4}$,
where the last inequality uses the fact that $1-x\geq 4^{-x}$ for $x\leq 0.5$.
\end{proof}

\subsubsection*{Proof of Theorem~\ref{nonadapt-ubthm}: upper bound for non-adaptive policies}
Let $\sigma$ be some non-adaptive policy, 
which we may again view as a rooted path in $T$, since we can always move first to $r$.
We may assume that 
$\sg$ 
visits vertices in decreasing order of their levels, since any 
backtracking from a node $v$ to its ancestor would cause one to exceed the travel budget.
We say that an execution of $\sigma$ ``cheats'' if, for some visited node $v$, $\nsize$
instantiates to $\stwov$, and $\sigma$ proceeds to visit a vertex in the subtree of
$T$ rooted at the left-child of $v$.  

\begin{lemma} \label{claimcheat} \label{cheatlem}
$\sigma$ does not collect any positive reward after cheating.
\end{lemma}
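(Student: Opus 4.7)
The plan is to prove that after cheating at $v$, any subsequent node $u$---which necessarily lies in the subtree of $T$ rooted at the left child of $v$---cannot yield positive reward, because the cheat already commits enough of the knapsack budget that $u$'s completion time must exceed $W$. Since $\rthreev = \rtwov = 0$, positive reward from $u$ requires $\Su$ to instantiate to $\soneu$; moreover, since $\sonew > W/2$ for every $w$, no node strictly between $v$ and $u$ can instantiate to its size-$\nsize^{(1)}$ value without immediately overflowing $W$. Hence the minimum possible completion time of $u$ is $\stwov + \soneu$, and it suffices to establish
\[
\stwov \;>\; \sum_{y \in \rightt(\pathh(u))} \nsize[y]^{(2)}.
\]

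To prove this inequality I will exploit the doubly-exponential form of the sizes. Since $\sg$ does not backtrack within $T$, the prefix of $\sg$ up to $u$ is exactly $\pathh(u)$, and the assumption that $u$ lies in the left subtree of $v$ makes $v$ left-branching on $\pathh(u)$, so $v \notin \rightt(\pathh(u))$. List $\rightt(\pathh(u)) = \{y_1, \dots, y_m\}$ in decreasing order of level, $\ell_1 > \ell_2 > \dots > \ell_m$. The convention that the endpoint of a path is declared left-branching gives $\rightt(\pathh(y_i)) = \{y_1,\dots,y_{i-1}\}$, so
\[
\nsize[y_i]^{(2)} \;=\; 2^{E_i}, \qquad E_i \;:=\; \sum_{j\leq i} 2^{\ell_j}.
\]
Because the consecutive gaps $E_i - E_{i-1} = 2^{\ell_i}$ are all at least $2$, we have $E_m - E_i \geq 2(m-i)$, whence
\[
\sum_{i=1}^m \nsize[y_i]^{(2)} \;=\; \sum_{i=1}^m 2^{E_i} \;\leq\; 2^{E_m}\sum_{j=0}^{m-1} 2^{-2j} \;\leq\; \tfrac{4}{3}\cdot 2^{E_m}.
\]

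Finally, I will lower-bound the exponent of $\stwov$ relative to $E_m$. Writing $L := \level(v)$ and $A := \sum_{w \in \rightt(\pathh(v))} 2^{\level(w)}$, we have $\stwov = 2^{2^L + A}$. Split $\rightt(\pathh(u))$ into (a) the strict ancestors of $v$, which coincide with $\rightt(\pathh(v))$ and contribute exactly $A$ to $E_m$, and (b) the right-branching strict descendants of $v$ on $\pathh(u)$, whose distinct levels lie in $\{1,\dots,L-1\}$ and thus contribute at most $2 + 4 + \cdots + 2^{L-1} = 2^L - 2$. This gives $E_m \leq A + 2^L - 2$, so the exponent of $\stwov$ exceeds $E_m$ by at least $2$, and therefore $\stwov \geq 4\cdot 2^{E_m} \geq 3 \sum_{i=1}^m \nsize[y_i]^{(2)}$, which yields the required strict inequality.

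The main obstacle I expect is handling the endpoint convention correctly: it is precisely because the last node of a path is counted as left-branching---so $v$ drops out of $\rightt(\pathh(u))$ and each $y_i$ drops out of $\rightt(\pathh(y_i))$---that $\stwov$ retains its pivotal $2^{2^L}$ factor in the exponent while the $\nsize[y_i]^{(2)}$'s do not. Getting the decomposition of $\rightt(\pathh(u))$ exactly right so that no relevant node is double-counted or omitted is the delicate bookkeeping step; the remaining estimation is a routine rapidly-decaying geometric-type sum.
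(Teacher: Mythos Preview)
Your proof is correct and targets the same key inequality as the paper: with your notation (cheat at $v$, subsequent node $u$ in the left subtree of $v$), one must show $\stwov > \sum_{y\in\rightt(\pathh(u))}\nsize[y]^{(2)}$, which is equivalent to $\stwov+\soneu>W$.

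The difference is purely in how that inequality is established. The paper observes that the numbers $\{\nsize[y]^{(2)}:y\in\rightt(\pathh(u))\}$ are \emph{distinct} powers of $2$ (since each successive $y_i$ multiplies the previous by $2^{2^{\ell_i}}$), and that each of them is strictly smaller than $\stwov$ (because the extra factor they accumulate from levels below $v$ is strictly less than the factor $2^{2^{L}}$ that $\stwov$ carries). It then invokes the one-line fact that a sum of distinct powers of $2$, each strictly less than a power of $2$, is itself strictly less than that power. Your route computes the exponents $E_i$ explicitly, bounds $\sum_i 2^{E_i}\le\tfrac{4}{3}\,2^{E_m}$ via geometric decay, and shows the exponent of $\stwov$ exceeds $E_m$ by at least $2$. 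This is a bit more labor but yields the sharper $\stwov\ge 3\sum_y\nsize[y]^{(2)}$; the paper's argument is shorter but only gives the strict inequality.

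One small inaccuracy that does not affect correctness: the sentence ``the prefix of $\sg$ up to $u$ is exactly $\pathh(u)$'' need not hold literally, since $\sigma$ may skip nodes along the tree path. You never actually use this; the fact you need, $v\notin\rightt(\pathh(u))$, follows directly from $u$ lying in the left subtree of $v$, independent of which intermediate nodes $\sigma$ visits.
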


\begin{proof}
Suppose $\sg$ cheats at some vertex $u$.
Let $v$ be any node in the tree rooted at the left-child of $u$.
The residual knapsack budget after visiting $u$ is at most $W-\stwou$. 
It suffices to show that $\sonev>W-\stwou$. 
Since $\sonev=W-\sum_{w\in \rightt(path(v))}\stwow$, 
this amounts to showing that $\stwou>\sum_{w\in A}\stwow$,
%
where $A=\rightt(\pathh(v))$.
We argue that $\stwow<\stwou$ for every $w\in A$, and the $\stwow$'s are
distinct for $w\in A$. This, coupled with the fact
that $\stwou$ and the $\stwow$'s are all powers of $2$, implies the above inequality.

Recall that for a node $z$, we have 
$\nsize[z]^{(2)}=2^{2^{\level(z)}}\cdot\prod_{w\in\rightt(\pathh(z))}2^{2^{\level(w)}}$. 
Let $A=\{a_1,a_2,\ldots,a_{|A|}\}$, where the nodes are ordered in increasing order of
their distance from $r$. Then, for any $i\geq 2$, we have
$\nsize[a_i]^{(2)}=2^{2^{\level(a_i)}}\cdot\nsize[a_{i-1}]^{(2)}$, showing that each
$\nsize[a_i]^{(2)}$ is a distinct power of $2$, and $\nsize[a_i]^{(2)}$ increases with $i$.
Note that $u\notin\rightt(\pathh(v))$
and $\rightt(\pathh(u))\sse A$. So for $z=a_{|A|}$, we have
$\nsize[z]^{(2)}=\prod_{w\in A-\rightt(\pathh(u))}2^{2^{\level(w)}}\cdot\prod_{w\in\rightt(\pathh(u))}2^{2^{\level(w)}}$
and $\prod_{w\in A-\rightt(\pathh(u))}2^{2^{\level(w)}}<2^{2^{\level(u)}}$. It follows that $\nsize[z]^{(2)}<\stwou$.
%
\end{proof}

Recall that we view $\sg$ also as a rooted path in $T$. We argue that the total
expected reward obtained from $\rightt(\sg)$ and $\leftt(\sg)$ are both at most
$\frac{1}{\sqrt{\dep}}$, which will finish the proof.

\begin{lemma}
The expected reward obtained from $\rightt(\sigma)$ is at most $\frac{1}{\sqrt{\dep}}$.
\end{lemma}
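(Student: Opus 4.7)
The plan is to bound the expected reward from each right-branching vertex separately and then sum the resulting bounds via an infinite geometric series. Note that no invocation of Lemma~\ref{cheatlem} is needed for this direction.

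First I would observe that among the three points in the support of $(\Sv,\nrewd)$, only the instantiation $\Sv=\sonev$ yields positive reward, since $\rthreev=\rtwov=0$. Hence the reward $\sg$ collects from any vertex $v$ is at most $\ronev\cdot\bon[\Sv=\sonev]$, and therefore
\[
\E{\text{reward from } v} \;\leq\; \Pr[\Sv=\sonev]\cdot\ronev \;=\; \frac{1}{\dep}\Bigl(1-\frac{1}{\sqrt{\dep}}\Bigr)^{|\rightt(\pathh(v))|}.
\]

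Next, since $\sg$ is viewed as a rooted path in $T$, I would enumerate the right-branching vertices of $\sg$ in visitation order as $v_1\prec_{\sg}v_2\prec_{\sg}\cdots\prec_{\sg}v_k$. The convention that the end-node of $\pathh(v_i)$ is left-branching gives $v_i\notin\rightt(\pathh(v_i))$, so $\rightt(\pathh(v_i))=\{v_1,\ldots,v_{i-1}\}$ and $|\rightt(\pathh(v_i))|=i-1$. Summing the per-vertex bound by linearity of expectation and then bounding the finite sum by the corresponding infinite geometric series yields
\[
\E{\text{reward from } \rightt(\sg)}
\;\leq\; \sum_{i=1}^k \frac{1}{\dep}\Bigl(1-\frac{1}{\sqrt{\dep}}\Bigr)^{i-1}
\;\leq\; \frac{1}{\dep}\cdot\sqrt{\dep} \;=\; \frac{1}{\sqrt{\dep}}.
\]

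There is no real obstacle here: the bound follows once one notices that $\ronev$ decays geometrically in $|\rightt(\pathh(v))|$, with rate exactly matching the $1/\sqrt{\dep}$ branching probability of $\stwov$, so that the $1/\dep$ probability of the $\sonev$-instantiation is what ultimately produces the $1/\sqrt{\dep}$ factor after summation. This is precisely why the instance is calibrated with branching probability $1/\sqrt{\dep}$ for the $\stwov$ outcome in the first place.
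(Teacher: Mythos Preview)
Your proof is correct and essentially identical to the paper's: both bound the reward from each $w\in\rightt(\sg)$ by $\frac{1}{\dep}(1-\frac{1}{\sqrt{\dep}})^{|\rightt(\pathh(w))|}$ and then sum the resulting geometric series. The only cosmetic difference is that the paper observes the exponents $|\rightt(\pathh(w))|$ are \emph{distinct} elements of $\dbrack{\dep-1}$, whereas you identify them explicitly as $0,1,\ldots,k-1$; either observation immediately yields the same bound.
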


\begin{proof}
We have
\begin{equation*}
\E{\nrewd[](\rightt(\sigma))}
\leq \sum_{w\in \rightt(\sigma)}\Pr[\Sw=\sonew]\cdot
\Bigl(1-\tfrac{1}{\sqrt{\dep}}\Bigr)^{|\rightt(\pathh(w))|} 
\leq\sum_{i=1}^\dep\frac{1}{\dep}\cdot\Bigl(1-\tfrac{1}{\sqrt{\dep}}\Bigr)^{i-1}\leq\frac{1}{\sqrt{\dep}}.
\end{equation*}
The second inequality follows because the exponents $|\rightt(\pathh(w))|$, as $w$ ranges
over $\rightt(\sg)$, are distinct elements of $\dbrack{\dep-1}$. 
\end{proof}

\begin{lemma}
The expected reward obtained from $\leftt(\sigma)$ is at most $\frac{1}{\sqrt{\dep}}$.
\end{lemma}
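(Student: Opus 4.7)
The approach I would take is to mirror the proof of the previous lemma (the right-branching case) but now invoke Lemma~\ref{cheatlem} to exploit the fact that any execution of $\sigma$ that collects positive reward at a left-branching node $w$ must not have ``cheated'' at any earlier left-branching ancestor of $w$.

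Fix $w\in\leftt(\sigma)$. The only instantiation of $w$ that yields nonzero reward is $\Sw=\sonew$, which occurs with probability $1/\dep$ and gives reward $(1-1/\sqrt\dep)^{|\rightt(\pathh(w))|}$. For $w$ to actually contribute this reward, two conditions on the ancestors of $w$ on $\sigma$ must simultaneously hold: (a) no ancestor $u\prec_\sigma w$ instantiates to $\soneu$ (else $\sigma$ terminates early, per the assumption justified at the start of the section), and (b) no left-branching ancestor $u\prec_\sigma w$ instantiates to $\stwou$ (else Lemma~\ref{cheatlem} forces the reward from $w$ to be $0$). Combining (a) and (b) and using independence across vertices: for each left-branching ancestor $u$ of $w$ on $\sigma$, one needs $\Su=\sthreeu$ (probability $1-1/\sqrt\dep-1/\dep$); for each right-branching ancestor, one needs $\Su\neq\soneu$ (probability $1-1/\dep$). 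A brief check confirms that under these conditions the total processed size before $w$ is at most $\sum_{u\in\rightt(\pathh(w))}\stwou=W-\sonew$, so the knapsack budget comfortably accommodates $w$'s size $\sonew$, and $w$ does get processed to completion.

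Multiplying these probabilities with $(1-1/\sqrt\dep)^{|\rightt(\pathh(w))|}/\dep$ and loosening via $(1-1/\sqrt\dep-1/\dep)\le(1-1/\sqrt\dep)$ and $(1-1/\dep)\le 1$, the expected reward contributed by $w$ is at most $\tfrac{1}{\dep}\cdot(1-1/\sqrt\dep)^{\dep-\level(w)}$, since the left- and right-branching ancestors of $w$ together make up all ancestors of $w$ in $T$ and therefore number $\dep-\level(w)$. Finally, because $\sigma$ is a rooted path in $T$ and visits at most one vertex per level, summing this bound over $w\in\leftt(\sigma)$ yields a geometric series with ratio $(1-1/\sqrt\dep)$ whose total is at most $\sqrt\dep$, so $\E{\nrewd[](\leftt(\sigma))}\le 1/\sqrt\dep$.

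The hard part will be recognizing that Lemma~\ref{cheatlem} must be invoked, rather than only the early-termination rule, when estimating $w$'s contribution. Without the extra $(1-1/\sqrt\dep)$ factor per left-branching ancestor supplied by no-cheating, only $|\rightt(\pathh(w))|$ would appear in the exponent of the reward bound for $w$---and this count can be $0$ for many consecutive $w\in\leftt(\sigma)$ (e.g.\ a run of left-children), so the sum over $w$ would then refuse to collapse into a bounded geometric series. Once one identifies ``no cheating before $w$'' as the mechanism that fills in the missing exponent so that $|\leftt(\pathh(w))-w|+|\rightt(\pathh(w))|$ equals the full depth $\dep-\level(w)$, the remainder is a routine geometric-sum estimate.
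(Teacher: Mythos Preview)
Your proof is correct and follows essentially the same approach as the paper: both invoke Lemma~\ref{cheatlem} to force every left-branching ancestor of $w$ to instantiate to $\sthreew$, then sum the resulting bounds into a geometric series dominated by $1/\sqrt{\dep}$. The only cosmetic difference is in bookkeeping: the paper simply bounds the reward $\ronev$ by $1$ and uses the exponent $|\leftt(\pathh(v))|-1$ with base $(1-\tfrac{1}{\sqrt\dep}-\tfrac{1}{\dep})$, whereas you retain the reward factor $(1-\tfrac{1}{\sqrt\dep})^{|\rightt(\pathh(w))|}$ and combine it with the left-branching-ancestor factor to obtain exponent $\dep-\level(w)$ with base $(1-\tfrac{1}{\sqrt\dep})$; both choices collapse to the same $1/\sqrt\dep$ bound.
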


\begin{proof}
For any $v\in\sigma$, 
let $\mathcal{E}_v$ denote the event that $\Sw=\sthreew$ for all $w\in\leftt(\pathh(v))-v$.
By Lemma~\ref{claimcheat} and since $\sonew>W/2$ for every $w\in V$, we have that
$\sigma$ can collect positive reward from $v$ only if event $\mathcal{E}_v$ occurs.
So the expected reward obtained from $v$ is at most
$\Pr[\Ec_v]\cdot\Pr[\Sv=\sonev|\Ec_v]\cdot\ronev\leq\frac{1}{\dep}\cdot\Pr[\Ec_v]$,
where we use the fact that $\ronev\leq 1$, and
$\Pr[\Sv=\sonev\,|\,\Ec_v]=\Pr[\Sv=\sonev]$ since $v$'s realization is
independent of the realizations of other nodes.
Again, since different nodes are independent, we have 
$\Pr[\Ec_v]=\prod_{w\in\leftt(\pathh(v)),\;w\neq v}\Pr\bigl[\Sw=\sthreew\bigr]
=\bigl(1-\frac{1}{\sqrt{\dep}}-\frac{1}{\dep}\bigr)^{|\leftt(\pathh(v))|-1}$.
So the total expected reward obtained from $\leftt(\sg)$ is at most 
\begin{equation*}
\sum_{v\in\leftt(\sg)}\frac{1}{\dep}\cdot
\Bigl(1-\tfrac{1}{\sqrt{\dep}}-\tfrac{1}{\dep}\Bigr)^{|\leftt(\pathh(v))|-1}
\leq\sum_{i=1}^{\dep}\frac{1}{\dep}\cdot
\Bigl(1-\tfrac{1}{\sqrt{\dep}}-\tfrac{1}{\dep}\Bigr)^{i-1}
\leq \frac{1}{\dep}\cdot\frac{1}{\frac{1}{\sqrt{\dep}}+\frac{1}{\dep}}\leq\frac{1}{\sqrt{\dep}}.
\end{equation*}
Again, the first inequality follows because
$|\leftt(\pathh(v))|-1$, for $v\in\leftt(\sg)$, are distinct elements of $\dbrack{\dep-1}$.
\end{proof}

\section{Approximation algorithms for \boldmath \csko} \label{sectioncskoapprox} \label{csko-approx}
We now devise non-adaptive approximation algorithms for \csko.
In Section~\ref{csko-quasipoly}, we develop an
$O(\log\log{W})$-approximation algorithm 
with $(n+\log B)^{O(\log W\log\log W)}$ (i.e., quasi-polynomial) running time, which will
prove Theorem~\ref{theorem-approxcsko}, and in Section~\ref{csko-poly}, we obtain a polytime
$O(\log W)$-approximation algorithm, thereby proving Theorem~\ref{poly-approxcsko}. 

\subsection{Quasi-polytime \boldmath $O(\log\log W)$-approximation algorithm}
\label{csko-quasipoly} 
There are two chief components underlying our algorithm.
First, we isolate a key structural result (Theorems~\ref{structhm} and~\ref{strucdthm})
showing that from an optimal adaptive policy, one can extract a suitable path $\spath$
and certain ``portal'' vertices on this path, such that the subpaths of $\spath$ between
these portal vertices satisfy various nice properties. 
Second, we show how to exploit this structural result algorithmically.
The above properties allow us to reduce the problem, roughly speaking at the expense of an
$O(\log\log W)$-factor loss, to that of finding the portal vertices, and suitable paths
between these portal vertices that satisfy certain knapsack constraints on the total
expected truncated size $\E{\min\{\Sv,2^j\}}$ of nodes on these paths. 
We ``guess'' (i.e., enumerate over all possible choices of)
these portal vertices and some auxiliary information, and set up a configuration LP
\eqref{cskolp} to
find paths between these portal vertices. This configuration LP can be solved
near-optimally, and we show that a fractional solution can be rounded incurring only an  
$O(1)$-factor loss in the objective and in the constraints. Finally, we argue that 
this leads to an $O(\log\log W)$-approximation non-adaptive policy.

Our approach is similar in spirit to the one in~\cite{BansalN14} for \cso, and in
Section~\ref{cso-approx}, we show that our approach also yields an 
$O(\log\log B)$-approximation for \cso, which improves upon the guarantee in~\cite{BansalN14}
by an $O\bigl(\frac{\log\log B}{\log\log\log B}\bigr)$-factor. While we borrow various
ingredients from~\cite{BansalN14}, the key difference between our approach and theirs is
that we extract much more information from the adaptive policy in terms of so-called portal
vertices, which enables us to round an underlying configuration LP incurring only a
{\em constant-factor} violation in the knapsack constraints; in contrast, 
this step in~\cite{BansalN14} incurs an $O\bigl(\frac{\log\log B}{\log\log\log B}\bigr)$-factor 
violation of the constraints, and this savings is the source of our improved guarantee.

\subsubsection{\boldmath Structural results for \csko}
Recall that $\dbrack{k}:=\{0,1,\ldots,k\}$ for an integer $k\geq 0$.
Recall that, for a path $P$ and nodes $a,b\in P$, we use $P_{a,b}$ to denote the $a$-$b$
portion of $P$. 
If $P$ is a $u$-$v$ path, 
its {\em regret} is $\dreg(P):=d(P)-d(u,v)$, and the
{\em two-point regret of $P$} with respect to a node $a\in P$ is
$\dreg(P,a):=d(P)-d(u,a)-d(a,v)=\dreg(P_{u,a})+\dreg(P_{a,v})$. 
For an index $j\in\{0,1,\ldots,L:=\ceil{\log W}\}$, recall that we define
$\Xvj:=\min\{\nsize,2^j\}$ and $\muvj:=\E{\Xvj}$.
For any vertex $v\in V$, let 
$\piv(t):=\E{\nrewd\cdot\bon_{\nsize\leq W-t}}=\sum_{t^\prime=0}^{W-t}\Pr[\Sv=t^\prime]\cdot\E{\Rv\,|\,\Sv=t^\prime}$
denote the
expected reward obtained from $v$ if its processing starts at time $t$.  
Note that $\piv(t)=0$ for any $t>W$. Also, note that $\pi_{\rt}(t)=0$ for all $t$.
We may assume that $\pi_v(0)\leq\OPT/4$ for every $v\in V$, as otherwise, we can obtain
$\Omega(\OPT)$ reward by going to a single node.

Throughout, let $K=3\log(6\log W)+12$, $L=\ceil{\log W}$, $N_1=2(K+1)$. 
Define $\nd_{-1}:=\rt$. 

\begin{theorem} \label{structhm} 
There exists a rooted path $\spath$ with $d(\spath)\leq B$, 
vertices $\nd_0\preceq\nd_1\preceq\ldots\preceq\nd_k$ on $\spath$ for some $k\leq L$, 
and, for each $j\in\dbrack{k}$, a vertex-set 
$\ndset_j\sse\spath_{\nd_{j-1},\nd_j}$ containing nodes $\nd_{j-1}$, $\nd_j$, with 
$|\ndset_j|\leq N_1$, whose vertices are ordered by the order they appear on $\spath$,
satisfying the following properties. 
%
\begin{enumerate}[label=(\alph*), topsep=0.1ex, noitemsep, leftmargin=*]
\item 
$\sum_{j=0}^k\sum_{v\in\spath_{\nd_{j-1},\nd_j}-\nd_j}\pi_v(2^j-1)
\geq\OPT/4$.  
\item 
$\mu^j(\spath_{\rt,\nd_j}-{\nd_j})\leq (K+1)2^j$ for all $j\in\dbrack{k}$.
\item 
For every $j\in\dbrack{k}$ and
consecutive nodes $a,b\in\ndset_j$,
we have $\mu^j(\spath_{a,b}-b)\leq 2^j$.
\end{enumerate} 
\end{theorem}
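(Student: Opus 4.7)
The proof proceeds in two stages. In Stage~1, I extract from the decision tree $\T$ of an optimal adaptive policy---which has expected reward $\OPT$---a deterministic rooted path $\spath$ with $d(\spath)\leq B$, together with portal vertices $\nd_0\preceq\cdots\preceq\nd_k$ on $\spath$ for some $k\leq L$, that already satisfy properties (a) and (b). In Stage~2, I refine these portals into the larger sets $\ndset_j$ satisfying the finer subdivision condition (c).

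For Stage~1, I invoke the adapted form of the structural lemma of~\cite{BansalN14} for \cso, referenced in this paper as Lemma~\ref{cso-simpstructhm}. That lemma is stated for \cso, in which travel and processing times share a single budget, but its proof argument operates solely on processing times and the truncated-size statistics $\mu^j_v=\E{\min\{\Sv,2^j\}}$: each visited vertex $v$ in $\T$ is classified by a level $j$ tied to its realized processing start time $t_v$, and its reward contribution $\E{\bon_{v\text{ visited}}\pi_v(t_v)}$ is lower-bounded at level $j$ by the deterministic proxy $\pi_v(2^j-1)$. The argument transfers directly to \csko: the travel-time constraint $d(\spath)\leq B$ is automatic for any realization of the policy's walk, while the single budget of \cso is replaced by the processing-time budget $W$ of \csko. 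Thus the adapted lemma yields a rooted path $\spath$ (a walk realized by some leaf of $\T$) together with portals $\nd_0\preceq\cdots\preceq\nd_k$ on $\spath$ satisfying (a) and (b).

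For Stage~2, for each $j\in\dbrack{k}$ I construct $\ndset_j$ by greedy insertion. I initialize $\ndset_j:=\{\nd_{j-1}\}$, set $a:=\nd_{j-1}$ and $\Sigma:=\mu^j_{a}$, and scan the vertices $v$ on $\spath$ in order, starting strictly after $\nd_{j-1}$ and stopping at $\nd_j$. For each $v\neq\nd_j$ so scanned: if $\Sigma+\mu^j_v>2^j$ I add $v$ to $\ndset_j$ and reset $a:=v$, $\Sigma:=\mu^j_v$; otherwise I update $\Sigma:=\Sigma+\mu^j_v$. Finally, I add $\nd_j$ to $\ndset_j$. Throughout, $\Sigma\leq 2^j$ holds as an invariant, so for any consecutive $a\prec b$ in $\ndset_j$, $\mu^j(\spath_{a,b}-b)=\Sigma\leq 2^j$, which is exactly (c). Since by (b), $\mu^j(\spath_{\nd_{j-1},\nd_j}-\nd_j)\leq\mu^j(\spath_{\rt,\nd_j}-\nd_j)\leq(K+1)2^j$, the number of declared segments is at most $K+1$, so $|\ndset_j|\leq K+2\leq 2(K+1)=N_1$.

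The main obstacle is Stage~1: the delicate element of~\cite{BansalN14}'s argument simultaneously secures (i) the reward bound (a), via a classification-and-charging argument on $\T$ that replaces $\pi_v(t_v)$ by $\pi_v(2^j-1)$ at the cost of only a constant factor, and (ii) the deterministic $\mu^j$-prefix bound (b), via a Markov-type averaging argument over realizations. Both ingredients depend only on the processing-time structure of $\T$, so their adaptation to \csko is conceptually routine, but this is where the substantive work lies. Stage~2 is an elementary packing argument.
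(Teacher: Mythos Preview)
Your two-stage plan matches the paper's proof exactly: Stage~1 is the content of Theorem~\ref{simpstructhm} (the \csko analogue of Lemma~\ref{cso-simpstructhm}, which the paper proves self-containedly via Lemmas~\ref{lemmaconditions}--\ref{pathlem}), and Stage~2 is the same greedy subdivision of each $\spath_{\nd_{j-1},\nd_j}$. One small slip in your Stage~2 count: with your greedy (declare $v$ when $\Sigma+\mu^j_v>2^j$), each segment has $\mu^j$-weight \emph{at most} $2^j$, not at least $2^j$, so ``at most $K+1$ segments'' does not follow directly from the $(K+1)2^j$ prefix bound; however, since each declaration certifies that the just-closed segment together with the next vertex has $\mu^j$-weight exceeding $2^j$, any two consecutive segments have combined weight $>2^j$, and a pairing argument then gives at most $2K+1$ segments, so $|\ndset_j|\leq 2K+2=N_1$ and your conclusion stands. (The paper's variant---take \emph{minimal} segments with weight $\geq 2^j$---yields the $K+1$ bound directly.)
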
  

To avoid detracting the reader, we defer the proof of Theorem~\ref{structhm} to the end of
Section~\ref{csko-approx}, but we note
that once we have $\spath$ and the $\nd_j$ vertices satisfying parts (a) and (b), we can
obtain the $\ndset_j$-sets by simply splitting each $\spath_{\nd_{j-1},\nd_j}$ into at
most $2(K+1)$ segments.
As mentioned earlier, in our quasi-polytime algorithm, 
we utilize Theorem~\ref{structhm} to construct a
good rooted path, by using enumeration to guess 
$\bigcup_j\ndset_j$, 
and an LP to then obtain suitable paths between
consecutive nodes of $\bigcup_j\ndset_j$.  
In order to ensure that the total path length is at most the travel budget $B$, 
we will also need to obtain some information about the lengths
$d(\spath_{a,b})$ 
for consecutive nodes $a,b$ in $\bigcup_j\ndset_j$. 
Naively guessing these lengths would yield incur a 
large $B^{O(LN_1)}$-factor in the running time; to do better, and reduce the dependence to
$(\log B)^{O(LN_1)}$, we instead guess the two-point regret of each $\spath_{a,b}$ with
respect to a ``mid-point'' node, within a factor of $2$, which suffices (see
Fig.~\ref{strucfig}). We refine Theorem~\ref{structhm} to incorporate these estimates as follows.

\begin{theorem}[Main structural result] \label{strucdthm} 
Let the node-sequence $\nd_0,\ldots,\nd_k$, where $k\leq L$, and for each
$j\in\dbrack{k}$, the ordered node sequence $\ndset_j$ of at most $N_1$ nodes, be as
given by Theorem~\ref{structhm}.
Define $\fullset:=\bigcup_{j=0}^k\ndset_j$, which we call ``portal nodes'',
where the ordering of nodes in $\fullset$ is 
$\ndset_0$, $\ndset_1$, \ldots, $\ndset_k$; 
for $a\in\fullset$, $a\neq\nd_k$, let $\nxt(a)$ be the next node in $\fullset$ after $a$. 
For each $a\in\fullset-\nd_k$, there exists an $a$-$\nxt(a)$ path $\spath_{a,\nxt(a)}$,
auxiliary node $m_a$, and integer $\gm_a\geq 0$, such that the following properties hold.  
%
\begin{enumerate}[label={\textnormal{(P\arabic*)}}, 
topsep=0.2ex, noitemsep, leftmargin=*]
\item \textnormal{(Distance)} 
$d(\spath_{a,b})\leq\dist_{a}:=2^{\gm_a}-1+d(a,m_a)+d(m_a,b)$
for 
every pair of consecutive nodes $a,b\in\fullset$.
\label{distance}

\item \textnormal{(Total-length)} $\sum_{a\in\fullset-\nd_k}\dist_{a}\leq B$.
\label{totlen}

\item \textnormal{(Reward)}
$\sum_{j=0}^k\sum_{a\in\ndset_j-\nd_j}\sum_{v\in\spath_{a,\nxt(a)}-\nxt(a)}\pi_v(2^j-1)
\geq\OPT/8$.  
\label{reward}

\item \textnormal{(Prefix-size)}
$\sum_{h=0}^j\sum_{a\in\ndset_h-\nd_h}\mu^j\bigl(\spath_{a,\nxt(a)}-\nxt(a)\bigr)\leq (K+1)2^j$ 
for all $j\in\dbrack{k}$. 
\label{prefix}

\item \textnormal{(Size)} 
$\mu^j(\spath_{a,b}-b)\leq 2^j$ for every $j\in\dbrack{k}$ and pair of consecutive nodes
$a,b\in\ndset_j$. 
\label{size}  
\end{enumerate} 
\end{theorem}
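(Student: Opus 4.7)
The plan is to reuse $\spath$, portal vertices $\nd_0,\ldots,\nd_k$, and sets $\ndset_j$ exactly as given by Theorem~\ref{structhm}, form $\fullset := \bigcup_j \ndset_j$ with its natural ordering along $\spath$, and take each new subpath $\spath_{a,\nxt(a)}$ to be the corresponding subpath of $\spath$ itself between the consecutive portal nodes $a$ and $b:=\nxt(a)$. With this choice, properties (P4) and (P5) are inherited directly from parts (b) and (c) of Theorem~\ref{structhm}, via the telescoping identity $\bigcup_{a\in\ndset_h-\nd_h}(\spath_{a,\nxt(a)}-\nxt(a))=\spath_{\nd_{h-1},\nd_h}-\nd_h$. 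For each consecutive pair $(a,b)$, I would then pick $m_a$ on $\spath_{a,b}$ that minimizes the two-point regret $\dreg(\spath_{a,b},m_a)=d(\spath_{a,b})-d(a,m_a)-d(m_a,b)$, and set $\gm_a := \lceil\log_2(\dreg(\spath_{a,b},m_a)+1)\rceil$, so that (P1) is immediate from the very definition of two-point regret.

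The main work is in verifying (P2). Minimality of $\gm_a$ yields $2^{\gm_a}-1\leq 2\dreg(\spath_{a,b},m_a)$, and substituting $\dreg(\spath_{a,b},m_a)=d(\spath_{a,b})-d(a,m_a)-d(m_a,b)$ gives
\begin{equation*}
\dist_a \;\leq\; 2\,d(\spath_{a,b}) - \bigl(d(a,m_a)+d(m_a,b)\bigr) \;\leq\; 2\,d(\spath_{a,b}),
\end{equation*}
so that $\sum_{a}\dist_a\leq 2\,d(\spath)$. Using only the bound $d(\spath)\leq B$ from Theorem~\ref{structhm}, this yields $\leq 2B$ rather than the desired $\leq B$. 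To close this factor-of-$2$ gap, the plan is to invoke a strengthened variant of Theorem~\ref{structhm} guaranteeing $d(\spath)\leq B/2$ while weakening the reward guarantee of part (a) from $\OPT/4$ to $\OPT/8$; this then yields (P2) via $\sum_a \dist_a\leq 2\cdot B/2=B$, and simultaneously yields (P3).

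The main obstacle is establishing this strengthened variant, which requires a halving argument on the decision tree of the optimal adaptive policy. I would split the contribution to $\OPT$ according to whether a node is processed with accumulated travel time $\leq B/2$ or $>B/2$; at least one side accounts for at least $\OPT/2$ of the expected reward. In the ``early'' case, truncating each adaptive branch at travel time $B/2$ yields a policy with budget $\leq B/2$ retaining $\geq\OPT/2$ reward, to which the original argument of Theorem~\ref{structhm} applies directly. The ``late'' case would be handled by enumerating a ``midpoint'' node $m\in V$ and replacing the prefix of each branch up to travel time $B/2$ by a direct edge from $\rt$ to $m$; monotonicity of $\pi_v(\cdot)$ in the processing start-time ensures that shifting node-processing earlier does not reduce per-node expected reward, so the resulting rooted policy of budget $\leq B/2$ still retains $\geq\OPT/2$ reward, whereupon Theorem~\ref{structhm} delivers the required $\OPT/8$ bound.
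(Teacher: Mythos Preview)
Your plan forces you into a strengthened Theorem~\ref{structhm} with $d(\spath)\leq B/2$, and the ``late'' case of your halving argument does not deliver it. Even granting a single midpoint $m$ (which you cannot: different branches of the policy tree cross travel time $B/2$ at different vertices, and deleting the prefix also destroys the size observations the policy needs in order to branch), replacing the prefix by the direct edge $\rt\to m$ yields total travel $d(\rt,m)+(\text{suffix length})$. The suffix has length at most $B/2$, but $d(\rt,m)$ can be as large as $B/2$ (the shortcut saves only the \emph{regret} of the prefix, not its full length), so you get at most $B$, not $B/2$. More basically, nodes $v$ with $d(\rt,v)>B/2$ simply cannot lie on any rooted path of length $\leq B/2$, so no such path can retain $\Omega(\OPT)$ reward in the late case. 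The strengthening you need is false in general.

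The paper absorbs the factor of $2$ on the reward side instead, locally on each segment, and never touches the travel budget. It takes $\gm_a$ to be the \emph{largest} integer with $2^{\gm_a}-1\leq\dreg(P_{a,b},m_a)$ (rounding down, not up), so that $\dist_a\leq d(P_{a,b})$ and (P2) follows immediately from $d(P)\leq B$. Since (P1) would then fail for the full $P_{a,b}$, the paper chooses $m_a$ to be a \emph{reward} midpoint of $P_{a,b}-b$ and sets $\spath_{a,b}$ to be whichever of $(P_{a,m_a},b)$ or $(a,P_{m_a,b})$ has smaller regret. The retained half has two-point regret at most $\dreg(P_{a,b},m_a)/2\leq(2^{\gm_a+1}-2)/2=2^{\gm_a}-1$, recovering (P1), while the reward-midpoint choice costs only a factor $2$ per segment, giving (P3) with $\OPT/8$. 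Properties (P4) and (P5) are preserved because each $\spath_{a,b}$ is a subpath of $P_{a,b}$.
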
 

\begin{figure}[ht!]
\begin{tikzpicture}[
		round/.style={circle,fill=black, thick, minimum size=1mm},scale=0.8,
		squarednode/.style={rectangle, draw=black!60, fill=white!5,  thick, rounded corners, minimum size=5mm},spy using outlines]

\small
 
\node[squarednode] (rho) at (0, 0) {};
\node at (0, -.6) {$\nd_{-1} = \rt$};
\node[squarednode] (v0) at (1.5, 0) {};
\node at (1.5, -.6) {$\nd_0$}; 
\node[squarednode] (v1) at (2.5, 0) {};
\node at (2.5, -.6) {$\nd_1$}; 
\node[squarednode] (v2) at (4.5, 0) {};
\node at (4.5, -.6) {$\nd_2$}; 
\node[squarednode] (v3) at (6, 0) {};
\node at (6, -.6) {$\nd_3$}; 
\node (v3.5) at (7,0) {$\cdots$};
\node[squarednode, fill=black] (vj1) at (8, 0) {};
\node at (8, -.6) {$\nd_{j-1}$}; 
\node[squarednode, fill=black] (vj) at (14, 0) {};
\node at (14, -.6) {$\nd_{j}$}; 
\node (vj2) at (15, 0) {$\cdots$};

\node[round, fill=black, scale=.7] at (8.5, 0) {};
\node[round, scale=.7] at (8.9, 0) {};
\node[round, scale=.7] (a0) at (10, 0) {};
\node (a) at (10, -.4) {$a$};
\node[round, scale=.7] (nexta) at (11, 0) {};
\node at (11, -.4) {$\mathrm{next}(a)$};
\node[round, scale=.7] at (12.3, 0) {}; 
\node[round, scale=.7] at (12.7, 0) {};
\node[round, scale=.7] at (13.4, 0) {};

\draw[black] (rho) -- (v0) -- (v1) -- (v2) -- (v3) -- (6.65, 0);
\draw[black] (7.3, 0) -- (vj1) -- (vj) -- (vj2);

\draw[blue] (9.5, -.8) rectangle ++(2.3, 1.4); 
\node at (10.55, 1) {\textcolor{blue}{$Q^*_{a, \mathrm{next}(a)}$}}; 
\draw[blue] (1, 1.1) rectangle ++(4, 3); 
\node at (3, 4.5) {\textcolor{black}{$\mu^j\left(Q^*_{a, \mathrm{next}(a)} - \mathrm{next}(a)\right) \leq 2^j$}}; 
\draw[blue] (5, 2.67) -- (9.53, .25);
\node[round] (ma) at (2.9, 2.5) {};
\node at (2.9, 1.9) {$m_a$};
\node[round, minimum size=4mm] (A) at (1.5, 2.5) {};
\node[round, minimum size=4mm] (NEXTA) at (4.2, 2.5) {};
\node at (1.5, 1.9) { $a$};
\node at (4.2, 1.9) { $\mathrm{next}(a)$};

\draw (1, 2.5) -- (A) -- (ma);
\draw[dotted] (ma) -- (NEXTA);
\draw (NEXTA) -- (5, 2.5);

\draw[draw=red, snake it] (2.9, 2.6) arc  (0:170:.7cm);
\draw[draw=red, style=double,  thick, snake it] (4.2, 2.65) arc  (0:180:.64cm);
\draw[draw=black, very thick, snake it] (4.2, 2.65) arc  (0:180:.64cm);
 
\draw[dotted] (rho) -- (0, -1); 
\draw[dotted] (vj) -- (14, -1); 
\draw [brace, very thick] (0,-1)  --  node [position label, below=.5] {\large $\mu^j\left(Q^*_{\rt, \nd_j} - \nd_j\right) \leq (K+1)2^j$}   (14, -1);
\end{tikzpicture}
\caption{The portal nodes $\fullset$ and paths between consecutive portal nodes. The
  solid nodes depict $\fullset_j$.}
\label{strucfig}
\end{figure}

\begin{proof}
We show how Theorem~\ref{structhm} leads to the stated result.
Let $P$ be the rooted path given by Theorem~\ref{structhm}.
For a node $a\in\fullset-\nd_k$, we will define $\spath_{a,\nxt(a)}$
to consist of a subset of nodes of $P_{a,\nxt(a)}$. So the (Prefix-size)
property~\ref{prefix}, and (Size) property~\ref{size} follow from parts (b), (c) of
Theorem~\ref{structhm} respectively.  

Consider some $j\in\dbrack{k}$, and fix a node $a\in\fullset_j-\nd_j$. 
Let $b=\nxt(a)$ and $b'$ be the node just preceding $b$
on $P$. First define $m_a\in P_{a,b'}$ such that 
$\sum_{v\in P_{a,m_a}}\pi_v(2^j-1)$ and $\sum_{v\in P_{m_a,b'}}\pi_v(2^j-1)$ are both at
least $\frac{1}{2}\cdot\sum_{v\in P_{a,b'}}\pi_v(2^j-1)$.
Let $\gm_a\geq 0$ be the largest integer such that
$2^{\gm_a}-1\leq\dreg(P_{a,b},m_a)=d(P_{a,b})-d(a,m_a)-d(m_a,b)=\dreg(P_{a,m_a})+\dreg(P_{m_a,b})$. 
By definition then, we have that $\dist_{a}\leq d(P_{a,b})$, and since $d(P)\leq B$, the
(Total-length) property~\ref{totlen} follows.
We specify the path $\spath_{a,b}$ by specifying the sequence of nodes (starting at $a$,
ending at $b$) comprising it.
We set $\spath_{a,b}=P_{a,m_a},b$ if $\dreg(P_{a,m_a})\leq\dreg(P_{m_a,b})$ and 
$\spath_{a,b}=a,P_{m_a,b}$ otherwise. 
Since 
$\sum_{v\in\spath_{a,b}-b}\pi_v(2^j-1)\geq\frac{1}{2}\cdot\sum_{v\in P_{a,b}-b}\pi_v(2^j-1)$, 
the (Reward) property~\ref{reward} follows from part (a) of Theorem~\ref{structhm}.

Finally, we argue that $d(\spath_{a,b})\leq\dist_{a}$ proving the (Distance)
property~\ref{distance}. 
This is equivalent to showing that $\dreg(\spath_{a,b},m_a)\leq 2^{\gm_a}-1$.
This follows because by construction, we have 
$\dreg(\spath_{a,b},m_a)=\min\bigl\{\dreg(P_{a,m_a}),\dreg(P_{m_a,b})\}\leq\dreg(P_{a,b},m_a)/2$,
and by definition of $\gm_a$, we have $\dreg(P_{a,b},m_a)<2^{\gm_a+1}-1$, so 
$\dreg(P_{a,b},m_a)\leq 2^{\gm_a+1}-2$.
\end{proof} 

\subsubsection{Configuration LP and non-adaptive 
  algorithm} 
In the sequel, to avoid cumbersome notation, we assume that we have found, by enumeration,
nodes $\nd_0,\ldots,\nd_k$, where $k\leq L$, ordered node-sets $\ndset_j$ for
$j\in\dbrack{k}$, and length bounds $\dist_{a}$ for every pair of consecutive nodes 
$a,b\in\fullset:=\bigcup_{j=0}^k\ndset_j$, as stipulated by Theorem~\ref{strucdthm}. 
(We also need to enumerate for $\{m_a,\gm_a\}_{a\in\fullset-\nd_k}$; we do not use
these quantities directly, but these are used to specify the $\dist_{a}$ length bounds.)
That is, these objects are compatible with suitable $\spath_{a,b}$ paths
such that~\ref{distance}--\ref{size} hold. 
Clearly, this enumeration takes $(n\log B)^{O(N_1L)}=(n\log B)^{O(\log W\log\log W)}$
time, which is the source of the running time in Theorem~\ref{approx-cskothm}.

We formulate a configuration LP to find $a$-$b$ paths, for every pair of consecutive
nodes $a,b\in\fullset$, satisfying properties \ref{distance}, \ref{reward}--\ref{size}.
To this end, fix some $j\in\dbrack{k}$ and 
$a\in\ndset_j-\nd_j$, and let $b=\nxt(a)$. The valid $a$-$b$
paths (i.e., the configurations) are the solutions to the following (deterministic) 
{\em point-to-point knapsack orienteering} (\knapo) problem: the end-nodes are $a$, $b$,
the length budget is $\dist_{a}$, the knapsack weights are $\mu^j_v$ 
for all $v\in V-b$ and $\mu^j_b=0$, and the knapsack-budget is $2^j$. 
Let $\I_{a}$ denote the set of all feasible solutions to this \knapo instance.

The configuration LP has variables $x^a_\tau$, for every $a\in\fullset-\nd_k$ and
$\tau\in\I_a$, indicating the $a$-$\nxt(a)$ paths that are chosen. 
%
\begin{alignat}{3}
\max & \quad & \sum_{j=0}^k\,\sum_{a\in\ndset_j-\nd_j}\,\sum_{\tau\in\I_a} 
x^a_\tau\cdot\Bigl(\sum_{v\in\tau-\nxt(a)}\pi_v&(2^j-1)\Bigr) \tag{CKO-P} \label{cskolp}
\\
\text{s.t.} & \quad & \sum_{\tau\in\I_a}x^a_\tau & = 1 \qquad
&& \forall a\in\fullset-\nd_k \label{config} \\
&& \sum_{a\in\fullset-\nd_k}\,\sum_{\tau\in\I_a:v\in\tau-\nxt(a)}x^a_\tau & \leq 1 
\qquad && \forall v\in V \label{covbnd} \\
&& \sum_{h=0}^j\,\sum_{a\in\ndset_h-\nd_h}\,\sum_{\tau\in\I_a}
x^a_\tau\cdot\mu^j\bigl(\tau-\nxt(a)\bigr) & \leq (K+1)2^j \qquad &&
\forall j\in\dbrack{k} \label{prefixlp} \\
&& x & \geq 0. \notag
\end{alignat} 
Constraints \eqref{config} encodes that we select an $a$-$b$ path for every consecutive
pair of nodes $a,b\in\fullset$, and
constraints \eqref{covbnd} ensure that each node $v$ lies
on at most one of these $a$-$b$ paths; 
constraint \eqref{prefixlp} encodes the (Prefix-size) property \ref{prefix}. 
(Note that if $\nd_{h-1}=\nd_h$, then $\ndset_h=\{\nd_h\}$, so we do not have any term
for index $h$ in the objective function, and on the LHS of \eqref{prefixlp}.)

To gain some intuition, notice that
Theorem~\ref{strucdthm} shows that there is a feasible integral solution to \eqref{cskolp}
of objective value at least $\OPT/8$: we set $x^a_\tau=1$ for $\tau=\spath_{a,\nxt(a)}$
for every $a\in\fullset-\nd_k$. Properties~\ref{distance} and~\ref{size} show that
$\spath_{a,\nxt(a)}\in\I_a$;  
property~\ref{prefix} shows that \eqref{prefixlp} holds,
and~\ref{reward} shows that the objective value is at least $\OPT/8$. 

We can solve \eqref{cskolp} approximately, given an approximation algorithm for \knapo,
since this can be used to obtain an approximate separation oracle for the dual of
\eqref{cskolp}. 

\newcommand{\optcskolp}{\ensuremath{\OPT_{\text{\textsf{\ref{cskolp}}}}}} 
\begin{claim} \label{cskolpval}
The optimal value of \eqref{cskolp}, $\optcskolp$, is at least $\OPT/8$.
\end{claim}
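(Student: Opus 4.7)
The plan is to exhibit an explicit integral feasible solution to \eqref{cskolp} whose objective value is at least $\OPT/8$, drawing directly on Theorem~\ref{strucdthm}. Concretely, for each $a\in\fullset-\nd_k$ take the path $\spath_{a,\nxt(a)}$ guaranteed by that theorem, and set $\bx^a_{\tau}=1$ for $\tau=\spath_{a,\nxt(a)}$ and $\bx^a_{\tau}=0$ otherwise. Constraint \eqref{config} then holds by construction, so the bulk of the work is verifying that each $\tau$ is a legal configuration, that \eqref{covbnd} and \eqref{prefixlp} are satisfied, and that the objective matches the (Reward) bound.

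First I would check that $\spath_{a,\nxt(a)}\in\I_a$. The length constraint $d(\spath_{a,\nxt(a)})\leq\dist_a$ is exactly property~\ref{distance}; and for $a\in\fullset_j-\nd_j$, the knapsack-weight bound $\mu^j(\spath_{a,\nxt(a)}-\nxt(a))\leq 2^j$ is exactly property~\ref{size} (note that in the \knapo instance defining $\I_a$ we set the weight of the endpoint $b=\nxt(a)$ to $0$, so it is harmless to include or exclude $b$). Next, constraint \eqref{prefixlp} is an immediate translation of property~\ref{prefix}: for each $j\in\dbrack{k}$, summing $\mu^j(\spath_{a,\nxt(a)}-\nxt(a))$ over all $a\in\ndset_h-\nd_h$ for $h\leq j$ is precisely the quantity bounded by $(K+1)2^j$ in~\ref{prefix}.

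The one substantive point is constraint \eqref{covbnd}. Here I would invoke the fact (from the construction in the proof of Theorem~\ref{strucdthm}) that each $\spath_{a,\nxt(a)}$ consists of a subset of nodes of the segment $P_{a,\nxt(a)}$ of the single rooted path $P$ from Theorem~\ref{structhm}, and that the portal nodes $\fullset$ partition $P$ into edge-disjoint consecutive segments. Consequently, any non-portal node of $V$ appears on at most one of the paths $\spath_{a,\nxt(a)}$. A portal node $a\in\fullset$ can appear as an interior/start node only in its own segment $\spath_{a,\nxt(a)}$; it also appears as $\nxt(\mathrm{prev}(a))$ in the preceding segment, but this copy is excised by the ``$v\in\tau-\nxt(a)$'' restriction in \eqref{covbnd}. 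Hence the left-hand side of \eqref{covbnd} is at most $1$ for every $v\in V$.

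Finally, the objective of $\bx$ equals $\sum_{j=0}^k\sum_{a\in\ndset_j-\nd_j}\sum_{v\in\spath_{a,\nxt(a)}-\nxt(a)}\pi_v(2^j-1)$, which is at least $\OPT/8$ by property~\ref{reward}. I do not anticipate a real obstacle: the configuration LP was tailor-made for Theorem~\ref{strucdthm}, and the only mildly delicate step is the bookkeeping for \eqref{covbnd}, which is settled by noting that all $\spath_{a,\nxt(a)}$ are internally node-disjoint subpaths of the common rooted path $\spath$ and that the ``$-\nxt(a)$'' convention prevents the shared portal endpoints from being double-counted.
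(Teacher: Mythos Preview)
Your proposal is correct and follows essentially the same approach as the paper: the paper explicitly notes (in the paragraph immediately preceding the claim) that setting $x^a_\tau=1$ for $\tau=\spath_{a,\nxt(a)}$ yields a feasible integral solution of value at least $\OPT/8$, citing properties~\ref{distance}, \ref{size} for $\spath_{a,\nxt(a)}\in\I_a$, property~\ref{prefix} for \eqref{prefixlp}, and~\ref{reward} for the objective bound. You additionally spell out the verification of \eqref{covbnd}, which the paper leaves implicit; your bookkeeping there (using that the $\spath_{a,\nxt(a)}$ are node-subsets of the internally-disjoint segments $P_{a,\nxt(a)}$ of a single rooted path, with shared portal endpoints handled by the $-\nxt(a)$ convention) is correct.
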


\begin{lemma} \label{cskolpsolve}
Given an $\al$-approximation algorithm for \knapo, we can compute in polytime a solution
$\bx$ to \eqref{cskolp} of objective value at least $\optcskolp/\al$.
\end{lemma}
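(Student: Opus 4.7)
The plan is to dualize \eqref{cskolp} and observe that separating its exponentially-many constraints reduces, block-by-block, to an instance of (deterministic) \knapo; an $\al$-approximation for \knapo then furnishes an approximate separation oracle, and the lemma follows via a standard Carr--Vempala style argument combined with the ellipsoid method.

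Since all primal objective coefficients are nonnegative, we may first replace the equality \eqref{config} by the inequality $\sum_{\tau\in\I_a}x^a_\tau\leq 1$ without decreasing the optimum---any feasible solution to the relaxation can be padded to a feasible solution of \eqref{cskolp} of the same value by adjoining a zero-reward ``null'' configuration to each $\I_a$. The dual then has nonnegative variables $y_a$ (for $a\in\fullset-\nd_k$), $\alpha_v$ (for $v\in V$), and $\beta_j$ (for $j\in\dbrack{k}$), and the constraint indexed by $(a,\tau)$---with $j_a$ the unique index for which $a\in\ndset_{j_a}-\nd_{j_a}$---reads
\[
  y_a \;+\; \sum_{v\in\tau-\nxt(a)}\alpha_v \;+\; \sum_{h\geq j_a}\beta_h\cdot\mu^h\bigl(\tau-\nxt(a)\bigr)
  \;\geq\; \sum_{v\in\tau-\nxt(a)}\pi_v(2^{j_a}-1).
\]
For fixed $(y,\alpha,\beta)\geq 0$, separating over $\tau\in\I_a$ for a given $a$ amounts to maximizing $\sum_{v\in\tau-\nxt(a)}\tilde\pi_v$ where $\tilde\pi_v := \pi_v(2^{j_a}-1)-\alpha_v-\sum_{h\geq j_a}\beta_h\mu^h_v$, and comparing to $y_a$. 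Replacing $\tilde\pi_v$ by $\max\{\tilde\pi_v,0\}$ preserves this maximum (an optimal $a$-to-$\nxt(a)$ path in the complete metric may simply skip nodes with negative modified reward), converting the task into a \knapo instance on endpoints $(a,\nxt(a))$ with length budget $\dist_a$, knapsack weights $\mu^{j_a}_v$, and knapsack budget $2^{j_a}$.

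Next I would use the $\al$-approximation for \knapo as an approximate separation oracle: for each $a$, let $\tau^a$ be the path it returns; if $\sum_{v\in\tau^a-\nxt(a)}\tilde\pi_v>y_a$ for some $a$, then $\tau^a$ witnesses a violated dual constraint at $(y,\alpha,\beta)$, otherwise the true maximum is at most $\al y_a$ for every $a$, so $(\al y,\alpha,\beta)$ is feasible for the full dual. Running the ellipsoid method on the feasibility problem ``is there a dual solution of objective value $\leq V$?'' (binary-searched over $V$) with this oracle enumerates, in polytime, a family $\T$ of configurations; we then solve the restriction $\widetilde P$ of \eqref{cskolp} to variables $x^a_\tau$ with $\tau\in\T$ exactly as a polynomial-size LP. By strong LP duality the optimum of $\widetilde P$ equals that of the restricted dual; if $(y^*,\alpha^*,\beta^*)$ attains it, then the oracle reports no violation at this point (else ellipsoid would have added the violating configuration to $\T$), so $(\al y^*,\alpha^*,\beta^*)$ is feasible for the full dual. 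This yields $\al\sum_a y^*_a+\sum_v\alpha^*_v+\sum_{j}(K+1)2^j\beta^*_j\geq\optcskolp$, and since $\al\geq 1$ and $\alpha^*,\beta^*\geq 0$, dividing by $\al$ shows that the restricted-dual objective at $(y^*,\alpha^*,\beta^*)$ is at least $\optcskolp/\al$, as required.

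The main subtlety is that the modified rewards $\tilde\pi_v$ can be negative, which would normally block a direct call to the \knapo-approximation; the zero-truncation step above handles this cleanly since any path is free to skip such nodes. A secondary, routine, point is confirming that the equality-vs-inequality swap in \eqref{config} is harmless, which follows from the null-configuration padding mentioned above.
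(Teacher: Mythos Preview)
Your overall plan---dualize, separate block-by-block via a \knapo\ instance, and run ellipsoid with an approximate oracle---is exactly what the paper does. But two steps do not go through as written.

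\textbf{The truncation step fails at the start node.} You claim that replacing $\tilde\pi_v$ by $\max\{\tilde\pi_v,0\}$ preserves the maximum because an optimal $a$-$\nxt(a)$ path can skip negative-reward nodes. This is true for interior nodes, but the start node $a$ is in every $\tau\in\I_a$ and contributes to the sum over $\tau-\nxt(a)$; it cannot be skipped. If $\tilde\pi_a<0$ (which can easily happen once $\alpha_a$ or the $\beta_h$'s are large), the truncated maximum exceeds the true maximum by exactly $|\tilde\pi_a|$, and your inference ``true maximum $\leq\al y_a$'' breaks. The paper handles this by first checking the direct-path constraint $\beta_a\geq\gamma^j_a$ (your $y_a\geq\tilde\pi_a$), then zeroing out the reward at $a$ and comparing the approximate \knapo\ value against $\beta_a-\gamma^j_a$ rather than $\beta_a$. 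Consequently the feasible scaled point is not $(\al y,\alpha,\beta)$ but $\bigl(y',\alpha,\beta\bigr)$ with $y'_a=\tilde\pi_a+\al(y_a-\tilde\pi_a)$, and bounding its objective by $\al\nu$ requires the structural inequality $\sum_a\tilde\pi_a+\sum_v\alpha_v+\sum_j(K+1)2^j\beta_j\geq 0$, which in turn relies on the prefix-size property~\ref{prefix}. Your argument uses neither of these ingredients, and they are genuinely needed.

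\textbf{The final ellipsoid step is not quite right.} You assert that at the restricted-dual optimum $(y^*,\alpha^*,\beta^*)$ ``the oracle reports no violation\ldots\ else ellipsoid would have added the violating configuration to $\T$.'' But ellipsoid need never have queried that particular point, so nothing forces the violating configuration into $\T$. The correct argument (as in the paper) is: when ellipsoid certifies $\K(\nu^*-\e)=\es$, the returned hyperplanes give a polynomial-size infeasibility certificate; by LP duality, the primal restricted to the corresponding configurations already has value $\geq\nu^*-\e$. Meanwhile, at the value $\nu^*$ where the oracle first fails to separate, the scaled point lies in $\K(\al\nu^*)$, whence $\optcskolp\leq\al\nu^*$. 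Combining gives the factor-$\al$ guarantee. Your relaxation of \eqref{config} to an inequality is harmless and does simplify the sign bookkeeping on $y_a$, but it does not remove the need for either of the fixes above.
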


The proof of Lemma~\ref{cskolpsolve} involves a more-or-less routine application of the
ellipsoid method, but is somewhat long and technical. We therefore defer its proof to
Section~\ref{lpsolve}. 

We now describe how to round the solution $\bx$ obtained by Lemma~\ref{cskolpsolve}
losing an $O(1)$-factor in the objective, and in the violation of constraints
\eqref{prefixlp}, and to obtain an $O(K)$-approximate non-adaptive policy for \csko from
the rounded solution.
The rounding algorithm is simply randomized rounding, and Chernoff bounds will
suffice to establish the above guarantee. Here is where we crucially exploit
property~\ref{size}. 

\SetAlgoProcName{Algorithm}{Procedure}
\begin{procedure}[ht!]
\caption{CSKO-Alg(): \qquad 
\textnormal{// Rounding $(\bx,\by)$ and obtaining a non-adaptive policy}
\label{cskolpround} \label{cskoalg}}
\SetKwComment{simpc}{// }{}
\SetCommentSty{textnormal}
\DontPrintSemicolon

Independently, for each $a\in\fullset-\nd_k$, letting $b=\nxt(a)$, do the following: 
pick an $a$-$b$ path by choosing $\tau\in\I_a$ with probability $\bx^a_\tau/2$, and
choosing the ``direct'' path $a,b$ with the remaining probability
$0.5$; let $\rpath_{a,b}$ denote the path picked. \;
\label{randround}

If for any $j\in\dbrack{k}$, we have
$\sum_{h=0}^j\sum_{a\in\ndset_h-\nd_h}\mu^j\bigl(\rpath_{a,\nxt(a)}-\nxt(a)\bigr)>5(K+1)2^j$,
then \Return the empty policy that does not visit any node. \;
\label{prefixchk}

Consider the concatenated sequence of nodes $\{\rpath_{a,\nxt(a)}\}_{a\in\fullset-\nd_k}$ 
(where $\fullset$ is ordered as in Theorem~\ref{strucdthm}). If a non-portal node
is repeated in this sequence, then shortcut the $\rpath_{a,\nxt(a)}$ paths so as to
retain only the first occurrence of each node. 
Let $\rpath'_{a,\nxt(a)}$ denote the shortcut version of $\rpath_{a,\nxt(a)}$ (which is still an
$a$-$\nxt(a)$ path). Let $\rpath'$ be the rooted path given by the node-sequence
$\{\rpath'_{a,\nxt(a)}\}_{a\in\fullset-\nd_k}$, where we retain only one copy of each
portal node. 
\;
\label{shortcut}

Sample each $v\in\rpath'-\rt$ independently with probability $\frac{1}{10(K+1)}$ 
to obtain the 
rooted path, $\rpath''$.
\Return the non-adaptive policy $\rpath''$. 
\label{poloutput}
\end{procedure}

\vspace*{-2ex}
\subparagraph*{Analysis.}
We first give an overview.
The key observation is that since for any $j\in\dbrack{k}$, any $h\leq j$, any
$a\in\ndset_h-\nd_h$, 
and any $\tau\in\I_a$, we have $\mu^j\bigl(\tau-\nxt(a)\bigr)\leq 2^j$, we obtain that
$\sum_{h=0}^j\sum_{a\in\ndset_h-\nd_h}\mu^j\bigl(\rpath_{a,\nxt(a)}-\nxt(a)\bigr)$ is
the sum of a collection of independent {\em $2^j$-bounded} random variables,%
\footnote{This the key difference from~\cite{BansalN14}. They guess only the
$\nd_j$ nodes, and so in their case, the corresponding sum gets decomposed into the sum of
$(K+1)2^j$-bounded random variables, and so an application of Chernoff bounds incurs an
additional $\frac{\log k}{\log\log k}=\frac{\log\log W}{\log\log\log W}$-factor.}
whose expectation is $O\bigl((K+1)\cdot 2^j\bigr)$, due to constraint \eqref{prefixlp}. 
It follows from Chernoff bounds that the
probability that this sum exceeds $5(K+1)2^j$, for any fixed index $j$, is
$\exp{-\Omega(K)}$, and so by a union bound, step~\ref{prefixchk} succeeds with high
probability (Lemma~\ref{prbad}).

To bound the reward obtained, 
consider a node $v$ and index $j\in\dbrack{k}$, and 
define $\by^j_v:=\sum_{a\in\ndset_j-\nd_j}\sum_{\tau\in\I_a:v\in\tau-\nxt(a)}\bx^a_\tau$. 
(Note that $\sum_{h=0}^k\by^h_v\leq 1$.)
We say that 
$v$ is ``visited by segment $j$'' if $v\neq\nd_j$ and
$v\in\bigcup_{a\in\ndset_j-\nd_j}\rpath_{a,\nxt(a)}$; we say that $v$ is 
``retained by segment $j$'' if $v\neq\nd_j$ and $v$ remains on
$\bigcup_{a\in\ndset_j-\nd_j}\rpath'_{a,\nxt(a)}$ after the shortcutting in
step~\ref{shortcut}. Note that the latter events are disjoint, for different $j$s. 
(Note that for a portal node in $\ndset_j-\nd_j$, both events happen with probability $1$.)   
Clearly, $v$ is retained by segment $j$ only if it is visited by segment $j$.
%
For convenience of analysis, we will view step~\ref{shortcut} as being executed even if 
step~\ref{prefixchk} fails, so we can talk about the event ``$v$ retained by segment $j$'' 
regardless of the outcome of step~\ref{prefixchk}.
It is not hard to argue that 
$\Pr[\text{$v$ is retained by segment $j$}]=\Omega(\by^j_v)$, but we need some care to
show that this holds even when we condition on the event that step~\ref{prefixchk}
succeeds, as subtle dependencies between events arise here. 
Nevertheless, we show that this indeed holds (Lemma~\ref{nodecovlem}).

Finally, given that step~\ref{prefixchk} succeeds and the rounded path $\rpath'$ satisfies
\eqref{prefixlp} with $O\bigl((K+1)2^j\bigr)$ on the RHS, due to the random sampling in 
step~\ref{poloutput}, we can argue that, for any node $v$ retained by segment
$j$, 
the non-adaptive policy processes $v$ by time $2^j-1$ with probability $\frac{1}{O(K)}$
(Lemma~\ref{nanodelem}). 
Thus, the expected reward of the non-adaptive policy is 
$\frac{1}{O(K)}\cdot\sum_{j=0}^k\sum_{v\in V}\by^v_j\cdot\pi_v(2^j-1)\geq\frac{\OPT}{O(K)}$. 

We now delve into the details. 
For an index $j\in\dbrack{k}$, let $\Bad_j$ be the event that 
$\sum_{h=0}^j\sum_{a\in\ndset_h-\nd_h}\mu^j\bigl(\rpath_{a,\nxt(a)}-\nxt(a)\bigr)>5(K+1)2^j$. 
So $\Bad:=\bigvee_{j=0}^k\Bad_j$ is the event that step~\ref{prefixchk} fails; let $\cBad$
denote the complement of $\Bad$. 
Recall that $K=3\log(6\log W)+12$.
The following basic claims will be useful.

\begin{claim} \label{mudec}
For any $j\geq 0$, and any $v\in V$, we have
$\frac{\mu^j_v}{2^j}\geq\frac{\mu^{j+1}_v}{2^{j+1}}$. 
\end{claim}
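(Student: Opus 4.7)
The plan is to prove the inequality by establishing it pointwise in the size realization and then taking expectations. Recall that $\mu^j_v = \E{\min\{S_v, 2^j\}}$, so the claim $\mu^j_v/2^j \geq \mu^{j+1}_v/2^{j+1}$ is equivalent to showing $2\mu^j_v \geq \mu^{j+1}_v$, i.e.,
\begin{equation*}
2\,\E{\min\{S_v, 2^j\}} \geq \E{\min\{S_v, 2^{j+1}\}}.
\end{equation*}

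By linearity of expectation, it suffices to show that for every deterministic value $s \in \Z_{\geq 0}$, one has $2\min\{s, 2^j\} \geq \min\{s, 2^{j+1}\}$. This is an easy case analysis on how $s$ compares to $2^j$ and $2^{j+1}$: if $s \leq 2^j$ both sides involve $s$ and we get $2s \geq s$; if $2^j < s \leq 2^{j+1}$ the LHS is $2^{j+1}$ and the RHS is $s \leq 2^{j+1}$; and if $s > 2^{j+1}$ both sides equal $2^{j+1}$.

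Taking expectations of the pointwise inequality applied to $S_v$ yields the desired bound. I do not expect any obstacles here; this is purely a monotonicity/concavity property of the truncation function $s \mapsto \min\{s, 2^j\}$ (essentially, the function $t \mapsto \E{\min\{S_v, t\}}$ is concave and vanishes at $0$, so $\E{\min\{S_v, t\}}/t$ is non-increasing in $t$, and evaluating at $t = 2^j, 2^{j+1}$ gives the claim).
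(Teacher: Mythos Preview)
Your proof is correct and follows exactly the same approach as the paper, which simply asserts that $\E{\min\{S_v,2^{j+1}\}}\leq 2\cdot\E{\min\{S_v,2^j\}}$; you have just supplied the (routine) pointwise case analysis that justifies this one-line claim.
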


\begin{proof} 
This follows because $\E{\min\{\nsize,2^{j+1}\}}\leq
2\cdot\E{\min\{\nsize,2^j\}}$. 
\end{proof}

\begin{claim} \label{helper}
Let $\tht_1,\tht_2,\ldots,\tht_n\in[0,1]$. Then
(a) 
$1-\prod_{i=1}^n(1-\tht_i)\geq(1-e^{-1})\min\bigl\{1,\sum_{i=1}^n\tht_i\bigr\}$.
(b) If $\sum_{i\in[n]}\tht_i\leq 0.5$, then $\prod_{i=1}^n(1-\tht_i)\geq 0.5$.
\end{claim}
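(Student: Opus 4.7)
The plan is to prove each part by reducing to a one-dimensional inequality after decoupling the product, which is standard but worth recording carefully.

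For part (a), I would first use the elementary inequality $1-x \le e^{-x}$ valid for all $x \in [0,1]$ to obtain $\prod_{i=1}^n(1-\tht_i) \le e^{-\sum_i \tht_i}$, and hence $1 - \prod_{i=1}^n(1-\tht_i) \ge 1 - e^{-s}$ where $s := \sum_i \tht_i$. It then suffices to show $1 - e^{-s} \ge (1-e^{-1})\min\{1,s\}$ for all $s \ge 0$. I would split into two cases: if $s \ge 1$, then $1-e^{-s} \ge 1-e^{-1} = (1-e^{-1})\cdot 1$; if $s \in [0,1]$, I would use concavity of $g(s) = 1-e^{-s}$ on $[0,\infty)$, which makes $g(s)/s$ non-increasing, so $g(s)/s \ge g(1)/1 = 1-e^{-1}$ for $s \in (0,1]$, giving $1-e^{-s} \ge (1-e^{-1}) s$. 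The $s = 0$ case is trivial.

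For part (b), I would show the stronger inequality $\prod_{i=1}^n(1-\tht_i) \ge 1 - \sum_{i=1}^n \tht_i$, from which the hypothesis $\sum_i \tht_i \le 1/2$ immediately yields the desired bound $\ge 1/2$. The inequality is proved by induction on $n$: the base case $n=1$ is trivial, and for the inductive step with $\tht_1,\ldots,\tht_{n+1} \in [0,1]$, I expand
\begin{equation*}
\prod_{i=1}^{n+1}(1-\tht_i) = (1-\tht_{n+1})\prod_{i=1}^n(1-\tht_i) \ge (1-\tht_{n+1})\Bigl(1 - \sum_{i=1}^n \tht_i\Bigr)
\end{equation*}
and then expand the right-hand side to get $1 - \sum_{i=1}^{n+1}\tht_i + \tht_{n+1}\sum_{i=1}^n \tht_i \ge 1 - \sum_{i=1}^{n+1} \tht_i$, since the cross-term is nonnegative.

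Neither part poses a real obstacle; both are textbook inequalities. The only thing to be mindful of is to make sure the case analysis in (a) covers $s = 0$ (where both sides are $0$) and to invoke concavity in the right form, and in (b) to notice that the claim follows from a union-bound-type product inequality rather than needing anything tighter.
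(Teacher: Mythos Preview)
Your proof is correct. Part~(a) matches the paper's argument exactly: bound the product by $e^{-s}$ with $s=\sum_i\tht_i$, then split on $s\ge 1$ versus $s<1$ using concavity of $1-e^{-x}$.

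For part~(b) you take a different route. The paper uses the pointwise bound $1-x\ge 4^{-x}$ for $x\in[0,0.5]$, which multiplies to give $\prod_i(1-\tht_i)\ge 4^{-\sum_i\tht_i}\ge 4^{-1/2}=1/2$. You instead prove the Weierstrass (union-bound) inequality $\prod_i(1-\tht_i)\ge 1-\sum_i\tht_i$ by induction and read off the bound directly. Your argument is slightly more elementary and self-contained; the paper's bound $1-x\ge 4^{-x}$ is reused elsewhere in the analysis, so their choice is consistent with the surrounding text. Either approach is fine here.
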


\begin{proof} 
For part (a), $1-\prod_{i=1}^n(1-\tht_i)\geq 1-e^{-\sum_i\tht_i}$. 
If $\sum_i\tht_i\geq 1$, then this is at least $1-e^{-1}$. Otherwise, by concavity of
$1-e^{-x}$, we have $1-e^{-\sum_i\tht_i}\geq (1-e^{-1})\sum_i\tht_i$.

For part (b), we use the fact that $1-x\geq 4^{-x}$ for all $x\in[0,0.5]$. 
\end{proof}

\begin{lemma} \label{prbad}
$\Pr[\Bad_j]\leq e^{-(K+1)}$ for all $j\in\dbrack{k}$.
Hence, $\Pr[\Bad]\leq 1/\poly(\log W)$.
\end{lemma}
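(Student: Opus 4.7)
The plan is to show $\Pr[\Bad_j]\leq e^{-(K+1)}$ by expressing the quantity whose tail we need to control as a sum of \emph{bounded}, \emph{independent} random variables, and then applying a standard multiplicative Chernoff bound; the union bound over $j\in\dbrack{k}$ combined with $L\leq\log W + 1$ and the choice $K=3\log(6\log W)+12$ then yields $\Pr[\Bad]\leq 1/\poly(\log W)$.

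\textbf{Setting up the sum.} Fix $j\in\dbrack{k}$. For each pair $(h,a)$ with $h\leq j$ and $a\in\ndset_h-\nd_h$, let $W_{a}:=\mu^j\bigl(\rpath_{a,\nxt(a)}-\nxt(a)\bigr)$. Since the paths $\rpath_{a,\nxt(a)}$ are drawn independently in step~\ref{randround}, the random variables $\{W_a\}$ are independent, and $\Bad_j=\{\sum_a W_a>5(K+1)2^j\}$.

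\textbf{Uniform bound $W_a\leq 2^j$.} This is the crucial step, and is exactly where our finer portal structure (property \ref{size}) pays off compared to the portal-only approach of \cite{BansalN14}. For any $a\in\ndset_h-\nd_h$ and any $\tau\in\I_a$, the definition of $\I_a$ gives $\mu^h(\tau-\nxt(a))\leq 2^h$; by Claim~\ref{mudec}, $\mu^j_v\leq 2^{j-h}\mu^h_v$ for every $v$, so $\mu^j(\tau-\nxt(a))\leq 2^{j-h}\cdot 2^h=2^j$. The ``direct'' path $a,\nxt(a)$ contributes $\mu^j_a\leq 2^j$. Hence $W_a\leq 2^j$ deterministically.

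\textbf{Bounding $\sum_a\E{W_a}$.} Since the direct path lies in $\I_a$ and every $\tau\in\I_a$ contains $a\in\tau-\nxt(a)$, we have $\sum_{\tau}\bx^a_\tau\,\mu^j(\tau-\nxt(a))\geq\mu^j_a\sum_\tau\bx^a_\tau=\mu^j_a$ by \eqref{config}. Therefore
\[
\E{W_a}=\tfrac{1}{2}\mu^j_a+\tfrac{1}{2}\sum_\tau \bx^a_\tau\,\mu^j(\tau-\nxt(a))\;\leq\;\sum_\tau \bx^a_\tau\,\mu^j(\tau-\nxt(a)),
\]
and summing over $h\leq j$ and $a\in\ndset_h-\nd_h$ together with the LP constraint \eqref{prefixlp} yields $\sum_a \E{W_a}\leq(K+1)2^j$.

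\textbf{Chernoff and the union bound.} Rescale by $2^j$ to get bounded summands in $[0,1]$ with mean at most $K+1$. The standard multiplicative Chernoff bound $\Pr[X\geq(1+\delta)\mu]\leq\bigl(e^\delta/(1+\delta)^{1+\delta}\bigr)^\mu$ with $\mu=K+1$ and $1+\delta=5$ gives $\Pr[\Bad_j]\leq(e^4/5^5)^{K+1}\leq e^{-(K+1)}$. Finally, $\Pr[\Bad]\leq\sum_{j=0}^k\Pr[\Bad_j]\leq(L+1)\cdot e^{-(K+1)}$; since $K+1\geq 3\log(6\log W)$ and $L+1\leq\log W+2$, this is $1/\poly(\log W)$.

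The main obstacle is precisely the boundedness step: one needs $W_a\leq 2^j$, not $W_a\leq(K+1)2^j$, since the latter would give a deviation bound of the form $e^{-\Omega(1)}$ instead of $e^{-\Omega(K)}$, losing the crucial $\log\log W$-factor. This is exactly why the structural theorem refines the portals $\{\nd_j\}$ to the finer set $\fullset_j$ with $\mu^j$-weight at most $2^j$ between consecutive portals at level $j$—combined with Claim~\ref{mudec}, this also controls $\mu^j(\tau-\nxt(a))$ at all coarser levels $h\leq j$.
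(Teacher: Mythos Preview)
Your proof is correct and follows essentially the same approach as the paper: define the per-portal contributions, use property~\ref{size} together with Claim~\ref{mudec} to get the $2^j$ bound on each summand, invoke \eqref{prefixlp} for the mean bound, and apply the multiplicative Chernoff bound with $(1+\delta)=5$ followed by a union bound. Your treatment of the ``direct'' path in both the boundedness and expectation steps is a bit more explicit than the paper's (which simply asserts $\rpath_{a,\nxt(a)}\in\I_a$ and cites \eqref{prefixlp}), but the underlying argument is the same.
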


\begin{proof}
The second statement follows from a straightforward union bound, since the number of
$j$ indices is at most $1+\log W$.
So fix an index $j\in\dbrack{k}$. 
For any $a\in\bigcup_{h=0}^j(\ndset_h-\nd_h)$, define the random variable 
$Z_a=\mu^j\bigl(\rpath_{a,\nxt(a)}-\nxt(a)\bigr)/2^j$. This is a $[0,1]$-random
variable, since if $a\in\ndset_h-\nd_h$, then since $\rpath_{a,\nxt(a)}\in\I_a$, we have 
$\mu^h\bigl(\rpath_{a,\nxt(a)}-\nxt(a)\bigr)\leq 2^h$ and so 
$\mu^j\bigl(\rpath_{a,\nxt(a)}-\nxt(a)\bigr)\leq 2^j$ (by Claim~\ref{mudec}).
Also, $\sum_{h=0}^j\sum_{a\in\ndset_h-\nd_h}\E{Z_a}\leq (K+1)$ due to \eqref{prefixlp}.
%
%
The $Z_a$s are independent $[0,1]$ random variables. 
So by Chernoff bounds,
we have 
$\Pr[\Bad_j]=\Pr[\sum_{h=0}^j\sum_{a\in\ndset_h-\nd_h}Z_a>5(K+1)] 
\leq\bigl(\frac{e^4}{5^5}\bigr)^{K+1}\leq e^{-(K+1)}$. 
\end{proof}

\begin{lemma} \label{nodecovlem}
For any node $v\in V$ and any $j\in\dbrack{k}$, we have
$\Pr[\{\text{$v$ is retained by segment $j$}\}\wedge\cBad]\geq\frac{\by^j_v}{16}$.
\end{lemma}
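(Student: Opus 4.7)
The plan is to decompose $R^j_v$ into mutually exclusive ``first-occurrence'' subevents indexed by the segment-$j$ portal at which $v$ first appears. The motivation---and the main obstacle to overcome---is that conditioning directly on $R^j_v$ would introduce correlations among the segment-$j$ path choices (since $R^j_v$ involves the disjunction ``$\exists a \in \ndset_j - \nd_j: X_a = 1$''), obstructing a clean Chernoff-style analysis of $\Bad$. Passing to a finer, first-occurrence event converts the conditioning into a product of conditions on independent single variables, preserving mutual independence of all path choices $\rpath_{a'',\nxt(a'')}$ under the conditioning; the Chernoff bound on $\Bad$ will then be essentially routine.

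Concretely, for each $a \in \fullset - \nd_k$, let $X_a := \bon[v \in \rpath_{a,\nxt(a)} - \nxt(a)]$. By step~\ref{randround}, the $X_a$s are mutually independent with $p^a := \Pr[X_a = 1] = \tfrac{1}{2}\sum_{\tau \in \I_a:\, v \in \tau - \nxt(a)}\bx^a_\tau$, and \eqref{covbnd} gives $\sum_{a \in \fullset - \nd_k} p^a \leq \tfrac12$. We focus on non-portal $v$ (the cases $v \in \ndset_j - \nd_j$ and $v = \nd_j$ follow directly from $\Pr[\cBad] \geq 1/16$ and $R^j_{\nd_j} = \emptyset$ respectively). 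Let $\prec$ denote the ordering on $\fullset$ from Theorem~\ref{strucdthm}. Since step~\ref{shortcut} keeps only the first occurrence of each non-portal node, we obtain the disjoint decomposition $R^j_v = \bigsqcup_{a \in \ndset_j - \nd_j}\Ret^j_{v,a}$, where $\Ret^j_{v,a} := \{X_a = 1\} \cap \bigcap_{a'' \prec a}\{X_{a''} = 0\}$. By independence of the $X_{a''}$s and Claim~\ref{helper}(b), $\Pr[\Ret^j_{v,a}] = p^a \prod_{a'' \prec a}(1 - p^{a''}) \geq p^a/2$, hence $\Pr[R^j_v] \geq \by^j_v/4$. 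Once we establish $\Pr[\cBad \mid \Ret^j_{v,a}] \geq 1/2$ for every $a \in \ndset_j - \nd_j$, summing gives $\Pr[R^j_v \wedge \cBad] \geq \tfrac12 \Pr[R^j_v] \geq \by^j_v/8 \geq \by^j_v/16$.

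The remaining step is to bound $\Pr[\Bad_h \mid \Ret^j_{v,a}]$ for each $h \in \dbrack{k}$ and then union-bound over $h$. The crucial point is that $\Ret^j_{v,a}$ is a product conditioning on independent $X_{a''}$s, so the paths $\rpath_{a'',\nxt(a'')}$---and hence the random variables $Z^h_{a''} := \mu^h(\rpath_{a'',\nxt(a'')} - \nxt(a''))/2^h$---remain mutually independent and $[0,1]$-valued under the conditioning (the $[0,1]$-bound uses property~\ref{size} and Claim~\ref{mudec}). Their conditional means are easy to control: for $a'' \prec a$, $\E{Z^h_{a''} \mid X_{a''} = 0} \leq \E{Z^h_{a''}}/(1 - p^{a''}) \leq 2\E{Z^h_{a''}}$; for $a'' = a$, $\E{Z^h_a \mid X_a = 1} \leq 1$ since $Z^h_a \leq 1$; and for $a'' \succ a$, the mean is unchanged. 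Summing over $a'' \in \bigcup_{h' \leq h}(\ndset_{h'} - \nd_{h'})$ and invoking the unconditional bound $\sum_{a''}\E{Z^h_{a''}} \leq K+1$ from the proof of Lemma~\ref{prbad} keeps the total conditional mean at most $2(K+1) + 1$, comfortably below the deviation threshold $5(K+1)$. Multiplicative Chernoff then gives $\Pr[\Bad_h \mid \Ret^j_{v,a}] \leq e^{-\Omega(K+1)}$; with $K = 3\log(6\log W) + 12$, a union bound over the at most $\log W + 1$ values of $h$ yields $\Pr[\Bad \mid \Ret^j_{v,a}] \leq 1/2$, as needed.
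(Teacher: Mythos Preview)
Your argument is correct and takes a genuinely different route from the paper. The paper conditions on the coarser segment-level event $\Vis_j=\{\text{$v$ visited by segment $j$}\}$; for $f\geq j$ it bounds the contribution of \emph{all} of $\ndset_j$ deterministically by $|\ndset_j|\leq 2(K+1)$ and observes that the remaining $Z_a$'s (for $a\notin\ndset_j$) are independent of $\Vis_j$, so an \emph{unconditional} Chernoff bound on those terms suffices against the residual threshold $3(K+1)$. You instead pass to the finer first-occurrence event $\Ret^j_{v,a}$, whose product form keeps \emph{all} of the $\rpath_{a'',\nxt(a'')}$ mutually independent under conditioning; you then control the conditional means term-by-term (at most doubled for $a''\prec a$, at most $1$ for $a''=a$, unchanged for $a''\succ a$) and apply a conditional Chernoff bound against $5(K+1)$ with mean $\leq 2(K+1)+1$. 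Both approaches land comfortably within the slack; the paper's trick of absorbing the entire segment-$j$ block deterministically is slightly cleaner, while your product-conditioning is more granular but more uniform (it does not single out segment $j$). One minor point: your dismissal of the case $v=\nd_j$ via ``$R^j_{\nd_j}=\emptyset$'' tacitly assumes $\by^j_{\nd_j}=0$, which need not hold in general, but the paper's own proof also does not address $v\in\fullset\setminus(\ndset_j-\nd_j)$, so this is not a defect of your argument relative to the paper.
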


\begin{proof}
If $v\in\ndset_j-\nd_j$, then this probability is simply $\Pr[\cBad]$, and the statement
follows from Lemma~\ref{prbad}. So suppose $v\notin\fullset$.
(Observe then that $v\in\tau$ for some $\tau\in\I_a$, $a\in\fullset-\nd_k$ iff 
$v\in\tau-\nxt(a)$.) 
Let $\Vis_h$ denote the event ``$v$ is visited by segment $h$'' for $h\in\dbrack{k}$.
Let $\Ret$ denote the event ``$v$ is retained by segment $j$.''
As noted earlier, we need to proceed somewhat carefully here 
because the events $\Ret$ and $\cBad$ are not independent; moreover upper bounding the
conditional probability $\Pr[\Bad_f\,|\,\Ret]$ via Chernoff bounds is problematic, because
this conditioning can cause the $Z_a$ random variables used in the proof of
Lemma~\ref{prbad} to no longer be independent.
We proceed as follows. 
First, note that
$$
\Pr[\Vis_j]=\Bigl(1-\prod_{a\in\ndset_j-\nd_j}\bigl(1-0.5\sum_{\tau\in\I_a:v\in\tau}\bx^a_\tau\bigr)\Bigr)
\geq
\bigl(1-e^{-1}\bigr)\cdot\sum_{a\in\ndset_j-\nd_j}\sum_{\tau\in\I_a:v\in\tau}0.5\bx^a_\tau
\geq\frac{\by^j_v}{4}.
$$
We upper bound $\Pr[\Bad_f\,|\,\Vis_j]$ for every $f\in\dbrack{k}$, and
obtain $\Pr[\Bad\,|\,\Vis_j]\leq 1/\poly(\log W)$. As we shall see, here we can still proceed by
defining suitable independent random variables and applying Chernoff bounds.
We also show that $\Pr[\Ret\,|\,\Vis_j]\geq 0.5$. 
We then obtain that
\begin{equation*}
\begin{split}
\Pr[\Ret\wedge\cBad] & = \Pr[\Ret\wedge\Vis_j\wedge\cBad]
=\Pr[\Vis_j\wedge\Ret]-\Pr[\Vis_j\wedge\Ret\wedge\Bad]
\geq \Pr[\Vis_j\wedge\Ret]-\Pr[\Vis_j\wedge\Bad] \\
& = \Pr[\Vis_j]\cdot\Bigl(\Pr[\Ret\,|\,\Vis_j]-\Pr[\Bad\,|\,\Vis_j]\Bigr)
\geq\frac{\by^j_v}{16}.
\end{split}
\end{equation*}
We now obtain the stated bounds on $\Pr[\Bad\,|\,\Vis_j]$ and $\Pr[\Ret\,|\,\Vis_j]$. 
For the latter, note that conditioned on $\Vis_j$, $v$ is retained by segment $j$ if $v$ is 
not visited by any segment $h<j$. We have 
$\Pr[\Vis_h]\leq 0.5\sum_{a\in\ndset_h-\nd_h}\sum_{\tau\in\I_a:v\in\tau}\bx^a_\tau\leq 0.5\by^h_v$,
for any index $h$. So $\Pr[\Ret\,|\,\Vis_j]=\prod_{h=0}^{j-1}(1-\Pr[\Vis_h])\geq 0.5$ by
Claim~\ref{helper}, since $\sum_{h=0}^{j-1}\Pr[\Vis_h]\leq 0.5$.

Finally, to bound $\Pr[\Bad_f\,|\,\Vis_j]$, we first note that if $f<j$, then $\Bad_f$ and
$\Vis_j$ are independent, and the upper bound follows from Lemma~\ref{prbad}. 
So suppose $f\geq j$.
As in the proof of Lemma~\ref{prbad}, 
define $Z_a=\mu^f\bigl(\rpath_{a,\nxt(a)}-\nxt(a)\bigr)/2^f$ for all
$a\in\bigcup_{h=0}^f(\ndset_h-\nd_h)$.
So $\Pr[\Bad_f\,|\,\Vis_j]=\Pr\bigl[\sum_{h=0}^f\sum_{a\in\ndset_h-\nd_h}Z_a>5(K+1)\,|\,\Vis_j\bigr]$. 
Observe that $\sum_{a\in\ndset_j-\nd_j}Z_a\leq|\ndset_j|\leq 2(K+1)$. So 
\begin{equation*}
\begin{split}
\Pr\Bigl[\sum_{h=0}^f\sum_{a\in\ndset_h-\nd_h}Z_a>5(K+1)\,|\,\Vis_j\Bigr]
& \leq \Pr\Bigl[\sum_{0\leq h\leq f: h\neq j}\sum_{a\in\ndset_h-\nd_h}Z_a>3(K+1)\,|\,\Vis_j\Bigr] 
\\
& = \Pr\Bigl[\sum_{0\leq h\leq f: h\neq j}\sum_{a\in\ndset_h-\nd_h}Z_a>3(K+1)\Bigr]
\end{split}
\end{equation*}
where we can remove the conditioning on $\Vis_j$ in the last expression since for $h\neq j$ and
$a\in\ndset_h-\nd_h$, the random variable $Z_a$ is independent of $\Vis_j$.
Now we can proceed using Chernoff bounds as in the proof of Lemma~\ref{prbad} to obtain 
that $\Pr[\Bad_f|\Vis_j]\leq\bigl(\frac{e^2}{27}\bigr)^{K+1}\leq e^{-(K+1)}$ and hence 
$\Pr[\Bad\,|\,\Vis_j|\leq 1/\poly(\log W)$. 
\end{proof}

\begin{lemma} \label{nanodelem}
Consider any node $v\in V-\nd_k$. 
Suppose that $v$ is retained by segment $j$ in step~\ref{shortcut}. 
Then 
$\Pr[\text{non-adaptive policy $\rpath''$ processes $v$ by time $2^j-1$}]\geq\frac{1}{20(K+1)}$, 
where the probability is over both the random sampling in step~\ref{poloutput} to obtain
$\rpath''$ and the random execution of $\rpath''$.
\end{lemma}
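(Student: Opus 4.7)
The plan is to make operational the claim ``$\rpath''$ processes $v$ by time $2^j-1$'' and reduce it to a bound on the total expected truncated size $\mu^j$ of the nodes preceding $v$ on the realized path $\rpath'$. Since step~\ref{prefixchk} returns the empty policy when $\Bad$ occurs, the conclusion is only nontrivial on $\cBad$, so I work with any outcome of step~\ref{randround} that both makes $v$ retained by segment $j$ and satisfies $\cBad$; this deterministically gives $\sum_{h=0}^j\sum_{a\in\ndset_h-\nd_h}\mu^j\bigl(\rpath_{a,\nxt(a)}-\nxt(a)\bigr)\leq 5(K+1)2^j$. All remaining randomness is the independent Bernoulli-$\frac{1}{10(K+1)}$ sampling in step~\ref{poloutput} together with the vertex-independent size realizations $\{S_u\}$.

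I first bound the total $\mu^j$-mass of the set $U$ of nodes on $\rpath'$ strictly before $v$. Write the portal ordering of $\fullset$ as $\rt=u_0,u_1,\ldots,u_M=\nd_k$, and suppose $v$ lies on the shortcut mini-path $\rpath'_{u_i,u_{i+1}}$; since $v$ is retained by segment $j$, we have $u_i\in\ndset_j-\nd_j$. Every node of $U$ belongs to exactly one of the sets $\rpath'_{u_{i'},u_{i'+1}}-u_{i'+1}$ for $i'\leq i-1$, or to the prefix of $\rpath'_{u_i,u_{i+1}}$ strictly preceding $v$; these sets are pairwise disjoint, since each portal $u_{i'}$ with $i'\leq i$ appears in exactly one of them. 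Each mini-path with $i'\leq i$ corresponds to some $a'\in\ndset_h-\nd_h$ with $h\leq j$, and because shortcutting only removes nodes from the mini-paths, I get
\begin{equation*}
\mu^j(U)\;\leq\;\sum_{i'=0}^{i}\mu^j\bigl(\rpath_{u_{i'},u_{i'+1}}-u_{i'+1}\bigr)\;\leq\;\sum_{h=0}^j\sum_{a\in\ndset_h-\nd_h}\mu^j\bigl(\rpath_{a,\nxt(a)}-\nxt(a)\bigr)\;\leq\;5(K+1)2^j.
\end{equation*}

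Next I apply Markov's inequality to the independent sampling in step~\ref{poloutput}. Let $\widetilde U\sse U$ be the nodes of $U$ sampled there; then $\E{\sum_{u\in\widetilde U}X_u^j}=\mu^j(U)/\bigl(10(K+1)\bigr)\leq 2^j/2$, so $\Pr\bigl[\sum_{u\in\widetilde U}X_u^j\geq 2^j\bigr]\leq\tfrac12$. Since sizes, and hence each $X_u^j$, are non-negative integers, with probability at least $\tfrac12$ we have $\sum_{u\in\widetilde U}X_u^j\leq 2^j-1$; this in turn forces every $X_u^j<2^j$ individually, hence $S_u<2^j$ and $X_u^j=S_u$, so the actual processing time of the sampled predecessors is $\sum_{u\in\widetilde U}S_u\leq 2^j-1$. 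Independently, $v$ itself is sampled in step~\ref{poloutput} with probability $\frac{1}{10(K+1)}$, independent both of $\widetilde U$ (node-wise independent sampling) and of the sizes (which are independent across vertices). Multiplying the two yields the stated bound $\frac{1}{20(K+1)}$.

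The main obstacle is the bookkeeping in the first step: ensuring that portal nodes shared between consecutive segments $\ndset_h,\ndset_{h+1}$ (the $\nd_h$'s) together with the shortcutting in step~\ref{shortcut} do not cause any double-counting when matching $\mu^j(U)$ against the LHS of the step~\ref{prefixchk} constraint. The observation that every node of $U$ is the left-endpoint $u_{i'}$ of a \emph{unique} mini-path in the $\fullset$-ordering, or else an interior node of the partial mini-path containing $v$, makes the decomposition a bona fide partition and lets the $\cBad$ bound be applied term by term.
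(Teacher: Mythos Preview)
Your proof is correct and takes essentially the same approach as the paper: factor the target probability as $\Pr[v\in\rpath'']\cdot\Pr[\text{predecessors' total size}\leq 2^j-1\mid v\in\rpath'']$, use the success of step~\ref{prefixchk} to bound the $\mu^j$-mass of $v$'s predecessors on $\rpath'$ by $5(K+1)2^j$, and apply Markov's inequality to the truncated sizes after the $\tfrac{1}{10(K+1)}$ sampling. Your portal-by-portal bookkeeping for the predecessor bound is more explicit than the paper's (which simply asserts $\mu^j(T)\leq 5(K+1)2^j$), and your deduction that $\sum_u X_u^j\leq 2^j-1$ forces each $S_u<2^j$ and hence $\sum_u S_u\leq 2^j-1$ is equivalent to the paper's one-line identity $\Pr\bigl[\sum_w S_w\geq 2^j\bigr]=\Pr\bigl[\sum_w X_w^j\geq 2^j\bigr]$.
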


\begin{proof}
Recall that $\mu^j_w=\E{X^j_w}$ and $X^j_w=\min\{\nsize[w],2^j\}$ for every node $w$.
The given probability is 
$$
\Pr\bigl[v\in\rpath''\bigr]\cdot
\Pr\bigl[\text{total processing time of nodes in $\rpath''$ before $v$}\leq 2^j-1\,|\,v\in\rpath''\bigr]
$$
where, throughout, all probabilities are conditioned on the state after
step~\ref{shortcut}. Since $v$ is retained by segment $j$, in particular, we have
$v\in\rpath'$; so $\Pr\bigl[v\in\rpath''\bigr]=\frac{1}{10(K+1)}$.

Let $T$ be the set of nodes in $\rpath'$ before $v$, and $A\sse T$ be the random set of
nodes from $T$ included in $\rpath''$. (Note that $T$ is fixed since we are conditioning on
the state after step~\ref{shortcut}.) Note that $A$ is independent of the event
$\{v\in\rpath''\}$. So we have
\begin{equation*}
\begin{split}
\Pr\bigl[\text{total} &\text{ processing time of nodes in $\rpath''$ before $v$}\leq 2^j-1\,|\,v\in\rpath''\bigr]
= \Pr\Bigl[\sum_{w\in A}\nsize[w]\leq 2^j-1\Bigr] \\
& \geq 1-\Pr\Bigl[\sum_{w\in A}\nsize[w]\geq 2^j\Bigr]
= 1-\Pr\Bigl[\sum_{w\in A}\min\{\nsize[w],2^j\}\geq 2^j\Bigr]
\geq 1-\frac{\E{\sum_{w\in A}X^j_w}}{2^j}.
\end{split}
\end{equation*}
Since step~\ref{prefixchk} succeeds, we know that $\mu^j(T)\leq 5(K+1)2^j$. Since
$\Pr[w\in A]=\frac{1}{10(K+1)}$ for each $w\in T$, we obtain that 
$\E{\sum_{w\in A}X^j_w}=\frac{\mu^j(T)}{10(K+1)}\leq 2^{j-1}$.
Putting everything together, we obtain the desired statement.
\end{proof}

\begin{proofof}{Theorem~\ref{approx-cskothm}}
Combining Lemmas~\ref{nodecovlem} and~\ref{nanodelem}, and since for any $v\in V-\nd_k$, the
events ``$v$ is retained by segment $j$'' are disjoint across different $j$s, the 
expected reward obtained from a node $v$ is at least $\frac{\sum_{j=0}^k\pi_v(2^j-1)\by^j_v}{320(K+1)}$.
So the total expected reward 
obtained by $\rpath''$ 
is at least 
$\frac{1}{320(K+1)}\cdot\bigl(\text{objective value of $\bx$}\bigr)=\OPT/O(K)$.

The running time is polynomial in the time needed to enumerate the quantities in
Theorem~\ref{strucdthm}, which is 
$\poly\bigl((n\log B)^{O(\log W\log\log W)}\bigr)=O\bigl((n+\log B)^{O(\log W\log\log W)}\bigr)$.
\end{proofof}

\subsection{Polynomial-time \boldmath $O(\log W)$-approximation algorithm}
\label{csko-poly}
The polytime algorithm also proceeds by gleaning some structural insights from
an optimal adaptive policy that enable one to reduce the problem to rooted knapsack
orienteering, losing an $O(\log W)$-factor. 
Recall that $L=\ceil{\log W}$.
It is not hard to see that with a factor-$L$ loss, we can focus on vertices whose
processing starts in the interval $[2^j-1,2^{j+1}-1)$, for some index
$j\in\dbrack{L}$. Now we consider a (rooted) \knapo-instance with start node $\rt$, 
travel budget $B$, and knapsack budget $2^{j+1}$, where the knapsack weight of each vertex 
$v$ is the expected truncated size $\mu^j_v:=\E{\min\{\Sv,2^j\}}$ and its reward is $\pi_v(2^j-1)$.
We show that the optimal adaptive policy yields a {\em fractional solution to the
LP-relaxation \eqref{knapo-lp} for this \knapo-instance}, of objective value
$\OPT/O(L)$. Invoking Theorem~\ref{knapo-roundthm}, we can then obtain a \knapo solution
of objective value $\OPT/O(L)$, and we show that the resulting rooted path can be utilized
to obtain an $O(L)$-approximation non-adaptive policy.

\begin{theorem} \label{knaporedn}
There exists an index $j\in\dbrack{L}$ such that, for the \knapo-instance with start
node $\rt$, travel budget $B$, knapsack budget $2^{j+1}$, knapsack weights  
$\{\mu^j_v\}_{v\in V}$, and rewards $\{\pi_v(2^j-1)\}_{v\in V}$, the optimal value of the
LP-relaxation \eqref{knapo-lp}, is at least $\OPT/(L+1)$.
\end{theorem}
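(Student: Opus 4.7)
The plan is to build a fractional solution to \eqref{knapo-lp} for a suitable $j$ directly from the optimal adaptive policy $\T$. Let $P(\omega)$ denote the (random) sequence of vertices visited by $\T$ in realization $\omega$, and for $v\in P(\omega)$ let $\tau_v(\omega)$ denote $v$'s processing start time. Both the event $\{v\in P(\omega)\}$ and the value of $\tau_v(\omega)$ depend only on the realizations of vertices visited strictly before $v$, hence are independent of $(S_v,R_v)$. I partition $[0,2^{L+1}-1)\supseteq[0,W)$ into the $L+1$ buckets $B_j:=[2^j-1,\,2^{j+1}-1)$ for $j\in\dbrack{L}$, and set $p_v^j:=\Pr\bigl[v\in P(\omega),\,\tau_v(\omega)\in B_j\bigr]$. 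By iterated expectation and independence, $\OPT=\E{\sum_v \bon_{v\in P(\omega)}\,\piv(\tau_v)}$; grouping the sum by buckets and using the monotonicity $\piv(\tau_v)\leq \piv(2^j-1)$ whenever $\tau_v\in B_j$ yields $\OPT\leq \sum_{j=0}^L\sum_v p_v^j\,\piv(2^j-1)$. By averaging, some $j^*\in\dbrack{L}$ satisfies $\sum_v p_v^{j^*}\piv(2^{j^*}-1)\geq \OPT/(L+1)$; fix $j:=j^*$.

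For each realization $\omega$, I would then let $\hat P(\omega)$ be the rooted path obtained by shortcutting $P(\omega)$ in the metric so as to retain exactly $\rt$ and those $u\in P(\omega)$ with $\tau_u(\omega)\in B_j$; when $\hat P(\omega)\neq\{\rt\}$, let $v^*(\omega)\in\hat P(\omega)-\rt$ be the vertex furthest from $\rt$. I define LP variables by $z^v_u:=\Pr\bigl[v^*(\omega)=v,\ u\in\hat P(\omega)\bigr]$ and $x^v_a:=\Pr\bigl[v^*(\omega)=v,\ a\in\hat P(\omega)\bigr]$ for $v\neq\rt$, and handle the normalization constraint \eqref{unit} by setting $z^\rt_\rt:=1-\Pr[\hat P(\omega)\neq\{\rt\}]$ and routing the corresponding $x^\rt$-flow along a single arc leaving $\rt$; these additions contribute nothing to the objective or to \eqref{knbudget}, since $\pi_\rt=\mu^j_\rt=0$. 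Constraints \eqref{pref-visit}--\eqref{dbnd} then follow automatically because $\hat P(\omega)$ is a rooted path with $v^*(\omega)$ as its furthest non-root vertex, and \eqref{dbudget} follows because metric shortcutting gives $d(\hat P(\omega))\leq d(P(\omega))\leq B$ for each $\omega$. The LP-objective works out to $\sum_u \piv(2^j-1)\cdot\sum_v z^v_u=\sum_u p_u^j\,\piv(2^j-1)\geq \OPT/(L+1)$.

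The main calculation, and where I expect the only real difficulty, is verifying the knapsack constraint \eqref{knbudget}. Let $U_j(\omega):=\hat P(\omega)-\rt$. Within a fixed realization $\omega$, every vertex of $U_j(\omega)$ has a start time inside the single length-$2^j$ interval $B_j$; moreover, since $B_j$ is an interval and $\tau$ is increasing along $P(\omega)$, the vertices of $U_j(\omega)$ form a contiguous block $u_1,\ldots,u_m$ of $P(\omega)$, so $S_{u_i}(\omega)=\tau_{u_{i+1}}-\tau_{u_i}$ for $i<m$. Hence $\sum_{i<m}S_{u_i}(\omega)<2^j$, and truncating $S_{u_m}$ at $2^j$ yields the pointwise bound $\sum_{u\in U_j(\omega)}\min\{S_u(\omega),2^j\}<2^{j+1}$. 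Taking expectations and using the independence of $(S_u,R_u)$ from the event $\{u\in P(\omega),\,\tau_u\in B_j\}$ (which depends only on realizations of predecessors of $u$), we obtain $\sum_u p_u^j\,\mu^j_u=\E{\sum_{u\in U_j(\omega)}\min\{S_u,2^j\}}\leq 2^{j+1}$; since $\sum_v z^v_u=p_u^j$ for $u\neq\rt$, this is exactly \eqref{knbudget}. The treatment of \eqref{unit} via the dummy contribution to $z^\rt_\rt$ is routine bookkeeping; the real crux is the independence-plus-pointwise-pigeonhole argument that yields the $2^{j+1}$ knapsack bound.
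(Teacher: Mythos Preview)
Your proposal is correct and follows essentially the same approach as the paper's proof. Both arguments bucket the adaptive policy's execution by processing start time into the intervals $[2^j-1,2^{j+1}-1)$, average to find a good bucket $j$, build the \eqref{kolp} solution by taking the (shortcut) path restricted to that bucket and using the furthest vertex as the superscript, and verify the knapsack constraint via the same pointwise bound $\sum_{u\in U_j}\min\{S_u,2^j\}\leq 2^{j+1}$ together with the independence of $S_u$ from the event that $u$'s start time lands in $B_j$. The paper phrases everything in terms of the decision tree $\T$ and its subgraph $\T_j$ on nodes in bucket $j$, while you work directly with realizations $\omega$; these are equivalent viewpoints.
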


\begin{proof}
Let $\T$ be the decision tree corresponding to an optimal adaptive policy.
For clarity, in this proof, nodes will always refer to the tree $\T$, and vertices will
refer to $V$. 
For nodes $u,v\in\T$, let $u\prec v$ denote that $u$ is an ancestor of $v$ in
$\mathcal{T}$. 
For a node $v\in\T$, the probability of reaching $v$ is the probability that the
execution of the adaptive policy follows the $\rho\leadsto v$-path in $\T$; 
let $\iv:=\sum_{u\prec v}\Su$ denote the total size observed before reaching $v$, which is
also the time when $v$ is processed. 
Clearly, $\OPT=\sum_{v\in\T}\Pr[\text{$v$ is reached}]\cdot\pi_v(\iv)$. 

For each index $\ell\in\dbrack{L}$, define
$\Nc_\ell:=\{v\in\T: \iv\in[2^\ell-1,2^{\ell+1}-1)\}$. Clearly, the $\Nc_\ell$s partition
the node-set of $\T$ (though some of these sets could be $\es$), 
so there is some index $j$ such that 
$\sum_{v\in\Nc_j}\Pr[\text{$v$ is reached}]\cdot\pi_v(\iv)\geq\OPT/(L+1)$.
Let $\T_j$ be the subgraph of $\T$ induced by nodes in $\Nc_j$. 

Now consider the LP-relaxation \eqref{kolp} for the rooted \knapo-instance described in
the theorem statement. We map $\T_j$ 
to a solution to \eqref{kolp} as follows. Let $\Pc_j$ denote the collection of rooted
paths obtained by prepending $\rt$ to each maximal path of $\T_j$. Each
$Q\in\Pc_j$ corresponds to a rooted path in the metric space. For any such path $Q$,
let $\Pr[Q]$ be the probability that the execution of the adaptive policy follows the 
rooted path in $\T$ that has $Q-\rt$ as its suffix.
Note that $\sum_{Q\in\Pc_j}\Pr[Q]\leq 1$. 
For each $Q\in\Pc_j$, consider the $\{0,1\}$-solution, where 
letting $v=\argmax_{u\in Q}d(\rt,u)$, 
the $x^v_a$ and $z^v_u$
variables encode the indicator vectors of the edge-set and vertex-set of $Q$
respectively, and all other variables are set to $0$; i.e., take $\bx^{Q,v}=\chi^{E(Q)}$,
$\bz^{Q,v}=\chi^{V(Q)}$, and $\bx^{Q,w}=\bz^{Q,w}=0$ for all $w\neq v$. 
Observe that this 
satisfies the orienteering constraints \eqref{pref-visit}--\eqref{unit} by construction,
since $Q$ is a rooted path and $v$ is the furthest node from $\rt$ on $Q$. 

We claim that the weighted sum of the $(\bx^Q,\bz^Q)$ solutions, using the $\Pr[Q]$ 
probabilities as weights, also satisfies the knapsack constraint \eqref{knbudget}, and is 
thus a feasible solution   
to \eqref{kolp}. We prove the claim shortly, but first show how this yields the theorem.
Let $x=\sum_{Q\in\Pc_j}\Pr[Q]\cdot\bx^Q$ and $z=\sum_{Q\in\Pc_j}\Pr[Q]\cdot\bz^Q$. 
For a vertex $u\in V$, let $T_u$ be the random variable denoting the time when $u$ is
processed by the adaptive policy; we set $T_u:=\infty$ if $u$ is not processed.%
\footnote{Note that $T_u$ is similar to $i_u$, except that $i_u$ is defined for a 
{\em node} $u$ of the adaptive-policy tree $\T$, and is therefore a deterministic quantity.}
For every $u\in V$, define $z_u=\sum_{v\in V}z^v_u$, and note that
$z_u=\Pr\bigl[T_u\in[2^j-1,2^{j+1}-1)\bigr]$. 
The objective value of $(x,z)$ is 
$\sum_{u\in V}\Pr\bigl[T_u\in[2^j-1,2^{j+1}-1)\bigr]\cdot\pi_u(2^j-1)
\geq\sum_{w\in\Nc_j}\Pr[\text{$w$ is reached}]\cdot\pi_w(i_w)\geq\OPT/(L+1)$. 

We now prove the claim to finish the proof.
We need to show that 
$\sum_{u\in V}\mu^j_u\cdot\Pr\bigl[T_u\in [2^j-1,2^{j+1}-1)\bigr]\leq 2^{j+1}$.
This follows from the same type of argument as used for stochastic
knapsack~\cite{DeanGV08,GuptaKMR11}. 
Recall that $X^j_u=\min\{\Su,2^j\}$ for any $u\in V$. 
Let $\vl$ be the random vertex processed last by the adaptive policy at some time in
$[2^j-1,2^{j+1}-1)$. 
We have 
$\sum_{u\in V: u\neq\vl}\Su\cdot\bon_{T_u\in[2^j-1,2^{j+1}-1)}\leq 2^j$, so
$\sum_{u\in V}X^j_u\cdot\bon_{T_u\in[2^j-1,2^{j+1}-1)}\leq 2^{j+1}$.
Taking expectations, we obtain that
$\sum_{u\in V}\E{X^j_u\,|\,T_u\in[2^j-1,2^{j+1}-1)}\cdot
\Pr\bigl[T_u\in[2^j-1,2^{j+1}-1)\bigr]\leq 2^{j+1}$.
This proves the claim since for any $u\in V$, we have  
$z_u=\Pr\bigl[T_u\in[2^j-1,2^{j+1}-1)\bigr]$, and $X^j_u$ is independent of $T_u$, so 
$\E{X^j_u\,|\,T_u\in[2^j-1,2^{j+1}-1)}=\mu^j_u$.
\end{proof}

\begin{proofof}{Theorem~\ref{poly-approxcsko}}
Theorem~\ref{knaporedn} leads to the following simple algorithm. For the index $j$ in the
theorem statement, we solve \eqref{kolp} and round it to an {\em integer solution} $P$
losing a factor of $5$ (see Theorem~\ref{knapo-round}). We sample each non-root node in $P$
independently with probability $\frac{1}{4}$, and return the resulting rooted path $P''$. 
To analyze this, for any $v\in P$, we have that the probability that the non-adaptive
policy $P''$ processes $v$ by time $2^j-1$ is at least $\frac{1}{8}$. 
The claim follows because 
$\Pr\bigl[\sum_{w\prec_{P''}v}\Sw\geq 2^j\bigr]=\Pr\bigl[\sum_{w\prec_{P''}v}X^j_w\geq 2^j\bigr]$,
which is at most 
\begin{equation*}
\frac{\E{\sum_{w\prec_{P''}v}X^j_w}}{2^j}
=\frac{1}{4}\cdot\frac{\E{\sum_{w\prec_{P}v} X^j_w}}{2^j}
\leq\frac{1}{4}\cdot\frac{\sum_{w\in P}\mu^j_w}{2^j}\leq\frac{1}{2}.
\end{equation*}
The probability of the stated event is therefore at least 
$\Pr[v\in P'']/2\geq 1/8$. Therefore, the expected reward obtained is at least 
$\frac{1}{8}\cdot\sum_{v\in P}\pi_v(2^j-1)\geq\frac{\OPT}{L+1}\cdot\frac{1}{5}\cdot\frac{1}{8}=\OPT/O(L)$.
\end{proofof}

\subsection*{Proof of Theorem~\ref{structhm}} \label{append-structhm}

Recall that $L=\ceil{\log W}$, and $K=3\log(6\log W)+12$.
We will prove the following more-compact statement.

\begin{theorem} \label{simpstructhm} 
There exists a rooted path $\spath$ with $d(\spath)\leq B$, and 
vertices $\nd_0\preceq\nd_1\preceq\ldots\preceq\nd_k$ on $\spath$ for some $k\leq L$, such that: 
\begin{enumerate}[label=(\alph*), topsep=0.1ex, noitemsep, leftmargin=*]
\item \label{simprewd} 
$\sum_{j=0}^k\sum_{v\in\spath_{\nd_{j-1},\nd_j}-\nd_j}\pi_v(2^j-1)\geq\OPT/4$; and  
\item \label{simpprefsize} 
$\mu^j(\spath_{\rt,\nd_j}-{\nd_j})\leq (K+1)2^j$ for all $j\in\dbrack{k}$.
\end{enumerate} 
\end{theorem}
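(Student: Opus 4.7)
The plan is to adapt the structural lemma of Bansal--Nagarajan~\cite{BansalN14} for \cso to the two-budget \csko setting. Let $\T$ be the decision tree of an optimal adaptive policy of value $\OPT$. For each node $u \in \T$, let $i_u := \sum_{u' \prec u} S_{\ell_{u'}}$ be its deterministic processing-start time (fixed by the branch of $\T$ from the root to $u$), let $p_u := \Pr[u\text{ reached}]$, and let $\ell_u \in V$ be its label. Then $\OPT = \sum_u p_u\,\pi_{\ell_u}(i_u)$, and we may assume $\pi_v(0) \leq \OPT/4$ for every $v$ (else a singleton path already suffices). First I would sample a random root-to-leaf path of $\T$ according to the policy's distribution and take its label-sequence as $\spath$; since $i_v$ is non-decreasing along $\spath$, defining $\nd_j$ to be the last vertex on $\spath$ with $i_v < 2^j$ yields the nested chain $\rt = \nd_{-1} \preceq \nd_0 \preceq \cdots \preceq \nd_k$ with $k \leq L$, since $i_v \leq W \leq 2^L$.

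For the reward property~(a), every $v \in \spath_{\nd_{j-1},\nd_j} - \nd_j$ satisfies $i_v < 2^j$, so $\pi_v(2^j - 1) \leq \pi_v(i_v)$, which makes the segment-sum in~(a) a valid lower bound on $\sum_{v \in \spath,\, v \neq \nd_k} \pi_v(i_v)$. The last-vertex term $\pi_{\ell_{\nd_k}}(i_{\nd_k}) \leq \pi_{\ell_{\nd_k}}(0) \leq \OPT/4$ by the above assumption, so the lower bound has expectation at least $3\OPT/4$ over realizations. To bridge the additional dyadic-rounding gap between $\pi_v(2^j-1)$ and $\pi_v(i_v)$, I would use a random-shift argument analogous to~\cite{BansalN14} (averaging over a uniform $\alpha \in \{0,1\}$ with perturbed segment boundaries $\{i_v < 2^{j+\alpha}\}$), so that for some choice of shift the expected segment-sum is at least $\OPT/4$; a realization attaining this value then gives~(a).

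For the prefix-size property~(b), I would first establish the pointwise bound $\sum_{v \in \spath_{\rt,\nd_j}} X^j_v < 2^{j+1}$, where $X^j_v := \min\{S_v, 2^j\}$: the strict predecessors of $\nd_j$ contribute $\sum X^j_v \leq i_{\nd_j} < 2^j$, and $X^j_{\nd_j} \leq 2^j$. Crucially, $\bon_{v \in \spath_{\rt,\nd_j}}$ depends only on the sizes of strict ancestors of $v$, hence is independent of $X^j_v$, so taking expectations gives $\E{\mu^j(\spath_{\rt,\nd_j})} = \E{\sum_v X^j_v\,\bon_{v \in \spath_{\rt,\nd_j}}} \leq 2^{j+1}$. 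For the high-probability tail required for a union bound over $j \in \dbrack{L}$, I would apply Freedman's inequality to the Doob martingale with increments $D_t := (X^j_{V_t} - \mu^j_{V_t})\,\bon_{i_{V_t}<2^j}$ along the adaptive traversal $V_1, V_2, \ldots$: each $D_t$ lies in $[-2^j, 2^j]$, and the pointwise $X^j$-bound controls the predictable quadratic variation. This yields $\Pr[\mu^j(\spath_{\rt,\nd_j}) > (K+1)2^j] \leq \exp(-\Omega(K))$, and with $K = 3\log(6\log W)+12$ a union bound keeps the total failure probability below an arbitrarily small constant.

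The main obstacle will be this concentration step. Markov's inequality alone gives only $\Pr[\mu^j(\spath_{\rt,\nd_j}) > (K+1)2^j] \leq 2/(K+1) = O(1/\log\log W)$, which is far too weak to union-bound over the $\Theta(\log W)$ indices $j$ when $K = O(\log\log W)$. The workaround is to leverage the pointwise bound $\sum X^j_v \leq 2^{j+1}$ so that $\mu^j(\spath_{\rt,\nd_j}) - \sum X^j_v$ is a martingale with $2^j$-bounded increments, and then obtain the required exponential tail via Freedman. Combining the event from~(a) with the intersection of the size events from~(b) over all $j \in \dbrack{L}$, both properties hold simultaneously with positive probability, yielding a deterministic $\spath$ and portal nodes $\nd_0, \ldots, \nd_k$ with the claimed properties.
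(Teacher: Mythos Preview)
Your argument for property~(a) has the inequality in the wrong direction, and the random-shift patch does not repair it. With your choice of portals (``$\nd_j$ is the last vertex with $i_v<2^j$''), every $v$ in segment $j$ satisfies $i_v\leq 2^j-1$, hence $\pi_v(2^j-1)\leq\pi_v(i_v)$. That makes the segment sum an \emph{upper} bound on $\sum_{v\neq\nd_k}\pi_v(i_v)$, which tells you nothing about a lower bound of $\OPT/4$. A random dyadic shift cannot flip this: for each fixed shift you still evaluate $\pi_v$ at an argument no smaller than $i_v$, so averaging over shifts still yields a quantity $\leq\sum_v\pi_v(i_v)$. The paper instead defines $\nd_j$ as the \emph{first} node on the path with $i_v\geq 2^{j+1}-1$; then every $v$ in $\spath_{\nd_{j-1},\nd_j}-\nd_j$ has $i_v\geq 2^j-1$, so $\pi_v(2^j-1)\geq\pi_v(i_v)$, and summing gives $\sum_j\sum_v\pi_v(2^j-1)\geq\sum_{v\neq\nd_k}\pi_v(i_v)\geq\OPT/2-\OPT/4$ directly, with no shift needed.

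Your ``combining'' step is also too loose. Knowing only that $\E{R}\geq\OPT/4$ and that the size events hold with probability $1-o(1)$ does not give a realization where both hold, since the contribution to $\E{R}$ from the bad set could in principle be all of it. The paper handles this by first deleting from the decision tree every node at which the concentration condition fails for some $j$ (Lemma~\ref{lemmaconditions} bounds the total mass of such nodes by $(L+1)e^{-K/3}\leq 1/2$), observing that the pruned tree retains expected reward $\geq\OPT/2$, and only then extracting a single root--leaf path $\sigma$ with $\sum_{v\in\sigma}\pi_v(i_v)\geq\OPT/2$ on which the size condition holds at every node by construction. Your Freedman idea for the size tail is fine and essentially re-derives Lemma~\ref{lemmaconditions}, but you should apply it to prune $\T$ before choosing the path, not try to intersect events afterward.
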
  

We first show that Theorem~\ref{simpstructhm} easily implies Theorem~\ref{structhm},
since we can simply subdivide each $\spath_{\nd_{j-1},\nd_j}$ path into at most $2(K+1)$ subpaths
satisfying having $\mu^j$-weight at most $2^j$. 
More precisely, for each $j\in\dbrack{k}$, let $P=\spath_{\nd_{j-1},\nd_j}-\nd_j$. 
So we have $\mu^j(P)\leq (K+1)2^j$ due to part~\ref{simpprefsize} above. Also, we have 
$\mu^j_v\leq 2^j$ for all $v\in V$.
Now we greedily split $P$ into at most $K+1$ segments, where each segment $Z$
is a minimal contiguous subsequence of the node-sequence of $P$ satisfying
$\mu^j(Z)\geq 2^j$; clearly, if $v$ is the last node of $Z$, then we have $\mu^j(Z-v)\leq 2^j$.
%
%
Let $\ndset_j$ consist of the end-points of all the $Z$ segments so obtained.
Clearly, we have $\{\nd_{j-1},\nd_j\}\sse\ndset_j$, $|\ndset_j|\leq 2(K+1)$. 

\medskip
In the sequel, we focus on proving Theorem~\ref{simpstructhm}.
We follow a similar strategy as in~\cite{BansalN14} and show that $\spath$ 
can be extracted from the decision tree $\T$ representing an optimal adaptive policy. We
also argue that this immediately implies an $O(K)$ upper bound on the adaptivity gap,
albeit in a non-constructive fashion.
As noted earlier: (i) $\T$ is rooted at $\rho$; (ii) each node of $\T$ is labeled by a vertex
of $V$; and (iii) branches of a node of $\T$ labeled $v\in V$ correspond to the
different size and reward instantiations of $v$, and we consider each edge from the node
to its children as being labeled by the probability of the corresponding instantiation.
Also, note that while different nodes of $\T$ may be labeled by the same vertex of $V$, on
any rooted path of $\T$, each node is labeled by a distinct vertex of $V$. 
Due to (ii), when refer to a node $v\in\T$, we mean that the node is labeled $v\in V$.
For clarity, in this section, nodes will always refer to the tree $\T$, and vertices will
refer to $V$. 
For any two nodes $u,v\in\mathcal{T}$ we use $u\prec v$ to denote that $u$ is an ancestor
of $v$ in $\mathcal{T}$. Recall that for $v\in V$, and any $j\geq 0$, $\Xvj:=\min\{\Sv,2^j\}$ and
$\muvj:=\Ex[\Xvj]$.  
For any $v\in\mathcal{T}$, let $\iv=\sum_{u\prec v}\Su$ denote the total size
observed before reaching $v$ in $\mathcal{T}$.
The probability of reaching a node $v\in\T$ 
is the probability that the execution of the
adaptive policy follows the $\rho\leadsto v$-rooted path in $\T$, and is equal to the
product of the probabilities labeling the edges of this rooted path.

Lemma~\ref{lemmaconditions} below is from~\cite{BansalN14}, and as therein, leads to
Lemma~\ref{pathle}.  

\begin{lemma}[Lemma 3.2 in \cite{BansalN14}] \label{lemmaconditions}
Consider any $M\geq 12$, and any $j\in\dbrack{L}$. 
The probability of reaching a node $u\in\mathcal{T}$ with 
$\sum_{w\preceq u}X^j_w\leq 2^{j+1}$, and $\sum_{w\preceq u}\muwj> M\cdot 2^j$ is at most $e^{-M/3}$.
\end{lemma}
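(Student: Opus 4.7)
My plan is to prove this via a Chernoff-style tail bound, suitably adapted to the adaptive setting via an exponential supermartingale. Observe that $\T$ represents an adaptive policy, so although the sequence of visited vertices $v_1, v_2, \ldots$ (along a random root-to-leaf execution path) is itself random, the key property we can exploit is that conditional on the history through the choice of $v_t$, the size $S_{v_t}$ (and hence $X_{v_t}^j = \min\{S_{v_t}, 2^j\}$) is distributed according to its marginal. Formally, let $\mathcal{F}_t$ be the $\sigma$-algebra generated by $v_1, S_{v_1}, \ldots, v_t$; the policy's choice of $v_{t+1}$ is $\mathcal{F}_t$-measurable after observing $S_{v_t}$, and $X_{v_{t+1}}^j$ is independent of everything up to $v_{t+1}$ given $v_{t+1}$.

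The plan is to construct a supermartingale of the form
\[
M_t = \exp\!\Bigl(-\lambda\sum_{i=1}^{t} X_{v_i}^j + c\sum_{i=1}^{t} \mu_{v_i}^j\Bigr).
\]
Since $X_v^j \in [0, 2^j]$, the convexity of $e^{-\lambda x}$ gives the standard moment-generating bound $\E{e^{-\lambda X_v^j}} \leq \exp\bigl(-\frac{\mu_v^j}{2^j}(1 - e^{-\lambda 2^j})\bigr)$. Setting $\alpha := \lambda 2^j$ and $c := (1 - e^{-\alpha})/2^j$, one checks that $\E{M_{t+1} \mid \mathcal{F}_t} \leq M_t$, so $\{M_t\}$ is a non-negative supermartingale with $M_0 = 1$. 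Now define the stopping time $\tau$ as the first step at which $\sum_{i=1}^{\tau} \mu_{v_i}^j > M \cdot 2^j$ (set $\tau = \infty$ if this never occurs). Let $\mathcal{E}$ denote the event that some node $u \in \T$ is reached with $\sum_{w \preceq u}\mu_w^j > M \cdot 2^j$ and $\sum_{w \preceq u} X_w^j \leq 2^{j+1}$. On $\mathcal{E}$, since both sums are monotone along root-to-$u$ paths, we have $\tau < \infty$ and $\sum_{i=1}^{\tau} X_{v_i}^j \leq 2^{j+1}$, so $M_\tau \geq \exp(-2\alpha + (1 - e^{-\alpha})M)$.

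Applying optional stopping (using $M_t \geq 0$) yields $\Pr[\mathcal{E}] \cdot \exp(-2\alpha + (1 - e^{-\alpha})M) \leq \E{M_\tau \mathbf{1}_{\tau < \infty}} \leq 1$, i.e., $\Pr[\mathcal{E}] \leq \exp(2\alpha - (1 - e^{-\alpha})M)$. Optimizing the exponent, the choice $\alpha = 1$ gives $2 - (1 - e^{-1})M \leq -M/3$ whenever $(2/3 - e^{-1})M \geq 2$; since $2/3 - e^{-1} > 0.29$, this holds comfortably for all $M \geq 12$, yielding $\Pr[\mathcal{E}] \leq e^{-M/3}$. The only subtle point is the supermartingale verification in the adaptive setting, where one has to carefully track that $v_{t+1}$ is $\mathcal{F}_t$-measurable while $X_{v_{t+1}}^j$ is drawn afresh from its marginal—this is exactly where independence of vertex realizations across different vertices in $V$ is used.
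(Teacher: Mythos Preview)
Your argument is correct. The paper does not prove this lemma itself; it simply cites it as Lemma~3.2 of~\cite{BansalN14}, so there is no in-paper proof to compare against. Your exponential-supermartingale (Chernoff) argument is the standard route for such adaptive lower-tail bounds and is exactly the kind of proof one expects here.

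Two minor points worth tightening. First, your filtration $\mathcal{F}_t=\sigma(v_1,S_{v_1},\ldots,v_t)$ does not contain $S_{v_t}$, so $M_t$ (which depends on $X_{v_t}^j$) is not $\mathcal{F}_t$-measurable, and $v_{t+1}$ is not $\mathcal{F}_t$-measurable either. Use instead $\mathcal{G}_t=\sigma(v_1,S_{v_1},\ldots,v_t,S_{v_t})$; then $M_t$ is $\mathcal{G}_t$-measurable, $v_{t+1}$ is $\mathcal{G}_t$-measurable, and the supermartingale verification $\E{M_{t+1}\mid\mathcal{G}_t}\leq M_t$ goes through exactly as you wrote, using that $X_{v_{t+1}}^j$ is independent of $\mathcal{G}_t$ given $v_{t+1}$ because distinct vertices along any execution path have independent sizes. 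Second, for optional stopping you can avoid any subtlety by noting that the execution path has length at most $n$ (vertices are not revisited), so $\tau\wedge n$ is a bounded stopping time and $\E{M_{\tau\wedge n}}\leq 1$; on $\mathcal{E}$ we have $\tau\leq n$ and $M_\tau\geq\exp\bigl(-2\alpha+(1-e^{-\alpha})M\bigr)$, giving your bound directly. The numerical check with $\alpha=1$ is fine: $(2/3-e^{-1})>0.29$, so $(2/3-e^{-1})M\geq 2$ certainly holds for $M\geq 12$.
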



\begin{lemma} \label{pathle} \label{pathlem}
There exists some node $s\in\mathcal{T}$ such that the $\rho\leadsto s$-path $\sigma$ in
$\T$ 
satisfies the following:
(a) $\sum_{v\in \sigma}\piv(\iv)\geq OPT/2$; and
(b) for each $v\in\sigma$ and $j\in\dbrack{L}$, we have 
$\sum_{w\preceq v}X^j_w>2^{j+1}$ or $\sum_{w\preceq v}\muwj\leq K\cdot 2^j$.
\end{lemma}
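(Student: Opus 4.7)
The plan is to prune the decision tree $\mathcal{T}$ of an optimal adaptive policy at nodes that violate condition (b), argue that only a small fraction of the expected reward is lost, and then extract $\sigma$ by averaging over root-to-leaf paths of the pruned tree. Call a node $u\in\mathcal{T}$ \emph{bad for $j$} if $\sum_{w\preceq u}\Xvj\le 2^{j+1}$ and $\sum_{w\preceq u}\muwj>K\cdot 2^j$, and simply \emph{bad} if it is bad for some $j\in\dbrack{L}$; note that $u$ satisfies (b) iff it is not bad. Applying Lemma~\ref{lemmaconditions} with $M=K$ for each $j$ and taking a union bound over the $L+1\le\log W+2$ indices shows that the probability that the execution of the optimal policy ever reaches a bad node is at most $(L+1)e^{-K/3}$, which for our choice $K=3\log(6\log W)+12$ is much smaller than $1/2$.

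Let $\mathcal{T}^-\subseteq\mathcal{T}$ be the subtree obtained by deleting every bad node together with all its descendants, and let $B^*$ denote the set of \emph{first} bad nodes (bad nodes with no bad proper ancestor). Define the \emph{good reward} of a realization as the reward collected along the portion of the execution path that lies in $\mathcal{T}^-$ (equivalently, up to but not including the first bad node, if any). Setting $R_b:=\mathbb{E}[\text{reward collected from $b$ onwards}\mid\text{reach }b]$, we have the decomposition
\[
OPT-\mathbb{E}[\text{good reward}] \;=\; \sum_{b\in B^*}\Pr[\text{reach }b]\cdot R_b.
\]
The key step is the bound $R_b\le OPT$ for every $b$. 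This holds because the policy that starts at $\rho$, travels directly to $b$, and then runs the sub-policy that the optimal policy executes upon reaching $b$, is a valid \csko policy with expected reward exactly $R_b$: the triangle inequality ensures $d(\rho,b)$ is at most the length of the $\rho\leadsto b$ path in $\mathcal{T}$, so the remaining travel budget is sufficient for the sub-policy; the sub-policy uses processing time at most $W-i_b\le W$, so the knapsack constraint is met; and since vertex sizes and rewards are independent across vertices, conditioning on reaching $b$ leaves the distribution of $b$'s subtree unchanged. Summing gives total lost reward at most $OPT\cdot\Pr[\text{reach some bad node}]\le OPT/2$, so $\mathbb{E}[\text{good reward}]\ge OPT/2$.

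Since $\mathbb{E}[\text{good reward}]$ is the average of $\sum_{v\in P}\pi_v(\iv)$ over the random root-to-leaf path $P$ of $\mathcal{T}^-$ traced by the execution, some particular path $\sigma$ achieves reward at least $OPT/2$, establishing (a); and every $v\in\sigma$ is a non-bad node of $\mathcal{T}^-$, so (b) holds for every $j$. The main obstacle is the bound $R_b\le OPT$, which relies crucially on both the triangle inequality and the vertex-wise independence of sizes and rewards; granted this, the rest of the proof reduces to a routine union bound and averaging argument.
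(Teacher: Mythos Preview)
Your proposal is correct and follows essentially the same approach as the paper: prune the optimal decision tree at the first nodes violating (b), bound the lost reward via the ``go directly to $b$ and run the sub-policy'' argument (yielding $R_b\le\OPT$), and then average over root-to-leaf paths of the pruned tree. One very minor imprecision: the direct-to-$b$ policy has expected reward \emph{at least} $R_b$ (not exactly $R_b$), since it arrives at $b$ with strictly more residual processing budget; this only strengthens the needed inequality $R_b\le\OPT$.
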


\begin{proof} 
For each $j\in\dbrack{L}$, let $A_j$ be the nodes that satisfy both
the conditions in Lemma~\ref{lemmaconditions} for $j$. 
Let $A=\bigcup_{j=0}^LA_j$.
Let $F$ be the set of nodes $v\in A$ such that no ancestor of $v$ is in $A$, i.e.,
$F=\{v\in A: \nexists w\in A\text{ s.t. }w\prec v\}$.
By a union bound, the probability of reaching any node in $A$ 
is at most $(L+1)e^{-K/3}\leq \frac{1}{2}$. 
Let $\T^\prime$ be the component of $\T-F$ containing $\rho$. 
For each $v\in F$, the expected reward obtained from the subtree of $\T_v$ rooted at
$v$ is at most $\Pr[v\text{ is reached}]\cdot\OPT$, since the adaptive policy that goes
from $\rho$ directly to $v$ and then follows $\T_v$, can gather expected reward at most
$\OPT$. Therefore, the total expected reward
obtained from subtrees of $\T$ rooted at nodes in $F$ is at most $\OPT/2$. 
It follows that the expected reward of $\T^\prime$ is at least $OPT/2$.

The expected reward of $\T^\prime$, which is 
$\sum_{v\in\T^\prime}\Pr[v\text{ is reached}]\cdot\pi_v(\iv)$, 
can also be expressed as
$\sum_{w\text{ leaf of }\T^\prime}\Pr[w\text{ is reached}]\cdot\sum_{v\preceq w}\pi_v(i_v)$. 
Since $\sum_{w\text{ leaf of }\T^\prime}\Pr[w\text{ is reached}]\leq 1$, there is some
leaf $s$ of $\T^\prime$ for which $\sum_{v\preceq s}\pi_v(i_v)\geq\OPT/2$.
So taking $\sg$ to be the $\rho\leadsto s$-path of $\T^\prime$, part (a) holds.
Part (b) follows because for every node $v\in\T^\prime$, one of the conditions of
Lemma~\ref{lemmaconditions} does not hold. 
\end{proof}

\begin{proof}[Finishing up the proof of Theorem~\ref{simpstructhm}]
Let $\spath$ be the path $\sg$ given by Lemma~\ref{pathlem}. Since $\sg$ is a valid
execution of the adaptive policy, we have $d(\spath)\leq B$, and $i_s\leq W$.
For any integer $j\geq 0$, define $\nd_j$ to be the first node $v$ on $\spath$ such that
$\iv\geq 2^{j+1}-1$, if such a node exists. Let $k$ be the smallest integer for which such
a node does not exist; set $\nd_k=s$, and note that $k\leq\ceil{\log W}$. 

For any $j\in\dbrack{k}$ and any $v\in\spath_{\nd_{j-1},\nd_j}-\nd_j$, we have 
$i_v\geq 2^j-1$, so $\pi_v(2^j-1)\geq\pi_v(i_v)$. 
Therefore, by Lemma~\ref{pathlem} (a) and since $\pi_{\nd_k}(0)\leq\OPT/4$, we obtain that
$\sum_{j=0}^k\sum_{v\in\spath_{\nd_{j-1},\nd_j}-\nd_j}\pi_v(2^j-1)\geq\sum_{v\in\sg}\pi_v(i_v)-\OPT/4\geq\OPT/4$.
This proves part (a) of the theorem statement.
For part (b), fix any $j\in\dbrack{k}$, and let $w, v$ be the nodes immediately
preceding $\nd_j$ on $\spath$, in that order. Then, 
$i_v=\sum_{v'\preceq w}X^j_{v'}<2^{j+1}-1$ and so we must have  
$\sum_{v'\preceq w}\mu_{v'}^j\leq K\cdot 2^j$, by part (b) of Lemma~\ref{pathlem}. It
follows that $\mu^j(\spath_{\rho\nd_j}-\nd_j)=\sum_{v'\preceq w}\mu_{v'}^j+\mu^j_v\leq(K+1)\cdot 2^j$.
\end{proof}

\section{\boldmath Refined approximation guarantees and hardness results for \csko} 
\label{csko-refine} 
In this section, we perform a fine-grained-complexity study of \csko. 
Motivated by the fact that our adaptivity-gap lower bound for \csko utilizes
distributions of support-size $3$, whereas the adaptivity-gap lower-bound example for
stochastic orienteering~\cite{BansalN14} considers {\em weighted Bernoulli distributions},
we investigate the complexity of \csko when 
we have distributions supported on at most $2$ points---we call this special case
\tcsko---as also the further special case where the vertex-size distributions are 
weighted Bernoulli distributions. 

In stark contrast with stochastic orienteering, we show 
that the {\em adaptivity gap is a constant} for \tcsko. Moreover, we obtain non-adaptive
$O(1)$-approximation algorithms that run in polynomial time for weighted Bernoulli
distributions (Theorem~\ref{bercsko-thm}), and in time $(n+\log B)^{O(\log W)}$ for general
\tcsko (Corollary~\ref{tcsko-equiv}).  

The chief insight underlying the above results 
is that 
one can isolate a novel {\em deterministic}
\vrp, that we call {\em orienteering with knapsack deadlines} (\orientkd), that governs
the complexity of \tcsko. 
In \orientkd, we are given an 
(rooted or \ptp) orienteering instance, along with nonnegative knapsack weights
$\{\wt_v\}_{v\in V}$ and {\em knapsack deadlines} $\{\knapd_v\}$. A path $P$ with start
node $a$ is feasible, if it is feasible for the orienteering instance, and 
$\sum_{u\in P_{a,v}}\wt_u\leq\knapd_v$ for every node $v\in P$; 
the goal is to find a
feasible path $P$ that obtains the maximum reward.

We show that, up to constant factors, \orientkd is {\em equivalent to \tcsko in terms of
approximability} (Corollary~\ref{tcsko-equiv}). 
The 
$O(1)$-approximation for \tcsko, and the polytime $O(1)$-approximation
for weighted Bernoulli distributions both fall out as direct consequences of this
equivalence: 
the former, because we devise an $(n+\log B)^{O(\log W)}$-time $O(1)$-approximation
algorithm for \orientkd in 
Section~\ref{okd-alg}; the latter, because the \orientkd instance that one needs to solve
in this case is in fact a \knapo instance.
Another corollary 
is a hardness result for \csko
showing that an $\al$-approximation for \csko relative to the {\em non-adaptive optimum}
implies an $O(\al)$-approximation for \orientkd (Theorem~\ref{nacsko-hard}); this
follows because such an approximation guarantee for \csko implies an
$O(\al)$-approximation for \tcsko (since the adaptivity gap for \tcsko is $O(1)$).

\vspace*{-1ex}
\subparagraph*{Difficult instances of \boldmath \csko.}
We begin by distilling the key source of difficulty for \csko
(Lemma~\ref{csko-difficult}). 
This will prove to be useful when we study \tcsko, 
as it will allow us to focus on the core of
the problem. We define the size instantiation $\Sv$ of a vertex $v$ to be large if
$\Sv>W/2$, and small otherwise. We argue that the difficulty of \csko stems from instances
where most of the optimal reward comes from 
{\em vertices that instantiate to a large size with small probability}. 

To make this precise, we introduce some notation. 
For a vertex $v$, we can split its reward $\Rv$ as 
$\Rv=\excep+\trunc$, where $\excep:=\Rv\bon_{\Sv>W/2}$ and 
$\trunc:=\Rv\bon_{\Sv\leq W/2}$. 
We can consider the modified \csko instances $\excep[\I]$ and $\trunc[\I]$, where the
rewards are given by $\{\excep\}_{v\in V}$ and $\{\trunc\}_{v\in V}$ respectively; 
so in $\excep[\I]$, we only collect non-zero reward from large instantiations, and in 
$\trunc[\I]$, we only collect non-zero reward from small instantiations. 
For $p\in[0,1]$, define $\excep[\I](p)$ to be the instance with vertex set 
$V(p):=\{v\in V: \Pr[\Sv>W/2]\leq p\}$ (note that $\rt\in V(p)$).
Thus, in instance $\excep[\I](p)$, we only consider vertices that instantiate to a large
size with probability at most $p$ (i.e., small probability), and collect reward only from
large instantiations.

\begin{lemma} \label{csko-difficult} \label{csko-hard}
Suppose we have an $\al$-approximation algorithm for \csko instances of the form
$\excep[\I](0.5)$. Then, we can obtain an $\bigl(\al+O(1)\bigr)$-approximation algorithm for
all \csko instances.
\end{lemma}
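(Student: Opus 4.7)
The plan is to split $\OPT$ by the source of the reward and attack each piece with a different algorithm, then output the best of the three. First, I fix an optimal adaptive policy $\pi^*$ for the given \csko instance $\I$ and decompose its expected reward as $\OPT = R_1^* + R_2^* + R_3^*$, where $R_1^*$ is the reward from small-size instantiations (the $\trunc$ contributions across all visited vertices), $R_2^*$ is the reward from large-size instantiations at vertices in $V(0.5)$, and $R_3^*$ is the reward from large-size instantiations at vertices in $V \sm V(0.5)$.

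For $R_1^*$, I plan to invoke an $O(1)$-approximation for the small-size-reward problem $\trunc[\I]$ (as provided by Theorem~\ref{cskoc-small-thm} via the inclusion of \csko into \cskocancel) to obtain a policy of expected reward $A_1 \geq R_1^*/c$ for a constant $c$; a standard ``skipping'' observation, namely that dropping unused vertices from $\pi^*$ via the triangle inequality yields a valid $\trunc[\I]$-policy achieving $R_1^*$, shows $\OPT(\trunc[\I]) \geq R_1^*$. For $R_2^*$, an analogous skipping argument applied to $\pi^*$ restricted to $V(0.5)$-vertices gives $\OPT(\excep[\I](0.5)) \geq R_2^*$, so the hypothesized $\al$-approximation on $\excep[\I](0.5)$ yields a policy of expected reward $A_2 \geq R_2^*/\al$. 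For $R_3^*$, the plan is to use the trivial single-vertex policy: travel directly to the vertex $v^* \in V \sm V(0.5)$ that maximizes the expected large-instantiation reward $\E{R_v \bon_{W/2 < S_v \leq W}}$, and process it; this achieves expected reward $A_3 = \E{R_{v^*} \bon_{W/2 < S_{v^*} \leq W}}$.

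The crux of the argument is showing $A_3 = \Omega(R_3^*)$. Since any two sizes exceeding $W/2$ overflow the knapsack budget $W$, at most one vertex of $V \sm V(0.5)$ can actually contribute large-instantiation reward in any execution. I will call $v \in V \sm V(0.5)$ a \emph{candidate} for $\pi^*$ if $v$ is visited and the total processing time of strictly earlier vertices is less than $W/2$. Listing the $V \sm V(0.5)$-vertices visited by $\pi^*$ in order as $u_1, u_2, \ldots$, the key observation is that $u_k$ being a candidate forces $S_{u_j} \leq W/2$ for every $j < k$; since $\Pr[S_{u_j} > W/2] > 1/2$ for each $u_j \in V \sm V(0.5)$, and each vertex's size is independent of the adaptive trajectory that determines which vertex $u_j$ turns out to be, I obtain inductively $\Pr[u_k \text{ is a candidate}] < 2^{-(k-1)}$, and hence $\sum_{v \in V \sm V(0.5)} \Pr[v \text{ is a candidate}] < 2$. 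Using independence of $v$'s instantiation from the candidate event to bound each vertex's expected contribution by $\E{R_v \bon_{W/2 < S_v \leq W}}$, I conclude $R_3^* \leq 2 A_3$.

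Finally, I output the best of the three policies, with reward $A^*$. Letting $c$ bound both the $R_1^*$- and $R_3^*$-losses, we have $(\al + 2c) A^* \geq \al A^* + c A_1 + c A_3 \geq R_1^* + R_2^* + R_3^* = \OPT$, so $A^* \geq \OPT/(\al + O(1))$, as required. The main obstacle is the candidate-counting argument for $R_3^*$: formalizing the inductive claim on $\Pr[u_k \text{ is a candidate}]$ requires carefully handling the adaptive branching of $\pi^*$ alongside the conditional independence of each vertex's size from the trajectory that leads up to its visit.
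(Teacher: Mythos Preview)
Your decomposition into $R_1^*, R_2^*, R_3^*$ and the overall plan mirror the paper's proof, and your handling of $R_2^*$ and $R_3^*$ is correct; your candidate-counting argument for $R_3^*$ is exactly the paper's bound on the expected number of $V\sm V(0.5)$-vertices visited (the paper phrases it as $\E{\text{\#visited}}\le O(1)$ and therefore the best single vertex is within $O(1)$ of $\OPT(\I_3)$).

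The one genuine gap is your treatment of $R_1^*$. You invoke Theorem~\ref{cskoc-small-thm}, but that result returns a policy for \cskocancel, i.e., one that may cancel a vertex mid-processing; such a policy is not a legal \csko policy. The ``inclusion of \csko into \cskocancel'' goes the wrong way for your purposes: it says $\OPT_{\cskocancel}(\trunc[\I])\ge\OPT_{\csko}(\trunc[\I])$, but it does \emph{not} let you import a \cskocancel algorithm back into \csko. The paper therefore does not appeal to Section~\ref{csko-cancel} here; it gives a self-contained argument for $\trunc[\I]$ inside the proof. It sets up a \knapo instance with rewards $\E{\trunc}$, knapsack weights $\E{\min\{\Sv,W\}}$, and budget $2W$, shows (by the usual stochastic-knapsack expectation bound) that the LP~\eqref{kolp} for this instance has value at least $\OPT(\trunc[\I])$, rounds via Theorem~\ref{knapo-round} to a path $Q$, and then independently subsamples nodes of $Q$ to obtain a non-adaptive \csko policy (no cancellations) with expected reward $\Omega\bigl(\OPT(\trunc[\I])\bigr)$. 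Plugging that in for your $A_1$ makes the rest of your argument go through.
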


\begin{proof}
A \csko instance $\I$ can be decomposed into three instances, $\I_1=\trunc[\I]$,
$\I_2=\excep[\I](0.5)$, and $\I_3$ with vertex set $V_3:=\{v\in V:\Pr[\Sv>W/2]>0.5\}$ and
rewards $\{\excep\}_{v\in V_3}$. Clearly, by construction, a vertex $v$ yields positive
reward in at most one of these 3 instances for any size instantiation, and so 
$\OPT=\OPT(\I)\leq\OPT(\I_1)+\OPT(\I_2)+\OPT(\I_3)$. 
We show that one can obtain
(non-adaptive) polices that yield approximation guarantees of $\beta_1=O(1)$,
$\beta_3=O(1)$ for $\I_1$ and $\I_3$ respectively. Let $\beta_2=\al$. 
Therefore, if we can obtain an
$\al$-approximation to $\I_2$, then we can do the following: with probability 
$\frac{\beta_j}{\beta_1+\beta_2+\beta_3}$, we run the corresponding algorithm for instance
$\I_j$, for $j=1,2,3$. The expected reward obtained via this is at least
$\sum_{j=1}^3\frac{\beta_j}{\beta_1+\beta_2+\beta_3}\cdot\frac{\OPT(\I_j)}{\beta_j}
\geq\frac{\OPT}{\beta_1+\beta_2+\beta_3}=\frac{\OPT}{\al+O(1)}$.

Any adaptive policy for $\I_3$ can collect positive
reward from at most one vertex, namely, the first vertex that instantiates to a large
size, after which the adaptive policy may as well stop. The expected number of nodes
visited by an adaptive policy is at most $\sum_{i\geq 1}i\cdot2^{-(i-1)}\leq 4$, so 
simply visiting the node in $V_3$ with largest expected reward, yields an
$O(1)$-approximation to $\OPT(\I_3)$.  

For $\I_1$, consider the \knapo instance with 
rewards $\{\E{\trunc}\}_{v\in V}$, travel budget $B$, knapsack weights
$\{\wt_v=\E{\min\{\Sv,W\}}\}_{v\in V}$, and knapsack budget $2W$, and the LP-relaxation
\eqref{kolp} for this instance. 
We can argue as in the proof of Theorem~\ref{knaporedn} to show that the optimal value of
this LP is at least $\OPT(\I_1)$. 
For any rooted path $Q$ corresponding
to the execution of an adaptive policy, letting $v$ be the furthest node from $\rt$ on $Q$,
taking $x^v$ and $z^v$ be the indicator vectors of the edge-set and node-set of $Q$, and
all other variables to be $0$, satisfies \eqref{pref-visit}--\eqref{unit}. So the
$\Pr[Q]$-weighted convex combination of these solutions also satisfies these constraints,
where $\Pr[Q]$ is the probability that the execution of the adaptive policy results in the
path $Q$. Under this convex combination, constraint \eqref{knbudget} reads 
$\sum_{u\in V}\Pr[\text{$u$ is visited by the adaptive policy}]\cdot\E{\min\{\Su,w\}}\leq 2W$.
This is satisfied due to a standard argument used for stochastic
knapsack~\cite{DeanGV08,GuptaKMR11}.
The objective value of this solution is at least $\OPT(\I_1)$, since the expected reward
collected from a node $u$ conditioned on $u$ being visited by the adaptive policy is at
most $\E{\trunc[u]}$. 

Using Theorem~\ref{knapo-round}, we can 
obtain a rooted path $Q$ with $\sum_{v\in Q}\E{\trunc}\geq\OPT(\I_1)/5$ and 
$\wt(Q)\leq 2W$. We sample 
each vertex on $Q-\rt$ independently with probability $\frac{1}{8}$ and return the
resulting rooted path $Q''$. For any $v\in Q$, the probability that the non-adaptive
policy $Q''$ processes $v$ by time $W/2$ is at least $\frac{1}{16}$. This is because, 
taking $T=Q''\cap (Q_{\rt,v}-v)$, this
probability is $\Pr[v\in Q'']\cdot\Pr\bigl[\sum_{u\in T}\Su\leq W/2\bigr]$,
and
\[
\Pr\Bigl[\sum_{u\in T}\Su>W/2\Bigr]
=\Pr\Bigl[\sum_{u\in T}\min\{\Su,W\}>W/2\Bigr]
\leq\frac{\E{\sum_{u\in T}\min\{\Su,W\}}}{W/2}\leq\frac{\wt(Q)/8}{W/2}\leq
0.5.
\]
Since we only obtain positive reward from $v$ if it instantiates to size at most $W/2$,
this implies that we collect expected reward 
$\E{\trunc}$ if this happens. So the total expected reward obtained is at least
$\frac{1}{16}\cdot\sum_{v\in Q}\E{\trunc}\geq\frac{\OPT(\I_1)}{80}$.

\end{proof}

\subsection{\boldmath \tcsko: \csko with distributions of support-size at most 2} 
\label{tcsko} \label{csko-twosup}
Recall that \tcsko denotes the special case of \csko where, for each vertex $v$, the
distribution of $\Sv$ is supported on at most $2$ values; we denote these two values
$\sonev$, $\stwov$, where 
$\sonev\geq\stwov$. By Lemma~\ref{csko-difficult}, to obtain an $O(1)$-approximation for
\tcsko, it suffices to consider the instance $\I_2=\excep[\I](0.5)$, and we focus on such
instances in the sequel. To keep notation simple, we
continue to use $V$ to denote the vertex set of $\I_2$. Then we may assume that the
(size, reward) distribution for each $v\in V$ is $(\sonev,R_v)$ with
probability $p_v$, and $(\stwov,0)$ with probability $1-p_v$, where
(i) $\sonev>W/2\geq\stwov$ and (ii) $p_v\leq 0.5$. Property (i) holds because if 
$\sonev\leq W/2$, then $v$ yields $0$ reward for $\I_2$, so may be discarded; if
$\stwov>W/2$, then $\Pr[\Sv>W/2]=1$, which means that $v$ would not be considered for
$\I_2$. Given (i) the reward when the size is $\stwov$ must be $0$, and (ii) holds because 
$p_v=\Pr[\Sv>W/2]$. 

We first argue that the adaptivity gap for such instances is $1$
(Theorem~\ref{tcsko-nogap}). Next, we show that the above \tcsko problem is equivalent to   
\orientkd, up to constant-factor approximation losses. 
Whenever we say ``equivalent'' below, we always mean equivalent up to
a multiplicative $O(1)$ factor. 
We actually consider another problem, 
{\em knapsack orienteering with knapsack deadlines} (\knapokd), which is the
knapsack-constrained version of \orientkd. We show the equivalence of \tcsko and \knapokd
(Theorems~\ref{okd-to-tcsko} and \ref{tcsko-to-okd});   
by Theorem~\ref{knapvrp}, \orientkd and its knapsack-constrained version \knapokd are
equivalent, 
so this shows the desired equivalence (Corollary~\ref{tcsko-equiv}).

\begin{theorem} \label{tcsko-nogap}
The adaptivity gap for \tcsko (instances of the form $\excep[\I](0.5)$) is $1$. 
\end{theorem}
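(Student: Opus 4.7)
The plan is to show that any adaptive policy for a \tcsko instance of the form $\excep[\I](0.5)$ can be mimicked by a non-adaptive policy achieving exactly the same expected reward. The key structural observation is that $\sonev > W/2$ for every $v \in V$ in such an instance, so at most one vertex with a large-size instantiation can be fully processed within the processing-time budget $W$; moreover, under $\excep[\I]$ only large instantiations yield positive reward. Hence once the execution observes a large size at some vertex, no further positive reward can be collected, so I may assume without loss of generality that the adaptive policy halts immediately upon observing a large instantiation.

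Given such a policy, let $\T$ be its decision tree. At every non-leaf node labeled by a vertex $v$, the ``large'' branch (taken with probability $p_v$) is a terminating leaf that collects $R_v$ if and only if the current completion time is at most $W$, while the ``small'' branch (taken with probability $1 - p_v$) proceeds to the policy's next chosen vertex. Since the policy does not branch further on the large outcome, and the small outcome has the deterministic value $\stwov$, the next vertex visited is a deterministic function of the depth in $\T$. Consequently, the sequence of vertices potentially visited before termination forms a single fixed rooted path $\sg = \rt, v_1, v_2, \ldots, v_\ell$ in $\T$.

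The non-adaptive policy that visits $v_1, v_2, \ldots, v_\ell$ in this order (stopping only when some budget is exceeded) then achieves exactly the same distribution over collected rewards: a positive reward $R_{v_i}$ is collected from $v_i$ precisely when $v_1, \ldots, v_{i-1}$ each instantiate to small, $v_i$ instantiates to large, and $\sum_{j<i} \nsize[v_j]^{(2)} + \nsize[v_i]^{(1)} \le W$, and these events and their probabilities are identical under both policies. Summing over $i$ yields that both policies achieve the same expected reward, establishing the adaptivity gap of $1$. The only step requiring explicit verification is the WLOG reduction of the first paragraph, and I do not anticipate a real obstacle there: after a large instantiation a second large instantiation would push the total processing time above $W$, while any small instantiation carries zero reward under $\excep[\I]$, so no deferred halting can improve the policy.
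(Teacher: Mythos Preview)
Your proof is correct and follows essentially the same approach as the paper. The paper's version is slightly more direct: rather than first performing the WLOG reduction that the policy halts on a large outcome, it simply lets $\sigma$ be the root-to-leaf path in $\T$ corresponding to always taking the small-size branch, and observes that at any node off $\sigma$ the residual processing budget is below $W/2$, so no reward can be collected there; hence the non-adaptive policy $\sigma$ already matches $\T$'s expected reward.
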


\begin{proof}
Let $\T$ be the decision tree of an optimal adaptive policy. 
Consider the (rooted) path $\sigma$ of $\T$ corresponding to the $\stwov$ size
instantiations. 
Then $\T$ cannot collect any reward outside of $\sigma$, since
the residual knapsack budget when we reach any node $v\in\T\setminus\sigma$ is less than $W/2$.  
Thus, the non-adaptive policy represented by $\sigma$ has the same expected reward as
$\T$. 
\end{proof}

The following basic claim will guide us toward the translation between \tcsko and
\knapokd that will be used to show the equivalence.

\begin{claim} \label{expreward}
Consider an instance of \tcsko, and a $\rt$-rooted path $\tau$ representing a non-adaptive
policy. Let $v\in\tau$ be such that $\sum_{w\prec_\tau v}\stwow\leq W-\sonev$. 
Then, the expected reward from $v$ is $p_v\Rv\prod_{w\prec_\tau v}(1-p_w)$.  
\end{claim}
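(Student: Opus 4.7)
The plan is to exploit two structural features of the restricted instance $\excep[\I](0.5)$: the reward from $v$ is obtained only on the large-size branch $\Sv=\sonev$, and a single large-size instantiation at any predecessor is enough to prevent $v$ from completing. Combined with independence across vertices, these observations reduce the calculation to a single product of probabilities.

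First, I would observe that under the $\I_2=\excep[\I](0.5)$ setup, only the instantiation $(\sonev,R_v)$ contributes positive reward; the other atom $(\stwov,0)$ contributes zero. Thus the expected reward from $v$ collected by the non-adaptive policy $\tau$ equals $R_v$ times the probability of the event $\mathcal{E}_v$ that (i) $\Sv=\sonev$ and (ii) $v$'s processing completes by time $W$, i.e., $\sum_{w\prec_\tau v}\Sw+\Sv\leq W$.

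Next, I would analyze when condition (ii) can hold given (i). Since $\sonev>W/2$, condition (ii) requires $\sum_{w\prec_\tau v}\Sw\leq W-\sonev<W/2$. For every predecessor $w$, the size $\Sw$ is either $\sonew>W/2$ or $\stwow\leq W/2$. If any $w\prec_\tau v$ instantiates to $\sonew$, the prefix sum already exceeds $W/2$, which is more than $W-\sonev$, so $v$ cannot complete. Hence $\mathcal{E}_v$ forces $\Sw=\stwow$ for every $w\prec_\tau v$. Conversely, when every predecessor realizes its small size, the hypothesis $\sum_{w\prec_\tau v}\stwow\leq W-\sonev$ guarantees that $v$ does complete.

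Therefore $\mathcal{E}_v$ coincides with the event $\{\Sv=\sonev\}\cap\bigcap_{w\prec_\tau v}\{\Sw=\stwow\}$. Since the size instantiations at distinct vertices are independent, its probability factors as $p_v\prod_{w\prec_\tau v}(1-p_w)$, and multiplying by $R_v$ yields the stated expression. There is no real obstacle here; the claim is essentially an immediate consequence of the ``large instantiations are fatal'' property implicit in the definition of $\excep[\I](0.5)$ together with independence, and the proof should be short.
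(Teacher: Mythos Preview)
Your proof is correct and takes essentially the same approach as the paper: both argue that positive reward from $v$ requires every predecessor to realize its small size (since a single large predecessor already exceeds $W/2>W-\sonev$), then use the hypothesis to see that this condition is also sufficient, and finish by independence. Your write-up is in fact slightly more explicit than the paper's in verifying both directions of the equivalence $\mathcal{E}_v=\{\Sv=\sonev\}\cap\bigcap_{w\prec_\tau v}\{\Sw=\stwow\}$.
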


\begin{proof}
Let $\Piv$ be the random variable denoting the reward collected from node $v$. 
Then $\Piv$ can be positive only if we have $\Sw=\stwow$ for every $w\prec_\tau v$, since
$\soneu>W/2$ for every node $u\in V$. 
Therefore, 
\begin{equation*}
\E{\Piv}=R_v\cdot\Pr\bigl[\Sv=\sonev\,|\,\Sw=\stwow\ \forall w\prec_\tau v\bigr]\cdot
\Pr[\Sw=\stwow\ \forall w\prec_\tau v]
= p_vR_v\prod_{w\prec_\tau v}(1-p_w)
\end{equation*}
where the second equality follows since different vertices are independent.
\end{proof}

\begin{theorem} \label{okd-to-tcsko} \label{hardness1}
Given an $\al$-approximation algorithm for \knapokd, one can obtain an
$O(\al)$-approximation algorithm for \tcsko.
\end{theorem}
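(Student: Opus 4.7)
The plan is to first invoke Lemma~\ref{csko-hard} to reduce to \tcsko instances of the form $\excep[\I](0.5)$, where every vertex $v$ has distribution $(\sonev,R_v)$ w.p.\ $p_v \leq 1/2$ and $(\stwov,0)$ w.p.\ $1-p_v$, with $\sonev > W/2 \geq \stwov$. For such an instance $\I$, I would construct a \knapokd instance $\I'$ on the same metric, root $\rt$, and travel budget $B$, where each vertex $v$ has reward $\pi_v = p_v R_v$, knapsack deadline $\knapd_v = W - \sonev + \stwov$ with deadline-weight $\stwov$, and where the overall knapsack budget is $1$ with knapsack-weight $p_v$. Since $\rt$ has deadline-weight $0$, the deadline constraint for a path $P$ becomes $\sum_{u \prec_P v}\stwou \leq W - \sonev$, which is exactly the hypothesis of Claim~\ref{expreward} guaranteeing that $v$ contributes expected reward $p_v R_v \prod_{w \prec_P v}(1-p_w)$ to the non-adaptive execution of $P$.

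The easy direction is to convert an $\al$-approximate \knapokd solution $\tau$ into a non-adaptive policy for $\I$. Every $v \in \tau$ satisfies its deadline by construction, and $\sum_{v\in\tau}p_v \leq 1$ combined with $1-p \geq e^{-2p}$ for $p \in [0,1/2]$ gives $\prod_{w \prec_\tau v}(1-p_w) \geq e^{-2}$. Summing Claim~\ref{expreward} over $v\in\tau$, the expected \tcsko reward of $\tau$ is at least $e^{-2}\sum_{v\in\tau} p_v R_v = e^{-2}\pi(\tau) \geq \Omega\bigl(\OPT(\I')/\al\bigr)$.

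The core step is the reverse inequality $\OPT(\I') \geq \Omega(\OPT(\I))$. By Theorem~\ref{tcsko-nogap}, an optimal policy for $\I$ is realized by a non-adaptive $\rt$-rooted path $\sigma$; by optimality I may assume every $v\in\sigma$ satisfies its deadline in $\sigma$, since dropping an infeasible $v$ (which contributes $0$) only shrinks the deadline-prefix sums seen by later vertices and scales their survival probabilities up by $1/(1-p_v)\geq 1$. I would then partition $\sigma$ greedily into consecutive chunks $\tau_1,\tau_2,\ldots$, where $\tau_i$ is the maximal prefix of the unprocessed suffix with $\sum_{v\in\tau_i}p_v \leq 1$. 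Since each $p_v \leq 1/2$, maximality forces $\sum_{v\in\tau_i}p_v > 1/2$ for every $i$ except possibly the last, so
\begin{equation*}
\prod_{w \prec \tau_i}(1-p_w) \;\leq\; \exp\Bigl(-\sum_{j<i}\sum_{v\in\tau_j}p_v\Bigr) \;\leq\; e^{-(i-1)/2}.
\end{equation*}
Therefore $\OPT(\I) \leq \sum_i \pi(\tau_i)\,e^{-(i-1)/2} = O\bigl(\max_i \pi(\tau_i)\bigr)$, and some chunk $\tau_{i^*}$ satisfies $\pi(\tau_{i^*}) \geq \Omega(\OPT(\I))$. Let $u$ be the first vertex of $\tau_{i^*}$, and let $P^*$ be the path that visits $\rt$, then $u$, then the remaining vertices of $\tau_{i^*}$ in their $\sigma$-order. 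I would verify $P^*$ is \knapokd-feasible: the triangle inequality gives $d(\rt,u) + d(\tau_{i^*}) \leq d(\sigma_{\rt,u}) + d(\tau_{i^*}) \leq d(\sigma) \leq B$; $\sum_{v\in\tau_{i^*}}p_v \leq 1$ satisfies the knapsack budget; and for each $v \in \tau_{i^*}$, the deadline-weight prefix sum along $P^*$ is no larger than along $\sigma$ (we skipped all vertices of $\tau_1\cup\cdots\cup\tau_{i^*-1}$), so the deadline inherited from $\sigma$ continues to hold. The \knapokd reward of $P^*$ is $\pi(\tau_{i^*}) \geq \Omega(\OPT(\I))$.

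The one genuinely delicate step is the chunking combined with the triangle-inequality shortcut, which lets us bypass the $\sigma$-prefix that would otherwise blow up the knapsack-budget constraint $\sum p_v \leq 1$; the rest of the proof is a direct verification of the \knapokd constraints. Combining the two directions with the additive $O(1)$ loss from Lemma~\ref{csko-hard} yields the claimed $O(\al)$-approximation for general \tcsko.
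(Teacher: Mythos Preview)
Your proposal is correct and follows the same reduction as the paper: identical \knapokd instance, identical use of Claim~\ref{expreward} for the forward direction, and the same appeal to Theorem~\ref{tcsko-nogap} plus a ``drop infeasible vertices'' preprocessing step for the reverse direction. The one genuine difference is in how you extract a \knapokd-feasible witness from the optimal non-adaptive path $\sigma$. You chunk $\sigma$ into segments of $p$-mass at most~$1$, bound the contribution of chunk~$i$ by a geometric factor $e^{-(i-1)/2}$, pick the best chunk, and then shortcut from $\rt$ to its first vertex (this is the step you flag as delicate). The paper instead simply takes the \emph{first} chunk---the maximal rooted prefix $\tau_0$ of $\sigma$ with $\sum_{v\in\tau_0}p_v\leq 1$---and observes that the suffix $\sigma\setminus\tau_0$, run as its own rooted policy, has expected reward at most $\OPT(\I)$, so the factor $e^{-1/2}$ coming from $\sum_{v\in\tau_0}p_v>1/2$ already forces $\tau_0$ to capture a $(1-e^{-1/2})$-fraction of $\OPT(\I)$. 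Since $\tau_0$ is already a rooted path, no shortcut argument is needed. Your chunking argument is slightly more general (it would salvage a good chunk even if the first one were bad), but here the paper's prefix trick is strictly simpler and sidesteps the triangle-inequality step entirely.
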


\begin{proof} 
Consider an instance $\mathcal{I}$ of \tcsko.
Utilizing Claim~\ref{expreward}, we let the rewards be $\{p_vR_v\}_{v\in V}$, and define
the knapsack weights and knapsack deadlines 
to encode the condition 
$\sum_{\text{$w$ visited before $v$}}\stwow\leq W-\sonev$. The additional knapsack
constraint will encode that the total $p_v$-weight of the path should be at most $1$; this 
will ensure that the expected reward obtained for $\I$ from each vertex $v$ on the
\knapokd-solution is $\Omega(p_vR_v)$. 
Formally, define the instance $\mathcal{J}$ of \knapokd with the same metric and
travel budget as $\I$, rewards $\{\pi_vR_v\}_{v\in V}$, knapsack weights
$\{\stwov\}_{v\in V}$ and knapsack 
deadlines $\{W-\sonev+\stwov\}_{v\in V}$, and the additional knapsack constraint 
given by weights $\{p_v\}_{v\in V}$ and budget $1$.
We next argue that: 
(a) 
if $\sg$ is a solution to $\J$, then the non-adaptive policy $\sg$
for $\I$ yields reward at least $\frac{1}{4}\cdot\sum_{v\in\sg}p_vR_v$;
(b) 
$\OPT(\J)=\Omega\bigl(\OPT(\I)\bigr)$.
It follows that an $\al$-approximate solution to $\J$ yields an $O(\al)$-approximation for 
\tcsko. 

Part (a) follows directly from Claim~\ref{expreward}: by the feasibility of $\sg$
for $\J$, for every $v\in\sg$, we have 
$\sum_{w\prec_\sg v}\stwow+\stwov\leq W-\sonev+\stwov$, and so the expected reward
obtained from $v$ is 
$p_vR_v\cdot\prod_{w\prec_\sg v}(1-p_w)\geq p_vR_v\cdot 4^{-\sum_{w\prec_\sg v}p_w}\geq\frac{1}{4}\cdot p_vR_v$.
The first inequality is because $1-x\geq 4^{-x}$ for all $x\leq 0.5$, and recall that
$p_w\leq 0.5$ for all $w\in V$.

For part (b), by Theorem~\ref{tcsko-nogap}, there is no adaptivity gap for \tcsko.
Let $\tau$ be the path represented by an optimal non-adaptive policy to $\mathcal{I}$.
We may assume that $\sum_{w\prec_\tau v}\stwow\leq W-\sonev$ for every $v\in \tau$;
otherwise, we cannot collect positive reward from $v$, and can just delete $v$ from
$\tau$. Let $\tau_0$ be the maximal rooted subpath of $\tau$ having total $p_v$-weight at
most $1$. By construction, $\tau_0$ is a feasible solution to the \knapokd instance $\J$.
We argue that the expected reward obtained from $\tau_0$ for $\I$ is at least
$(1-e^{-0.5})\OPT(\I)$; 
hence, 
\begin{equation*}
\OPT(\J)\geq\sum_{v\in\tau_0}p_vR_v\geq\E{\text{reward of $\tau_0$ for $\I$}}\geq(1-e^{-0.5})\OPT(\I).
\end{equation*}
If $\tau_0=\tau$, the statement clearly holds. So suppose otherwise. 
Note that $\sum_{w\in\tau_0}p_w>0.5$, since $p_v\leq 0.5$ for every node $v$.
So for each $v\in\tau-\tau_0$, 
we have 
$\prod_{w\prec_\tau v}(1-p_w)\leq e^{-0.5}\prod_{w\in\tau-\tau_0: w\prec_\tau v}(1-p_w)$. 
So by Claim~\ref{expreward}, we have
\begin{equation}
\OPT(\I) \leq \sum_{v\in\tau_0}p_vR_v\prod_{w\prec_\tau v}(1-p_w)
+e^{-0.5}\cdot\sum_{v\in\tau-\tau_0}p_vR_v\prod_{w\in\tau-\tau_0:w\prec_\tau v}(1-p_w).
\label{tau0}
\end{equation}
The term $\sum_{v\in\tau-\tau_0}p_vR_v\prod_{w\in\tau-\tau_0:w\prec_\tau v}(1-p_w)$
corresponds to the expected reward of the non-adaptive policy given by the rooted subpath
of $\tau$ that skips all the nodes in $\tau_0$ and moves directly to the first node of
$\tau-\tau_0$.  
So the second term on the RHS of \eqref{tau0} is at most $e^{-0.5}\cdot\OPT(\I)$. It
follows that the expected reward from $\tau_0$ is at least $(1-e^{-0.5})\OPT(\I)$. 
\end{proof}


\begin{theorem} \label{tcsko-to-okd} \label{hardness2}
Given an $\al$-approximation algorithm for \tcsko, one can obtain an
$O(\al)$-approximation algorithm for \knapokd.
\end{theorem}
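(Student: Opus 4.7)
The plan is to compose two reductions: first, use Theorem~\ref{knapvrp} to reduce \knapokd to its base-\vrp \orientkd losing only an $O(1)$ factor, and then reduce \orientkd to \tcsko. For the first step, \orientkd is a rooted-\vrp satisfying the subpath-closure property, since any rooted subpath $\tau'$ of a feasible path $\tau$ has $d(\tau')\leq d(\tau)\leq B$ by the triangle inequality, and the deadline inequality $\sum_{u\preceq v}\wt_u\leq D_v$ at each remaining $v\in\tau'$ only becomes easier when predecessors are removed. So it suffices to show that an $\al$-approximation for \tcsko yields an $O(\al)$-approximation for \orientkd.

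For the second reduction, given an \orientkd instance $\J$ with weights $\{\wt_v\}$, deadlines $\{D_v\}$, rewards $\{\pi_v\}$, root $\rt$ (with $\wt_\rt=\pi_\rt=0$), metric $(V,d)$, and travel budget $B$, I would build a \tcsko instance $\I$ on the same metric, root, and travel budget, with processing-time budget $W:=2\max_v D_v+1$, and, for each $v\neq\rt$, a two-point size/reward distribution placing probability $p_v=\eta:=1/(2n)$ on $(\sonev,R_v)=\bigl(W-D_v+\wt_v,\,\pi_v/\eta\bigr)$ and probability $1-\eta$ on $(\stwov,0)=(\wt_v,0)$. Since we may assume $\wt_v\leq D_v$ (else $v$ is never feasible), and $D_v\leq(W-1)/2$ by choice of $W$, we get $\stwov\leq W/2$ and $\sonev>W/2$, so $\I$ lies in the family $\excep[\I](0.5)$ and $p_vR_v=\pi_v$. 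The key observation, via Claim~\ref{expreward}, is that its hypothesis $\sum_{w\prec_\tau v}\stwow\leq W-\sonev$ translates exactly into deadline-feasibility of $v$ along $\tau$ (i.e., $\sum_{w\preceq_\tau v}\wt_w\leq D_v$), and in that case the expected \tcsko-reward from $v$ is $\pi_v(1-\eta)^{|\{w\prec_\tau v\}|}$; whereas if $v$ is deadline-infeasible along $\tau$, its expected reward is zero, since $\sonev>W/2$ forces any large instantiation of $v$ to exceed $W$ (whether combined with the total small sizes of predecessors---which by infeasibility exceed $W-\sonev$---or with a predecessor that itself instantiated large), and the small-instantiation reward is zero by construction.

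Given this, I will sandwich $\OPT(\J)/2\leq\OPT(\I)\leq\OPT(\J)$. For the upper bound, any non-adaptive \tcsko policy $\tau$ has expected reward at most $\sum_{v\in\tau^F}\pi_v$, where $\tau^F$ is the rooted subpath of $\tau$ consisting of its deadline-feasible vertices; $\tau^F$ is itself \orientkd-feasible since $d(\tau^F)\leq d(\tau)\leq B$ and the deadlines become only easier upon removing predecessors, so this sum is at most $\OPT(\J)$. For the lower bound, viewing the optimal \orientkd solution as a non-adaptive \tcsko policy yields expected reward at least $(1-\eta)^n\OPT(\J)\geq\OPT(\J)/2$. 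Hence an $\al$-approximate non-adaptive \tcsko solution $\tau$ satisfies \tcsko-reward $\geq\OPT(\J)/(2\al)$, and the corresponding feasible subpath $\tau^F$ is a $(2\al)$-approximate solution to \orientkd; composing this with Theorem~\ref{knapvrp} delivers the desired $O(\al)$-approximation for \knapokd. If the given \tcsko algorithm happened to be adaptive, Theorem~\ref{tcsko-nogap} would let me extract a non-adaptive policy (the ``all-small'' path of the adaptive decision tree) of the same expected reward at no loss. The main technical care goes into the zero-reward verification for deadline-infeasible vertices, which is what makes the feasible-subpath extraction lossless; aside from that, the reduction is essentially mechanical.
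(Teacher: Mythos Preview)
Your proof is correct, but it follows a different route from the paper's. The paper reduces \knapokd to \tcsko \emph{directly}: after scaling the extra knapsack budget to $1$ and dispatching vertices with $\knwt_v>0.5$, it sets $p_v:=\knwt_v$, $R_v:=\pi_v/p_v$, $\stwov:=\wt_v$, $\sonev:=W-\knapd_v+\wt_v$, and then observes that this is exactly the inverse of the reduction in Theorem~\ref{okd-to-tcsko}; the equivalence $\OPT(\I)=\Theta(\OPT(\J))$ and the path-extraction argument are then imported wholesale from that proof. In particular, the extra knapsack constraint of \knapokd is encoded by the probabilities $p_v$ themselves (via the product $\prod_{w\prec v}(1-p_w)$), rather than being stripped off beforehand.

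You instead factor the reduction as \knapokd $\to$ \orientkd $\to$ \tcsko, using Theorem~\ref{knapvrp} for the first step and a bespoke reduction with \emph{uniform} $p_v=1/(2n)$ for the second. This is more modular: the generic black-box handles the extra knapsack constraint, and your \orientkd $\to$ \tcsko step is entirely self-contained (your zero-reward argument for deadline-infeasible vertices, via $S_w\geq\stwow=\wt_w$, is clean and correct, and the $(1-\eta)^n\geq 1/2$ bound makes the sandwich tight). The paper's approach is shorter because it reuses Theorem~\ref{okd-to-tcsko}'s analysis verbatim, and it makes the structural point that \knapokd and \tcsko are literal inverses of one another under the natural parameter map; your approach has the virtue of showing that already the base problem \orientkd (without the extra knapsack constraint) captures \tcsko up to constants.
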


\begin{proof} 
Consider an instance $\mathcal{J}$ of \knapokd given by a metric $(V,d)$, root $\rt\in V$,
travel budget $B$, rewards, knapsack weights and knapsack deadlines 
$\{\pi_v,\wt_v,\knapd_v\}_{v\in V}$ respectively, and the additional knapsack constraint
given by weights $\{\knwt_v\}_{v\in V}$ and budget $\knbudg$. By scaling, we may assume
that $\knbudg=1$, and $\knwt_v\in[0,1]$.
We may assume that, for every $v\in V$, the path $\rt,v$ is feasible, as otherwise we can
discard $v$. 
We may further assume that $\knwt_v\leq 0.5$ for all $v\in V$. Any solution to $\J$ may
visit at most one vertex $v$ with $\knwt_v>0.5$, and so visiting the maximum-reward vertex
yields an $O(1)$-approximation to the reward obtained from such vertices. So for the
purposes of obtaining an $O(\al)$-approximation, we can concentrate on vertices $v$ with
$\knwt_v\leq 0.5$. 

We essentially ``invert'' the reduction in Theorem~\ref{okd-to-tcsko} to define the \tcsko
instance $\I$. The metric, travel budget, and root are the same as in $\J$. Take the
processing-time budget $W$ to be sufficiently large, e.g., 
$1+2\max_{v\in V}\knapd_v$.
For every $v\in V$, set
\begin{equation*}
\sonev:=W-\knapd_v+\wt_v, \quad \stwov:=\wt_v, \quad 
p_v:=\knwt_v, \quad \Rv:=\frac{\piv}{p_v}
\end{equation*}
Note that the above \tcsko instance has the special structure that we have throughout
assumed: we have $\sonev>W/2\geq\stwov$ for every $v\in V$, since we have chosen $W$
sufficiently large, and $p_v\leq 0.5$ by the assumptions on $\J$.

Now, the theorem follows by simply observing that for the above instance $\J$, the \tcsko
instance that we create in the proof of Theorem~\ref{okd-to-tcsko} is 
{\em precisely $\I$}. It follows from the proof therein that 
$\OPT(\I)=\Theta\bigl(\OPT(\J)\bigr)$. Moreover, from any solution to $\I$, one can
extract a non-adaptive solution $\tau$ with the same expected reward, as described in the
proof of Theorem~\ref{tcsko-nogap}. We can assume that no subpath of $\tau$ yields larger
expected reward than $\tau$. The 
proof of the upper bound on $\OPT(\I)$ in Theorem~\ref{okd-to-tcsko} 
then shows how to extract from $\tau$, a solution to $\J$ of reward
$\Omega\bigl(\E{\text{reward of $\tau$}}\bigr)$.
\end{proof}

\begin{corollary} \label{tcsko-equiv}
\tcsko is equivalent, in terms of approximation, to \orientkd, up to constant factors.
Thus, there is an $(n+\log B)^{O(\log W)}$-time $O(1)$-approximation algorithm for \tcsko.
\end{corollary}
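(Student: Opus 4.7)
The plan is to chain together the two-sided reductions already in hand and then invoke the forthcoming \orientkd algorithm. Theorems~\ref{okd-to-tcsko} and~\ref{tcsko-to-okd} together state that an $\al$-approximation algorithm for either of \tcsko and \knapokd can be converted into an $O(\al)$-approximation algorithm for the other, so \tcsko and \knapokd are equivalent up to constant factors. To upgrade this to an equivalence between \tcsko and \orientkd, I would show that \knapokd and \orientkd are themselves equivalent up to constants.

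One direction is immediate: \orientkd is the special case of \knapokd in which all of the additional knapsack weights $\knwt_v$ are $0$ (equivalently, $\knbudg=\infty$), so any $\al$-approximation for \knapokd yields an $\al$-approximation for \orientkd with no loss. For the other direction, I would appeal to the generic reduction in Theorem~\ref{knapvrp} with \orientkd as the base rooted-\vrp. The only point to verify is subpath-closure: if $P$ is a feasible rooted \orientkd-path and $P'$ is a rooted-subpath of $P$ (obtained by deleting some non-root vertices and shortcutting), then for every $v\in P'$ we have $d(P'_{\rt,v})\le d(P_{\rt,v})\le\knapd_v$ by the triangle inequality, and the total knapsack weight of nodes on $P'_{\rt,v}$ is at most the corresponding sum along $P_{\rt,v}$, which is $\le\knapd_v$. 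Hence $P'$ is feasible for \orientkd, Theorem~\ref{knapvrp} applies, and an $\al$-approximation for \orientkd yields an $(\al+2)(1+\e)$-approximation for \knapokd. Composing the three constant-factor reductions gives the claimed equivalence between \tcsko and \orientkd.

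For the algorithmic statement, I would plug the $O(1)$-approximation for \orientkd running in time $(n+\log B)^{O(\log W)}$ developed in Section~\ref{okd-alg} into this chain. The two intermediate reductions (Theorem~\ref{knapvrp} and Theorem~\ref{okd-to-tcsko}) are each polynomial-time, so the overall algorithm for \tcsko inherits the same $(n+\log B)^{O(\log W)}$ running time and an $O(1)$ approximation factor. There is no genuine obstacle to this corollary: all nontrivial ingredients are in place, and the main work will be stating the reductions carefully enough that the constants compose. The real content, namely the \orientkd algorithm itself, is deferred to Section~\ref{okd-alg}.
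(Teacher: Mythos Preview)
Your approach is correct and essentially identical to the paper's: chain Theorems~\ref{okd-to-tcsko} and~\ref{tcsko-to-okd} with Theorem~\ref{knapvrp} (after verifying subpath-closure for \orientkd) and then invoke Theorem~\ref{okd-quasipoly}. One small slip: in your subpath-closure check you wrote $d(P'_{\rt,v})\le d(P_{\rt,v})\le\knapd_v$, but the orienteering constraint is the single bound $d(P')\le d(P)\le B$, not a per-vertex bound against $\knapd_v$; your second clause (on knapsack weights) is the correct one for the knapsack-deadline constraints.
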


\begin{proof}
The equivalence follows from Theorems~\ref{okd-to-tcsko} and~\ref{tcsko-to-okd}, 
and Theorem~\ref{knapvrp}, which shows that \orientkd and \knapokd are equivalent.
The approximation guarantee then follows from Theorem~\ref{okd-quasipoly}, which
presents an $O(1)$-approximation for \orientkd.
\end{proof}

\begin{theorem}[\textbf{Weighted Bernoulli size distributions}] \label{bercsko-thm}
There is a polytime $O(1)$-approximation for \csko with weighted Bernoulli size
distributions. 
\end{theorem}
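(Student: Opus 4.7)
The plan is to reduce \csko with weighted Bernoulli size distributions to polynomial-time knapsack orienteering, by exploiting the fact that $\stwov = 0$ for such distributions, which causes the knapsack-deadline constraints in the Theorem~\ref{okd-to-tcsko} reduction to become vacuous.

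First, I would apply Lemma~\ref{csko-difficult} to reduce the problem to obtaining an $O(1)$-approximation on instances of the form $\I_2 = \excep[\I](0.5)$; the proof of that lemma already exhibits explicit polytime $O(1)$-approximations for the other two pieces $\I_1$ and $\I_3$, so only $\I_2$ requires new work. For a weighted Bernoulli distribution $S_v \in \{0, s_v\}$ with $\Pr[S_v = s_v] = p_v$, we have $\Pr[S_v > W/2] = p_v$ whenever $s_v > W/2$, so the vertex set of $\I_2$ consists of vertices with $W/2 < s_v \leq W$ and $p_v \leq 1/2$ (we may discard vertices with $s_v > W$ since they can never yield reward). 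In the \tcsko normalization of Section~\ref{csko-twosup}, this gives an instance with $\sonev = s_v$ and $\stwov = 0$.

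Next, I would apply Theorem~\ref{okd-to-tcsko}, which constructs a \knapokd instance with rewards $p_v R_v$, knapsack weights $\stwov = 0$, knapsack deadlines $W - s_v \geq 0$, and an additional knapsack constraint given by weights $p_v$ and budget $1$. The crucial point is that since all knapsack weights vanish, the knapsack-deadline constraints are trivially satisfied, so this \knapokd instance collapses to a plain rooted \knapo instance: maximize total $p_v R_v$-reward over $\rt$-rooted paths of travel length at most $B$ whose total $p_v$-weight is at most $1$. Corollary~\ref{knapvrp-approx}(a) yields a polytime $(4+\e)$-approximation for \knapo, and the analysis in Theorem~\ref{okd-to-tcsko} (based on Claim~\ref{expreward} and the bound $1-x \geq 4^{-x}$ for $x \leq 1/2$) then lifts this to an $O(1)$-approximate non-adaptive policy for $\I_2$.

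The main conceptual point, rather than a technical obstacle, is to recognize that the general $(n+\log B)^{O(\log W)}$-time route for \tcsko via \orientkd collapses to polynomial time in the weighted Bernoulli case because the obstruction to polynomial-time solvability, namely the knapsack deadlines, disappears when $\stwov = 0$. Combining the $O(1)$-approximation for $\I_2$ with the polytime $O(1)$-approximations for $\I_1$ and $\I_3$ already established in the proof of Lemma~\ref{csko-difficult} yields the claimed polytime $O(1)$-approximation for \csko with weighted Bernoulli size distributions.
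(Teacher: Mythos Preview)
Your proposal is correct and follows essentially the same route as the paper: reduce via Lemma~\ref{csko-difficult} to the $\excep[\I](0.5)$ instance, then observe that in the \knapokd instance produced by Theorem~\ref{okd-to-tcsko} the knapsack weights $\stwov$ all vanish, collapsing it to a plain \knapo instance solvable in polytime by Corollary~\ref{knapvrp-approx}(a). Your exposition is in fact slightly more detailed than the paper's, spelling out why the vertex set of $\I_2$ has the claimed form.
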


\begin{proof}
With weighted Bernoulli size distributions, we are considering the special case of \tcsko
with $\stwov=0$ for all $v\in V$. Recall that due to Lemma~\ref{csko-difficult}, we can
focus on the setting where $\sonev>W/2$, and $p_v:=\Pr[\Sv=\sonev]\leq 0.5$. 
The result now follows from Theorem~\ref{okd-to-tcsko} by observing that the \knapokd
instance $\J$ created in the proof of Theorem~\ref{okd-to-tcsko} is now simply a \knapo
instance. (So Theorem~\ref{okd-to-tcsko} shows a reduction from \csko with weighted
Bernoulli size distributions to \knapo.) 
This is because $\stwov=0$ for all $v\in V$, so the
knapsack-deadlines are trivially satisfied by any rooted path. Since we have a polytime
$O(1)$-approximation for \knapo, the result follows.
\end{proof}


\begin{theorem}[\textbf{Hardness of approximating the non-adaptive optimum}]
\label{nacsko-hard}
Given an $\al$-approximation algorithm for \csko with respect to the non-adaptive optimum,
we can obtain an $O(\al)$-approximation algorithm for \orientkd.
\end{theorem}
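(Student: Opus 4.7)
The plan is to chain Theorem~\ref{tcsko-nogap} with a robustified version of the reduction in the proof of Theorem~\ref{tcsko-to-okd}. The key observation is that for any \tcsko instance of the $\excep[\I](0.5)$-form --- i.e., $\sonev > W/2 \geq \stwov$ and $p_v := \Pr[\Sv = \sonev] \leq 0.5$ for every $v$ --- Theorem~\ref{tcsko-nogap} shows that the adaptive and non-adaptive optima coincide. Hence on any such instance, the assumed $\al$-approximation with respect to the non-adaptive optimum is automatically an $\al$-approximation with respect to the (adaptive) optimum.

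Given an \orientkd instance $\J_0$ with data $(d,\rt,B,\{\pi_v,\wt_v,\knapd_v\})$, I will first view it as a \knapokd instance with a vacuous knapsack constraint (take, say, $\knwt_v = 1/(2n)$ and $\knbudg = 1$) and then apply the construction from the proof of Theorem~\ref{tcsko-to-okd}: pick $W > 2\max_v \knapd_v$ and set $\sonev = W - \knapd_v + \wt_v$, $\stwov = \wt_v$, $p_v = \knwt_v$, $R_v = \pi_v/p_v$. By inspection the resulting \tcsko instance $\I$ lies in the $\excep[\I](0.5)$-form, and parts~(a) and~(b) of Theorem~\ref{okd-to-tcsko}'s proof together give $\OPT(\I) = \Theta(\OPT(\J_0))$. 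Running the given algorithm on $\I$ yields a policy of expected reward $\geq \OPT(\I)/\al$; projecting along the all-$\stwov$ branch of its decision tree (as in the proof of Theorem~\ref{tcsko-nogap}) converts this to a non-adaptive rooted path $\tau$ of the same expected reward. Deleting any $v \in \tau$ that cannot contribute positive reward only helps, so I may assume every $v \in \tau$ satisfies the \orientkd deadline $\sum_{w \preceq_\tau v}\wt_w \leq \knapd_v$.

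The main work is extracting from $\tau$ a feasible \orientkd solution of comparable reward. The argument sketched in the proof of Theorem~\ref{tcsko-to-okd} invokes a step that requires $\tau$ to have no subpath of larger expected reward --- a property enjoyed by the optimal non-adaptive policy but not necessarily by an $\al$-approximate one. I will sidestep this by a segment decomposition: greedily partition $\tau$ into consecutive segments $S_1,\ldots,S_K$, where each $S_k$ is a maximal contiguous block with $\sum_{v \in S_k}p_v \leq 1$. Since $p_v \leq 0.5$, maximality yields $\sum_{v \in S_k}p_v \geq 1/2$ for every $k < K$, hence $q_k := \prod_{w \in S_1\cup\cdots\cup S_{k-1}}(1-p_w) \leq e^{-(k-1)/2}$. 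Using Claim~\ref{expreward},
\begin{equation*}
\text{reward}_\I(\tau) \;\leq\; \sum_{k=1}^K q_k \cdot A_k, \qquad \text{where } A_k := \sum_{v \in S_k} p_v R_v = \sum_{v \in S_k}\pi_v,
\end{equation*}
and summing the geometric series gives $\max_k A_k \geq (1 - e^{-1/2})\cdot\text{reward}_\I(\tau)$.

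Finally, for $k^* = \argmax_k A_k$, I will take $\tau'$ to be the rooted path going from $\rt$ directly to the first vertex of $S_{k^*}$ and then following $S_{k^*}$. The triangle inequality yields $d(\tau') \leq d(\tau) \leq B$, and each $v \in S_{k^*}$ inherits its knapsack deadline from $\tau$ since its $\wt$-prefix in $\tau'$ is no larger than in $\tau$; hence $\tau'$ is a feasible \orientkd solution, with reward $A_{k^*} = \Omega(\text{reward}_\I(\tau)) = \Omega(\OPT(\J_0)/\al)$. The principal obstacle I anticipate is precisely the robustification step above; the remaining verifications are clean consequences of the triangle inequality and monotonicity of $\wt$-prefix sums.
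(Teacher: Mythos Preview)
Your proof is correct and follows essentially the same route as the paper: invoke Theorem~\ref{tcsko-nogap} to equate the adaptive and non-adaptive optima on the $\excep[\I](0.5)$-type instance produced by the construction in Theorem~\ref{tcsko-to-okd}, run the assumed algorithm, and then extract an \orientkd solution from the resulting path. Your segment-decomposition ``robustification'' is sound but unnecessary with your choice $p_v = 1/(2n)$, since then $\sum_{v\in\tau} p_v \leq 1/2$ and the decomposition has a single segment $S_1 = \tau$; the paper simply cites Theorem~\ref{tcsko-to-okd} as a black box, whose proof already handles approximate solutions via the (efficiently enforceable) assumption that no rooted subpath of $\tau$ beats $\tau$.
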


\begin{proof}
An $\al$-approximation algorithm for \csko with respect to the non-adaptive optimum would
yield an $O(\al)$-approximation algorithm for \tcsko, due to
Theorem~\ref{tcsko-nogap}. Hence, by Theorem~\ref{tcsko-to-okd}, we would obtain an
$O(\al)$-approximation algorithm for \knapokd, and hence \orientkd.
\end{proof}

\subsection{Approximation algorithms for \boldmath \orientkd} \label{okd-alg}
We now devise approximation algorithms for \orientkd. We consider rooted \orientkd, but
the ideas carry over to the \ptp problem as well.
Recall that an instance of rooted \orientkd consists of a rooted orienteering instance
with metric $(V,d)$, root $\rt\in V$, length/travel budget $B$, rewards $\{\rewd_v\}_{v\in V}$,
along with nonnegative knapsack weights $\{\wt_v\}_{v\in V}$ and {\em knapsack deadlines}
$\{\knapd_v\}$. A rooted path $P$ is feasible, if $d(P)\leq B$ and 
$\wt(P_{\rt,v})\leq\knapd_v$ for every node $v\in P$;
we sometimes call $\wt(P_{\rt,v})$, the completion time of $v$. 
The goal is to find a
feasible path $P$ that obtains the maximum reward.
By scaling, we may assume that all $d(u,v)$ distances, knapsack weights, and deadlines are
integers. Let $W:=\max_v\knapd_v$. 

\begin{theorem} \label{okd-quasipoly} \label{quasipolytime-kdo} 
There is an $O(1)$-approximation algorithm for \orientkd with $(n+\log B)^{O(\log W)}$
running time. 
\end{theorem}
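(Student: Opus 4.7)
The plan is to reduce \orientkd, incurring only $O(1)$-factor losses, to a polynomially-enumerable family of point-to-point knapsack-orienteering (\knapo) subproblems, by guessing a structural skeleton of an optimal path $P^*$. Set $L := \lceil \log W \rceil$. For each $i \in \{0, 1, \ldots, L\}$, let the portal $\nd_i$ be the last node on $P^*$ whose cumulative path-weight is strictly less than $2^i$, with $\nd_{-1} := \rt$. These portals partition $P^*$ into at most $L+1$ segments $\sigma_i := P^*_{\nd_{i-1}, \nd_i}$, each of total node-weight strictly less than $2^i$; moreover, every non-first node $v \in \sigma_i$ has completion time at least $2^{i-1}$ in $P^*$, which by optimality forces $\knapd_v \geq 2^{i-1}$.

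To ensure global deadline-feasibility when concatenating segment-level solutions, I will use a shrunken per-segment weight budget of $2^{i-2}$; then the cumulative weight at any node in segment $i$ of the output will be at most $\sum_{j \leq i} 2^{j-2} \leq 2^{i-1} \leq \knapd_v$. To still recover $\Omega\bigl(\sum_{v \in \sigma_i}\rewd_v\bigr)$ reward under this reduced budget, further subdivide each $\sigma_i$ at $O(1)$ interior sub-portal nodes into contiguous pieces of weight at most $2^{i-2}$, treating any ``heavy'' node with $\wt_v > 2^{i-2}$---of which $\sigma_i$ contains at most a constant number---as its own singleton piece. Pigeonhole then gives a sub-segment of $\sigma_i$ with reward at least a constant fraction of $\sum_{v \in \sigma_i}\rewd_v$.

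The algorithm enumerates (i) the endpoints $(a_i, b_i) \in V^2$ of the chosen sub-segment within each $\sigma_i$, amounting to $O(\log W)$ vertices drawn from $V$ and thus $n^{O(\log W)}$ options; and (ii) for each consecutive pair of enumerated nodes in the prescribed traversal order $\rt, a_0, b_0, a_1, b_1, \ldots, a_L, b_L$, a travel-distance bucket from $\{0, 1, 2, 4, \ldots, 2^{\lceil \log B \rceil}\}$, at $(\log B)^{O(\log W)}$ options. The total enumeration size is thus $(n + \log B)^{O(\log W)}$. For each enumeration whose bucketed distances sum to at most $2B$, and each $i$, I solve a \ptp-\knapo instance with endpoints $(a_i, b_i)$, the enumerated travel budget, weight budget $2^{i-2}$, and admissible vertices those with $\knapd \geq 2^{i-1}$, via Corollary~\ref{knapvrp-approx}(a) for an $O(1)$-approximation. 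Concatenate these sub-paths (shortcutting repeated nodes), additionally compare against the trivial solution visiting a single maximum-reward vertex, and return the best feasible candidate found.

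Feasibility of the output is immediate from the weight-budget choice. For the reward analysis on the enumeration matching $P^*$'s skeleton, each \ptp-\knapo instance admits the corresponding sub-segment of $P^*$ as a feasible solution and hence returns reward $\Omega\bigl(\sum_{v}\rewd_v\bigr)$ over that sub-segment; summing over $i$ yields $\Omega(\OPT)$. The main obstacle will be the treatment of heavy nodes within the sub-portal decomposition: one must isolate them as singletons so that the per-segment sub-portal count stays $O(1)$ (preserving the $(n + \log B)^{O(\log W)}$ enumeration bound), and also ensure that the pigeonhole argument recovers an $\Omega(1)$ fraction of $\sum_{v \in \sigma_i}\rewd_v$ even when $\sigma_i$'s reward is dominated by a heavy vertex---the latter being covered by the max-vertex candidate included in the algorithm.
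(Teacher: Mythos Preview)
Your high-level strategy---guess $O(\log W)$ portal vertices along $P^*$, solve \ptp-\knapo between them, and concatenate---matches the paper's (Lemma~\ref{okd-structhm} and Algorithm~\ref{algorithmKDOquasi}), but two steps do not go through as written.

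\textbf{Travel-budget feasibility.} Your power-of-two bucketing of segment travel lengths means that, for the enumeration matching $P^*$, each bucket can be up to twice the true segment length; your own filter ``sum $\leq 2B$'' already reflects that the concatenated output can have length $2B$, not $B$, and you never bring it back down. The paper handles this by additionally guessing, for each segment, a midpoint $m_j$ and bucketing the \emph{two-point regret} $\dreg(P^*_{u_{j-1},u_j},m_j)$; it then keeps only the half of the segment with the smaller regret (losing a factor~$2$ in reward but halving the regret), which yields $\sum_j \dist_j \leq d(P^*) \leq B$ exactly (Lemma~\ref{okd-structhm}~\ref{okd-len}).

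\textbf{Knapsack-deadline feasibility with heavy nodes.} Your bound ``cumulative weight $\leq \sum_{j\leq i} 2^{j-2}\leq 2^{i-1}$'' holds only when every chosen sub-segment has total weight at most $2^{j-2}$. If the pigeonhole winner for some segment is a heavy singleton $v$ with $\wt_v\in(2^{i-2},2^i)$, then even with earlier segments contributing at most $2^{i-2}$, the cumulative weight at $v$ is up to $2^{i-2}+\wt_v$, whereas your admissibility filter only gives $\knapd_v\geq\max\{2^{i-1},\wt_v\}$; this can fail (and it also contaminates all later segments). Discarding such segments and falling back on the single max-reward vertex does not recover $\Omega(\OPT)$ when the heavy-singleton reward is spread across many segments. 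The paper sidesteps this via several coupled devices: it uses base $\zeta=1.5$ (so $\zeta^j\leq\zeta^{j-1}+\zeta^{j-2}$), defines admissibility by $\knapd_v\geq\lb_j+\wt_v$ where $\lb_j=\sum_{h<j}\max\{\zeta^h,\wt_{u_h}\}$ explicitly tracks the guessed portal weights, strips the portals $u_{j-1},u_j$ out of the per-segment \knapo paths (recovering their reward via a separate portal-only candidate $\rt,u_0,\ldots,u_{k'}$), and finally partitions the segments into three candidates $Z^0,Z^1,Z^2$ by $j\bmod 3$ so that kept segments are spaced three indices apart---precisely the gap needed for the telescoping in Lemma~\ref{okdalg-feas} to close.
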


Theorem~\ref{okd-quasipoly} is the main result and focus of this section.
At the end of this section, 
we also present a polytime
$O\bigl(\log(\frac{W}{\knapd_{\min}})\bigr)$-approximation algorithm, where $\knapd_{\min}$
is the minimum non-zero knapsack deadline (Theorem~\ref{okd-logthm}).

For brevity, we refer to $(n+\log B)^{O(\log W)}$ running time as ``almost quasi-polytime.'' 
We may assume that: (i) $\wt_v\leq\knapd_v$ for every $v\in V$, as otherwise, we can discard
$v$; and (ii) $\rewd_\rt=0$, since a $c$-approximation algorithm when we reset $\rewd_\rt$ to
$0$ implies a $c$-approximation with the original rewards; (iii) $\wt_\rt=0$, since we can
always set this and work with the modified deadlines 
$\{\knapd'_v:=\knapd_v-\wt_\rt\}_{v\in V}$.



Our approach is similar to that used for deadline TSP in~\cite{FriggstadS17}.
Consider an optimal path $P^*$. Let $\opt=\rewd(P^*)$ be the optimal value of the \orientkd
instance. 
Our strategy is to guess a set of $O(\log W)$ portal
vertices along $P^*$, along with some auxiliary information to estimate the lengths of
the segments of $P^*$ between consecutive portal vertices, 
and then (approximately) solve \knapo instances to obtain suitable paths between the
portal vertices. 

Let $\zeta\geq 1$ be such that $\zeta^2\leq\zeta+1$; e.g., $\zeta=1.5$.
Define $u_{-1}:=\rt$. For $j\geq 0$, let $u_j$ be the first node on $P^*$ after $u_{j-1}$
such that $\wt(P^*_{u_{j-1},u_j}-u_{j-1})\geq\zeta^j$. 
Note that 
$\wt\bigl(P^*_{u_{j-1},u_j}-\{u_{j-1},u_j\}\bigr)\leq W_j:=\ceil{\zeta^j}-1$.
Let $k$ be the largest integer for which $u_k$ is well defined. 
If $u_k$ is the end-point of $P^*$ define $k^\prime:=k$; otherwise, define $u_{k+1}$ as
the end-point of $P^*$ other than $\rt$, and $k^\prime:=k+1$. 
Note that $k'\leq 1+\log_\zeta W=O(\log W)$.  
Also, for any index $j\in\dbrack{k^\prime}$, we have 
$\wt(P^*_{\rt,u_{j-1}})\geq\lb_j:=\sum_{i=0}^{j-1} \max\{\zeta^i,\wt_{u_i}\}$.

We guess the $u_j$-nodes and the lengths of the $P^*_{u_{j-1},u_j}$ segments.
As in the structural result for \csko (Theorem~\ref{strucdthm}), instead of guessing the
actual length $d(P^*_{u_{j-1},u_j})$, we guess the two-point regret of $P^*_{u_{j-1},u_j}$
with respect to a mid-point node $m_j$
within a factor of $2$, to reduce the choices for each index $j$ to $O(\log B)$ at the
expense of a factor-$2$ loss in the reward.
%
This yields the following structural result for \orientkd.

\begin{lemma}\label{guesskdo} \label{okd-structhm}
For some $k^\prime\leq 1+\ceil{\log_\zeta W}$, there are portal vertices 
$\{u_j\}_{j\in\dbrack{k^\prime}}$, and for each 
index $j\in\dbrack{k'}$,
an auxiliary vertex $m_j$, integer $\gm_j\geq 0$, and a $u_{j-1}$-$u_j$ path 
$\spath_j$, such that the following hold.

\begin{enumerate}[label=(\alph*), topsep=0.2ex, noitemsep, leftmargin=*]
\item \textnormal{(Distance)} For every $j\in\dbrack{k'}$, we have 
$d(\spath_j)\leq\dist_j:=2^{\gm_j}-1+d(u_{j-1},m_j)+d(m_j,u_j)$. 
\label{okd-dist}

\item \textnormal{(Total length)} $\sum_{j=0}^{k'}\dist_{j}\leq B$. 
\label{okd-len} 

\item \textnormal{(Reward)} 
$\sum_{j=0}^{k'}\rewd(\spath_j-u_{j-1})\geq\frac{\opt}{2}$. \label{okd-rewd} 

\item \textnormal{(Size)} For each $j\in\dbrack{k'}$, we have 
$\wt\bigl(\spath_j-\{u_{j-1},u_j\}\bigr)\leq W_j:=\ceil{\zeta^j}-1\leq\zeta^j$.
\label{okd-size} 

\item \textnormal{(Feasibility)} The concatenated path
$\spath:=\spath_o,\spath_1,\ldots,\spath_{k^\prime}$ is feasible for \orientkd. 
\label{okd-feas} 

\item \textnormal{(Lower bound)} For each $j\in\dbrack{k^\prime}$ and
$v\in\spath_j-u_{j-1}$, we have $\wt(\spath_{\rt,v})\geq\lb_j+\wt_v$, where
$\lb_j:=\sum_{h=0}^{j-1} \max\{\zeta^h,\wt_{u_h}\}$. \label{okd-lb}
\end{enumerate}
\end{lemma}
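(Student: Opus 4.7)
The plan is to derive the portals $\{u_j\}$, midpoints $\{m_j\}$, integers $\{\gm_j\}$, and paths $\{\spath_j\}$ from an optimal \orientkd solution $P^*$ of reward $\opt$, adapting the two-point regret decomposition from the proof of Theorem~\ref{strucdthm} to the $\wt$-weighted setting. First, I would define the portals greedily: starting with $u_{-1}:=\rt$, let $u_j$ for $j\geq 0$ be the first node on $P^*$ after $u_{j-1}$ with $\wt(P^*_{u_{j-1},u_j}-u_{j-1})\geq\zeta^j$; let $k$ be the largest index for which this is defined, and if $u_k$ is not the far endpoint of $P^*$, let $u_{k+1}$ be that endpoint and set $k'=k+1$ (else $k'=k$). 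Since the total $\wt$ along $P^*$ is at most $W$, we have $k'\leq 1+\ceil{\log_\zeta W}$. Property~\ref{okd-size} then follows because the node just before $u_j$ on $P^*$ witnesses interior weight strictly less than $\zeta^j$, hence $\wt(\spath_j-\{u_{j-1},u_j\})\leq\ceil{\zeta^j}-1=W_j$.

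Next, for each $j\in\dbrack{k'}$, I would choose a reward-balancing midpoint $m_j\in P^*_{u_{j-1},u_j}$ so that both halves carry reward at least $\rewd(P^*_{u_{j-1},u_j})/2$, let $\gm_j$ be the largest integer with $2^{\gm_j}-1\leq\dreg(P^*_{u_{j-1},u_j},m_j)$, and define $\spath_j$ as the half with smaller regret concatenated with the shortcut edge through $m_j$, exactly as in Theorem~\ref{strucdthm}. The same short calculation gives $d(\spath_j)\leq\dist_j\leq d(P^*_{u_{j-1},u_j})$ and $\rewd(\spath_j-u_{j-1})\geq\tfrac{1}{2}\rewd(P^*_{u_{j-1},u_j})$; summing over $j$ and using $d(P^*)\leq B$ together with the normalization $\rewd_\rt=0$ yields properties~\ref{okd-dist}, \ref{okd-len}, and~\ref{okd-rewd}. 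Property~\ref{okd-feas} is then immediate because the node-sequence of $\spath$ is a subsequence of $P^*$'s, so $\wt(\spath_{\rt,v})\leq\wt(P^*_{\rt,v})\leq\knapd_v$ for every $v\in\spath$.

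The main obstacle I foresee is property~\ref{okd-lb}. The inclusion $u_h\in\spath_h-u_{h-1}$ gives $\wt(\spath_h-u_{h-1})\geq\wt_{u_h}$ for free, so the $\wt_{u_h}$ term in $\max\{\zeta^h,\wt_{u_h}\}$ is easy; the challenge is to recover the $\zeta^h$ term after the regret-based halving, which could drop most of the interior weight of a segment. To address this, I would refine the midpoint choice: within the contiguous reward-midpoint region of $P^*_{u_{j-1},u_j}$, pick $m_j$ so that the half later kept by the regret rule also inherits a constant fraction of the interior weight (which, by the definition of $u_j$, is at least $\zeta^j-\wt_{u_j}$). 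The Fibonacci-type inequality $\zeta^2\leq\zeta+1$ is then used to amortize the resulting constant-factor loss in $\zeta^h$ across consecutive segments, so that the telescoping lower bound $\lb_j$ still holds. Coordinating this weight-preserving midpoint choice with the regret-minimizing half selection underlying~\ref{okd-dist}, while maintaining the factor-$\tfrac{1}{2}$ reward guarantee, is the technical crux I would need to work out in detail.
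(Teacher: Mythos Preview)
Your construction of the portals $u_j$, midpoints $m_j$, integers $\gm_j$, and paths $\spath_j$ via the two-point-regret halving is exactly what the paper does, and your arguments for properties \ref{okd-dist}--\ref{okd-feas} match the paper's.

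Where you diverge is property~\ref{okd-lb}. You correctly observe that after halving, $\wt(\spath_h-u_{h-1})$ need not be $\geq\zeta^h$, so the inequality $\wt(\spath_{\rt,v})\geq\lb_j+\wt_v$ as literally written can fail. But the elaborate ``weight-preserving midpoint'' fix you outline is not needed, and is not what the paper does. The paper simply says that since each $\spath_j$ is a node-subsequence of $P^*_{u_{j-1},u_j}$, properties \ref{okd-size}--\ref{okd-lb} are inherited. The point is that the only way \ref{okd-lb} is ever used (in Lemma~\ref{okdalg-totrewd}) is, together with \ref{okd-feas}, to conclude that every $v\in\spath_j-u_{j-1}$ lies in $V_j$, i.e., satisfies $\knapd_v\geq\lb_j+\wt_v$. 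That conclusion follows directly from $P^*$ rather than from $\spath$: any such $v$ lies in $P^*_{u_{j-1},u_j}-u_{j-1}$, hence $\wt(P^*_{\rt,v})\geq\wt(P^*_{\rt,u_{j-1}})+\wt_v\geq\lb_j+\wt_v$ by the preamble, and feasibility of $P^*$ then gives $\knapd_v\geq\lb_j+\wt_v$. So the ``lower bound'' is really a statement about $P^*$ (equivalently, about membership of $v$ in $V_j$), and passes to $\spath_j$ simply because its nodes are a subset of those of $P^*_{u_{j-1},u_j}$; no refinement of the midpoint choice and no use of $\zeta^2\leq\zeta+1$ is required here.
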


\begin{proof}
The $u_j$ portal vertices are as defined earlier, using the optimal path $P^*$.
The $\spath_j$ paths will be subpaths of the $P^*_{u_{j-1},u_j}$ paths, so
properties \ref{okd-size}--\ref{okd-lb} hold.  
Let $m_j\in P^*_{u_{j-1},u_j}$ and $\gm_j\geq 0$ be such that 
$\rewd(P^*_{u_{j-1},m_j}),\rewd(P^*_{m_j,u_j})\geq\rewd(P^*_{u_{j-1},u_j})/2$, and 
$2^{\gm_j}-1\leq\dreg(P^*_{u_{j-1},u_j},m_j)<2^{\gm_j+1}-1$. Then, as in the proof of
Theorem~\ref{strucdthm}, we take $\spath_j$ to be $P^*_{u_{j-1},m_j},u_j$ if
$\dreg(P^*_{u_{j-1},m_j})\leq\dreg(P^*_{m_j,u_j})$ and $P^*_{m_j,u_j}$ otherwise.
As in the proof of Theorem~\ref{strucdthm}, this shows properties
\ref{okd-dist}--\ref{okd-rewd}. 
\end{proof}

\SetAlgoProcName{Algorithm}{Procedure}
\begin{procedure}[ht!]
\SetKwInput{KwInput}{Input}
\SetKwInput{KwOutput}{Output}
\caption{OrientKD-Alg(): \qquad 
\textnormal{// almost quasi-polytime approximation for \orientkd}
\label{algorithmKDOquasi}}
\SetKwComment{simpc}{// }{}
\SetCommentSty{textnormal}
\DontPrintSemicolon
\KwInput{\orientkd instance; $k'$, portal vertices $\{u_j\}_{j\in\dbrack{k'}}$, distance
  bounds $\{\dist_j\}_{j\in\dbrack{k'}}$ satisfying the properties in
  Lemma~\ref{okd-structhm}; an $\al$-approximation algorithm $\Alg$ for \ptp-\knapo}
\KwOutput{Feasible \orientkd solution}

Define $V_j:=\{v\in V: \knapd_v\geq\lb_j+\wt_v\}$, for all $j\in\dbrack{k'}$. \;

\For{$j\gets 0$ \KwTo $k'$}
{Obtain a $u_{j-1}$-$u_j$ path $Q^j$ by running $\Alg$ on the \ptp-\knapo
instance 
with start node $u_{j-1}$, end node $u_j$, length budget $\dist_j$, knapsack weights  
$\wt_v$ for all $v\in V-\{u_{j-1},u_j\}$ and $0$ for $v\in\{u_{j-1},u_j\}$, 
knapsack budget $W_j$, and rewards $\rewd_v$ for all 
$v\in V_j-\bigcup_{h=0}^{j-1} Q^h$ and $0$ for all other $v$.
By shortcutting, we may assume that $Q^j$ only visits nodes in
$\bigl(V_j-\bigcup_{h=0}^{j-1} Q^h\bigr)+u_{j-1}$.} \label{okdalg-iter}

For $\ell\in\{0,1,2\}$, let $Z^\ell$ be the path given by the node-sequence
$\rt,\{Q^j-u_{j-1}-u_j\}_{j\in\dbrack{k'}:j=\ell\bmod 3}$. 
Let $Z^{\max}\in\{Z^1,Z^2,Z^3\}$ be the path that collects maximum reward.
\; \label{okdalg-zpath}

\Return $Z^{\max}$ or the path $\rt,u_0,u_1,\ldots,u_{k'}$, whichever gathers
greater reward. 
\end{procedure}

\vspace*{-2ex}
\subparagraph*{Analysis.}
Let $P$ be the path obtained by concatenating the $Q^j$ paths for all
$j\in\dbrack{k'}$. We argue that $P$ obtains large reward, and that each $Z^\ell$ path is
feasible. The approximation guarantee of Theorem~\ref{okd-quasipoly} then follows. The
running time is determined by the time needed to enumerate the quantities in
Lemma~\ref{okd-structhm}, which is $(n+\log B)^{O(\log W)}$.

\begin{lemma}[Follows from~\cite{ChekuriK17}] \label{okdalg-totrewd}
We have $\rewd(P)\geq\opt/2(\al+1)$.
\end{lemma}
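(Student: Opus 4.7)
The plan is to charge $\rewd(P)$ against $\opt$ by comparing each $Q^j$ with the structural subpath $\spath_j$ supplied by Lemma~\ref{okd-structhm}, in the style of the iterative/residual argument of~\cite{ChekuriK17}. The central competitor in iteration~$j$ will be obtained by taking $\spath_j$, deleting the nodes already collected by $Q^0,\ldots,Q^{j-1}$, and shortcutting the result into a $u_{j-1}$-$u_j$ path via the triangle inequality.

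First I would verify that this competitor is feasible for the \ptp-\knapo instance $\Alg$ is run on in iteration~$j$. Property~\ref{okd-dist} combined with the triangle inequality (shortcutting can only decrease length) bounds its travel length by $\dist_j$; property~\ref{okd-size} bounds its knapsack weight by $W_j$; and property~\ref{okd-lb}, combined with feasibility of the concatenated path $\spath$, gives $\knapd_v \geq \wt(\spath_{\rt,v}) \geq \lb_j + \wt_v$ for every $v \in \spath_j - u_{j-1}$, so every such $v$ lies in $V_j$. In the iteration-$j$ reward scheme, where only vertices in $V_j \setminus \bigcup_{h<j} Q^h$ carry positive reward, this competitor collects exactly $\rewd(\spath_j - u_{j-1}) - \rewd\bigl(\spath_j \cap \bigcup_{h<j} Q^h\bigr)$, so the $\al$-approximation guarantee of $\Alg$ yields
\[
\rewd(Q^j - u_{j-1}) \;\geq\; \tfrac{1}{\al}\Bigl(\rewd(\spath_j - u_{j-1}) - \rewd\bigl(\spath_j \cap {\textstyle\bigcup_{h<j}} Q^h\bigr)\Bigr).
\]

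Finally, I would sum this over $j \in \dbrack{k'}$. The first right-hand term totals at least $\opt/2$ by property~\ref{okd-rewd}. The one subtle step, and the source of the $\al+1$ denominator, is bounding the second term: since $P^*$ is a simple path, each vertex of $V$ belongs to at most one $\spath_j$, so $\sum_j \rewd\bigl(\spath_j \cap \bigcup_{h<j} Q^h\bigr)$ double-charges each node of $P$ at most once and is therefore at most $\rewd(P)$. The shortcutting in step~\ref{okdalg-iter} makes the sets $Q^j - u_{j-1}$ pairwise disjoint and, with $\rewd_\rt = 0$, gives $\rewd(P) = \sum_j \rewd(Q^j - u_{j-1})$. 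Putting these together yields $\rewd(P) \geq \bigl(\opt/2 - \rewd(P)\bigr)/\al$, and rearranging produces $\rewd(P) \geq \opt/\bigl(2(\al+1)\bigr)$, as claimed.
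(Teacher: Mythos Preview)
Your approach is essentially the paper's, and the argument is correct in spirit, but there is a small bookkeeping slip in the ``subtle step.'' The claim that each vertex of $V$ belongs to at most one $\spath_j$ is false: consecutive paths $\spath_j$ and $\spath_{j+1}$ share the portal vertex $u_j$ (it is the end of the former and the start of the latter). Consequently, in the sum $\sum_j \rewd\bigl(\spath_j \cap \bigcup_{h<j} Q^h\bigr)$ a portal vertex $u_\ell$ can be charged twice, once at index $\ell+1$ (always, since $u_\ell\in Q^\ell$) and once at index $\ell$ (if $u_\ell$ happened to be picked up by some earlier $Q^h$, which nothing rules out). So the bound by $\rewd(P)$ does not follow from the stated disjointness.

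The fix is the one the paper uses implicitly: work with $\spath_j - u_{j-1}$ rather than $\spath_j$. These sets \emph{are} pairwise disjoint (each $u_\ell$ lies in $\spath_\ell - u_{\ell-1}$ but not in $\spath_{\ell+1} - u_\ell$), and your per-iteration inequality is actually sharper in that form:
\[
\rewd(Q^j - u_{j-1}) \;\ge\; \tfrac{1}{\al}\Bigl(\rewd(\spath_j - u_{j-1}) - \rewd\bigl((\spath_j - u_{j-1}) \cap {\textstyle\bigcup_{h<j}} Q^h\bigr)\Bigr).
\]
Now the correction terms sum to at most $\rewd(P)$ by genuine disjointness, and the rest of your derivation goes through unchanged.
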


\begin{proof}
This follows from~\cite{ChekuriK17}, since the problem of finding a maximum
$\rewd\bigl(\bigcup_{j=0}^{k'}Q^j\bigr)$ collection of $Q^j$-paths is an instance of the
maximum-coverage problem with group budget constraints. We include a proof for
completeness. We have 
$\rewd(Q^j-u_{j-1})\geq\frac{1}{\al}\cdot\rewd\bigl(\spath_j-\bigcup_{h=0}^{j-1}Q^h\bigr)$,
since $\spath_j$ is a feasible solution to the \knapo instance considered in iteration
$j$ of step~\ref{okdalg-iter}, due to parts \ref{okd-dist}, \ref{okd-size} of
Lemma~\ref{okd-structhm}. Summing this over all $j\in\dbrack{k'}$ yields
$\rewd(P)\geq\frac{1}{\al}\cdot\bigl[\sum_{j\in\dbrack{k'}}\rewd(\spath_j-u_{j-1})-\rewd(P)\bigr]$,
since $\spath_j-u_{j-1}\sse V_j$ by parts~\ref{okd-feas}, \ref{okd-lb} of
Lemma~\ref{okd-structhm}.  
The statement now follows from Lemma~\ref{okd-structhm} \ref{okd-rewd}.
\end{proof}

\begin{lemma} \label{okdalg-feas}
Each path $Z^\ell$ in step~\ref{okdalg-zpath} is feasible.
\end{lemma}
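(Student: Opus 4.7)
The plan is to verify separately the two conditions defining feasibility for \orientkd applied to $Z^\ell$: the travel-budget constraint $d(Z^\ell) \leq B$, and the knapsack-deadline constraint $\wt(Z^\ell_{\rt,v}) \leq \knapd_v$ at each $v \in Z^\ell$. The key observation I would lean on throughout is that $Z^\ell$ is a subsequence of the path $P$ obtained by concatenating $Q^0, Q^1, \ldots, Q^{k'}$ (identifying the portal vertex $u_j$ shared by consecutive $Q^j$ and $Q^{j+1}$), since $Z^\ell$ retains $\rt$ together with the internal nodes of those $Q^j$ for which $j \equiv \ell \pmod 3$.

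For the length bound, I would just invoke the triangle inequality. Since $Z^\ell$ arises by shortcutting $P$ (deleting portal vertices and the internal nodes of $Q^j$ for $j \not\equiv \ell \pmod 3$), we have $d(Z^\ell) \leq d(P) \leq \sum_{j=0}^{k'} d(Q^j) \leq \sum_{j=0}^{k'} \dist_j \leq B$, using that $Q^j$ respects the length budget $\dist_j$ of iteration $j$'s \knapo instance, and using part~\ref{okd-len} of Lemma~\ref{okd-structhm}.

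For the deadline bound, fix $v \in Z^\ell - \rt$ and let $j$ be the (unique) index with $v$ an internal node of $Q^j$, so $j \equiv \ell \pmod 3$. By step~\ref{okdalg-iter}, $v \in V_j$, hence $\knapd_v \geq \lb_j + \wt_v$. The prefix of $Z^\ell$ up to and including $v$ consists of $\rt$, every internal node of $Q^h$ for $h < j$ with $h \equiv \ell \pmod 3$, and the internal nodes of $Q^j$ from $u_{j-1}$ up through $v$. Since the internal weight of each $Q^h$ sums to at most $W_h$ (by the knapsack budget of that \knapo subproblem, together with $\wt_{u_{h-1}} = \wt_{u_h} = 0$ in that instance), this yields $\wt(Z^\ell_{\rt,v}) \leq \sum_{h \equiv \ell \pmod 3,\, h \leq j} W_h$, so it suffices to establish $\sum_{h \equiv \ell \pmod 3,\, h \leq j} W_h \leq \lb_j + \wt_v$.

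The core of the argument---and the main obstacle---is precisely this last inequality, and this is where the design choice $\zeta^2 \leq \zeta + 1$ pays off. Using $W_h \leq \zeta^h$ and $\lb_j \geq \sum_{h=0}^{j-1}\zeta^h$, the goal reduces to $\zeta^j \leq \sum_{h < j,\, h \not\equiv \ell \pmod 3}\zeta^h + \wt_v$. For $j \geq 2$, both $j-1$ and $j-2$ lie in the complement residue class modulo $3$, so the right-hand side is at least $\zeta^{j-1} + \zeta^{j-2} = \zeta^{j-2}(\zeta+1) \geq \zeta^{j-2}\cdot\zeta^2 = \zeta^j$, and the $\wt_v$ slack is unused. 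The two boundary cases $j \in \{0,1\}$ I would handle directly using the tighter exact value $W_j = \lceil \zeta^j \rceil - 1$ (giving $W_0 = 0$, $W_1 = 1$) together with $\lb_0 = 0$ and $\lb_1 \geq 1$, which yield $0 \leq 0 + \wt_v$ and $1 \leq 1 + \wt_v$ respectively.
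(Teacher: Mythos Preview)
Your proof is correct and follows essentially the same approach as the paper: shortcutting for the length bound, bounding $\wt(Z^\ell_{\rt,v})$ by the sum of the internal $W_h$-budgets for $h\leq j$ with $h\equiv\ell\pmod 3$, and then using $\zeta^j\leq\zeta^{j-1}+\zeta^{j-2}$ together with $\lb_j\geq\sum_{h<j}\zeta^h$ to close the gap. Your presentation is in fact a bit more direct than the paper's, which routes the estimate through $\wt(P_{\rt,u_{j-3}})$ before arriving at the same key inequality; your version avoids that detour by exploiting immediately that $Z^\ell$ contains no portal vertices.
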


\begin{proof}
We have $d(P)\leq B$ by construction, due to Lemma~\ref{okd-structhm} \ref{okd-len}, so
each $Z^\ell$ path satisfies the length budget. For each $j\in\dbrack{k'}$ and each 
$v\in Q^j-u_{j-1}$, we have 
\begin{equation}
\wt(P_{\rt,v})\leq\sum_{h=0}^{j-1}\Bigl(\wt(Q^h-u_{h-1}-u_h)+\wt_{u_h}\Bigr)+W_j+\wt_v
\leq\bigl(\lb_j+\wt_v\bigr)+\sum_{h=0}^{j-1}\wt_{u_h}+\zeta^j. \label{complv}
\end{equation}
The first inequality follows because $\wt(Q^j_{u_{j-1},v}-u_{j-1}-v)\leq W_j$, and the
second follows from the definition of $\lb_j$, 
since $\wt(Q^h-u_{h-1}-u_h)\leq W_h\leq\zeta^h$ for all $h=0,\ldots,j-1$.

Now fix $\ell\in\{0,1,2\}$. Consider an index $j\in\dbrack{k'}$ where $j=\ell\bmod 3$, and
some $v\in Q^j-u_{j-1}-u_j$. First suppose $j\geq 3$. We have 
\begin{equation*}
\begin{split}
\wt(Z^\ell_{\rt,v}) &\leq \wt(P_{\rt,u_{j-3}})+\zeta^j-\sum_{h=0}^{j-3}\wt_{u_h} \\
& \leq \lb_{j-3}+\zeta^{j-3}+\zeta^j \leq \lb_{j-3}+\zeta^{j-3}+\zeta^{j-2}+\zeta^{j-1}
\leq \lb_j.
\end{split}
\end{equation*}
The second inequality follows by applying \eqref{complv} to $v=u_{j-3}$; the third is
because $\zeta^j\leq\zeta^{j-1}+\zeta^{j-2}$; and the last inequality because from the
definition of $\lb_j$, we can infer that $\lb_j\geq\lb_{j-3}+\sum_{h=j-3}^{j-1}\zeta^h$.
Since $v\in V_j$, it follows that $\wt(Z^\ell_{\rt,v})\leq\knapd_v$.
Now suppose $j\leq 2$. Then $\wt(Z^\ell_{\rt,v})\leq W_j\leq\zeta^j\leq\lb_j\leq\knapd_v$.
\end{proof}

\begin{proofof}{Theorem~\ref{okd-quasipoly}}
Recall that we may assume that we have the right guess of the portal nodes. Then, the path
$Z'=\rt,u_0,\ldots,u_{k'}$ is clearly a feasible solution.
By Lemma~\ref{okdalg-feas}, each $Z^\ell$-path is also feasible. The $Z^\ell$s and $Z'$
cover $P$, so we return a feasible solution of reward at least
$\rewd(P)/4\geq\frac{\OPT}{8(\al+1)}$, where $\al=O(1)$.
The time needed to enumerate the portal nodes, and the auxiliary quantities mentioned in
Lemma~\ref{okd-structhm} is $(n+\log B)^{O(\log W)}$.
\end{proofof}

\subsubsection*{\boldmath Polytime $O\bigl(\log(\frac{W}{\knapd_{\min}})\bigr)$-approximation.}
Recall that $\knapd_{\min}=\min_{v\in V:\knapd_v>0}\knapd_v$ is the minimum non-zero
knapsack deadline. Let $N=\floor{\log\bigl(\frac{W}{\knapd_{\min}}\bigr)}$. Again, let
$P^*$ be an optimal \orientkd solution.
We use a simple bucketing idea, where we partition the vertices
into $N+2$ buckets based on their deadlines: for $j\in\dbrack{N}$, bucket $j$ consists of
the vertices $V_j:=\bigl\{v\in V: \frac{\knapd_v}{\knapd_{\min}}\in [2^j,2^{j+1})\bigr\}$, and
let $V_{-1}:=\{v\in V: \knapd_v=0\}$.
For each $j\in\{-1\}\cup\dbrack{N}$,
let $P^*_j$ be the rooted path obtained by shortcutting $P^*$ to retain only the nodes in
$V_j+\rt$.
Clearly, some bucket $P^*_j$ is responsible for at least a $\frac{1}{N+2}$-fraction of the
optimum. 
It is not hard to show that one can take a solution to the \knapo-instance with vertex-set
$V_j$ and knapsack budget $\knapd_{\min}\cdot 2^{j+1}$, and convert it to a feasible
\orientkd solution losing a constant-factor. This yields an $O(N)$-approximation. 

Formally, for each $j\in\dbrack{N}$, let $Q^j$ be an $\al$-approximate solution to the
\knapo instance with knapsack budget $\knapd_{\min}\cdot 2^j$, rewards $\pi_v$ for 
$v\in V_j$ and $0$ for all other $v$, and the same knapsack weights, metric, and length
budget as in the \orientkd instance. (Note that $\al\leq 4+\e$, for any fixed $\e>0$.)
By shortcutting, we may assume that $Q^j$ only visits nodes in $V_j+\rt$.
For $j=-1$, the above translates to solving an orienteering instance with $\pi_v$-rewards 
for nodes in $V_{-1}$ and $0$ rewards everywhere else, to obtain $Q^{-1}$, since all nodes
in $V_{-1}$ have $\knapd_v=0=\wt_v$.   

$Q^{-1}$ is clearly a feasible \orientkd solution. For $j\geq 0$, we can break up $Q^j$
into at most $3$ subpaths $Q^j_i$, $i=1,2,3$, with 
$\wt(Q^j_1), \wt(Q^j_3)\leq\knapd_{\min}\cdot 2^j$ and $Q^j_2$ consisting of a singleton
node. For each $i=1,2,3$, appending $\rt$ to $Q^j_i$ yields a feasible \orientkd solution,
since $\knapd_v\geq\max\{\wt_v,\knapd_{\min}\cdot 2^j\}$ for all $v\in Q^j-\rt$. So the
maximum-reward path among these three paths collects at least $\pi(Q^j)/3$ reward. 
Also, note that $P^*_j$ is a feasible solution to the \knapo instance for index $j$, 
so $\pi(Q^j)\geq\pi(P^*_j)/\al$. 

We return the best of all the feasible solutions $Q^{-1}$, $\{Q^j_i\}_{j\in\dbrack{N},i\in[3]}$.
From the above, we obtain that this solution achieves reward at least $\pi(P^*)/3(N+2)$.

\begin{theorem} \label{okd-logthm}
There is a polytime $O\bigl(\log(\frac{W}{\knapd_{\min}})\bigr)$-approximation algorithm
for \orientkd, where $W$ is the maximum knapsack deadline and $\knapd_{\min}$
is the minimum non-zero knapsack deadline.
\end{theorem}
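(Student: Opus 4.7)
The plan is a simple bucketing argument that reduces \orientkd to $O(\log(W/\knapd_{\min}))$ separate \knapo (or plain orienteering) instances, each solvable in polytime to within a constant. Let $N := \floor{\log(W/\knapd_{\min})}$. Partition $V - \rt$ into buckets $V_{-1} := \{v : \knapd_v = 0\}$ and, for each $j \in \dbrack{N}$, $V_j := \{v : \knapd_v \in [\knapd_{\min}\cdot 2^j,\, \knapd_{\min}\cdot 2^{j+1})\}$. Let $P^*$ be an optimal \orientkd path, and $P^*_j$ the rooted path obtained by shortcutting $P^*$ onto $V_j + \rt$. By pigeonhole, some bucket $j^*$ satisfies $\rewd(P^*_{j^*}) \geq \opt/(N+2)$, so it suffices to recover $\Omega(\rewd(P^*_j))$ reward in each bucket independently and return the best of the $N+2$ candidate paths.

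For the zero-deadline bucket $V_{-1}$, the assumption $\wt_v \leq \knapd_v$ forces $\wt_v = 0$, so the knapsack-deadline constraints are vacuous on such vertices; we can just run a polytime $O(1)$-approximation for rooted orienteering with rewards supported on $V_{-1}$ to obtain a feasible $Q^{-1}$. For each $j \in \dbrack{N}$, we solve the rooted \knapo instance on $V_j + \rt$ with the same metric and travel budget $B$ as the original \orientkd instance, knapsack weights $\{\wt_v\}$, knapsack budget $\knapd_{\min} \cdot 2^{j+1}$, and rewards concentrated on $V_j$. The path $P^*_j$ is feasible for this \knapo instance: its total knapsack weight is bounded by the deadline of its last vertex (by feasibility of $P^*$ for the original \orientkd instance), which is strictly less than $\knapd_{\min} \cdot 2^{j+1}$ since that vertex lies in $V_j$. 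Applying the $(4+\e)$-approximation for \knapo from Corollary~\ref{knapvrp-approx}\,(a) yields $Q^j$ with $\rewd(Q^j) \geq \rewd(P^*_j)/(4+\e)$.

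The remaining issue is that $Q^j$ may have total weight as large as $\knapd_{\min}\cdot 2^{j+1}$, so a vertex deep inside $Q^j$ could violate its deadline, which is only guaranteed to be at least $\knapd_{\min}\cdot 2^j$. The fix is to greedily split $Q^j - \rt$ into three contiguous pieces $Q^j_1, Q^j_2, Q^j_3$, where $Q^j_1$ is the maximal prefix with weight at most $\knapd_{\min}\cdot 2^j$, $Q^j_2$ is the next single vertex, and $Q^j_3$ is the remaining suffix; maximality and the total-weight bound on $Q^j$ force $\wt(Q^j_3) < \knapd_{\min}\cdot 2^j$ as well. Prepending $\rt$ to each piece yields a rooted path whose partial weight up to any of its nodes is at most $\knapd_{\min}\cdot 2^j$, which is a lower bound on every deadline in $V_j$; for the singleton piece $\rt, v$ feasibility is immediate from $\wt_v\leq\knapd_v$. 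Each piece is therefore a feasible \orientkd solution, and one of them collects at least $\rewd(Q^j)/3$ reward. Returning the best of $Q^{-1}$ and the $\{Q^j_i\}_{j,i}$ gives reward at least $\opt/\bigl(3(4+\e)(N+2)\bigr) = \opt/O\bigl(\log(W/\knapd_{\min})\bigr)$. There is no serious obstacle here; the three-way split is essentially the same greedy argument used inside the proof of Theorem~\ref{knapvrp}, and the rest is bookkeeping.
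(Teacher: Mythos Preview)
Your proposal is correct and matches the paper's proof essentially line for line: the same deadline-based bucketing into $V_{-1}$ and $V_j$ for $j\in\dbrack{N}$, the same reduction of each bucket to a rooted \knapo instance with budget $\knapd_{\min}\cdot 2^{j+1}$, and the same three-way greedy split of $Q^j$ to restore feasibility of the knapsack deadlines. The only cosmetic difference is that the paper phrases the \knapo instance as keeping all of $V$ but zeroing rewards outside $V_j$, whereas you restrict to $V_j+\rt$; these are equivalent after shortcutting.
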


\section{\boldmath \csko with cancellations} \label{csko-cancel}
In {\em \csko with cancellations} (\cskocancel), the input is the same as in \csko, 
but we are now allowed to {\em cancel} the
processing of the current vertex $v$ 
at any (integer) timestep before its size and reward get fully realized; if $v$
is cancelled, then no reward is collected from $v$ and we cannot process $v$ again.
As with \csko, we only collect reward from vertices that complete by the processing-time 
horizon $W$.   
Gupta et al.~\cite{GuptaKMR11} showed that even for correlated knapsack (which is the
special case of \csko where all vertices are co-located), the optimal reward when we allow
cancellations can be substantially larger than the optimal reward without cancellations,
so we need to develop new algorithms to handle cancellations.

We obtain the same guarantees for \cskocancel 
as for \csko: that is, 
$O(\log\log W)$-approximation in $(n+\log B)^{O(\log W\log\log W)}$ time, 
and a polytime $O(\log W)$-approximation.

We proceed as follows. Recall 
that for a vertex $v$, we define $\excep:=\Rv\bon_{\Sv>W/2}$
and $\trunc:=\Rv\bon_{\Sv\leq W/2}$. 
Let $\excep[\I]$ and $\trunc[\I]$ denote the \cskocancel instances where the
rewards are given by $\{\excep\}_{v\in V}$ and $\{\trunc\}_{v\in V}$ respectively. 
As observed by~\cite{GuptaKMR11}, cancellations do not help for the instance $\excep[\I]$,
i.e., the optimal reward is the same both with and without cancellations. This is because
if a policy cancels a vertex $v$ after it has run for some $t\leq W/2$ time steps, we can
modify the policy to not process $v$ at all, without decreasing the reward accrued from 
subsequently-processed vertices; if $v$ is cancelled after it has run for more than $W/2$
time steps, then both with and without cancellation, the policy cannot collect any further
reward. 

We show that we can obtain an $O(1)$-approximation for $\trunc[\I]$. Then, with
probability $0.5$, we can work on the instance $\trunc[\I]$, where we utilize this
$O(1)$-approximation, or the instance $\excep[\I]$, where we utilize the approximation
results for \csko.  
So 
this yields:
an $O(\log\log W)$-approximation in quasi-polytime, and a polytime 
$O(\log W)$-approximation. 

So we focus on obtaining an $O(1)$-approximation for \cskocancel instances of the form 
$\trunc[\I]$. 
Our approach is based on LP-rounding, by combining the LP-rounding approaches for
orienteering 
in~\cite{FriggstadS17} 
and the correlated knapsack problem with cancellations in~\cite{GuptaKMR11}.
We combine the LP-relaxations for these two problems to obtain an LP \eqref{cskoclp},
whose optimal value yields an upper bound on the optimal reward. We round an optimal
solution to \eqref{cskoclp} in two phases. 
We first {\em extract a suitable knapsack orienteering instance from the LP solution}, and 
use Theorem~\ref{knapo-round} to obtain a good rooted path $Q$ for this \knapo
instance. Next, we focus on the vertices in $Q$, and extract a good LP solution to the
correlated knapsack problem restricted to vertices in $Q$. We utilize the
LP-rounding result in~\cite{GuptaKMR11} to round this solution to obtain a \cskocancel
solution that visits the vertices in $Q$ in order, potentially cancelling some vertices
along the way. Thus, we obtain a non-adaptive policy for \cskocancel.
In the second step, we crucially leverage an important aspect of the LP-rounding algorithm
in~\cite{GuptaKMR11} for correlated knapsack with cancellations, namely that it is 
{\em order oblivious}: it's guarantee does not 
depend on the order in which the vertices (i.e., items in correlated knapsack) are
considered. This flexibility 
allows us to consider vertices in $Q$ in
the order they are visited, and thereby ensure that the travel-budget constraint is
satisfied. (We remark that for the correlated knapsack without cancellations, we do not have this
flexibility, 
when considering large instantiations; see
Appendix~\ref{append-corrknap}. This lack of flexibility is the main obstacle in obtaining
a good solution from large instantiations in \csko.)

\vspace*{-1ex}
\subparagraph*{LP relaxation.}
We use $\Ru(t)$ to denote the reward $\Ru$ when the size $\Su$ is $t$; this is $0$
if $\Pr[\Su=t]=0$. Note that $\Ru(t)=0$ for all $t>W/2$, since we are considering
$\trunc[\I]$. 
\begin{alignat}{3}
\max && \quad 
\sum_{u \in V}\sum_{t=1}^{W/2}z_{u,t}&\cdot\Pr[S_u=t\,|\,\Su\geq t]\cdot \Ru(t)
\tag{CKOC-LP} \label{cskoclp} \\
\text{s.t.}  && \quad 
x^v\bigl(\delta^{\into}(u)\bigr) & \geq x^v\bigl(\delta^{\out}(u)\bigr) \qquad 
&& \forall u\in V-\rt,\,v\in V \tag{\ref{pref-visit}} \\
&& x^v\bigl(\delta^{\into}(S)\bigr) & \geq z^v_u \qquad && 
\forall v\in V,\, S\subseteq V-\rt,\, u\in S \tag{\ref{subtour}} \\
&& z^v_u & = 0 && \forall u,v\in V: d_{\rt,u}>d_{\rt,v} \tag{\ref{dbnd}} \\
&& \sum_{a\in A}d_a\cdot x_a^v & \leq Bz_{v}^v, \quad 
x^{v}\bigl(\delta^{\out}(\rt)\bigr) = z^v_v\qquad && \forall v\in V 
\tag{\ref{dbudget}} \\
&& \sum_{v\in V}z_v^{v} & = 1 \tag{\ref{unit}} \\
&& \sum_{v\in V}z_{u}^v & = z_{u,0} \qquad && \forall u\in V \label{link} \\
&& z_{u,t} &= s_{u,t}+z_{u,t+1} \qquad && \forall u\in V,\,t\in\dbrack{W} \tag{CK1} 
\label{ck1} \\
&& s_{u,t} & \geq \Pr[\Su=t\,|\,\Su\geq t]\cdot z_{u,t} \qquad && 
\forall u\in V,\,t\in\dbrack{W} \tag{CK2} \label{ck2} \\
&& \sum_{u\in V}\sum_{t=0}^W t\cdot s_{u,t} & \leq W \tag{CK3} \label{ckbudg} \\
&& x,z,s &\geq 0. \notag
\end{alignat}
\clonelabel{lp:csoc}{cskoclp}%
\clonelabel{kn1}{link}%
\clonelabel{kn2}{ck1}%
\clonelabel{kn3}{ck2}%
\clonelabel{kn4}{ckbudg}%
As noted earlier, this LP is a combination of the LPs for orienteering
in~\cite{FriggstadS17} and correlated knapsack with cancellations in~\cite{GuptaKMR11}.
The $x^v_a$ and $z^v_u$ variables, and constraints \eqref{pref-visit}--\eqref{unit}
are from the LP for rooted orienteering (and also present in LP \eqref{kolp} for
\knapo). They encode the arcs included, and vertices visited, respectively by the
rooted path, provided that $v$ is the furthest node from $\rt$ that is visited.
The $z_{u,t}$ and $s_{u,t}$ variables, and constraints \eqref{ck1}--\eqref{ckbudg} are from the LP for
correlated knapsack with cancellations. For any vertex $u$ and $t\geq 0$, variable $z^u_t$ 
encodes that $u$ is processed for a least $t$ time units, and $s_{u,t}$ encodes that $u$
is processed for {\em exactly} $t$ time units.
Thus, variable $z_{u,0}$ encodes that $u$ is visited, and constraint \eqref{link} links
the orienteering and correlated knapsack LPs.  
The objective function is from the LP in~\cite{GuptaKMR11} for correlated knapsack
with cancellations. 

We remark that~\cite{GuptaKMR11} showed that one can replace \eqref{ck1}--\eqref{ckbudg}
with a polynomial-size formulation losing an $O(1)$-factor, which applies here as well.

\newcommand{\optcskoclp}{\ensuremath{\OPT_{\text{\ref{cskoclp}}}}\xspace}
Let $(\bx,\bz,\bs)$ be an optimal solution to \eqref{cskoclp} and $\optcskoclp$ be its
objective value.
As argued in the proof of Theorem~\ref{knaporedn}, constraints
\eqref{pref-visit}--\eqref{unit} are valid because any rooted path $Q$, corresponding to
an execution of an adaptive policy, satisfies these constraints, where the superscript $v$
in the non-zero variables is the furthest node from $\rt$ on $Q$.
Gupta et al.~\cite{GuptaKMR11} show (see Theorem 3.1 in~\cite{GuptaKMR11}) that
constraints \eqref{ck1}--\eqref{ckbudg} are valid 
for correlated knapsack with cancellations, where $z_{u,t}$ and $s_{u,t}$ are now
respectively the probabilities that $u$ is visited for at least $t$, and exactly $t$ time
steps. They also argue that the objective function provides an upper bound on the expected
reward obtained. So we have the following.


\begin{lemma} \label{lpvalid}
\optcskoclp is at least the optimal reward for the \cskocancel instance. 
\end{lemma}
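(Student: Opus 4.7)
The plan is to construct a feasible solution to \eqref{cskoclp} from an optimal adaptive policy for \cskocancel on the instance $\trunc[\I]$, and argue that its objective value matches the expected reward of that policy. This combines the validity arguments for the orienteering LP of~\cite{FriggstadS17} (as used in the proof of Theorem~\ref{knaporedn}) with the validity argument for the correlated-knapsack-with-cancellations LP of~\cite{GuptaKMR11}.

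Let $\T$ be the decision tree of an optimal adaptive policy for \cskocancel on $\trunc[\I]$. Each leaf of $\T$ corresponds to a complete execution, which in turn determines a rooted path $Q$ in the metric (the sequence of vertices visited in order), together with, for each $u\in Q$, a time $t_u\geq 0$ indicating how long $u$ was processed before either completing or being cancelled. Let $\Pr[Q]$ denote the probability that the policy's execution induces path $Q$ (summing the probabilities of all leaves that produce this $Q$). For each such $Q$, define an integer solution $(\bx^Q,\bz^Q)$ to constraints \eqref{pref-visit}--\eqref{unit} by the recipe in the proof of Theorem~\ref{knaporedn}: let $v$ be the node on $Q$ furthest from $\rt$, set $\bx^{Q,v}=\chi^{E(Q)}$ and $\bz^{Q,v}=\chi^{V(Q)}$, and zero everywhere else. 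Then define the LP solution $(x,z,s)$ as the $\Pr[Q]$-weighted convex combination of the $(\bx^Q,\bz^Q)$ solutions together with $z_{u,t}=\Pr[\text{$u$ is processed for at least $t$ time units}]$ and $s_{u,t}=\Pr[\text{$u$ is processed for exactly $t$ time units}]$, where these probabilities are taken over the random execution of $\T$.

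Next I verify feasibility and the objective bound. Constraints \eqref{pref-visit}--\eqref{unit} hold for each $(\bx^Q,\bz^Q)$ by construction, hence for the convex combination. Constraints \eqref{ck1}--\eqref{ckbudg} are exactly the correlated-knapsack-with-cancellations constraints, and their validity under the stated semantics of $z_{u,t}$ and $s_{u,t}$ is Theorem~3.1 of~\cite{GuptaKMR11}: \eqref{ck1} is the identity $\Pr[\text{processed}\geq t]=\Pr[\text{processed}=t]+\Pr[\text{processed}\geq t+1]$; \eqref{ck2} follows from the fact that the event ``size instantiates to $t$ given processed through time $t$'' is independent of the adaptive history and has probability $\Pr[\Su=t\,|\,\Su\geq t]$, and contributes to $s_{u,t}$; and \eqref{ckbudg} encodes that the expected total processing time is at most $W$. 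The linking constraint \eqref{link} holds because $\sum_v z^v_u$ equals the probability that $u$ lies on the executed rooted path, which equals $\Pr[u\text{ is visited}]$, which is exactly $z_{u,0}$. For the objective, reward is obtained from $u$ precisely when the policy processes $u$ for exactly $t\leq W/2$ steps \emph{and} the size realization at that step is $t$ (so $u$ completes rather than being cancelled). Conditioning on ``processed for at least $t$'' and then on ``$\Su=t$'', the contribution of this event to the expected reward is $s_{u,t}\cdot\Pr[\Su=t\,|\,\Su\geq t]\cdot\Ru(t)$, matching the LP objective term-by-term.

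The main delicate point, and essentially the only step requiring care, is the justification of \eqref{ck2} in the adaptive-cancellation setting: one must observe that the event that $u$ is still being processed at the start of time-step $t$ depends only on past instantiations and policy decisions, and hence is independent of the fresh coin-flip $\Sv=t$; this is precisely the argument made in~\cite{GuptaKMR11}, so I will invoke it rather than reprove it. Combining the above yields a feasible $(x,z,s)$ whose LP objective is at least the expected reward of the optimal adaptive policy, proving $\optcskoclp\geq\OPT(\trunc[\I])$.
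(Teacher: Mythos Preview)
Your proposal is correct and follows essentially the same approach as the paper: construct the LP solution from an optimal adaptive policy by combining the orienteering-LP encoding from~\cite{FriggstadS17} (as in Theorem~\ref{knaporedn}) with the correlated-knapsack-with-cancellations encoding from~\cite{GuptaKMR11}, and invoke Theorem~3.1 of~\cite{GuptaKMR11} for validity of \eqref{ck1}--\eqref{ckbudg} and the objective. One small slip: in your objective calculation you write $s_{u,t}\cdot\Pr[\Su=t\,|\,\Su\geq t]\cdot\Ru(t)$, but the LP objective uses $z_{u,t}$, and your own reasoning (``conditioning on processed for at least $t$'') correctly points to $z_{u,t}$; with that typo fixed, the term-by-term match goes through.
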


\begin{theorem} \label{cskoc-small-thm}
We can round $(\bx,\bz,\bs)$ to obtain a non-adaptive $O(1)$-approximation algorithm for
\cskocancel. 
\end{theorem}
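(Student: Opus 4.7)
The plan is to round $(\bx,\bz,\bs)$ in two stages: first extract a good rooted path $Q$ via an auxiliary knapsack-orienteering instance, then on $Q$ invoke the order-oblivious rounding algorithm for correlated knapsack with cancellations from~\cite{GuptaKMR11} by processing the vertices of $Q$ in the order they appear on $Q$. This yields a non-adaptive policy (with cancellations) whose travel is bounded by $d(Q)\leq B$ and whose expected reward will be $\Omega(\optcskoclp)$; by Lemma~\ref{lpvalid}, this is an $O(1)$-approximation.

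For the first stage, set $\bar{y}_u:=\bz_{u,0}$ for every $u\in V$ (which, by~\eqref{link}, equals $\sum_v\bz^v_u$). For vertices with $\bar{y}_u>0$, define the \emph{conditional} rewards and weights
\begin{equation*}
r_u\;:=\;\tfrac{1}{\bar{y}_u}\sum_{t=1}^{W/2}\bz_{u,t}\cdot\Pr[\Su=t\,|\,\Su\geq t]\cdot R_u(t),
\qquad
\wt_u\;:=\;\tfrac{1}{\bar{y}_u}\sum_{t\geq 0}t\cdot\bs_{u,t},
\end{equation*}
and set $r_u=\wt_u=0$ when $\bar{y}_u=0$ (for such $u$, constraints~\eqref{ck1}--\eqref{ck2} force $\bs_{u,t}=\bz_{u,t}=0$ for all $t$, so these vertices contribute nothing to the LP anyway). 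Now consider the \knapo-instance on $(V,d)$ with root $\rt$, length budget $B$, rewards $\{r_u\}$, knapsack weights $\{\wt_u\}$, and knapsack budget $W$. I claim that the orienteering portion $(\bx,\bz^\cdot_\cdot)$ of our LP solution is a feasible solution to \eqref{kolp} for this instance: constraints \eqref{pref-visit}--\eqref{unit} are inherited directly, and the knapsack constraint \eqref{knbudget} holds because
\begin{equation*}
\sum_{u,v\in V}\bz^v_u\cdot\wt_u \;=\; \sum_{u\in V}\bar{y}_u\cdot\wt_u \;=\; \sum_{u\in V}\sum_{t\geq 0}t\cdot\bs_{u,t}\;\leq\;W
\end{equation*}
by~\eqref{ckbudg}. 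Its objective value is $\sum_{u\in V}\bar{y}_u\cdot r_u=\optcskoclp$. Hence Theorem~\ref{knapo-roundthm} produces an integral \knapo-solution, i.e., a rooted path $Q$ with $d(Q)\leq B$, $\wt(Q)\leq W$, and $\sum_{u\in Q}r_u\geq\optcskoclp/5$.

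For the second stage, restrict attention to $u\in Q$ and define $\tilde{z}_{u,t}:=\bz_{u,t}/\bar{y}_u$, $\tilde{s}_{u,t}:=\bs_{u,t}/\bar{y}_u$ (the vertices with $\bar{y}_u=0$ can be dropped from $Q$ since they contribute nothing). These satisfy $\tilde{z}_{u,0}=1$ and inherit constraints~\eqref{ck1} and~\eqref{ck2} from $(\bz,\bs)$; moreover
\begin{equation*}
\sum_{u\in Q}\sum_{t\geq 0}t\cdot\tilde{s}_{u,t}\;=\;\sum_{u\in Q}\wt_u\;=\;\wt(Q)\;\leq\; W,
\end{equation*}
so $(\tilde{z},\tilde{s})$ is a fractional solution to the correlated-knapsack-with-cancellations LP on the item set $Q$ with budget $W$, and its objective value equals $\sum_{u\in Q}r_u\geq\optcskoclp/5$. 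Now invoke the rounding algorithm of~\cite{GuptaKMR11} for correlated knapsack with cancellations, instructing it to consider items in the order that they appear on $Q$. Since this algorithm is \emph{order oblivious}---its $O(1)$-approximation guarantee holds for any prescribed order of items---the resulting policy collects expected reward $\Omega(\sum_{u\in Q}r_u)=\Omega(\optcskoclp)$, processes vertices exactly in $Q$-order (with possible cancellations), and respects the processing-time budget $W$ deterministically via cancellation. Traversing $Q$ and executing this policy therefore yields a non-adaptive $O(1)$-approximation.

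The one substantive point that requires care, and which is the main reason for structuring the argument this way, is that we must not allow the knapsack rounding to dictate the visiting order of vertices---the travel-budget constraint $d(Q)\leq B$ is already committed to by the first stage. It is precisely the order-oblivious nature of the~\cite{GuptaKMR11} rounding that lets these two stages coexist without loss. (By contrast, analogous decoupling fails for large-size instantiations of the without-cancellation problem, as discussed in Appendix~\ref{append-corrknap}, which is why this approach only gives us $O(1)$ for $\trunc[\I]$.) The remaining details---verifying that vertices with $\bar{y}_u=0$ may be discarded, and confirming that the restricted $(\tilde{z},\tilde{s})$ is a bona fide feasible solution to the formulation used by~\cite{GuptaKMR11}---are routine.
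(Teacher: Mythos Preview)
Your proposal is correct and follows essentially the same approach as the paper: define the auxiliary \knapo instance with rewards $r_u=\hpi_u$ and weights $\wt_u$ obtained by normalizing the knapsack-LP quantities by $\bz_{u,0}$, observe that $(\bx,\{\bz^v_u\})$ is feasible for \eqref{kolp} with value $\optcskoclp$, round via Theorem~\ref{knapo-roundthm} to get $Q$, and then feed the rescaled $(\tilde z,\tilde s)$ into the order-oblivious rounding of~\cite{GuptaKMR11} along $Q$. The paper's proof is the same argument with the explicit constant $\optcskoclp/40$ (the~\cite{GuptaKMR11} rounding loses a factor~$8$).
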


\begin{proof}
Consider the \knapo instance defined by the same metric and travel budget, with
knapsack weights $\wt_u:=\bigl(\sum_{t=0}^W t\cdot\bs_{u,t}\bigr)/\bz_{u,0}$ 
and rewards $\hpi_u:=\bigl(\sum_{t=1}^{{W}/{2}}\bz_{u,t}\cdot\Pr[S_u=t\,|\,\Su\geq t]\cdot \Ru(t)\bigr)/\bz_{u,0}$
for all $u\in V$, and knapsack budget $W$.
Then, due to \eqref{link} and\eqref{ckbudg}, one can infer that $\bigl(\bx,\{\bz^v_u\}_{u,v\in V}\bigr)$ is
a feasible solution to the LP-relaxation \eqref{kolp} for this \knapo instance, of value
$\optcskoclp$. By Theorem~\ref{knapo-round}, we can therefore obtain a rooted path $Q$
with $d(Q)\leq B$, $\wt(Q)\leq W$, and $\hpi(Q)\geq\optcskoclp/5$.

Now, we select a subsequence of $Q$ to visit 
by solving a correlated knapsack with cancellations problem involving only vertices in
$Q$. Consider the following LP-relaxation for this problem in~\cite{GuptaKMR11}.
\begin{alignat}{3}
\max && \quad \sum_{u \in Q}\sum_{t=1}^{{W}/{2}}z_{u,t}&\cdot\Pr[S_u=t\,|\,\Su\geq t]\cdot \Ru(t)
\tag{CK-LP} \label{cklp} \\
\text{s.t.} && \quad z_{u,0} & = 1 \qquad && \forall u\in Q \label{cork1} \\
&& z_{u,t} & = s_{u,t}+z_{u,t+1} \qquad && \forall u\in Q,\,t\in\dbrack{W} \label{cork2} \\
&& s_{u,t} & \geq \Pr[\Su=t\,|\,\Su\geq t]\cdot z_{u,t} \qquad &&
\forall u\in Q,\,t\in\dbrack{W} \label{cork3} \\
&& \sum_{u\in Q}\sum_{t=0}^Wt\cdot s_{u,t} & \leq W \label{cork4} \\
&& z,s &\geq 0. \notag
\end{alignat}

Consider the solution where we set $\tz_{u,t}=\frac{\bz_{u,t}}{\bz_{u,0}}$ and
$\ts_{u,t}=\frac{\bs_{u,t}}{\bz_{u,0}}$ for all $u\in Q$, $t\in\dbrack{W}$. We claim that
this is feasible to \eqref{cklp}. Constraint \eqref{cork1} clearly holds, and
\eqref{cork2}, \eqref{cork3} hold due to \eqref{ck1}, \eqref{ck2}.
The LHS of \eqref{cork4} evaluates to $\wt(Q)$, so \eqref{cork4} holds since 
$\wt(Q)\leq W$. The objective value of $(\tz,\ts)$ is precisely $\hpi(Q)$. 
We can now use the LP-rounding algorithm for correlated knapsack with cancellations
in~\cite{GuptaKMR11} to process vertices, with cancellations, in the order they appear on
$Q$. This yields expected reward at least $\hpi(Q)/8\geq\optcskoclp/40$.
\end{proof}

\section{\boldmath $O(\log\log B)$-approximation for \cso} \label{cso-approx}
We show that our approach in Section~\ref{csko-approx} for \csko can be utilized to yield
the guarantees mentioned in Theorem~\ref{cso-approxthm}: that is,
an $O(\al\log\log B)$-approximation algorithm for \cso 
in time $(n+\log B)^{O(\log B\log\log B)}\cdot\tim$, where $\tim$ is the
running time of the given $\al$-approximation algorithm for deadline TSP.  
The algorithm in
~\cite{FriggstadS21} for deadline TSP 
translates to an $O(1)$-approximation in $n^{O(\log B)}$ time, so this implies an
$O(\log\log B)$-approximation for \cso in quasi-polytime.
We also simplify the exposition significantly by making use of monotone-reward
TSP~\cite{FriggstadS21} as a subroutine.  

Throughout this section, let $K=3\log(6\log B)+12$, $L=\ceil{\log B}$, $N_1=2(K+1)$. 
Define $\nd_{-1}:=\rt$, and
$\piv(t):=\E{\nrewd\cdot\bon_{\nsize\leq B-t}}=\sum_{t^\prime=0}^{B-t}\Pr[\Sv=t^\prime]\cdot\E{\Rv\,|\,\Sv=t^\prime}$.
Let $\optcso$ denote the optimal reward for the \cso instance.
We may assume that $\pi_v(d_{\rt,v})\leq\optcso/4$ for every $v\in V$, as otherwise, we can
obtain $\Omega(\optcso)$ reward by going to a single node.
The algorithm in~\cite{BansalN14} is based on the following structural result, which we
have paraphrased (and corrected slightly) to conform to our notation. 

\begin{lemma}[\textnormal{Lemma 3.6 in~\cite{BansalN14}}]  \label{cso-simpstructhm} 
There exists a rooted path $P$ with $d(P)\leq B$, and 
vertices $\nd_0\preceq\nd_1\preceq\ldots\preceq\nd_k$ on $P$ for some $k\leq L$, such that: 
\begin{enumerate}[label=(\alph*), topsep=0.1ex, noitemsep, leftmargin=*]
\item 
$\sum_{j=0}^k\sum_{v\in P_{\nd_{j-1},\nd_j}-\nd_j}\pi_v\bigl(d(P_{\rt,v})+2^j-1\bigr)\geq\optcso/4$; and  
\item 
$\mu^j(P_{\rt,\nd_j}-{\nd_j})\leq (K+1)2^j$ for all $j\in\dbrack{k}$.
\end{enumerate} 
\end{lemma}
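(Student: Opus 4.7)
The plan is to mirror the proof of Theorem~\ref{simpstructhm} given in Section~\ref{append-structhm}, modifying it only where the single-budget nature of \cso forces a vertex $v$'s processing start time to be $d(P_{\rt,v}) + i_v$ rather than just $i_v$. I work with the decision tree $\T$ of an optimal adaptive policy for \cso and invoke Lemma~\ref{lemmaconditions}, which concerns only the truncated sizes $\{X^j_w\}$ and is therefore insensitive to whether we are in the \cso or \csko setting.

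For each $j \in \dbrack{L}$, let $A_j := \{u \in \T : \sum_{w \preceq u} X^j_w \leq 2^{j+1} \text{ and } \sum_{w \preceq u} \mu^j_w > K \cdot 2^j\}$, and let $F$ be the set of $\prec$-minimal nodes of $A := \bigcup_{j=0}^L A_j$. By Lemma~\ref{lemmaconditions} and a union bound, the total probability of reaching a node in $F$ is at most $(L+1) e^{-K/3} \leq 1/2$; pruning the descendants of every $v \in F$ yields a subtree $\T'$ of expected reward at least $\optcso/2$, since each discarded subtree rooted at $v \in F$ contributes at most $\Pr[v \text{ is reached}] \cdot \optcso$. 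Averaging over the leaves of $\T'$ then produces a single root-to-leaf path $P$ whose realizations satisfy $\sum_{v \in P} \pi_v\bigl(d(P_{\rt,v}) + i_v\bigr) \geq \optcso/2$; viewing $P$ as a rooted path in the metric, we have $d(P) \leq B$.

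I then define portal vertices along $P$ exactly as in the \csko proof: for each $j \geq 0$, let $\nd_j$ be the first vertex on $P$ with $i_v \geq 2^{j+1} - 1$, and let $k$ be the smallest index for which no such vertex exists; set $\nd_k$ to be the endpoint of $P$, so $k \leq L$ since $i_v \leq B$ along any valid execution. Part~(b) follows because the vertex immediately preceding $\nd_j$ on $P$ belongs to $\T'$ (so it does not lie in $A_j$) and hence satisfies $\sum_{w \preceq u} \mu^j_w \leq K \cdot 2^j$; adding $\mu^j_{\nd_j} \leq 2^j$ gives $(K+1) 2^j$. For part~(a), any $v \in P_{\nd_{j-1}, \nd_j} - \nd_j$ has $i_v \geq 2^j - 1$, so monotonicity of $\pi_v$ gives $\pi_v\bigl(d(P_{\rt,v}) + 2^j - 1\bigr) \geq \pi_v\bigl(d(P_{\rt,v}) + i_v\bigr)$; summing across all such $v$ and invoking the standing assumption $\pi_v(d_{\rt,v}) \leq \optcso/4$ (which bounds the excluded contribution of $\nd_k$) yields the lower bound $\optcso/4$.

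The only substantive deviation from the \csko argument is the travel component $d(P_{\rt,v})$ appearing inside $\pi_v(\cdot)$, but this is benign because monotonicity of $\pi_v$ sends the inequality $\pi_v\bigl(d(P_{\rt,v}) + 2^j - 1\bigr) \geq \pi_v\bigl(d(P_{\rt,v}) + i_v\bigr)$ in the direction required for part~(a). I do not anticipate any step needing genuinely new ideas; everything else transfers verbatim from Theorem~\ref{simpstructhm}, since Lemma~\ref{lemmaconditions} depends only on $\{X^j_w, \mu^j_w\}$ and not on travel times.
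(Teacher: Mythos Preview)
The paper does not prove this lemma; it is cited from \cite{BansalN14}. Your approach---adapting the paper's proof of the \csko analogue (Theorem~\ref{simpstructhm})---is the right one, and your handling of the travel-time component in part~(a) is correct.

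There is, however, a genuine gap in your argument for part~(b). You take $u$ to be the vertex immediately preceding $\nd_j$ and infer from $u\notin A_j$ that $\sum_{w\preceq u}\mu^j_w\le K\cdot 2^j$. But $u\notin A_j$ only says that \emph{either} $\sum_{w\preceq u}X^j_w>2^{j+1}$ \emph{or} $\sum_{w\preceq u}\mu^j_w\le K\cdot 2^j$; you have not ruled out the first disjunct. And you cannot: $\sum_{w\preceq u}S_w=i_{\nd_j}\ge 2^{j+1}-1$, and $i_{\nd_j}$ may well exceed $2^{j+1}$, so $\sum_{w\preceq u}X^j_w$ can exceed $2^{j+1}$. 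The fix, exactly as in the paper's proof of Theorem~\ref{simpstructhm}, is to step back one more node: if $w$ is the predecessor of $u$, then $\sum_{v'\preceq w}X^j_{v'}\le\sum_{v'\preceq w}S_{v'}=i_u<2^{j+1}-1$, so the first disjunct is impossible for $w$, giving $\sum_{v'\preceq w}\mu^j_{v'}\le K\cdot 2^j$; then add $\mu^j_u\le 2^j$ to obtain $\mu^j(P_{\rt,\nd_j}-\nd_j)\le(K+1)2^j$. Your phrase ``adding $\mu^j_{\nd_j}$'' is also off: $\nd_j$ is already excluded from $P_{\rt,\nd_j}-\nd_j$, so the extra $2^j$ must come from $u$, not $\nd_j$.
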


We refine this by subdividing each $P_{\nd_{j-1},\nd_j}$ subpath into at most $2(K+1)$
segments each of $\mu^j$-weight at most $2^j$, and by guessing the two-point regrets of
these segments, 
to obtain a structural result analogous to Theorem~\ref{strucdthm}.

\begin{theorem}[Structural result for \cso] \label{cso-strucdthm} 
Let the rooted path $P$ and node-sequence $\nd_0,\ldots,\nd_k$, where $k\leq L$, be as in
Lemma~\ref{cso-simpstructhm}. 
For each $j\in\dbrack{k}$, there is a vertex-set 
$\ndset_j\sse P_{\nd_{j-1},\nd_j}$ containing $\nd_{j-1}$, $\nd_j$, with 
$|\ndset_j|\leq N_1$, whose nodes are ordered by the order they appear on $P$, 
and
for every pair of consecutive nodes $a,b\in\fullset:=\bigcup_{j=0}^k\ndset_j$, 
there is a path $\spath_{a,b}$, node $m_a$, integer $\gm_a\geq 0$, satisfying the
following properties. For $a,b\in\fullset$, let $\nxt(a)$ be the next node in
$\fullset$ after $a$, for $a\neq\nd_k$; let $b\prec a$ if $b$ comes before $a$. 
%
\begin{enumerate}[label={\textnormal{(C\arabic*)}}, 
topsep=0.2ex, noitemsep, leftmargin=*]
\item \textnormal{(Distance)} $d(\spath_{a,b})\leq\dist_{a}:=2^{\gm_a}-1+d(a,m_a)+d(m_a,b)$
for every pair of consecutive nodes $a,b\in\fullset$.
\label{cso-distance}

\item \textnormal{(Total-length)} $\sum_{a\in\fullset-\nd_k}\dist_{a}\leq B$.
\label{cso-totlen}

\item \textnormal{(Reward)}
$\sum_{j=0}^K\sum_{a\in\ndset_j-\nd_j}\sum_{v\in\spath_{a,\nxt(a)}-\nxt(a)}
\pi_v\bigl(\sum_{b\prec a}\dist_b+d(\spath_{a,v})+2^j-1\bigr)
\geq\optcso/8$.  
\label{cso-reward}

\item \textnormal{(Prefix-size)}
$\sum_{h=0}^j\sum_{a\in\ndset_h-\nd_h}\mu^j\bigl(\spath_{a,\nxt(a)}-\nxt(a)\bigr)\leq (K+1)2^j$ 
for all $j\in\dbrack{k}$. 
\label{cso-prefix}

\item \textnormal{(Size)} 
$\mu^j(\spath_{a,b}-b)\leq 2^j$ for every $j\in\dbrack{k}$ and pair of consecutive nodes
$a,b\in\ndset_j$. 
\label{cso-size}  
\end{enumerate} 
\end{theorem}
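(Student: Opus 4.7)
The plan is to follow the template of the proof of Theorem~\ref{strucdthm}, using Lemma~\ref{cso-simpstructhm} as the starting point that Theorem~\ref{simpstructhm} provided for \csko. Let $P$ and $\nd_0 \preceq \cdots \preceq \nd_k$ be as guaranteed by Lemma~\ref{cso-simpstructhm}. To construct $\ndset_j$ for each $j \in \dbrack{k}$, I would greedily split $P_{\nd_{j-1}, \nd_j}$ into minimal contiguous segments whose $\mu^j$-weight reaches $2^j$; the endpoints of these segments form $\ndset_j$. Since $\mu^j_v \leq 2^j$ for every $v$ and Lemma~\ref{cso-simpstructhm}(b) gives $\mu^j(P_{\rt, \nd_j} - \nd_j) \leq (K+1)2^j$, this yields $|\ndset_j| \leq 2(K+1) = N_1$. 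Properties \ref{cso-prefix} and \ref{cso-size} then hold directly by construction (inheriting \ref{cso-prefix} from Lemma~\ref{cso-simpstructhm}(b)).

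For each consecutive pair of portal nodes $a, b = \nxt(a) \in \fullset$ with $a \in \ndset_j - \nd_j$, let $b'$ be the node just before $b$ on $P$ and choose $m_a \in P_{a, b'}$ so that both $\sum_{v \in P_{a, m_a}} \pi_v(d(P_{\rt, v}) + 2^j - 1)$ and $\sum_{v \in P_{m_a, b'}} \pi_v(d(P_{\rt, v}) + 2^j - 1)$ are at least half the corresponding sum over $P_{a, b'}$. I would take $\gm_a$ to be the largest integer with $2^{\gm_a} - 1 \leq \dreg(P_{a, b}, m_a)$ and define $\spath_{a, b}$ to be $P_{a, m_a}, b$ if $\dreg(P_{a, m_a}) \leq \dreg(P_{m_a, b})$ and $a, P_{m_a, b}$ otherwise. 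Property \ref{cso-distance} then follows by the identical calculation used in Theorem~\ref{strucdthm}, and since $\dist_a \leq d(P_{a, b})$ telescopes to $\sum_a \dist_a \leq d(P) \leq B$, property \ref{cso-totlen} holds as well.

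The only genuinely new step relative to the \csko case is establishing the reward property \ref{cso-reward}, whose argument to $\pi_v$ now contains travel time. The key observation will be that the rerouting from $P$ to $\spath$ does not increase any travel time: combining $\dist_b \leq d(P_{b, \nxt(b)})$ over $b \prec a$ telescopes to $\sum_{b \prec a} \dist_b \leq d(P_{\rt, a})$, and a direct case analysis using the triangle inequality $d(a, m_a) \leq d(P_{a, m_a})$ in the case $\spath_{a, b} = a, P_{m_a, b}$ shows that $d(\spath_{a, v}) \leq d(P_{a, v})$ for every $v \in \spath_{a, \nxt(a)} - \nxt(a)$. Hence $\sum_{b \prec a} \dist_b + d(\spath_{a, v}) + 2^j - 1 \leq d(P_{\rt, v}) + 2^j - 1$, and because $\pi_v$ is non-increasing, the new $\pi_v$-value is at least the one supplied by Lemma~\ref{cso-simpstructhm}. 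Combined with the fact that the choice of $m_a$ guarantees $\spath_{a, b} - b$ retains at least half of the Lemma-reward on $P_{a, b} - b$, summing over all consecutive portal pairs and invoking Lemma~\ref{cso-simpstructhm}(a) yields the bound of $\optcso/8$ required by \ref{cso-reward}.

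The main obstacle I anticipate is this last step: carefully verifying that the modified path never inflates the travel time to any intermediate vertex, so that the new estimated reward dominates that of Lemma~\ref{cso-simpstructhm}. Once that monotonicity is in hand, the rest of the argument is essentially a transcription of the proof of Theorem~\ref{strucdthm}.
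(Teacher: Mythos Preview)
Your proposal is correct and follows essentially the same approach the paper indicates: it adapts the proof of Theorem~\ref{strucdthm} to the \cso setting, with the one new ingredient being the travel-time monotonicity observation needed for \ref{cso-reward}. The paper itself does not spell out this monotonicity step explicitly (it only says the result is ``analogous to Theorem~\ref{strucdthm}''), but your case analysis---that $\sum_{b\prec a}\dist_b\leq d(P_{\rt,a})$ by telescoping, and that $d(\spath_{a,v})\leq d(P_{a,v})$ via the triangle inequality in the $a,P_{m_a,b}$ case---is exactly the right way to fill the gap.
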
 

Note that only the (Reward) property \ref{cso-reward} is different from the (Reward)
property in Theorem~\ref{strucdthm} for \csko; the remaining properties are the same. We
now exploit this structural result in much the same way as in Section~\ref{csko-quasipoly}.

\vspace*{-1ex}
\subparagraph*{Configuration LP.}
Again, to avoid excessive notation, assume that we have found, by enumeration, the
``portal nodes'' $\fullset$ and length bounds $\dist_a$ for all $a\in\fullset-\nd_k$
satisfying Theorem~\ref{cso-strucdthm}.
To capture the reward obtained from an $a$-$\nxt(a)$ path,
the configurations $\I_a$ for a node $a\in\fullset-\nd_k$ will now consist of feasible
solutions to {\em \ptp knapsack monotone-reward TSP}, which is the {knapsack-constrained
version of \ptp-monotone-reward TSP}: 
they are simply all $a$-$\nxt(a)$ paths $\tau$ with 
$\mu^j\bigl(\tau-\nxt(a)\bigr)\leq 2^j$. The length budget $\dist_a$ will be captured
implicitly, by defining the reward function 
of a node $v\neq\nxt(a)$ to be 
$\pi_v\bigl(\sum_{b\prec a}\dist_b+d(\tau_{a,v})+2^j-1\bigr)$
if $d(\tau_{a,v})\leq\dist_a-d(v,\nxt(a))$ and $0$ otherwise, which is a non-increasing
function of 
$d(\tau_{a,v})$; for $v=\nxt(a)$, the reward function is identically $0$.
For notational convenience, define $\pi^{a,j}(\tau)$ to be the total reward obtained from
nodes in $\tau$ under the above rewards.

The configuration LP for \cso now has the same constraints as \eqref{cskolp}, but the
objective function changes to 
$\max\ \sum_{j=0}^k\sum_{a\in\ndset_j-\nd_j}\sum_{\tau\in\I_a}x^a_\tau\cdot\pi^{a,j}(\tau)$.
Let \csolp denote this LP, and \optcsolp denote its optimal value.
Similar to Claim~\ref{cskolpval} and Lemma~\ref{cskolpsolve}, we have the following. 

\begin{lemma} \label{csolpsolve}
We have $\optcsolp\geq\optcso/8$.
Given an $\al$-approximation algorithm for deadline TSP, we can compute in polytime
a \csolp-solution $\bx$ of objective value at least $\optcsolp/O(\al)$.
\end{lemma}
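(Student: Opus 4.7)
\medskip
\noindent\textbf{Proof plan.}
The first inequality $\optcsolp \geq \optcso/8$ is essentially immediate from the structural result (Theorem~\ref{cso-strucdthm}), mirroring the argument sketched after Claim~\ref{cskolpval}. The plan is to exhibit a feasible integral solution of \csolp whose value is $\optcso/8$: set $\bx^a_\tau = 1$ for $\tau = \spath_{a,\nxt(a)}$ for every $a \in \fullset - \nd_k$, and $0$ otherwise. Property~\ref{cso-size} ensures $\mu^j(\spath_{a,b}-b) \leq 2^j$ for consecutive $a,b \in \ndset_j$, so $\spath_{a,\nxt(a)} \in \I_a$ (recall the $\I_a$ for \cso requires only the $\mu^j$-bound; the length budget $\dist_a$ is folded into the reward function). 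Property~\ref{cso-distance} (paired with how rewards are defined above, truncating to $0$ when $d(\tau_{a,v}) > \dist_a - d(v,\nxt(a))$) ensures the reward contributions under $\pi^{a,j}(\spath_{a,\nxt(a)})$ agree with those in~\ref{cso-reward}, giving objective value at least $\optcso/8$. Property~\ref{cso-prefix} is exactly constraint \eqref{prefixlp}, and constraint \eqref{covbnd} clearly holds since the $\spath_{a,\nxt(a)}$ paths share at most portal endpoints.

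For the algorithmic part, the plan is to apply the ellipsoid method to a suitable truncation of \csolp along the lines of Lemma~\ref{cskolpsolve}. The dual of \csolp has exponentially many constraints, one per pair $(a, \tau)$ with $\tau \in \I_a$. Given dual multipliers, separating amounts to deciding, for each $a \in \fullset - \nd_k$, whether there is a $\tau \in \I_a$ whose reduced reward $\pi^{a,j}(\tau) - (\text{dual contributions from \eqref{covbnd} and \eqref{prefixlp}})$ exceeds the multiplier on \eqref{config}. After absorbing the linear-in-nodes dual terms into modified node-reward functions, this reduces to a \ptp knapsack monotone-reward TSP instance: the knapsack weight on $v$ is $\mu^j_v$ (from~\ref{cso-size}), the knapsack budget is $2^j$, and the node-reward function remains non-increasing in $d(\tau_{a,v})$ since we only subtract a constant from $\pi^{a,j}(\cdot)$. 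By Corollary~\ref{knapvrp-approx} and the reduction from monotone-reward TSP to deadline TSP (Friggstad--Swamy), an $\al$-approximation for deadline TSP yields an $O(\al)$-approximation for \ptp knapsack monotone-reward TSP. Standard ``approximate-separation'' arguments for the ellipsoid method then give an $O(\al)$-approximate solution to \csolp in polytime (modulo the standard modifications: binary search over the objective, truncating to configurations with nonzero reduced reward, and rounding up violations of \eqref{prefixlp} by an $O(1)$ factor to preserve the rounding analysis).

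The main subtlety I expect is verifying that the dual-separation problem really is a knapsack monotone-reward TSP instance, because the reduced reward $\pi^{a,j}(\tau) - \sum_{v \in \tau - \nxt(a)} \alpha_v - \sum_{h \geq j} \beta_h \cdot \mu^j_v$ (where $\alpha_v, \beta_h \geq 0$ are the dual multipliers for \eqref{covbnd}, \eqref{prefixlp}) must decompose as a sum of per-vertex non-increasing reward functions of travel-time from $a$. This works because $\pi^{a,j}(\tau) = \sum_{v \in \tau - \nxt(a)} \pi_v\bigl(\sum_{b \prec a}\dist_b + d(\tau_{a,v}) + 2^j-1\bigr)\cdot \bon_{d(\tau_{a,v}) \leq \dist_a - d(v,\nxt(a))}$, which is already a sum of non-increasing functions of $d(\tau_{a,v})$; subtracting constants preserves monotonicity, and any negative portions can be clipped to zero without affecting optimal configurations (discarding a vertex can only help). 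The rest is mechanical: the $(n + \log B)^{O(\log B \log \log B)}$ enumeration over $\fullset$ and the $\{\dist_a\}$ is already absorbed in the assumption preceding the LP, and the ellipsoid-based rounding adds only a polynomial factor, leaving the overall running time $(n+\log B)^{O(\log B \log \log B)} \cdot \tim$.
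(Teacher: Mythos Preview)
Your proposal is correct and follows essentially the same approach as the paper: the lower bound via the integral solution from Theorem~\ref{cso-strucdthm}, and the algorithmic part via ellipsoid on the dual with approximate separation by \ptp knapsack monotone-reward TSP, reduced to deadline TSP through Corollary~\ref{knapvrp-approx} and the Friggstad--Swamy reduction. One small slip: the dual contribution from \eqref{prefixlp} should read $\sum_{h\geq j}\beta_h\,\mu^h_v$ (not $\mu^j_v$), and your parenthetical about ``rounding up violations of \eqref{prefixlp}'' is misplaced---the LP solution you obtain is exactly feasible, and the constraint violations only arise later in the randomized-rounding step.
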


As with Lemma~\ref{cskolpsolve}, we defer the proof of the above lemma to
Section~\ref{lpsolve}. 

\vspace*{-2ex}
\subparagraph*{LP-rounding and non-adaptive algorithm.}
For any $a\in\fullset-\nd_k$, due to our definition of the node-reward functions in the
corresponding monotone-reward TSP instance, we may assume (by shortcutting) that every
path $\tau\in\I_a$ with $\bx^a_\tau>0$ has $d(\tau)\leq\dist_a$.
The LP-rounding algorithm and conversion to a non-adaptive policy are 
{\em exactly} as in Algorithm~\ref{cskoalg}. 
The analysis is similar, but we now analyze the reward on a path-by-path basis, 
considering the reward obtained from paths $\tau\in\I_a$, for each $a\in\fullset-\nd_k$. 

Consider a vertex $v\in V$. 
We say that $v$ is ``visited by segment $a$'' if $v\in\rpath_{a,\nxt(a)}-\nxt(a)$, 
and $v$ is ``retained by segment $a$'' if $v\in\rpath'_{a,\nxt(a)}-\nxt(a)$; the
latter events are disjoint for different $a\in\fullset-\nd_k$. 
(If $v=a$, both events happen with probability $1$.)
Again, we view step~\ref{shortcut} as being executed regardless of the outcome of
step~\ref{prefixchk}. 
As before, for $j\in\dbrack{k}$, let $\Bad_j$ be the event that 
$\sum_{h=0}^j\sum_{a\in\ndset_h-\nd_h}\mu^j\bigl(\rpath_{a,\nxt(a)}-\nxt(a)\bigr)>5(K+1)2^j$,
and $\Bad:=\bigvee_{j=0}^k\Bad_j$ be the event that step~\ref{prefixchk} fails; let $\cBad$
denote the complement of $\Bad$. Lemma~\ref{prbad} continues to hold as is.


\begin{lemma} \label{cso-pathcov}
Consider any $a\in\fullset-\nd_k$, $\tau\in\I_a$, and any $v\in\tau-\nxt(a)$.
Then we have 
$\Pr[\{\rpath_{a,\nxt(a)}=\tau,\,v\in\rpath'_{a,\nxt(a)}\}\wedge\cBad]\geq\frac{\bx^a_\tau}{8}$.
\end{lemma}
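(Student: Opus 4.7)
The plan is to adapt the Chernoff-based machinery from Lemma~\ref{nodecovlem} with a more refined accounting that fixes the specific path $\tau$ at segment $a$, rather than merely conditioning on which segment visits $v$. Let $\Ec:=\{\rpath_{a,\nxt(a)}=\tau\}\wedge\{v\in\rpath'_{a,\nxt(a)}\}$ denote the event in question. Since $\Ec\wedge\Bad\sse\{\rpath_{a,\nxt(a)}=\tau\}\wedge\Bad$, we have
\begin{equation*}
\Pr[\Ec\wedge\cBad] \geq \Pr[\Ec] - \Pr\bigl[\{\rpath_{a,\nxt(a)}=\tau\}\wedge\Bad\bigr],
\end{equation*}
and factoring each term on the right using $\Pr[\rpath_{a,\nxt(a)}=\tau]=\bx^a_\tau/2$, it suffices to show that
\begin{equation*}
\Pr\bigl[v\in\rpath'_{a,\nxt(a)}\,\bigl|\,\rpath_{a,\nxt(a)}=\tau\bigr]\geq\tfrac{1}{2}
\quad\text{and}\quad
\Pr\bigl[\Bad\,\bigl|\,\rpath_{a,\nxt(a)}=\tau\bigr]\leq\tfrac{1}{4},
\end{equation*}
from which we obtain $\Pr[\Ec\wedge\cBad]\geq\bx^a_\tau/2\cdot(1/2-1/4)=\bx^a_\tau/8$.

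For the first bound, if $v=a$ then $v$ is a portal node and is never shortcut, so the conditional probability equals $1$. Otherwise $v\in\rpath'_{a,\nxt(a)}$ amounts to $v$ not being visited by any earlier segment $a'\prec a$. Since the random choices at distinct portal segments are mutually independent and $\Pr[v\in\rpath_{a',\nxt(a')}-\nxt(a')]\leq\tfrac{1}{2}\sum_{\tau'\in\I_{a'}:\,v\in\tau'-\nxt(a')}\bx^{a'}_{\tau'}$ for each $a'$, constraint~\eqref{covbnd} yields $\sum_{a'\prec a}\Pr[v\in\rpath_{a',\nxt(a')}-\nxt(a')]\leq\tfrac{1}{2}$, and Claim~\ref{helper}(b) then gives a product that is at least $\tfrac{1}{2}$.

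For the second bound, fix $j\in\dbrack{k}$ and recall from the proof of Lemma~\ref{prbad} the $[0,1]$-valued random variables $Z_{a'}:=\mu^j(\rpath_{a',\nxt(a')}-\nxt(a'))/2^j$ for $a'\in\bigcup_{h=0}^{j}(\ndset_h-\nd_h)$, whose sum has expectation at most $K+1$ by constraint~\eqref{prefixlp}. If $a$ does not lie in this index set, then conditioning on $\rpath_{a,\nxt(a)}=\tau$ leaves every $Z_{a'}$ independent and unaffected, and the Chernoff bound gives $\Pr[\Bad_j\,|\,\rpath_{a,\nxt(a)}=\tau]\leq e^{-(K+1)}$. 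If instead $a\in\ndset_h-\nd_h$ with $h\leq j$, then the conditioning fixes $Z_a$ to a value in $[0,1]$ (using $\tau\in\I_a$ together with Claim~\ref{mudec}), while $\{Z_{a'}\}_{a'\neq a}$ remain mutually independent; hence
\begin{equation*}
\Pr\bigl[\Bad_j\,\bigl|\,\rpath_{a,\nxt(a)}=\tau\bigr]\leq\Pr\Bigl[\sum_{a'\neq a}Z_{a'}>5(K+1)-1\Bigr]\leq e^{-(K+1)},
\end{equation*}
again by Chernoff bounds on the independent $\{Z_{a'}\}_{a'\neq a}$. A union bound over $j\in\dbrack{k}$ then yields $\Pr[\Bad\,|\,\rpath_{a,\nxt(a)}=\tau]\leq(L+1)e^{-(K+1)}\leq\tfrac{1}{4}$ by the choice of $K=3\log(6\log B)+12$.

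The main obstacle anticipated is precisely the subtle correlation that dual conditioning on $\Ec$ would induce: simultaneously fixing the path at $a$ and requiring $v$ to survive the earlier shortcutting couples the random variables $\{Z_{a'}\}_{a'\prec a}$ in a way that breaks the independence needed for a direct Chernoff analysis. The inclusion-exclusion step at the start sidesteps this difficulty by letting us work under the simpler conditioning on $\{\rpath_{a,\nxt(a)}=\tau\}$ alone, under which all $\{Z_{a'}\}_{a'\neq a}$ retain their independence and the standard Chernoff machinery applies without modification.
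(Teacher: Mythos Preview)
Your proof is correct and follows essentially the same approach as the paper: the inclusion $\{\Ec\wedge\Bad\}\subseteq\{\rpath_{a,\nxt(a)}=\tau\}\wedge\Bad$ to reduce to conditioning only on the path choice at $a$, then separately bounding $\Pr[\text{retained}\,|\,\rpath_{a,\nxt(a)}=\tau]\geq\tfrac12$ via Claim~\ref{helper}(b) and $\Pr[\Bad\,|\,\rpath_{a,\nxt(a)}=\tau]$ via Chernoff on the $\{Z_{a'}\}_{a'\neq a}$. The only cosmetic differences are that the paper writes $\Pr[\rpath_{a,\nxt(a)}=\tau]\geq\bx^a_\tau/2$ rather than equality (to cover the case where $\tau$ coincides with the direct $a,\nxt(a)$ path), and it states the sharper bound $\Pr[\Bad\,|\,\rpath_{a,\nxt(a)}=\tau]\leq 1/\poly(\log B)$ rather than $\tfrac14$; neither affects the argument.
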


\begin{proof}
We follow a similar strategy as in the proof of Lemma~\ref{nodecovlem}.
Let $\Ec$ denote the event $\rpath_{a,\nxt(a)}=\tau$. We have $\Pr[\Ec]\geq 0.5\bx^a_{\tau}$.
Let $\Ret$ denote the event ``$v$ is retained by segment $a$'', i.e., $\{v\in\rpath'_{a,\nxt(a)}\}$.
Let $\Vis_b$ denote the event ``$v$ is visited by segment $b$'' for $b\in\fullset-\nd_k$.
The stated probability is equal to
\begin{equation*}
\Pr[\Ec\wedge\Ret]-\Pr[\Ec\wedge\Ret\wedge\Bad] 
\geq \Pr[\Ec\wedge\Ret]-\Pr[\Ec\wedge\Bad]
= \Pr[\Ec]\cdot\Bigl(\Pr[\Ret\,|\,\Ec]-\Pr[\Bad\,|\,\Ec]\Bigr).
\end{equation*}
We show that $\Pr[\Ret\,|\,\Ec]\geq 0.5$ and $\Pr[\Bad\,|\,\Ec|\leq 1/\poly(\log B)$,
which yields the lemma. 

To bound $\Pr[\Ret\,|\,\Ec]$, if $v=a$, then this is $1$. So suppose otherwise. (Note that
then $v\neq\nxt(b)$ for any $b\in\fullset-\nd_k$.)
Since $v\in\tau$, conditioned on $\Ec$, $v$ is retained by segment $a$ if $v$ is
not visited by any segment $b\prec a$. We have  
$\Pr[\Vis_b]\leq 0.5\sum_{\tau'\in\I_b:v\in\tau'}\bx^b_{\tau'}$ 
for any $b$, so $\Pr[\Ret\,|\,\Ec]=\prod_{b\prec a}(1-\Pr[\Vis_b])\geq 0.5$ by
Claim~\ref{helper} since $\sum_{b\prec a}\Pr[\Vis_b]\leq 0.5$.

We bound $\Pr[\Bad_f\,|\,\Ec]$, let $j$ be the index such that $a\in\ndset_j$.
We first note that if $f<j$, then $\Bad_f$ and $\Ec$ are independent, and the upper
bound follows from Lemma~\ref{prbad}. So suppose $f\geq j$.
As in the proof of Lemma~\ref{prbad}, 
define $Z_b=\mu^f\bigl(\rpath_{b,\nxt(b)}-\nxt(b)\bigr)/2^f$ for all
$b\in\bigcup_{h=0}^f(\ndset_h-\nd_h)$.
So $\Pr[\Bad_f\,|\,\Ec]=\Pr\bigl[\sum_{h=0}^f\sum_{b\in\ndset_h-\nd_h}Z_b>5(K+1)\,|\,\Ec\bigr]$. 
Observe that $Z_a\leq 1$. So 
$\Pr[\Bad_f\,|\,\Ec]\leq
\Pr\bigl[\sum_{h=0}^f\sum_{b\in\ndset_h-\nd_h: b\neq a}Z_b>5(K+1)-1\bigr]$ 
and by Chernoff bounds, the latter probability is at most 
$\bigl(\frac{e^3}{4^4}\bigr)^{K+1}\leq e^{-(K+1)}$;
hence $\Pr[\Bad\,|\,\Ec|\leq 1/\poly(\log W)$. 
\end{proof}

Mimicking the proof of Lemma~\ref{nanodelem}, we obtain the following.

\begin{lemma} \label{cso-nanodelem}
Consider any node $v\in V-\nd_k$. 
Suppose that $v$ is retained by segment $a\in\ndset_j-\nd_k$ in step~\ref{shortcut}, 
for some index $j\in\dbrack{k}$. 
Then 
$\Pr[v\in\rpath'',\,\sum_{w\in\rpath'':w\prec_{\rpath'}v}\Sw\leq 2^j-1]\geq\frac{1}{20(K+1)}$
where the probability is over both the random sampling in step~\ref{poloutput} 
and the random execution of $\rpath''$, and is conditioned on the state after
step~\ref{shortcut}. 
\end{lemma}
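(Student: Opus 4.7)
The plan is to mirror the proof of Lemma~\ref{nanodelem} almost verbatim, since the only structural differences between the \csko and \cso settings that matter here are (i) how reward is defined and (ii) that we track processing time alone (the travel-time part is already absorbed into the $\dist_a$ bounds by construction of $\rpath_{a,\nxt(a)}\in\I_a$ and property \ref{cso-totlen}). Throughout, all probabilities are conditioned on the state after step~\ref{shortcut}, so the sets $T$ of nodes in $\rpath'$ appearing before $v$ (and in particular $v\in\rpath'$) are fixed by the conditioning.

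First I would decompose the target probability as
\[
\Pr\bigl[v\in\rpath''\bigr]\cdot
\Pr\Bigl[\textstyle\sum_{w\in A}\Sw\leq 2^j-1\Bigr],
\]
where $A:=\rpath''\cap T$ is the random subset of $T$ retained by the independent Bernoulli-$\tfrac{1}{10(K+1)}$ sampling in step~\ref{poloutput}. Since $v$ is retained by segment $a$, in particular $v\in\rpath'$, so $\Pr[v\in\rpath'']=\tfrac{1}{10(K+1)}$. Moreover $A$ is independent of the event $\{v\in\rpath''\}$, which justifies the factorization.

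Next I would bound the processing-time term using the standard truncation trick:
$\sum_{w\in A}\Sw\geq 2^j$ iff $\sum_{w\in A}X^j_w\geq 2^j$, so Markov gives
$\Pr[\sum_{w\in A}\Sw\geq 2^j]\leq 2^{-j}\cdot\E{\sum_{w\in A}X^j_w}=\tfrac{\mu^j(T)}{10(K+1)\cdot 2^j}$.
The main point is to control $\mu^j(T)$. Because $a\in\ndset_j-\nd_k$ and $T$ consists only of nodes lying on some $\rpath_{b,\nxt(b)}$ with $b\in\ndset_h$ for $h\leq j$ (with possibly fewer nodes after the shortcutting in step~\ref{shortcut}, which only decreases $\mu^j$-weight), the event $\cBad$ that $v$ was retained implies $\mu^j(T)\leq 5(K+1)\cdot 2^j$. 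Hence the displayed Markov bound is at most $\tfrac{1}{2}$, giving $\Pr[\sum_{w\in A}\Sw\leq 2^j-1]\geq\tfrac{1}{2}$, and multiplying the two factors yields $\tfrac{1}{20(K+1)}$ as required.

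The only subtlety, which is really the only thing to verify carefully, is that the shortcutting step and the conditioning on ``$v$ retained by segment $a$'' do not break the independence we use to factor the probability or inflate $\mu^j(T)$. The independence is preserved because step~\ref{poloutput} samples each element of $\rpath'-\rt$ independently and after the shortcutting has already been performed, and the $\mu^j(T)$ bound is inherited directly from $\cBad$ since shortcutting can only remove nodes. No new ingredients beyond those used for \csko are needed.
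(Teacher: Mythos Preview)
Your proposal is correct and follows essentially the same argument as the paper's proof: factor the probability using independence of the sampling at $v$ from the sampling of predecessors, apply the truncation $X^j_w=\min\{\Sw,2^j\}$ together with Markov, and use the success of step~\ref{prefixchk} to bound $\mu^j$ of the predecessor set by $5(K+1)2^j$. The only cosmetic differences are naming (your $T,A$ are the paper's $S,T$) and your slightly garbled phrase ``the event $\cBad$ that $v$ was retained'' (these are distinct events; what you mean, and use, is that the conditioning on the state after step~\ref{shortcut} includes that step~\ref{prefixchk} succeeded).
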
  

\begin{proof} 
Recall that $\mu^j_w=\E{X^j_w}$ and $X^j_w=\min\{\nsize[w],2^j\}$ for every node $w$.
Let $S$ be the set of nodes in $\rpath'$ before $v$, and $T\sse S$ be the random set of
nodes from $S$ included in $\rpath''$. (Note that $S$ is fixed since we are conditioning on
the state after step~\ref{shortcut}.) 
The given probability is 
$\Pr\bigl[v\in\rpath''\bigr]\cdot\Pr\bigl[\sum_{w\in T}\Sw\leq 2^j-1\,|\,v\in\rpath''\bigr]$.

Since $v$ is retained by segment $j$, in particular, we have
$v\in\rpath'$; so $\Pr\bigl[v\in\rpath''\bigr]=\frac{1}{10(K+1)}$.
Note that $T$ is independent of the event $\{v\in\rpath''\}$. So we have
\begin{equation*}
\begin{split}
\Pr\Bigl[\sum_{w\in T}\nsize[w]\leq 2^j-1\Bigr] 
= 1-\Pr\Bigl[\sum_{w\in T}\nsize[w]\geq 2^j\Bigr]
& = 1-\Pr\Bigl[\sum_{w\in T}\min\{\nsize[w],2^j\}\geq 2^j\Bigr] \\
& \geq 1-\frac{\E{\sum_{w\in T}X^j_w}}{2^j}.
\end{split}
\end{equation*}
Since step~\ref{prefixchk} succeeds, we know that $\mu^j(S)\leq 5(K+1)2^j$. Since
$\Pr[w\in T]=\frac{1}{10(K+1)}$, we obtain that 
$\E{\sum_{w\in T}X^j_w}=\frac{\mu^j(S)}{10(K+1)}\leq 2^{j-1}$.
Putting everything together, we obtain the desired statement.
\end{proof}

\begin{proofof}{Theorem~\ref{cso-approxthm}} 
We define a random variable $\Rewd^{a,\tau}_v$, for $a\in\fullset-\nd_k$, $\tau\in\I_a$,
and $v\in\tau$, which gives a pessimistic estimate of the
expected reward obtained from $v$
if $\tau$ is sampled in step~\ref{randround}, and $v\in\rpath''$.
For $a\in\fullset-\nd_k$, define $\visit_a:=\sum_{b\prec a}\dist_b$.
Let
$\Rewd^{a,\tau}_v=\pi_v\bigl(\visit_a+d(\tau_{a,v})+2^j-1\bigr)$ 
if $\rpath_{a,\nxt(a)}=\tau$, $\cBad$ happens, $v\in \rpath''$, and
$\sum_{w\in\rpath'':w\prec_{\rpath'}v}\Sw\leq 2^j-1$; 
otherwise set $\Rewd^{a,\tau}_v=0$.
Observe that $\sum_{a\in\fullset-\nd_k}\sum_{\tau\in\I_a}\sum_{v\in\tau}\Rewd^{a,\tau}_v$
is a lower bound on the reward returned, since 
$\visit_a+d(\tau_{a,v})+2^j-1$ is an upper bound on the start time of $v$, as
$d(\rpath_{b,\nxt(b)})\leq\dist_b$ for all $b\in\fullset-\nd_k$.
Taking expectations, we obtain that
\begin{equation}
\begin{split}
\E{&\Rewd^{a,\tau}_v} =
\pi_v\bigl(\visit_a+d(\tau_{a,v})+2^j-1\bigr)\cdot
\Pr\bigl[\{\rpath_{a,\nxt(a)}=\tau,\,v\in\rpath'_{a,\nxt(a)}\}\wedge\cBad\bigr]\cdot \\
&\quad \Pr\Bigl[v\in\rpath'',\,{\textstyle \sum_{w\in\rpath'':w\prec_{\rpath'}v}}\Sw\leq 2^j-1\,\Bigl\vert\,
\{\rpath_{a,\nxt(a)}=\tau,\,v\in\rpath'_{a,\nxt(a)}\}\wedge\cBad\Bigr]
\end{split}
\label{cso-exprewd}
\end{equation}
By Lemma~\ref{cso-pathcov}, we have 
$\Pr[\{\rpath_{a,\nxt(a)}=\tau,\,v\in\rpath'_{a,\nxt(a)}\}\wedge\cBad]\geq\frac{\bx^a_{\tau}}{8}$.
For the second probability term on the RHS of \eqref{cso-exprewd}, we can remove the
conditioning on $\rpath_{a,\nxt(a)}=\tau$, since both events depends only on the state
after step~\ref{shortcut}, that is on $\rpath'$ and the event $\Bad$.
Lemma~\ref{cso-nanodelem} shows that this probability is at least $\frac{1}{20(K+1)}$.

Substituting these quantities in 
$\sum_{a\in\fullset-\nd_k}\sum_{\tau\in\I_a}\sum_{v\in\tau}\E{\Rewd^{a,\tau}_v}$, we
obtain that the total expected reward obtained
is at least (objective value of $\bx$)/$O(K)$.
\end{proofof}

\section{\boldmath Solving the LPs (\ref{cskolp}) and \csolp} \label{lpsolve}

\begin{proofof}{Lemma~\ref{cskolpsolve}}
Let $\Alg$ be the given $\al$-approximation algorithm for \knapo.
Since \eqref{cskolp} has an exponential number of variables, we consider the dual LP,
which is as follows.
\begin{alignat}{3}
\min & \quad 
& \sum_{a\in\fullset-\nd_k}\beta_a+\sum_{v\in V}\tht_v+\sum_{j=0}^k(K+1)&2^j\cdot\kp_j
\tag{CKO-D} \label{cskodual} \\
\text{s.t.} & \quad & 
\sum_{v\in\tau-\nxt(a)}\Bigl(\pi_v(2^j-1)-\tht_v-\sum_{h=j}^k\mu^h_v\kp_h\Bigr) & \leq \beta_a \qquad
&& \forall j\in\dbrack{k},\,a\in\ndset_j-\nd_j,\,\tau\in\I_a \label{dualko} \\
&& \tht,\kp & \geq 0. \label{dualnoneg}
\end{alignat}
The $\beta_a$, $\tht_v$, $\kp_j$ dual variables correspond respectively to 
primal constraints \eqref{config}, \eqref{covbnd}, and \eqref{prefixlp}, and
constraints \eqref{dualko} correspond to to the $x^a_\tau$ primal variables. 
The dual \eqref{cskodual} has a polynomial number of variables, and an exponential number of
\eqref{dualko} constraints. Observe that a separation oracle for these constraints
amounts to solving, for each $j\in\dbrack{k}$, $a\in\ndset_j-\nd_k$, the \knapo-instance
whose feasible solutions are given by $\I_a$ and where the rewards are 
$\gm^j_v:=\pi_v(2^j-1)-\tht_v-\sum_{h=j}^k\mu^h_v\kp_h$ for all $v\in V-\nxt(a)$, 
and $\gm^j_{\nxt(a)}:=0$.
Using standard ideas 
(see, e.g.~\cite{FriggstadS14}), we show that we can use $\Alg$ to approximately separate
over these constraints, and hence utilize the ellipsoid method in conjunction with LP duality
to obtain the desired $\al$-approximate solution to \eqref{cskolp}.

It will be convenient to note that for any $\tht,\kp\geq 0$, 
we have 
\begin{equation}
\sum_{j=0}^k\sum_{a\in\ndset_j-\nd_j}\gm^j_a+\sum_{a\in\fullset-\nd_k}\tht_a
+\sum_{j=0}^k(K+1)2^j\cdot\kp_j\geq 0. \label{dualpos}
\end{equation}
This follows by adding the inequality 
$\gm^j_v+\tht_v+\sum_{h=j}^k\mu^h_v\kp_h\geq 0$ (which follows from the definition of
$\gm$-values) for all $j=0,1,\ldots,k$ and all $v\in\ndset_j-\nd_j$, and noting that 
$\sum_{h=0}^j\sum_{a\in\ndset_h-\nd_h}\mu^j_a\leq (K+1)2^j$ for all $j\in\dbrack{k}$;
the latter follows since we are assuming that we have the right guess for the portal
vertices $\fullset$ and the $\dist_{a}$ length bounds, satisfying
Theorem~\ref{strucdthm}, and in particular property~\ref{prefix}. (More precisely, we
check that this condition holds for our current guess, and if not, we reject
this guess.)

\smallskip
Define 
\[
\K(\nu)\ :=\ \Bigl\{(\beta,\tht,\kp):\ \eqref{dualko}, \eqref{dualnoneg},\ 
\sum_{a\in\fullset-\nd_k}\beta_a+\sum_{v\in V}\tht_v+\sum_{j=0}^k(K+1)2^j\cdot\kp_j\leq\nu\Bigr\}
\]
to be the set of feasible solutions to \eqref{cskodual} of objective
value at most $\nu$. Clearly, the optimal value of \eqref{cskodual} and \eqref{cskolp} is
the smallest $\nu$ for which $\K(\nu)\neq\es$. We may focus on $\nu\geq 0$ since the
optimal value of \eqref{cskolp} is clearly nonnegative. (Again, since we have the right
guesses, \eqref{cskolp} is feasible, and hence has an optimal solution.)
For a given $\nu\geq 0$ and 
$(\beta,\tht,\kp)$, we can use $\Alg$ to either show that
$\K(\al\nu)\neq\es$, or find a hyperplane separating $(\beta,\tht,\kp)$
from $\K(\nu)$ as follows. We first check that $\tht,\kp\geq 0$, 
and 
$\sum_{a\in\fullset-\nd_k}\beta_a+\sum_{v\in V}\tht_v+\sum_{j=0}^k(K+1)2^j\cdot\kp_j\leq\nu$,
and if not return the corresponding violated inequality as the
separating hyperplane. 

Next, for every 
$j\in\dbrack{k}$ and $a\in\ndset_j-\nd_j$, we run $\Alg$ on the \knapo-instance
mentioned above. 
Although the $\gm^j_v$ rewards may be negative, since $\I_a$ is closed under taking
subpaths starting and ending at $a$, $\nxt(a)$, we can transform to an 
instance with nonnegative rewards as follows. Considering the direct $a$, $\nxt(a)$ path,
we must have $\beta_a\geq\gm^j_a$; if not, we can return the corresponding inequality of
\eqref{dualko} as a separating 
hyperplane. Now, we can consider rewards $\gm'_v=\max\{0,\gm^j_v\}$ for all
$v\notin\{a,\nxt(a)\}$ and $\gm'_a=\gm'_{\nxt(a)}=0$, and check if $\Alg$ returns a
path $\tau\in\I_a$ of $\gm'$-reward larger than $\beta_a-\gm^j_a$. If so, then there is a subpath
$\tau'\in\I_a$ of $\tau$ such that $\sum_{v\in\tau'-\nxt(a)}\gm^j_v>\beta_a$, and we
return the corresponding inequality of \eqref{dualko} as the separating hyperplane. 

If, for every index $j$ and $a\in\ndset_j-\nd_j$, this does not happen, then we claim that 
$\K(\al\nu)\neq\es$. To see this, for every $j\in\dbrack{k}$,
$a\in\ndset_j-\nd_j$, we know that every $\tau\in\I_a$
satisfies $\sum_{v\in\tau-\nxt(a)}\gm^j_v\leq\beta'_a:=\gm^j_a+\al(\beta_a-\gm^j_a)$ since
$\Alg$ is an $\al$-approximation algorithm for \knapo.
Therefore, $(\beta',\tht,\kp)$ is a feasible solution to \eqref{cskodual}. Its
objective value is
\begin{equation*}
\begin{split}
\al\Bigl(\sum_{a\in\fullset-\nd_k}\beta_a&+\sum_{v\in V}\tht_v+\sum_{j=0}^k(K+1)2^j\cdot\kp_j\Bigr)
\\ & -(\al-1)\Bigl(\sum_{j=0}^k\sum_{a\in\ndset_j-\nd_j}\gm^j_a+\sum_{a\in\fullset-\nd_k}\tht_a
+\sum_{j=0}^k(K+1)2^j\cdot\kp_j\Bigr) 
\leq\al\nu.
\end{split}
\end{equation*}
The inequality follows because $\tht,\kp\geq 0$, 
so \eqref{dualpos} holds, and we also have
$\sum_{a\in\fullset-\nd_k}\beta_a+\sum_{v\in V}\tht_v+\sum_{j=0}^k(K+1)2^j\cdot\kp_j\leq\nu$.
Hence, $\K(\al\nu)\neq\es$.

Finally, let $\UB$ be some upper bound on the optimal value of \eqref{cskolp} such that
$\log\UB$ is polynomially bounded; e.g., say $\UB=\sum_{v\in V}\sum_{j=0}^k\pi_v(2^j-1)$. For
any $\e>0$, using binary search in the interval $[0,\UB]$, we can find $\nu^*$ such that
the ellipsoid method when run with the above (approximate) separation oracle returns that
$\K(\al\nu^*)\neq\es$ and certifies that $\K(\nu^*-\e)=\es$. The former shows that
$\optcskolp\leq\al\nu^*$. The latter implies that, 
for $\nu^*-\e$, the inequalities of \eqref{cskodual} returned by the separation oracle
together with the inequality 
$\sum_{a\in\fullset-\nd_k}\beta_a+\sum_{v\in V}\tht_v+\sum_{j=0}^k(K+1)2^j\cdot\kp_j\leq\nu^*-\e$
yield a polynomial-size certificate of emptiness of $\K(\nu^*-\e)$. By LP duality (or
Farkas' lemma), this implies that if we restrict \eqref{cskolp} to use only the $x^a_\tau$
variables corresponding to the inequalities of type \eqref{dualko} returned during the
execution of the ellipsoid method, then the resulting polynomial-size LP has an optimal
solution $(\bx,\by)$ of objective value
at least $\nu^*-\e$. By taking $\e$ inverse exponential in the input size, this also
implies that that the objective value of $(\bx,\by)$ is at least
$\nu^*\geq\optcskolp/\al$. 
\end{proofof}

\begin{proofof}{Lemma~\ref{csolpsolve}}
The lower bound on \optcsolp follows from Theorem~\ref{cso-strucdthm}, since we can simply
set $x^a_\tau$ for $\tau=\spath_{a,\nxt(a)}$ for every $a\in\fullset-\nd_k$.

First, we argue that one can obtain $\eta$-approximation algorithm $\Alg$ for
\ptp-knapsack monotone-reward TSP, with $\eta=O(\al)$, using the $\al$-approximation
algorithm for deadline TSP. 
Friggstad and Swamy~\cite{FriggstadS21} showed that monotone-reward TSP
can be reduced to deadline TSP losing a $(1+\e)$-factor. 
As shown in the proof of Corollary~\ref{knapvrp-approx}, we can 
reduce \ptp-knapsack monotone-reward TSP 
to monotone-reward TSP losing a small factor. Chaining the two yields essentially 
an $(\al+2)$-approximation algorithm for \ptp-knapsack monotone-reward TSP.

Now we utilize $\Alg$ to obtain an approximate separation oracle for the dual of \csolp,
and hence obtain a $\eta$-approximate solution to \csolp. This proceeds in much the same
way as in the proof of Lemma~\ref{cskolpsolve}.
Recall that for $a,b\in\fullset$, we say that $b\prec a$ if $b$ comes before $a$ in
$\fullset$. 
For $a\in\fullset-\nd_k$, define $\visit_a:=\sum_{b\prec a}\dist_b$.
Recall that 
for $\tau\in\I_a$, we define 
$\pi^{a,j}(\tau):=\sum_{v\in\tau-\nxt(a):d(\tau_{a,v})\leq\dist_a-d(v,\nxt(a))}\pi_v\bigl(\visit_a+d(\tau_{a,v})+2^j-1\bigr)$.
The dual of \csolp, denoted \csodual, looks very similar to \eqref{cskodual}: the only
change is that constraints \eqref{dualko} are replaced by
\begin{equation}
\pi^{a,j}(\tau)-
\sum_{v\in\tau-\nxt(a)}\bigl(\tht_v+\sum_{h=j}^k\mu^h_v\kp_h\bigr) \leq \beta_a \qquad
\forall j\in\dbrack{k},\,a\in\ndset_j-\nd_j,\,\tau\in\I_a. \label{csodualko}
\end{equation}
For every $j\in\dbrack{k}$, $a\in\ndset_j-\nd_k$, and every node $v\neq\nxt(a)$, 
define the following non-increasing reward function $\gm^{a,j}_v:\Z_+\mapsto\R$:
\begin{equation*}
\gm^{a,j}_v(t)=\begin{cases}
\pi_v\bigl(\visit_a+t+2^j-1\bigr)-\bigl(\tht_v+\sum_{h=j}^k\mu^h_v\kp_h\bigr) &
\text{if $t\leq\dist_a-d\bigl(v,\nxt(a)\bigr)$}; \\
-\bigl(\tht_v+\sum_{h=j}^k\mu^h_v\kp_h\bigr) & \text{otherwise}.
\end{cases}
\end{equation*}
Set $\gm^{a,j}_{\nxt(a)}(t)=0$ for all $t$.
Then separating over constraints \eqref{csodualko} amounts to deciding if the optimal
value of the knapsack monotone-reward TSP instance with these reward functions is at most
$\beta_a$, for every $j\in\dbrack{k}$, $a\in\ndset_j-\nd_k$.
It will be convenient to note that for any $\tht,\kp\geq 0$, 
we have 
\begin{equation}
\sum_{j=0}^k\sum_{a\in\ndset_j-\nd_j}\gm^{a,j}_a(0)+\sum_{a\in\fullset-\nd_k}\tht_a
+\sum_{j=0}^k(K+1)2^j\cdot\kp_j\geq 0. \label{csodualpos}
\end{equation}
This follows by adding the inequality 
$\gm^{a,j}_v(0)+\tht_v+\sum_{h=j}^k\mu^h_v\kp_h\geq 0$ (which follows from the definition of
$\gm^{a,j}_v$-functions) for all $j=0,1,\ldots,k$ and all $v\in\ndset_j-\nd_j$, and noting that 
$\sum_{h=0}^j\sum_{a\in\ndset_h-\nd_h}\mu^j_a\leq (K+1)2^j$ for all $j\in\dbrack{k}$;
the latter follows since we are assuming that we have the right guess for the portal
vertices $\fullset$ and the $\dist_{a}$ length bounds, satisfying
Theorem~\ref{cso-strucdthm}, and in particular property~\ref{cso-prefix}.

As before, define 
\[
\K(\nu)\ :=\ \Bigl\{(\beta,\tht,\kp):\ \eqref{csodualko}, \tht,\kp\geq 0,\ 
\sum_{a\in\fullset-\nd_k}\beta_a+\sum_{v\in V}\tht_v+\sum_{j=0}^k(K+1)2^j\cdot\kp_j\leq\nu\Bigr\}
\]
to be the set of feasible solutions to \csodual of objective
value at most $\nu$. Clearly, the optimal value of \csodual and \csolp is
the smallest $\nu$ for which $\K(\nu)\neq\es$. We may focus on $\nu\geq 0$ since the
optimal value of \csolp is clearly nonnegative. 
For a given $\nu\geq 0$ and 
$(\beta,\tht,\kp)$, we argue that we can use $\Alg$ to either show that
$\K(\eta\nu)\neq\es$, or find a hyperplane separating $(\beta,\tht,\kp)$
from $\K(\nu)$. 

We first check that $\tht,\kp\geq 0$, and 
$\sum_{a\in\fullset-\nd_k}\beta_a+\sum_{v\in V}\tht_v+\sum_{j=0}^k(K+1)2^j\cdot\kp_j\leq\nu$,
and if not return the corresponding violated inequality as the
separating hyperplane. Next, for every 
$j\in\dbrack{k}$ and $a\in\ndset_j-\nd_j$, we run $\Alg$ on the knapsack monotone-reward
TSP-instance mentioned above. 
Although the $\gm^j_v$ functions may take negative values, since $\I_a$ is closed under
taking subpaths starting and ending at $a$, $\nxt(a)$, we can transform to an 
instance with nonnegative rewards as follows. Considering the direct $a$, $\nxt(a)$ path,
we must have $\beta_a\geq\gm^{a,j}_a(0)$; if not, we can return the corresponding inequality of
\eqref{csodualko} as a separating hyperplane. 
Now, we can consider the reward functions $\tgm^{a,j}_v(t)=\max\{0,\gm^{a,j}_v(t)\}$ for
all $t$ and all $v\notin\{a,\nxt(a)\}$, and $\tgm^{a,j}_a(t)=\tgm^{a,j}_{\nxt(a)}(t)=0$
for all $t$.
We check if $\Alg$ returns a path $\tau\in\I_a$ of reward larger than
$\beta_a-\gm^{a,j}_a(0)$. If so, then there is a subpath 
$\tau'\in\I_a$ of $\tau$ such that 
$\sum_{v\in\tau'-\nxt(a)}\gm^{a,j}_v\bigl(d(\tau'_{a,v})\bigr)>\beta_a$, and we
return the corresponding inequality of \eqref{csodualko} as the separating hyperplane. 

If, for every index $j$ and $a\in\ndset_j-\nd_j$, this does not happen, then we claim that 
$\K(\eta\nu)\neq\es$. To see this, for every $j\in\dbrack{k}$,
$a\in\ndset_j-\nd_j$, we know that every $\tau\in\I_a$
satisfies 
\[
\sum_{v\in\tau-\nxt(a)}\gm^{a,j}_v\bigl(d(\tau_{a,v})\bigr)\leq
\beta'_a:=\gm^{a,j}(0)+\eta\bigl(\beta_a-\gm^{a,j}(0)\bigr)
\]
since $\Alg$ is an $\eta$-approximation algorithm for knapsack monotone-reward TSP.
Therefore, $(\beta',\tht,\kp)$ is a feasible solution to \csodual. Its
objective value is
\begin{equation*}
\begin{split}
\eta\Bigl(\sum_{a\in\fullset-\nd_k}\beta_a&+\sum_{v\in V}\tht_v+\sum_{j=0}^k(K+1)2^j\cdot\kp_j\Bigr)
\\ & -(\eta-1)\Bigl(\sum_{j=0}^k\sum_{a\in\ndset_j-\nd_j}\gm^{a,j}_a(0)+\sum_{a\in\fullset-\nd_k}\tht_a
+\sum_{j=0}^k(K+1)2^j\cdot\kp_j\Bigr) 
\leq\eta\nu.
\end{split}
\end{equation*}
The inequality follows because $\tht,\kp\geq 0$, 
so \eqref{csodualpos} holds, and we also have
$\sum_{a\in\fullset-\nd_k}\beta_a+\sum_{v\in V}\tht_v+\sum_{j=0}^k(K+1)2^j\cdot\kp_j\leq\nu$.
Hence, $\K(\eta\nu)\neq\es$.

The rest of the proof proceeds as in the proof of Lemma~\ref{cskolpsolve}.
\end{proofof}

\appendix \label{appstart}

\section{Adversarial orderings can be arbitrarily bad for correlated  knapsack} 
\label{append-badexample} \label{append-corrknap}
Consider an instance of correlated stochastic knapsack on the set of items $[n]$
with budget $W>2^{n+1}$. Let $\nsize[i]$ and $\nrewd[i]$ denote respectively the random
size and random reward of $i$, which follows the following distribution.
\begin{equation*}
(\nsize[i],\nrewd[i])=
\begin{cases}
\bigl(\nsize[i]^{(1)}:=W-2^{n-i+1}+1,\,\nrewd[i]^{(1)}:=1\bigr) & 
\quad \text{with probability $\frac{1}{n}$} \\
\bigl(\nsize[i]^{(2)}:=2^{n-i},\,\nrewd[i]^{(2)}:=0\bigr) & 
\quad \text{with probability $1-\frac{1}{n}$.}
\end{cases}
\end{equation*}
Note that at most one item can obtain positive reward since $W/2<W-2^{n-i+1}+1$ for all
$i\in[n]$. 

Suppose that we are forced to process the items in the ordering $1,\ldots,n$
deciding at each step whether we attempt to insert the current item into the knapsack or
abandon it forever. Let $j$ be the first item that we choose to insert into the
knapsack. It instantiates to size $2^{n-j}$ with probability $1-1/n$. If this happens we
get zero reward. The residual budget becomes $W-2^{n-j}$, which is less than the
$\nsize[i]^{(1)}$-sizes of 
items $j+1,\ldots,n$ (which yield positive reward). Therefore, by
processing the items in this ordering the expected reward is at most $1/n$. On the other
hand, suppose we process the items in the reverse order $n,\ldots,1$. If we attempt to insert
items $n,n-1,\ldots,j$ and get no positive reward from any of them, the residual
budget is
$W-\sum_{k=j}^{n}2^{n-k}=W-2^{n-j+1}+1>\nsize[j-1]^{(1)}$,
which means that item $j-1$ can be inserted and would yield reward $1$ with probability
$\frac{1}{n}$. Thus, the probability that no item gives positive reward is
$(1-1/n)^n\leq e^{-1}$. Hence, the expected reward that we obtain by attempting to insert
items in the ordering $n,\ldots,1$ is at least $(1-e^{-1})$, which is $\Omega(n)$ times
larger than the expected reward from any policy that is forced to process the items in the
ordering $1,\ldots,n$. 


\begin{thebibliography}{10}
\bibitem{BangjensenFJ95}
J.~Bang-Jensen, A.~Frank, and B.~Jackson.
\newblock Preserving and increasing local edge-connectivity in mixed graphs.
\newblock {\em SIAM J. Discrete Math.}, 8(2):155--178, 1995.

\bibitem{BansalBCM04}
N.~Bansal, A.~Blum, S.~Chawla, and A.~Meyerson.
\newblock Approximation algorithms for deadline-TSP and vehicle routing with time windows.
\newblock In {\em 36th STOC}, 2004.

\bibitem{BansalGLMNR12}
N.~Bansal, A.~Gupta, J.~Li, J.~Mestre, V.~Nagarajan, and A.~Rudra.
\newblock When LP is the cure for your matching woes: Improved bounds for stochastic
matchings. 
\newblock Algorithmica, 63(4):733--762, 2012.

\bibitem{BansalN14}
N.~Bansal and V.~Nagarajan.
\newblock On the adaptivity gap of stochastic orienteering.
\newblock In {\em IPCO}, pages 114--125, 2014.

\bibitem{Bhalgat11}
A.~Bhalgat.
\newblock A $(2 + \epsilon)$-approximation algorithm for the stochastic knapsack problem, 2011. 
\newblock Unpublished manuscript.
 
 \bibitem{BhalgatGK11}
 \newblock A.~Bhalgat, A.~Goel, and S.~Khanna.
 \newblock Improved approximation results for stochastic knapsack problems. 
\newblock In {\em SODA}, pages 1647--1665, 2011.

\bibitem{BC+94}
A. Blum, P. Chalasani, D. Coppersmith, B. Pulleyblank, P. Raghavan, and M. Sudan.
\newblock The Minimum Latency Problem.
\newblock In {\em 26th STOC}, pages 163--171, 1994.

\bibitem{BlumCKLMM07} 
A.~Blum, S.~Chawla, D.~R.~Karger, T.~Lane, and A.~Meyerson.  
\newblock Approximation algorithms for orienteering and discount-reward TSP.  
\newblock {\em SICOMP}, 37:653--670, 2007.

\bibitem{ChakrabartyS16}
D.~Chakrabarty and C.~Swamy.
\newblock Facility location with client latencies: linear-programming based techniques for
minimum-latency problems. 
\newblock In {\em Mathematics of Operations Research}, 41(3):865--883, 2016.

\bibitem{ChawlaGTTR20}
\newblock S.~Chawla, E.~Gergatsouli, Y.~Teng, C.~Tzamos, and R.~Zhang.
\newblock Pandora's box with correlations: learning and approximation.
\newblock In {\em FOCS}, pages 1214--1225, 2020.

\bibitem{ChekuriKP12} 
C.~Chekuri, N.~Korula, and M.~P\'al.  
\newblock Improved algorithms for orienteering and related problems.  
\newblock {\em ACM Transactions on Algorithms}, 8(3), 2012.

\bibitem{ChekuriK17}
C.~Chekuri and A.~Kumar.
\newblock Maximum coverage problem with group budget constraints.
\newblock {\em Journal of Combinatorial Optimization}, 34(3), 2017.

\bibitem{DeanGV08}
B.~Dean, M.~Goemans, and J.~Vondr\'ak.
\newblock Approximating the stochastic knapsack problem: the benefit of adaptivity.
\newblock Math. Oper. Res. 33, no. 4, 945--964, 2008.

\bibitem{DeshpandeHK16}
A.~Deshpande, L.~Hellerstein, and D.~Kletenik. 
\newblock Approximation algorithms for stochastic
submodular set cover with applications to boolean function evaluation and min-knapsack. 
\newblock ACM Trans. Algorithms, 12(3):42:1--42:28, 2016.

\bibitem{DezfuliFPS22}
S.~Dezfuli, Z.~Friggstad, I.~Post, and C.~Swamy.
\newblock combinatorial algorithms for rooted prize-collecting walks and applications to orienteering and minimum-latency problems.
\newblock In {\em IPCO}, pages 195--209, 2022.

\bibitem{EneNS18}
A.~Ene, V.~Nagarajan, and R.~Saket. 
\newblock Approximation algorithms for stochastic $k$-TSP. 
\newblock In 37th IARCS Annual Conference on Foundations of Software Technology and
theoretical Computer Science (FSTTCS 2017). Schloss Dagstuhl-Leibniz-Zentrum fuer
Informatik, 2018.

\bibitem{EspinosaS24}
D.~Espinosa and C.~Swamy.
\newblock Approximation algorithms for correlated knapsack orienteering.
\newblock In {\em Proceedings of APPROX}, pages 29:1--29:24, 2024.

\bibitem{FakcharoenpholHR07}
J.~Fakcharoenphol, C. Harrelson, and S.~Rao.
\newblock The $k$-traveling repairman problem.
\newblock {\em ACM Trans. on Alg.}, Vol 3, Issue 4, Article 40, 2007.

\bibitem{FriggstadS14}
Z.~Friggstad and C.~Swamy.
\newblock Approximation algorithms for regret-bounded vehicle routing and applications to
distance-constrained vehicle routing.
\newblock In {\em Proceedings of STOC}, pages 744--753, 2014.

\bibitem{FriggstadS17}
Z.~Friggstad and C.~Swamy.
\newblock Compact, provably-good LPs for orienteering and regret-bounded vehicle routing.
\newblock In {\em Proceedings of IPCO}, pages 199-211, 2017.
\newblock Detailed version posted on {\em CS arXiv}, Aug 2017.

\bibitem{FriggstadS21}
Z.~Friggstad and C.~Swamy.
\newblock Constant-factor approximation to deadline TSP and related problems in (almost) quasi-polytime. 
\newblock In {\em ICALP}, 67:1--67:21, 2021.

\bibitem{GoldenLV87}
B.~L.~Golden, L.~Levy, and R.~Vohra.
\newblock The orienteering problem.
\newblock Naval Research Logistics, 34(3):307--318, 1987.

\bibitem{GuhaM09}
\newblock S.~Guha and K.~Munagala.
\newblock Multi-armed bandits with metric switching costs. 
\newblock In {\em ICALP}, pages 496--507, 2009.

\bibitem{GuptaKMR11}
A.~Gupta, R.~Krishnaswamy, M.~Molinaro, and R.~ Ravi.
\newblock Approximation algorithms for correlated knapsacks and non-martingale bandits.
\newblock In {\em FOCS}, pages 827--836,
2011.

\bibitem{GuptaKNR12}
A.~Gupta, R.~Krishnaswamy, V.~ Nagarajan, and R.~Ravi. 
\newblock Approximation algorithms for stochastic orienteering.
\newblock In {\em SODA}, pages 1522--1538, 2012.

\bibitem{GuptaNS16}
A.~Gupta, V.~Nagarajan, and S.~Singla. 
\newblock Algorithms and adaptivity gaps for stochastic probing.
\newblock In {\em SODA}, pages 1731--1747, 2016. 

\bibitem{JiangLLS20}
H. Jiang, J. Li, D. Liu, and S. Singla.
\newblock Algorithms and adaptivity gaps for stochastic k-TSP. 
\newblock In 11th Innovations in Theoretical Computer Science Conference (ITCS), volume
151, pages 45:1--45:25, 2020. 

\bibitem{Ma14}
W.~Ma.
\newblock Improvements and generalizations of stochastic knapsack and multi-armed bandit
approximation algorithms: Extended abstract. 
\newblock In {\em SODA}, pages 1154--1163, 2014.

\bibitem{NagarajanR12}
V.~Nagarajan and R.~Ravi.
\newblock Approximation algorithms for distance constrained vehicle routing problems.
\newblock {\em Networks}, 59(2):209--214, 2012.

\bibitem{PostS15}
I. Post and C. Swamy.
\newblock Linear-programming based techniques for multi-vehicle minimum latency problems. 
\newblock In {\em SODA}, pages 512--531, 2015.

\bibitem{Singla18}
S.~Singla. 
\newblock The price of information in combinatorial optimization. 
\newblock In Proceedings of the Twenty-Ninth
Annual ACM-SIAM Symposium on Discrete Algorithms. SIAM, 2018.

\bibitem{Tong05}
T. Zhang. 
\newblock Data dependent concentration bounds for sequential prediction algorithms. 
\newblock In COLT, pages 173--187, 2005.

\end{thebibliography}
\end{document}